\documentclass[a4paper,fleqn]{cas-sc}

\usepackage[authoryear,longnamesfirst]{natbib}
\usepackage{graphicx}%
\usepackage{multirow}%
\usepackage{amsmath,amssymb,amsfonts}%
\usepackage{amsthm}%
\usepackage{mathrsfs}%
\usepackage[title]{appendix}%
\usepackage{xcolor}%
\usepackage{textcomp}%
\usepackage{manyfoot}%
\usepackage{booktabs}%
\usepackage{algorithm}%
\usepackage{algorithmicx}%
\usepackage{algpseudocode}%
\usepackage{listings}%

\usepackage{subfigure}

\newcommand{\bc}{\textbf{C}}

\newcommand{\p}{\partial}
\newcommand{\bb}{\boldsymbol}
\newcommand{\bu}{\textbf{v}}

\newtheorem{remark}{Remark}[section]
\newtheorem{thm}{Theorem}[section]

\newtheorem{lemma}[thm]{Lemma}

\newcommand{\jump}[1]{[\![#1]\!]}
\newcommand{\bp}{\bb{P}}
\newcommand{\B}{\bb{B}}

\newcommand{\DP}{\Delta P}

\newcommand{\bB}{\textbf{B}}

\newcommand{\pll}{p_\parallel}
\newcommand{\per}{p_{\perp}}
\newcommand{\bhat}{\boldsymbol{b}}
\newcommand{\con}{\textbf{U}}

\newcommand{\f}{\textbf{f}}
\newcommand{\F}{\textbf{f}}

\newcommand{\ent}{\mathcal{E}}
\newcommand{\entf}{\textbf{q}}
\newcommand{\evar}{\textbf{V}}
\newcommand{\s}{\textbf{S}}
\newcommand{\iph}{i+\frac{1}{2}}

\newcommand{\jph}{j+\frac{1}{2}}

\newcommand{\es}{\textbf{es}}
\newcommand{\explicit}{\textbf{exp}}
\newcommand{\implicit}{\textbf{imex}}

\newcommand{\ote}{\textbf{O2}^\es_\explicit}
\newcommand{\othe}{\textbf{O3}^\es_\explicit}
\newcommand{\ofe}{\textbf{O4}^\es_\explicit}
\newcommand{\oti}{\textbf{O2}^\es_\implicit}
\newcommand{\othi}{\textbf{O3}^\es_\implicit}
\newcommand{\ofi}{{\textbf{O4}^\es_\implicit}}

\usepackage{mdframed}
\usepackage[font=small]{caption}
\definecolor{darkgreen}{RGB}{0,100,0}
\newcommand{\revg}[1]{{\textcolor{black}{#1}}}
\newcommand{\revo}[1]{{\textcolor{black}{#1}}}

\def\tsc#1{\csdef{#1}{\textsc{\lowercase{#1}}\xspace}}
\tsc{WGM}
\tsc{QE}
\tsc{EP}
\tsc{PMS}
\tsc{BEC}
\tsc{DE}

\begin{document}
\let\WriteBookmarks\relax
\def\floatpagepagefraction{1}
\def\textpagefraction{.001}
\shorttitle{Entropy stable numerical schemes for GLM-CGL system}
\shortauthors{Singh et~al.}


\title [mode = title]{Entropy stable numerical schemes for divergence diminishing Chew, Goldberger \& Low equations for plasma flows}                      

\author[1]{Chetan Singh}[orcid=0009-0002-6822-8696]
\cormark[1]
\ead{maz218518@iitd.ac.in}

\credit{Formal analysis, Investigation, Methodology, Software, Visualization, Validation, Conceptualization, Writing – original draft}

\affiliation[1]{organization={Department of Mathematics, Indian Institute of Technology Delhi}, 
                country={India}}
                
\author[1,2]{Harish Kumar}[]
\ead{hkumar@iitd.ac.in}

\credit{Conceptualization, Methodology, Supervision, Writing – original draft, Writing – review \& editing, Funding acquisition }   

\affiliation[2]{organization={IITD-Abu Dhabi, Abu Dhabi}, 
	country={UAE}}             

\author[3]{Deepak Bhoriya}[]
\ead{dkbhoriya@gmail.com}

\credit{Formal analysis, Investigation, Visualization, Software, Writing – review \& editing}

\affiliation[3]{organization={Department of Mathematics, BITS-Pilani, Pilani, Rajasthan},
	country={India}}

\author[4,5]{Dinshaw S. Balsara}[]
\ead{dbalsara@nd.edu}

\credit{Conceptualization, Methodology, Supervision, Writing – review \& editing, Funding acquisition}

\affiliation[4]{organization={Physics Department, University of Notre Dame},
	country={USA}}

\affiliation[5]{organization={ACMS, University of Notre Dame},
	country={USA}}

\cortext[cor1]{Corresponding author.}

\begin{abstract}
Chew, Goldberger \& Low (CGL) equations are a set of hyperbolic PDEs with non-conservative products used to model the plasma flows, when the assumption of {\em local thermodynamic equilibrium} is not valid, and the pressure tensor is assumed to be rotated by the magnetic field. This results in the pressure tensor, which is described by the two scalar components. As the magnetic field also evolves, controlling the divergence of the magnetic field is important. In this work, we consider the generalized Lagrange multiplier (GLM) technique for the CGL model. The resulting model is referred to as the GLM-CGL system. To make the system suitable for entropy-stable schemes, we reformulate the GLM-CGL system by treating some conservative terms as non-conservative. The resulting system has a non-conservative part that does not affect entropy evolution.  We then propose entropy stable numerical methods for the GLM-CGL model. The numerical results for the GLM-CGL system are then compared with the CGL system without the GLM divergence diminishing approach to demonstrate that the GLM approach indeed leads to significant improvement in the magnetic field divergence diminishing.
\end{abstract}

\begin{keywords}
	CGL equations \sep GLM-CGL equations \sep GLM technique \sep Entropy stability \sep Divergence Cleaning \sep Entropy stable numerical schemes
\end{keywords}

\maketitle

\section{Introduction}
MHD has been proven remarkably successful in modelling a wide range of astrophysical phenomena (see~\cite{nakariakov2020magnetohydrodynamic,zanni2004mhd,kosovichev2013astrophysical}), including planetary magnetospheres (see~\cite{jia2012magnetospheric,jia2008three,zelenyi2019current}), heliosphere (see~\cite{linde1998heliosphere,adhikari2021transport,izmodenov2015three}), astrosphere (see~\cite{scherer2020mhd,meyer20213d,baalmann2022modelling}), turbulence (see~\cite{schekochihin2022mhd,sommeria1982and,biskamp2003magnetohydrodynamic}), and star formation (see~\cite{teyssier2019numerical,girma2024new,machida2008magnetohydrodynamics}). \revg{MHD assumes local thermodynamic equilibrium} for the plasma, which results in the isotropic pressure. However, for several applications, plasma is effectively collisionless; hence, the pressure cannot be assumed to be isotropic (see~\cite{quataert2003radiatively,ichimaru1977bimodal,chandran2011incorporating,hollweg1976collisionless,zouganelis2004transonic,burch2016magnetospheric,karimabadi2014link,dumin2002corotation,heinemann1999role,sen_entropy_2018,meena_positivity-preserving_2017,meena2019robust,hakim2008extended,Hirabayashi2016new,huang2019six}). To overcome this, Chew, Goldberger and Low's equations were first introduced in~\cite{chew1956boltzmann}. The model assumes a gyrotropic description of pressure based on the two scalar pressure components (see~\cite{hunana2019brief,webb2022action,meng2012classical,huang2019six}).

The CGL equations are hyperbolic in nature and also contain non-conservative terms. Hence, the solutions can be discontinuous and weak solutions need to be considered. For numerical solutions, the presence of a non-conservative product leads to additional complications as the solution needs to consider a particular path (see~\cite{dal1995definition,abgrall2010comment}). In practice, however, only linear paths are considered (see~\cite{huang2019six,meng2012classical}). In~\cite{bhoriya2024}, the authors present WENO finite volume schemes for the model. Also, in~\cite{singh2024entropy}, an entropy stable discretization is presented.  In~\cite{florinski2025anisotropic}, the suitability of the CGL model for the heliospheric interface simulations is discussed. More recently, in~\cite{singh2024eigen}, the eigenvalues and the right eigenvectors are presented for the model. These are then used to design HLL and HLLI \revo{approximate Riemann solvers} and AFD-WENO schemes~\cite{DSbalsara2024hp,DSbalsara2024hpwn}.

\revo{For two- and three-dimensional simulations}, the solutions of the CGL systems need to satisfy the divergence-free condition for the magnetic field. However, this is not guaranteed for numerical simulations even for MHD equations (see~\cite{brackbill1980effect,brackbill1985fluid,balsara1999staggered}). Hence, several strategies have been developed to overcome this difficulty (see~\cite{munz2000divergence,balsara2001divergence,dedner2002hyperbolic,balsara2021globally,li2005locally,balsara2004second,balsara2015divergence,balsara2009efficient,derigs2018ideal,rueda2025entropy,mignone2010high,balsara2025physical}). \revg{More recently, in \cite{wu2023provably}, authors design a central discontinuous Galerkin scheme for MHD, which is based on the geometric quasilinearization (GQL) techniques \cite{wu2023geometric}. Similar to the MHD case, the CGL model also has positivity constraints, which are much more complicated. In addition, the presence of a non-conservative term makes GQL-based positivity analysis rather non-standard.} In addition, \revo{similar to MHD}, in CGL equations, the entropy evolution equation involves divergence of the magnetic field (see~\cite{singh2024entropy}). As the CGL system is hyperbolic, the entropy stability of the numerical solution is also highly desirable. Hence, in addition to the positivity constraints, another stability criteria for numerical schemes of the CGL equations is entropy stability.

In this article, our aim is to address entropy stability of the algorithm, which is effected by the divergence evolution. We proceed using the following steps:
\begin{itemize}
	\item Following the divergence diminishing strategy based on the Generalized Lagrange Multiplier (GLM) technique from~\cite{munz2000divergence,dedner2002hyperbolic,derigs2018ideal,rueda2025entropy}, we first propose the GLM-CGL system. We then analyze the entropy evolution of GLM-CGL model.
	
	\item Following~\cite{singh2024entropy}, we reformulate the system by treating some conservative terms as non-conservative term, in such a way that the new non-conservative terms \revg{do not affect entropy production}. We also symmetrize the conservative part by following~\cite{chandrashekar2016entropy,singh2024entropy,derigs2018ideal,rueda2025entropy}.
	
	\item To construct divergence diminishing \revg{semi-discrete} entropy-stable schemes, we first design entropy-conservative numerical fluxes. Following~\cite{fjordholm2013eno}, higher-order numerical diffusion operators are designed using a sign-preserving reconstruction and entropy-scaled eigenvectors. The derivatives in the non-conservative terms are approximated by appropriate central difference schemes. The whole spatial discretization is proved to entropy stable \revg{ at semi-discrete level}.
	
	\item For the numerical validation, several one and two-dimensional problems are considered. In all 2D test cases, we show that the GLM approach is beneficial in controlling the divergence error of the magnetic field evolution.
	
\end{itemize}

The rest of the article is organized as follows: In Section~\ref{CGL_analysis}, we briefly describe the CGL equations. In Section~\ref{GLM_formulation}, we propose, analyze and reformulate the GLM-CGL system.  In Section~\ref{sec:semi_discrete}, we present numerical discretization of the system. In Section~\ref{sec:Fully_discrete}, we discuss the fully discrete scheme by presenting time discretization. In Section~\ref{sec:NR}, numerical tests are presented. Finally, we present concluding remarks in Section~\ref{sec:conc} 
\section{CGL Model}
\label{CGL_analysis}
The CGL model equations describing the plasma flow are given by (see~\cite{chew1956boltzmann,hunana2019brief,singh2024entropy,bhoriya2024,singh2024eigen}),
\begin{equation}\label{eq:cgl_con}
	\small
	\frac{\p}{\p t}\begin{pmatrix}
		\rho\\
		\rho \bu\\
		\pll\\
		e\\
		\B
	\end{pmatrix}+\nabla\cdot\begin{pmatrix}
		\rho \bu\\
		\rho \bu\bu^\top+\per \textbf{I}+(\pll-\per)\bhat\bhat^\top-\left(\B\B^\top-\frac{|\B|^2}{2}\textbf{I}\right)\\
		\pll\bu\\
		\bu\left(e+ \per+ \frac{|\B|^2}{2}\right)+\bu\cdot\left((\pll-\per)\bhat\bhat^\top-\B\B^\top\right)\\
		\bu\B^\top-\B\bu^\top
	\end{pmatrix}=\begin{pmatrix}
		0\\
		\textbf{0}\\
		\textcolor{black}{-2\pll\bhat^\top\cdot\nabla \bu\cdot\bhat}\\
		0\\
		\textbf{0}
	\end{pmatrix}.\quad \begin{aligned}
		&\text{(a)}\\
		&\text{(b)}\\
		&\text{(c)}\\
		&\text{(d)}\\
		&\text{(e)}
	\end{aligned}
\end{equation}
Here $\rho,\bu=(v_x,v_y,v_z)^\top$ and $\B=(B_x, B_y, B_z)^\top$ denote density, velocity, and magnetic fields for the plasma flow. We denote unit magnetic vector as  $\bhat=(b_{x},b_{y},b_{z})^\top=\frac{\B}{|\B|}$. Using the unit vector, the symmetric pressure tensor $\bp$ is described as,
$$
\bp=\pll\bhat\bhat^\top+\per(\textbf{I}-\bhat\bhat^\top).
$$
Here $\pll$  is the parallel component and $\per$ is the perpendicular component of the pressure tensor. We also define the average scalar pressure as $p=\frac{2 \per+\pll}{3}$. \revo{Throughout this work, we assume a heat capacity ratio $\gamma=5/3$}. Using this, we close the system by defining total energy $e$ as follows:
\begin{align}
	e=\frac{1}{2}\left(\rho |\bu|^2+ |\B|^2 + 3p \right),
	\label{eq:EOS}
\end{align}
In addition, we need to impose divergence-free condition on the magnetic field, i.e. $\nabla\cdot \B=0.$

%
\section{GLM-CGL system}
\label{GLM_formulation}

One possible approach to control the divergence error is based on the generalized Lagrange multiplier (GLM) method (see~\cite{munz2000divergence,dedner2002hyperbolic,mignone2010high}). For the MHD equations, the first GLM-based reformulation was proposed in~\cite{dedner2002hyperbolic}. However, this formulation is not consistent with the entropy condition. In the same work, the authors also introduced an extended version of the GLM-based MHD reformulation, referred to as the {\em EGLM-MHD system}. Several authors~\cite{domingues2013extended,mignone2010high} have also adopted this formulation. Although this system is consistent with the entropy condition, the ninth component of its entropy variable vanishes (see~\cite{derigs2018ideal}), which makes this formulation unsuitable for the construction of entropy-stable schemes. Several other GLM-based divergence-diminishing approaches are considered in~\cite{mackey2011effects,tricco2012constrained,tricco2016constrained}, but these are also not consistent with the entropy condition.

More recently, in~\cite{derigs2018ideal}, authors have proposed an entropy-consistent GLM reformulation of ideal MHD equations. We follow the same formulation to reformulate the CGL equations. The approach involves incorporating a Lagrange multiplier in the CGL system, namely, an auxiliary scalar field $\Psi$.  In the rest of the article, we use red text to indicate the GLM terms.

Following~\cite{derigs2018ideal}, we introduce the GLM-variable $\Psi$, which evolves as,
\begin{equation}\label{eq:glm_psi}
	\textcolor{red}{\frac{\p \Psi}{\p t} + c_h (\nabla\cdot\B) + \bu\cdot\nabla\Psi = 0},
\end{equation}
where $\textcolor{red}{c_h}$ is the hyperbolic divergence cleaning speed. 
The transport speed of $\Psi$ is taken as the velocity $\bu$, consistent with the velocity acting on $\bB$ in the~(\ref{eq:cgl_con}e). The magnetic field equation is now modified to,
\begin{equation}\label{eq:magnetic_glm}
	\frac{\p \B}{\p t} + \nabla\cdot(\bu\B^\top-\B\bu^\top + \textcolor{red}{c_h \Psi \textbf{I}}) = 0.
\end{equation}

Following~\cite{derigs2018ideal}, we note that the scalar $\Psi\rightarrow 0$ as $\nabla\cdot\B\to 0$, \revg{under appropriate initial and boundary conditions}. In addition, all the red-colored terms also vanish, allowing the model to return to the CGL equations~\eqref{eq:cgl_con}. Thus, the GLM modification to the CGL system is consistent. We also modify the total energy (Eqn.~\eqref{eq:EOS}) to reflect the contribution of the GLM-variable $\Psi$ and consider
\begin{align}
	E &= e + \textcolor{red}{\frac{1}{2}\Psi^{2}}
	= \frac{\rho |\bu|^2}{2}+\frac{|\B|^2}{2}+\frac{3 p}{2} + \textcolor{red}{\frac{1}{2}\Psi^{2}},\label{eq:AEOS}
\end{align}
as the new equation of state with the total energy equation as
\begin{equation}\label{eq:energy_glm}
	\frac{\p E}{\p t} + \nabla\cdot\left[\bu\left(e+ \per+ \frac{|\B|^2}{2}\right)+\bu\cdot\left((\pll-\per)\bhat\bhat^\top-\B\B^\top\right) + \textcolor{red}{c_h\Psi\B}\right] + \textcolor{red}{\Psi\bu\cdot\nabla\Psi}= 0.
\end{equation}
Finally, the  \textbf{GLM-CGL system} is given by,
\begin{equation}\label{eq:glm_cgl_con}
	\scalebox{0.9}{$\frac{\p}{\p t}\begin{pmatrix}
			\rho\\
			\rho \bu\\
			\pll\\
			E\\
			\B\\
			\Psi
		\end{pmatrix}+\nabla\cdot\left\{\underbrace{\begin{pmatrix}
				\rho \bu\\
				\rho \bu\bu^\top+\per \textbf{I}+(\pll-\per)\bhat\bhat^\top-\left(\B\B^\top-\frac{|\B|^2}{2}\textbf{I}\right)\\
				\pll\bu\\
				\bu\left(e+ \per+ \frac{|\B|^2}{2}\right)+\bu\cdot\left((\pll-\per)\bhat\bhat^\top-\B\B^\top\right) \\
				\bu\B^\top-\B\bu^\top\\
				0
		\end{pmatrix}}_{\f_{cgl}} + \textcolor{red}{\underbrace{\begin{pmatrix}
					0\\
					\textbf{0}\\
					0\\
					c_h\Psi\B\\
					c_h \Psi\textbf{I}\\
					c_h \B
			\end{pmatrix}}_{\f_{glm}}}\right\} \\
		+\textcolor{red}{\underbrace{\begin{pmatrix}
					0\\
					\textbf{0}\\
					0\\
					\Psi\bu\\
					\textbf{0}\\
					\bu
				\end{pmatrix}\cdot(\nabla\Psi)}_{\eta_{glm}}}=\underbrace{\begin{pmatrix}
				0\\
				\textbf{0}\\
				\textcolor{black}{-2\pll\bhat^\top\cdot\nabla \bu\cdot\bhat}\\
				0\\
				\textbf{0}\\
				0
		\end{pmatrix}}_{\eta_{cgl}}\quad \begin{aligned}
			&\text{(a)}\\
			&\text{(b)}\\
			&\text{(c)}\\
			&\text{(d)}\\
			&\text{(e)}\\
			&\text{(f)}
		\end{aligned}$}
\end{equation}
The eigenvalues of the GLM-CGL system~\eqref{eq:glm_cgl_con} are presented in Appendix~\ref{Eiegn_value_admissible domain}.
\subsection{Entropy analysis}
Let us denote $\con = (\rho, \rho \bu, \pll, E, \B, \Psi)^\top$ to be the \revo{working variable} for the GLM-CGL system. Following~\cite{singh2024entropy}, the entropy function is given by $\ent= -\rho s.$ Also, the entropy fluxes are given by $\entf_x=v_x\ent$ and $\entf_y=v_y\ent$ with,

\begin{align*}
	s=\ln \left( \dfrac{\det\bp}{\rho^5} \right)=\ln \left( \dfrac{\pll \per^2}{\rho^5} \right).
\end{align*}
Then \revo{the entropy variables $\evar=\frac{\p \ent}{\p \con}$} for the GLM-CGL system~\eqref{eq:glm_cgl_con} is,
\begin{align}
	\evar=\Bigg(5-s-\beta_\perp |\bu|^2,~2\beta_\perp\bu,~-\beta_\parallel+\beta_\perp,~-2\beta_\perp,~2\beta_\perp\B,~\textcolor{red}{2\beta_\perp\Psi}\Bigg)^\top,
	\label{eq:envar_glm}
\end{align} 
where $\beta_\perp=\frac{\rho}{\per}$ and $\beta_\parallel=\frac{\rho}{\pll}$. We have the following lemma.
\begin{lemma}\label{lemma:ent_eq_Glm}
	\revo{The smooth solutions of \eqref{eq:glm_cgl_con} satisfy}
	\begin{equation}
		\label{eq:ent_eql_glm_withdiv}
		\p_t \ent+ \p_x \entf_x + \p_y \entf_y +\frac{2\rho(\bu\cdot\B)}{\per} (\nabla\cdot\B)=0,
	\end{equation}
	which implies
	\begin{equation}
		\label{eq:ent_eql_glm}
		\p_t \ent+ \p_x \entf_x + \p_y \entf_y =0,
	\end{equation}
	if $\nabla\cdot\B=0$. 
\end{lemma}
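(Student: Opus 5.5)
Since $\ent$ is a function of $\con$ and $\evar=\p\ent/\p\con$, for smooth solutions we have $\p_t\ent=\evar\cdot\p_t\con$. We substitute $\p_t\con$ from \eqref{eq:glm_cgl_con}, which splits the contraction into three parts: $-\evar\cdot\nabla\cdot\f_{cgl}+\evar\cdot\eta_{cgl}$, then $-\evar\cdot\nabla\cdot\f_{glm}$, then $-\evar\cdot\eta_{glm}$.

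For the first part we do not redo the computation. The components of $\evar$ in \eqref{eq:envar_glm} other than the $\Psi$ slot coincide with the CGL entropy variables of \cite{singh2024entropy}. On those components, $\f_{cgl}$ and $\eta_{cgl}$ are exactly the CGL flux and source, and their $\Psi$-components vanish. So we may invoke the CGL entropy identity from \cite{singh2024entropy}:
\[
\evar\cdot\bigl(-\nabla\cdot\f_{cgl}+\eta_{cgl}\bigr)=-\p_x\entf_x-\p_y\entf_y-\frac{2\rho(\bu\cdot\B)}{\per}(\nabla\cdot\B).
\]
The one point to check is that $E$ replaces $e$ in the energy slot without altering this. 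That holds because $\evar$ was computed with respect to $E$, and the extra $\tfrac12\Psi^2$ in $E$ is exactly accounted for by the new $\Psi$-component $2\beta_\perp\Psi$. This identity, and in particular the origin of the $\nabla\cdot\B$ term, is the main step. If the reference formula were not taken as given, I would verify it directly by contracting term by term: the mass, momentum and magnetic parts produce $\bu\cdot\B\,\nabla\cdot\B$ through the Lorentz-force divergence $\nabla\cdot(\B\B^\top-\tfrac{|\B|^2}{2}\mathbf I)=(\nabla\times\B)\times\B+\B(\nabla\cdot\B)$, weighted by $2\beta_\perp$.

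Next we show that the GLM terms cancel exactly. The contraction with the flux divergence is
\[
\evar\cdot\nabla\cdot\f_{glm}=-2\beta_\perp\nabla\cdot(c_h\Psi\B)+2\beta_\perp\B\cdot\nabla(c_h\Psi)+2\beta_\perp\Psi\,c_h\nabla\cdot\B.
\]
The first term expands to $-2\beta_\perp c_h(\B\cdot\nabla\Psi+\Psi\nabla\cdot\B)$, which the other two terms cancel, so the sum is $0$. The contraction with the non-conservative term is
\[
\evar\cdot\eta_{glm}=-2\beta_\perp\Psi\,\bu\cdot\nabla\Psi+2\beta_\perp\Psi\,\bu\cdot\nabla\Psi=0.
\]

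Combining the three parts gives \eqref{eq:ent_eql_glm_withdiv}, and setting $\nabla\cdot\B=0$ yields \eqref{eq:ent_eql_glm}. The only real obstacle is the bookkeeping in the CGL part, namely confirming that the non-conservative term $-2\pll\bhat^\top\nabla\bu\,\bhat$, weighted by $\beta_\perp-\beta_\parallel$, combines with the anisotropic stress terms so that no residual term remains other than the divergence term. This is the content of the CGL entropy result, which we rely on here.
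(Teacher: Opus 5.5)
Your proposal is correct and follows the paper's proof essentially step for step. Like the paper, you take the CGL contribution directly from the identity in \cite{singh2024entropy}, and you show that the GLM flux and non-conservative contractions vanish through the same expansion $\nabla\cdot(c_h\Psi\B)=c_h(\B\cdot\nabla\Psi+\Psi\nabla\cdot\B)$, which, as in the paper, tacitly requires $c_h$ to be spatially constant. Your extra check is worthwhile: the non-$\Psi$ components of $\evar$ and the flux $\f_{cgl}$ (written with $e$ rather than $E$) coincide with their CGL counterparts, and the paper leaves this justification implicit.
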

\begin{proof}
	We first note that
	\begin{align}
		\evar^\top (\nabla\cdot\textcolor{red}{\f_{glm}}) & = \textcolor{red}{-2\beta_\perp \left(\nabla\cdot(c_h\Psi\B)\right) + 2\beta_\perp\B\cdot \left(\nabla\cdot(c_h \Psi\textbf{I})\right) + 2\beta_\perp\Psi(\nabla\cdot(c_h \B))}\nonumber\\
		& = \textcolor{red}{-2\beta_\perp \left(\nabla\cdot(c_h\Psi\B)\right) + 2\beta_\perp\B\cdot  \left(\nabla(c_h \Psi)\right) + 2\beta_\perp\Psi(\nabla\cdot(c_h \B))}\nonumber\\
		& = \textcolor{red}{-2\beta_\perp \left(\nabla\cdot(c_h\Psi\B)\right) + 2\beta_\perp\left(\nabla\cdot(c_h \Psi\B)\right)}\nonumber\\
		& = 0. \label{entropy_glm_condition_1}	
	\end{align}
	and
	\begin{align}
		\evar^\top \textcolor{red}{\eta_{glm}} = \textcolor{red}{-2\beta_\perp \left(\Psi\bu \cdot(\nabla\Psi)\right) + 2\beta_\perp\Psi\left(\bu \cdot(\nabla\Psi)\right)} = 0.
		\label{entropy_glm_condition_2}
	\end{align}
	Furthermore, from~\cite{singh2024entropy}, we have
	\begin{align}
		\evar^\top &(\partial_t \con + \nabla\cdot \f_{cgl} -\eta_{cgl})= 	\p_t \ent+ \p_x \entf_x + \p_y \entf_y +\frac{2\rho(\bu\cdot\B)}{\per} (\nabla\cdot\B) =0 \label{eq:ent_1}.
	\end{align}
	Combining \eqref{eq:ent_1}, \eqref{entropy_glm_condition_1} and \eqref{entropy_glm_condition_2}, we get
	\begin{align}
		\evar^\top  (\partial_t \con + \nabla\cdot (\f_{cgl} +\textcolor{red}{\f_{glm}}) + \textcolor{red}{\eta_{glm}}-\eta_{cgl} )
		&=\p_t \ent+ \p_x \entf_x + \p_y \entf_y +\frac{2\rho(\bu\cdot\B)}{\per}(\nabla\cdot\B) =0.
	\end{align}	
\end{proof}
We also note that  the equality \eqref{eq:ent_eql_glm}, turns into inequality 
\begin{equation}
	\label{eq:ent_ineql_glm}
	\p_t \ent+ \p_x \entf_x+\p_y\entf_y \le0.
\end{equation}
for the non-smooth solutions. Following~\cite{derigs2018ideal,rueda2025entropy}, the system~\eqref{eq:glm_cgl_con} is consistent with the entropy inequality~\eqref{eq:ent_ineql_glm}. 
\subsection{Reformulation of the GLM-CGL system}
\label{sec:reformulation_glm}
Following~\cite{yadav2023entropy,singh2024entropy}, we reformulate the GLM-CGL system by considering some conservative terms as non-conservative products. We get, 

\begin{equation}\label{eq:glm_cgl_noncon}
	\scalebox{1.0}{$\frac{\p}{\p t}\begin{pmatrix}
			\rho\\
			\rho \bu\\[1mm]
			\pll\\
			E\\
			\B\\
			\Psi
		\end{pmatrix}+\nabla\cdot\underbrace{\begin{pmatrix}
				\rho \bu\\
				\rho \bu\bu^\top+\per \textbf{I}-\left(\B\B^\top-\frac{|\B|^2}{2}\textbf{I}\right)\\[1mm]
				\pll\bu\\
				\bu\left(e+ \per+ \frac{|\B|^2}{2}\right)-\bu\cdot\left(\B\B^\top\right)+\textcolor{red}{c_h\Psi\B}\\
				\bu\B^\top-\B\bu^\top+\textcolor{red}{c_h \Psi\textbf{I}}\\
				\textcolor{red}{c_h \B}
		\end{pmatrix}}_{\f_{C}} +
		\textcolor{red}{\underbrace{\begin{pmatrix}
					0\\
					\textbf{0}\\[1mm]
					0\\
					\Psi\bu\\
					\textbf{0}\\
					\bu
				\end{pmatrix}\cdot(\nabla\Psi)}_{\eta_{glm}}}
		+\textcolor{blue}{\underbrace{\begin{pmatrix}
					0\\
					\nabla\cdot\left((\pll-\per)\bhat\bhat^\top\right)\\[1mm]
					2\pll\bhat^\top\cdot\nabla \bu\cdot\bhat\\
					\nabla\cdot\left(\bu\cdot\left((\pll-\per)\bhat\bhat^\top\right)\right) \\
					\textbf{0}\\
					0
			\end{pmatrix}}_{\eta_{NC}}}
		=\textbf{0}.\quad \begin{aligned}
			&\text{(a)}\\
			&\text{(b)}\\[1mm]
			&\text{(c)}\\
			&\text{(d)}\\
			&\text{(e)}\\
			&\text{(f)}
		\end{aligned}$}
\end{equation}
We note that  the choice of non-conservative terms $\textcolor{blue}{\eta_{NC}}$ (blue colored) is such that $\evar^\top\cdot\textcolor{blue}{\eta_{NC}}=0$ (see~\cite{singh2024entropy}). Also, from Eqn.~\eqref{entropy_glm_condition_2}, we already have $\evar^\top\cdot \textcolor{red}{\eta_{glm}}=0$. In two dimensions, we can now rewrite the GLM-CGL system~\eqref{eq:glm_cgl_noncon} as follows:
\begin{equation}\label{eq:glm_cgl_ref_noncons}
	\frac{\p \con}{\p t}+\frac{\p \f_{x}}{\p x} +\frac{\p \f_{y}}{\p y}+ \textcolor{red}{\Upsilon_{x} \frac{\p \Psi}{\p x}} + \textcolor{red}{\Upsilon_y \frac{\p \Psi}{\p y}} + \textcolor{blue}{\bc_{x}(\con)\frac{\p \con}{\p x} + \bc_{y}(\con)\frac{\p \con}{\p y}}=0.
\end{equation}
%
The flux functions are\\
\scalebox{1.0}{$\f_{x}=\begin{pmatrix}
		\rho v_x\\
		\rho v_x^2 + \per-\left(B_x^2-\frac{|\B|^2}{2}\right)\\
		\rho v_x v_y -  B_x B_y\\
		\rho v_x v_z -  B_x B_z\\
		\pll v_x\\
		v_x\left(e+ \per+ \frac{|\B|^2}{2}\right) -{B_x}(\B\cdot\bu)+\textcolor{red}{c_h\Psi B_x}\\
		\textcolor{red}{c_h\Psi}\\
		v_x B_y-v_y B_x\\
		v_x B_z-v_z B_x\\
		\textcolor{red}{c_h B_x}
	\end{pmatrix},~~\f_{y}= \begin{pmatrix}
		\rho v_y\\
		\rho v_x v_y - B_x B_y\\
		\rho v_y^2 +\per-  \left(B_y^2-\frac{|\B|^2}{2}\right)\\
		\rho v_y v_z-  B_y B_z\\
		\pll v_y\\
		v_y\left(e+ \per+ \frac{|\B|^2}{2}\right) -B_y(\B\cdot\bu)+\textcolor{red}{c_h\Psi B_y}\\
		v_y B_x - v_x B_y\\
		\textcolor{red}{c_h\Psi}\\
		v_y B_z - v_z B_y\\
		\textcolor{red}{c_h B_y}
	\end{pmatrix}$}.\\
The vectors $\textcolor{red}{\Upsilon_x, \Upsilon_y} $ are defined as $\textcolor{red}{\Upsilon_x = \left(0,~\textbf{0},~0,~\Psi v_x,~\textbf{0},~v_x\right)^\top}$ and $\textcolor{red}{\Upsilon_y = \left(0,~\textbf{0},~0,~\Psi v_y,~\textbf{0},~v_y\right)^\top}$. The matrices
$\textcolor{blue}{\bc_{x}(\con)}$  and $\textcolor{blue}{\bc_{y}(\con)}$ are given in Appendix~\ref{NC_CGL_Matrices}. The conservative fluxes $\f_x$ and $\f_y$ are similar to the GLM-MHD fluxes defined in~\cite{derigs2018ideal}. Following~\cite{yadav2023entropy,singh2024entropy}, we note that
\begin{equation}\label{eq:ent_def_glm}
	{\entf_x}'(\con) = \evar{\f_x}'(\con), \qquad {\entf_y}'(\con) = \evar{\f_y}'(\con),
\end{equation}
and
\begin{equation}\label{eq:psi_glm}
	\evar^\top \textcolor{red}{\Upsilon_x} = \evar^\top \textcolor{red}{\Upsilon_y} = 0,
\end{equation}
Following~\cite{singh2024entropy}, we also note that 
\begin{equation}
	\evar^\top \textcolor{blue}{\bc_x(\con)} = \evar^\top \textcolor{blue}{\bc_y(\con)} = 0.
\end{equation}

Hence, for the system\eqref{eq:glm_cgl_ref_noncons}, following~\cite{yadav2023entropy}, the entropy-entropy flux pair is given by ($\ent, \entf_x, \entf_y$). Furthermore, we make following remark about the symmetrizability of the GLM-CGL system:
\begin{remark}
	Following~\cite{singh2024entropy}, the GLM-CGL system is not symmetrizable.
\end{remark}
\subsection{Symmetrization}\label{sec:sym}
To make equations suitable for the construction of entropy-stable numerical schemes, we now follow Godunov's process from ~\cite{singh2024entropy,barth1999numerical,barth2006role,chandrashekar2016entropy,derigs2018ideal,derigs2016novel} for the conservative part.
We consider,
\begin{align}\label{eq:glm_cgl_con_part}
	\frac{\p \con}{\p t}+\frac{\p \f_{x}}{\p x} +\frac{\p \f_{y}}{\p y}=0.   
\end{align}
Following\cite{singh2024entropy}, we note that
\begin{align*}
	\frac{\p \f_{x}}{\p \evar}\neq\bigg(\frac{\p \f_{x}}{\p \evar}\bigg)^{\top}~~\text{and}~~~\frac{\p \f_{y}}{\p \evar}\neq\bigg(\frac{\p \f_{y}}{\p \evar}\bigg)^{\top},
\end{align*}
hence the conservative system is not symmetrizable, without assuming $\nabla\cdot\B=0$. 

Following ~\cite{godunov,barth1999numerical,barth2006role,chandrashekar2016entropy,derigs2018ideal,derigs2016novel,singh2024entropy}, we rewrite the term $\frac{2\rho(\bu\cdot\B)}{\per}$ using,
\begin{equation}\label{eq:Phi}
	\Phi(\evar)=\evar \Phi ' (\evar)=\frac{2\rho(\bu\cdot\B)}{\per}.
\end{equation}
Differentiating,
\begin{equation}\label{eq:Phi_derivative}
	\Phi'(\evar)=[0,~\B,~0,~\B\cdot\bu,~\bu,~\textcolor{red}{0}].
\end{equation}
We also define the entropy fluxes $\entf_x$ and $\entf_y$
\begin{align}
	\entf_x&=\evar \cdot \textbf{f}_x+\Phi B_x -\mathcal{F}_x\label{entropy_flux_x},\\
	\entf_y&=\evar \cdot \textbf{f}_y+\Phi B_y -\mathcal{F}_y\label{entropy_flux_y},
\end{align}
with entropy potentials $\mathcal{F}_x$ and $\mathcal{F}_y$
\begin{align}
	\mathcal{F}_x&=\evar \cdot \textbf{f}_x+\Phi B_x -\entf_x=2\rho v_x+\beta_{\perp} v_x|\B|^2 + \textcolor{red}{2 c_h \beta_\perp\Psi B_x} \label{entropy_potential_x},\\
	\mathcal{F}_y&=\evar \cdot \textbf{f}_y+\Phi B_y -\entf_y=2\rho v_y+\beta_{\perp} v_y |\B|^2 + \textcolor{red}{2 c_h \beta_\perp\Psi B_y} \label{entropy_potential_y}.
\end{align}
\revo{Finally, Eqn.~\eqref{eq:glm_cgl_con_part} is now symmetrizable if we consider the well-known Godunov-Powell source term $-\Phi'(\evar)^\top (\nabla\cdot \B)$ on the right-hand side (red colored), which results in}
\begin{equation}\label{eq:glm_cgl_god}
	\frac{\p}{\p t}\begin{pmatrix}
		\rho\\
		\rho \bu\\
		\pll\\
		E\\
		\B\\
		\Psi
	\end{pmatrix}+\nabla\cdot\begin{pmatrix}
		\rho \bu\\
		\rho \bu\bu+\per \textbf{I}-\left(\B\B^\top-\frac{|\B|^2}{2}\textbf{I}\right)\\
		\pll\bu\\
		\bu\left(e+ \per+ \frac{|\B|^2}{2}\right)-\bu\cdot\left(\B\B^\top\right)+\textcolor{red}{c_h\Psi\B}\\
		\bu\B^\top-\B\bu^\top+\textcolor{red}{c_h \Psi\textbf{I}}\\
		\textcolor{red}{c_h \B}
	\end{pmatrix}=\textcolor{red}{-\begin{pmatrix}
			0\\
			\B\\
			0\\
			\bu\cdot\B\\
			\bu\\
			0
		\end{pmatrix}( \nabla\cdot \B)}.\quad \begin{aligned}
		&\text{(a)}\\
		&\text{(b)}\\
		&\text{(c)}\\
		&\text{(d)}\\
		&\text{(e)}\\
		&\text{(f)}
	\end{aligned}
\end{equation}
Together with \revo{non-conservative terms of the GLM formulation}, we get,
\begin{equation}\label{eq:glm_cgl_mhd}
	\frac{\p \con}{\p t}+\frac{\p \f_{x}}{\p x} +\frac{\p \f_{y}}{\p y} +\textcolor{red}{ \Phi'(\evar)^\top ( \nabla\cdot \B)} + \textcolor{red}{\Upsilon_{x} \frac{\p \Psi}{\p x}}+ \textcolor{red}{\Upsilon_y \frac{\p \Psi}{\p y}} =0.
\end{equation}
The system is similar to the GLM-MHD system derived in \cite{derigs2018ideal}. We further remark that with GLM terms, the system \eqref{eq:glm_cgl_mhd} is also symmetrizable.
The $x$-direction eigenvalues of the system~\eqref{eq:glm_cgl_mhd} are,
\begin{align}
	\label{eq:eigenval_x}
	K_{x}=\left\{v_x,~ v_x, ~v_x\pm v_{ax},~v_x\pm c_h,~v_x\pm c_f,~v_x\pm c_s\right\},
\end{align}
where
\begin{align*}
	& v^2_{ax}= \frac{B_x^2}{\rho},~v^2_a=\frac{\B^2}{\rho}, ~a^2=\frac{2p_\perp}{\rho}, \text{ and  }
	c^2_{f,s}=\frac{1}{2}\bigg[(v_a^2+a^2)\pm\sqrt{(v_a^2+a^2)^2-4v^2_{ax}a^2}\bigg].
\end{align*}
Similarly, for $y$-direction, 
\begin{align}
	\label{eq:eigenval_y}
	& K_{y}=\left\{v_y,~ v_y, ~v_y\pm v_{ay},~v_y\pm c_h,~v_y\pm c_f,~v_y\pm c_s\right\}
\end{align}
where,
\begin{align*}
	& v^2_{ay}= \frac{B_y^2}{\rho},~v^2_a=\frac{\B^2}{\rho}, ~a^2=\frac{2p_\perp}{\rho}, \text{ and  }
	c^2_{f,s}=\frac{1}{2}\bigg[(v_a^2+a^2)\pm\sqrt{(v_a^2+a^2)^2-4v^2_{ay}a^2}\bigg].
\end{align*}
Following~\cite{barth1999numerical,chandrashekar2016entropy,singh2024entropy}, we derive and present the entropy-scaled right eigenvectors for the system~\eqref{eq:glm_cgl_mhd} in the $x$- and $y$-directions in Appendix~\ref{scaledrev}.
\section{Semi-discrete numerical schemes}
\label{sec:semi_discrete} 
Combining the Godunov terms, GLM terms and additional non-conservative terms, we get,
\begin{equation}\label{eq:glm_cgl_final}
	\frac{\p \con}{\p t}+\frac{\p \f_{x}}{\p x} +\frac{\p \f_{y}}{\p y}+\textcolor{red}{ \Phi'(\evar)^\top ( \nabla\cdot \B)}+ \textcolor{red}{\Upsilon_{x} \frac{\p \Psi}{\p x}} + \textcolor{red}{\Upsilon_y \frac{\p \Psi}{\p y}} + \textcolor{blue}{\bc_{x}(\con)\frac{\p \con}{\p x} + \bc_{y}(\con)\frac{\p \con}{\p y}}=0.
\end{equation}
We will now propose entropy stable numerical schemes for these equations using finite difference method. Consider a uniform mesh for the computational domain with cells $I_{ij}$ of size $\Delta x\times \Delta y.$ We assume that the centers of the cell $I_{ij}$ is denoted by $(x_i,y_j).$ The cell vertices are given by $(x_{i+\frac{1}{2}},y_{j+\frac{1}{2}})$, with  $x_{i+\frac{1}{2}} = \frac{x_{i} + x_{i+1}}{2}$ and $y_{j+\frac{1}{2}} = \frac{y_{j} + y_{j+1}}{2}$. A general semi-discrete finite difference scheme for~\eqref{eq:glm_cgl_final} on the uniform mesh is then  given by\\
\begin{align}\label{eq:semi-discrete_fd}
		\frac{d}{d t}\con_{i,j}&+\frac{\F_{x, i+\frac{1}{2}, j}-\F_{x, i-\frac{1}{2}, j}}{\Delta x}+\frac{\F_{y, i, j+\frac{1}{2}}-\F_{y, i, j-\frac{1}{2}}}{\Delta y} +\textcolor{red}{\Phi'(\evar_{i, j})^\top\left(\left(\frac{\p B_x}{\p x}\right)_{i,j} + \left(\frac{\p B_y}{\p y}\right)_{i,j}\right)}\nonumber\\
		&+ \textcolor{red}{\left((\Upsilon_{x})_{i,j}\left(\frac{\p \Psi}{\p x}\right)_{i,j} + (\Upsilon_{y})_{i,j}\left(\frac{\p \Psi}{\p y}\right)_{i,j}\right)} +\textcolor{blue}{\bc_x(\con_{i,j})\left(\frac{\p \con}{\p x}\right)_{i,j} +\bc_y(\con_{i,j}) \left(\frac{\p \con}{\p y}\right)_{i,j}}=0.
\end{align}
Here,  $\F_{x, i+\frac{1}{2}, j}$ and $\F_{ y,i, j+\frac{1}{2}}$ are the numerical fluxes, consistent with the continuous fluxes $\f_{x}$ and $\f_{y}$ in $x$- and $y$-directions, respectively. The derivatives in non-conservative terms are discretized with central difference of appropriate orders (see Appendix \ref{appendix:central}). For the grid function $a_{i,j}$, let us introduce the notations
\begin{align*}
	[\![a]\!]_{\iph,j} =a_{i+1,j}-a_{i,j},&\qquad \qquad 	[\![a]\!]_{i,\jph} =a_{i,j+1}-a_{i,j}
\end{align*}
for the jumps across the cell interfaces and
\begin{align*}
	\bar{a}_{\iph,j}= \frac{a_{i+1,j} + a_{i,j}}{2},&\qquad \qquad\bar{a}_{i,\jph}= \frac{a_{i,j+1} + a_{i,j}}{2},
\end{align*}
for the averages across the cell interfaces.
\subsection{Entropy conservative schemes}
Following~\cite{tadmor2003entropy,chandrashekar2016entropy,singh2024entropy}, to design entropy conservative scheme, we first need to satisfy the jump relations,
\begin{align}
	\quad \jump{\evar}_{\iph,j}  \cdot \tilde{\F}_{x,i+\frac{1}{2}, j}&=\jump{\mathcal{F}_x}_{\iph,j}- \jump{\Phi}_{\iph,j} \bar{B}_{x, i+\frac{1}{2}, j},\label{eq:consrvative_flux_x}&\\
	\quad\jump{\evar}_{i,\jph}  \cdot \tilde{\F}_{y,i,j+\frac{1}{2}}&=\jump{\mathcal{F}_y}_{i,\jph}- \jump{\Phi}_{i,\jph} \bar{B}_{y,i,j +\frac{1}{2}}.\label{eq:consrvative_flux_y}
\end{align}
Following \cite{singh2024entropy}, the entropy conservative flux 
$$\tilde{\F}_x=\left[ \tilde{f}_{x}^{(1)},\tilde{f}_{x}^{(2)}, \tilde{f}_{x}^{(3)}, \tilde{f}_{x}^{(4)}, \tilde{f}_{x}^{(5)}, \tilde{f}_{x}^{(6)}, \tilde{f}_{x}^{(7)},\tilde{f}_{x}^{(8)},\tilde{f}_{x}^{(9)},\textcolor{red}{\tilde{f}_{x}^{(10)}} \right]^{\top},$$
is given by,
$$
\begin{aligned}
	\tilde{f}_{x}^{(1)}=& {\rho}^{\ln} \bar{v}_{x},~\tilde{f}_{x}^{(2)}= \frac{\bar{\rho}}{ \bar{\beta}_\perp}+\bar{v}_{x} \tilde{f}_{x}^{(1)}+\frac{1}{2} \overline{\left|\boldsymbol{B}\right|^{2}}-\bar{B}_{x} \bar{B}_{x} \\
	\tilde{f}_{x}^{(3)}=& \bar{v}_{y} \tilde{f}_{x}^{(1)}-\bar{B}_{x} \bar{B}_{y},~\tilde{f}_{x}^{(4)}= \bar{v}_{z} \tilde{f}_{x}^{(1)}-\bar{B}_{x} \bar{B}_{z},~\tilde{f}_{x}^{(5)}= \frac{\tilde{f}_{x}^{(1)}}{{\beta}^{\ln}_\parallel},\\
	\tilde{f}_{x}^{(7)}=& \textcolor{red}{c_h\frac{\overline{\Psi \beta_\perp}}{\bar{\beta}_\perp}},~\tilde{f}_{x}^{(8)}= \frac{1}{\bar{\beta}_\perp}\left(\overline{\beta_\perp v_{x}} \bar{B}_{y}-\overline{\beta_\perp v_{y}} \bar{B}_{x}\right),~\tilde{f}_{x}^{(9)}= \frac{1}{\bar{\beta}_\perp}\left(\overline{\beta_\perp v_{x}} \bar{B}_{z}-\overline{\beta_\perp v_{z}}  \bar{B_{x}}\right),\\
	\textcolor{red}{\tilde{f}_{x}^{(10)}=}&\textcolor{red}{c_h \bar{B}_x},\\
	\tilde{f}_{x}^{(6)}=& \frac{1}{2}\left[\frac{2}{ {\beta}^{\ln}_\perp}-\overline{|\bu|^{2}}\right] \tilde{f}_{x}^{(1)}+\bar{v}_{x} \tilde{f}_{x}^{(2)}+\bar{v}_{y} \tilde{f}_{x}^{(3)}+\bar{v}_{z} \tilde{f}_{x}^{(4)} +\frac{\tilde{f}_{x}^{(5)}}{2}\\
	&+\bar{B}_{x} \tilde{f}_{x}^{(7)}+\bar{B}_{y} \tilde{f}_{x}^{(8)}+\bar{B}_{z} \tilde{f}_{x}^{(9)}-\frac{1}{2} \bar{v}_{x} \overline{\left|\boldsymbol{B}\right|^{2}}+\left(\bar{v}_{x} \bar{B}_{x}+\bar{v}_{y} \bar{B}_{y}+\bar{v}_{z} \bar{B}_{z}\right) \bar{B}_{x},
\end{aligned}
$$
\revg{To simplify the notations, we have ignored the spatial indexing and denote the left and right states by $\con_l$ and $\con_r$, respectively. For a scalar quantity $a$, the jump and arithmetic average across an cell interface are defined as
\begin{align*}
	[\![a]\!] =a_{r}-a_{l},\qquad \qquad 
	\bar{a}= \frac{a_{l} + a_{r}}{2}.
\end{align*}
Furthermore, the logarithmic average of $a$ is defined by 
\begin{equation*}
    a^{\ln} = \frac{\jump{a}}{\jump{\ln{a}}}
\end{equation*} 
In addition, for scalar quantities $a$ and $b$, and vector quantity $\mathbf d=(d_x,d_y,d_z)^\top$, we introduce the notation
\begin{align*}
    \overline{ab} = \frac{a_l b_l+ a_r b_r}{2},\quad \overline{\left(\frac{a}{b}\right)}=\frac{\frac{a_l}{b_l}+\frac{a_r}{b_r}}{2},\quad\text{and}~~ \overline{\left|\mathbf{d}\right|^{2}}=\overline{\left(d_x^2+d_y^2+d_z^2\right)}=\frac{\left(d_{x_l}^2+d_{y_l}^2+d_{z_l}^2\right)+\left(d_{x_r}^2+d_{y_r}^2+d_{z_r}^2\right)}{2}
\end{align*}
Similarly, $\tilde{\F}_{y}$, the entropy conservative flux in $y$-direction can be obtained.}

Therefore, the numerical scheme \eqref{eq:semi-discrete_fd} with entropy conservative numerical fluxes $\tilde{\F}_{x,\iph,j} =\tilde{\F}_x(\con_{i,j},\con_{i+1,j})$ and $\tilde{\F}_{y,i,\jph} =\tilde{\F}_y(\con_{i,j},\con_{i,j+1})$ is entropy conservative and second-order accurate. We have highlighted the additional terms in red, compared to the numerical flux in \cite{singh2024entropy}. These terms arise due to the GLM formulation.

To obtain the higher-order entropy conservative numerical schemes, we follow~\cite{leFloch2002} to construct $2p^{th}$-order accurate fluxes using second-order fluxes for any positive integer $p$. In particular for $(p = 2)$, the $4^{th}$-order $x$-directional entropy conservative flux $\tilde{\F}_{x,i+\frac{1}{2},j}^4$ is given below
\begin{align}
	\tilde{\F}_{x,i+\frac{1}{2},j}^4&=\frac{4}{3}\tilde{\F}_{x}(\con_{i,j},\con_{i+1,j})-\frac{1}{6} \bigg( \tilde{\F}_{x} (\con_{i-1,j},\con_{i+1,j})+
	\tilde{\F}_{x}(\con_{i,j},\con_{i+2,j}) \bigg).
	\label{eq:4thorder_numflux_x}
\end{align}

Similarly, we can derive the fourth-order entropy conservative flux $\tilde{\F}_{y,i,j+\frac{1}{2}}^4$ in $y$-direction.  Again, following~\cite{singh2024entropy}, the numerical scheme~\eqref{eq:semi-discrete_fd} with fourth order numerical fluxes described above and fourth order discretization of derivatives of the non-conservative derivative results in a fourth order entropy conservative scheme.

\subsection{Entropy stable schemes}\label{Sec:Entropy_SS}
To design entropy-stable numerical schemes, following~\cite{tadmor2003entropy}, we modify the entropy conservative numerical fluxes as follows
\begin{equation}
	\begin{aligned}
		{\hat{\F}}_{x,i+\frac{1}{2},j} =\tilde{\F}_{x,i+\frac{1}{2},j} - \frac{1}{2} \textbf{D}_{x,i+\frac{1}{2},j}[\![ \evar]\!]_{i+\frac{1}{2},j},~
		\quad
		{\hat{\F}}_{y,i,j+\frac{1}{2}} = \tilde{\F}_{y,i,j+\frac{1}{2}} - \frac{1}{2} \textbf{D}_{y,i,j+\frac{1}{2}}[\![ \evar]\!]_{i,j+\frac{1}{2}},
		\label{es_numflux}
	\end{aligned}
\end{equation}
where $\textbf{D}_{x,i+\frac{1}{2},j}$ and $\textbf{D}_{y,i,j+\frac{1}{2}}$ are symmetric positive definite matrices. We use Rusanov's type diffusion matrices, given by
\begin{equation}  \label{diffusiontype}
	\textbf{D}_{x,i+\frac{1}{2},j} = \tilde{R}_{x,i+\frac{1}{2},j} \Lambda_{x,i+\frac{1}{2},j} \tilde{R}_{x,i+\frac{1}{2},j}^{\top}~\text{and}~~\textbf{D}_{y,i,j+\frac{1}{2}} = \tilde{R}_{y,i,j+\frac{1}{2}} \Lambda_{y,i,j+\frac{1}{2}} \tilde{R}_{y,i,j+\frac{1}{2}}^{ \top},
\end{equation}
where, $\tilde{R}_d,$ $d\in\{x,y\}$  are the matrices of entropy-scaled right eigenvectors derived in Appendix~\ref{scaledrev}, following~\cite{barth1999numerical,singh2024entropy}. Also, ${\Lambda_d}$ are,
$${\Lambda_d}=\left( \max_{1 \leq k \leq 10} |\lambda^k_d|\right) \mathbf{I}_{10 \times 10},$$
\revg{where $\{\lambda^k_d: 1 \leq k \leq 10 \}$ are the eigenvalues of the
symmetrized conservative part of the GLM-CGL system~\eqref{eq:glm_cgl_mhd}, given by
Eqs.~\eqref{eq:eigenval_x} and \eqref{eq:eigenval_y} for the $x$- and
$y$-directions, respectively. }

\revo{An alternative is the Roe-type diffusion matrices are obtained by defining, 
$${\Lambda_d}= \text{diag}\left(|\lambda^1_d|,\cdots,|\lambda^{10}_d|\right).$$
in equation~\eqref{diffusiontype}. Following~\cite{nishikawa2008very,derigs2018ideal}, one may also define the hybrid Roe-Rusanov diffusion matrices as the convex combination
$$
\Lambda_d=(1-\beta)\,\mathrm{diag}\!\left(|\lambda_d^1|,\ldots,|\lambda_d^{10}|\right)
+\beta\left(\max_{1\le k\le10}|\lambda_d^k|\right)\mathbf{I}_{10\times10},
$$
in equation~\eqref{diffusiontype}, where $\beta=0.4$. Among these three choices, the Rusanov-type dissipation operator is adopted in this work due to its robustness, simplicity of implementation, and its ability to effectively suppress spurious oscillations near discontinuities. A numerical comparison of the Rusanov-type, Roe-type, and hybrid Roe-Rusanov dissipation operators is presented for the Brio-Wu shock tube problem in Section~\ref{test:bw}, demonstrating that the Rusanov-type operator provides a favorable balance between robustness and accuracy.}

\revo{For $x$-interface $\left(i+\frac{1}{2},j\right)$, we define the interface primitive state using the arithmetic average
\begin{equation*}
    \textbf{w}_{i+\frac{1}{2},j}=\frac{\textbf{w}_{i,j}+\textbf{w}_{i+1,j}}{2}
\end{equation*}
and similarly, for the $y$-interface $\left(i,j+\frac{1}{2}\right)$,
\begin{equation*}
    \textbf{w}_{i,j+\frac{1}{2}}=\frac{\textbf{w}_{i,j}+\textbf{w}_{i,j+1}}{2}
\end{equation*}
Here, $\textbf{w}$ denotes the vector of primitive variables defined in Appendix~\ref{scaledrev}. The corresponding interface working-variable states are then obtained via the primitive-to-working variable transformation. Subsequently, the entropy-scaled right eigenvector matrices $\tilde{R}_d$ and the diagonal eigenvalue matrices ${\Lambda_d};~~\forall~d\in\{x,y\}$, are evaluated using these interface states.}

To achieve higher order, we follow~\cite{fjordholm2012arbitrarily}, to reconstruct {\em scaled entropy variables}
$$\mathcal{W}^{\pm}_{x,i,j}= \tilde{R}^{{\top}}_{x,i\pm\frac{1}{2},j}\evar_{i,j},$$
using the ENO procedure, to get the reconstructed polynomials $\mathcal{P}^{\pm}_{x,i,j}(x_{i\pm\frac{1}{2}})$ of appropriate order. The ENO procedure has the {\em sign-preserving property}~\cite{fjordholm2013eno}, which ensures the entropy stability. \revo{For the second-order method, we employ the
MinMod limiter, which preserves the sign property}. The reconstructed values of scaled entropy variables are given by
\[\hat{\mathcal{W}}_{x,i,j}^{\pm}=\mathcal{P}^{\pm}_{x,i,j}(x_{i\pm\frac{1}{2}}).\]
Using these, we define the reconstructed scaled variables as
$$
\hat{\evar}^{\pm}_{x,i,j} =  \left\lbrace \tilde{R}^{{\top}}_{x,i\pm\frac{1}{2},j}\right\rbrace ^{(-1)}\hat{\mathcal{W}}_{x,i,j}^{\pm}.
$$
Finally, we denote,
\begin{equation}
	{\hat{\F}}^{k}_{x,i+\frac{1}{2},j}\,=\,\tilde{\F}^{2p}_{x,i+\frac{1}{2},j}\,-\,\frac{1}{2}\,\textbf{D}_{x,i+\frac{1}{2},j}[\![ \hat{\evar}]\!]_{x,i+\frac{1}{2},j}^k.
	\label{eq:entropy_stable_flux_x}
\end{equation}
where
$$
\jump{\hat{\evar}}^k_{x,i+\frac{1}{2},j} = \hat{\evar}^-_{x,i+1,j}-\hat{\evar}^+_{x,i,j}.
$$
Similarly, we get,
\begin{equation}
	{\hat{\F}}^{k}_{y,i,j+\frac{1}{2}}\,=\,\tilde{\F}^{2p}_{y,i,j+\frac{1}{2}}\,-\,\frac{1}{2}\,\textbf{D}_{y,i,j+\frac{1}{2}}[\![ \hat{\evar}]\!]_{y,i,j+\frac{1}{2}}^k.
	\label{eq:entropy_stable_flux_y}
\end{equation}
An appropriate value of $p$ is considered for the different orders of schemes. For the scheme~\eqref{eq:semi-discrete_fd}, we now get,

\begin{thm}[\textbf{see~\cite{singh2024entropy}}]\label{thm:ES_higher_order} The scheme using entropy stable fluxes~\eqref{eq:entropy_stable_flux_x},\eqref{eq:entropy_stable_flux_y} and with second-order (for $k=2$) and fourth-order (for $k=3,4$) central difference approximations for $\left(\frac{\p B_x}{\p x}\right)_{i,j}$, $\left(\frac{\p B_y}{\p y}\right)_{i,j}$, $\left(\frac{\p \Psi}{\p x}\right)_{i,j}$, $\left(\frac{\p \Psi}{\p y}\right)_{i,j}$, $\left(\frac{\p \con}{\p x}\right)_{i,j}$, and $\left(\frac{\p \con}{\p y}\right)_{i,j}$  is $k^{th}$-order accurate and entropy stable (where $k = 2,~3,~4$), i.e. it satisfies
	\begin{equation}
		\label{eq:semi-disc_ent_stab}
		\frac{d}{dt}  \ent(\con_{i,j})  +\frac{1}{\Delta x} \left( \hat{\entf}^k_{x,i+\frac{1}{2},j} - \hat{\entf}^k_{x,i-\frac{1}{2},j}\right)+\frac{1}{\Delta y}\left( \hat{\entf}^k_{y,i,j+\frac{1}{2}} - \hat{\entf}^k_{y,i,j-\frac{1}{2}}\right) \le 0,
	\end{equation}
	where $ \hat{\entf}^k_x$ and $ \hat{\entf}^k_y$ are given by
	\begin{equation*}
		\begin{aligned}
			\hat{\entf}^k_{x,i+\frac{1}{2},j}=  \tilde{\entf}^{2p}_{x,i+\frac{1}{2},j} - \frac{1}{2}\bar{\evar}_{i+\frac{1}{2},j}^{\top}  \textbf{D}_{x,i+\frac{1}{2},j}[\![ \hat{\evar}]\!]^k_{x,i+\frac{1}{2},j}
	\end{aligned} \end{equation*}
	
	and \begin{equation*}
		\begin{aligned}
			\hat{\entf}^k_{y,i,j+\frac{1}{2}}=    \tilde{\entf}^{2p}_{y,i,j+\frac{1}{2}} -  \frac{1}{2}\bar{\evar}_{i,j+\frac{1}{2}}^{\top}  \textbf{D}_{y,i,j+\frac{1}{2}}[\![ \hat{\evar}]\!]^k_{y,i,j+\frac{1}{2}}.\end{aligned} \end{equation*}
\end{thm}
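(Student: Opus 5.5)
Taking the inner product of the semi-discrete scheme \eqref{eq:semi-discrete_fd} with $\evar_{i,j}^\top$ is the plan. Since $\evar=\partial\ent/\partial\con$, the time-derivative term becomes $\frac{d}{dt}\ent(\con_{i,j})$.

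We then deal with the non-conservative terms at the discrete level, which is easier than it looks. The identities $\evar^\top\Upsilon_x=\evar^\top\Upsilon_y=0$ and $\evar^\top\bc_x(\con)=\evar^\top\bc_y(\con)=0$ hold pointwise. Each of these terms is of the form $M(\con_{i,j})\,(\text{central difference})_{i,j}$, with the matrix evaluated at the node. So, after multiplying by $\evar_{i,j}^\top$, all the red $\Upsilon$ terms and blue $\bc$ terms vanish identically, whatever the order of the central difference. The Godunov–Powell term does not vanish; it gives $\evar_{i,j}^\top\Phi'(\evar_{i,j})^\top\big((\partial_xB_x)_{i,j}+(\partial_yB_y)_{i,j}\big)=\Phi_{i,j}\big((\partial_xB_x)_{i,j}+(\partial_yB_y)_{i,j}\big)$ by \eqref{eq:Phi}. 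This term has to be combined with the flux differences, exactly as in the GLM-free setting of \cite{singh2024entropy}.

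For the second-order entropy conservative part, we use the jump condition \eqref{eq:consrvative_flux_x} together with the potential \eqref{entropy_potential_x}. Writing $\evar_{i,j}=\bar\evar_{i+\frac12}-\frac12\jump{\evar}_{i+\frac12}$, we obtain the standard telescoping identity
\begin{equation*}
\evar_{i,j}^\top(\tilde\F_{x,i+\frac12}-\tilde\F_{x,i-\frac12})+\Phi_{i,j}\frac{B_{x,i+1}-B_{x,i-1}}{2}=\tilde\entf_{x,i+\frac12}-\tilde\entf_{x,i-\frac12},
\end{equation*}
with $\tilde\entf_{x,i+\frac12}=\bar\evar^\top\tilde\F_{x}+\bar\Phi\bar B_x-\bar{\mathcal F}_x$. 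The key check, and the only genuinely new computation here, is that the GLM-augmented flux of the paper (components 6, 7 and 10 in red) still satisfies \eqref{eq:consrvative_flux_x} with the new potential term $2c_h\beta_\perp\Psi B_x$. This amounts to verifying the identity $\jump{2\beta_\perp}\,c_h\overline{\Psi\beta_\perp}\bar B_x/\bar\beta_\perp\cdot(\dots)$. Concretely, the red contributions $-\jump{2\beta_\perp}(\bar B_x\tilde f^{(7)})+\jump{2\beta_\perp B_x}\tilde f^{(7)}+\jump{2\beta_\perp\Psi}c_h\bar B_x$ must combine to $\jump{2c_h\beta_\perp\Psi B_x}$. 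Using the product rule $\jump{ab}=\bar a\jump b+\bar b\jump a$ and the choice $\tilde f^{(7)}=c_h\overline{\Psi\beta_\perp}/\bar\beta_\perp$, this reduces to an algebraic check.

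For order $2p$, we use the LeFloch–Mercier–Rohde combination \eqref{eq:4thorder_numflux_x}. The corresponding fourth-order central difference of $B_x$ has the same weights $\frac43,-\frac16$, so the same telescoping works with the combined entropy flux $\tilde\entf^{2p}$. This is why the central difference orders must be matched to $k$.

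Finally, we handle the dissipation. With $\bar\evar$ replacing $\evar_{i,j}$, the dissipative part leaves a remainder of the form $-\frac14\jump{\evar}^\top\textbf D\jump{\hat\evar}^k$ at each interface. Writing $\textbf D=\tilde R\Lambda\tilde R^\top$ with $\Lambda\ge0$ diagonal, this remainder is $-\frac14\sum_m\Lambda_{mm}(\tilde R^\top\jump\evar)_m(\tilde R^\top\jump{\hat\evar}^k)_m$. The sign-preserving property of ENO reconstruction (and of MinMod for $k=2$) from \cite{fjordholm2013eno} guarantees that each product is nonnegative, so every term is $\le0$. This gives \eqref{eq:semi-disc_ent_stab}.

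Accuracy of order $k$ follows from the $2p$-order conservative flux, the $O(\Delta x^k)$ size of the reconstructed jumps, and the central differences. I expect the main obstacle to be the algebraic verification of the jump relation for the GLM components; everything else carries over verbatim from \cite{singh2024entropy}.
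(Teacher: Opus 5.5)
Your proposal is correct and follows the route the paper relies on through \cite{singh2024entropy}: the $\Upsilon$ and $\bc$ terms are annihilated pointwise by $\evar_{i,j}^\top$; the Godunov--Powell term is absorbed into the flux difference via the jump relation, with central-difference weights matching the $2p$-order flux stencil; and sign-preserving reconstruction of the entropy-scaled variables makes the dissipation nonpositive. You should replace your garbled ``amounts to verifying the identity'' sentence with the explicit check, which does close: since $\jump{\beta_\perp B_x}-\bar B_x\jump{\beta_\perp}=\bar\beta_\perp\jump{B_x}$, the red contributions sum to $2c_h\bigl(\overline{\Psi\beta_\perp}\jump{B_x}+\bar B_x\jump{\beta_\perp\Psi}\bigr)=\jump{2c_h\beta_\perp\Psi B_x}$.
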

\revo{As the CGL model allows an anisotropic pressure tensor, we can also use it to calculate the isotropic solutions, which are standard MHD solutions. This not only allows us to verify the numerical schemes, but also captures the additional physics in between the anisotropic and isotropic limits. }
To compute the isotropic limit solutions, we consider the following source term, 
\begin{equation}
	\label{eq:src_mhd}
	\s=\Biggl\{ \mathbf{0}_{1\times4},~\frac{\per - \pll}{\tau}, \mathbf{0}_{1\times4},~\textcolor{red}{0}\Biggr\}^\top.
\end{equation}
This source relaxes the solution to the isotropic limit. Here, $\tau$ is a constant.
\revg{\begin{remark}
    Using the admissible domain of the CGL system \eqref{eq:cgl_domain_new}, we obtain
    $$\evar^\top\cdot\s=-\frac{\rho}{\tau\pll\per}\left(\pll-\per\right)^2\leq0.$$
    Therefore, the semi-discrete scheme \eqref{eq:semi-discrete_fd} with the source term $\s$ is also entropy stable, i.e., it satisfies
	\begin{equation*}
		\frac{d}{dt}  \ent(\con_{i,j})  +\frac{1}{\Delta x} \left( \hat{\entf}^k_{x,i+\frac{1}{2},j} - \hat{\entf}^k_{x,i-\frac{1}{2},j}\right)+\frac{1}{\Delta y}\left( \hat{\entf}^k_{y,i,j+\frac{1}{2}} - \hat{\entf}^k_{y,i,j-\frac{1}{2}}\right) \leq \evar_{i,j}^\top\cdot\s_{i,j}\leq 0.
	\end{equation*}
\end{remark}}
\section{Time discretization}
\label{sec:Fully_discrete} 
Let $\con^{n}$ be the solution at $t=t^n$ and $\Delta t = t^{n+1} - t^n$.  Then the scheme~\eqref{eq:semi-discrete_fd} including the source term $\s(\con_{i,j})$,
can be represented as,
\begin{equation}
	\frac{d }{dt}\con_{i,j}(t) = \mathcal{L}_{i,j}(\con(t)) +\s(\con_{i,j}(t)),
	\label{fullydiscrete}
\end{equation}
where,\\
\scalebox{0.9}{%
	\begin{minipage}{1.1\linewidth} 
		\begin{align*}
			\small
			\mathcal{L}_{i,j}(\con(t))=&
			-\frac{\hat{\F}^{k}_{x, i+\frac{1}{2}, j}-\hat{\F}^{k}_{x, i-\frac{1}{2}, j}}{\Delta x}-\frac{\hat{\F}^{k}_{y, i, j+\frac{1}{2}}-\hat{\F}^{k}_{y, i, j-\frac{1}{2}}}{\Delta y} -\textcolor{red}{\Phi'(\evar_{i, j})^\top\left(\left(\frac{\p B_x}{\p x}\right)_{i,j} + \left(\frac{\p B_y}{\p y}\right)_{i,j}\right)}\nonumber\\
			&- \textcolor{red}{\left((\Upsilon_{x})_{i,j}\left(\frac{\p \Psi}{\p x}\right)_{i,j} + (\Upsilon_{y})_{i,j}\left(\frac{\p \Psi}{\p y}\right)_{i,j}\right)} -\textcolor{blue}{\bc_x(\con_{i,j})\left(\frac{\p \con}{\p x}\right)_{i,j} -\bc_y(\con_{i,j}) \left(\frac{\p \con}{\p y}\right)_{i,j}}.
		\end{align*}
	\end{minipage}%
}\vspace{0.5cm}\\
When we compute the anisotropic (CGL) solutions, we do not consider the source term $\s$. So, we use the explicit SSP-RK methods from~\cite{gottlieb2001strong}. These schemes are summarized as follows:
\begin{itemize}
	\item $\ote:$ We take $p=1$ for the entropy conservative numerical flux. For the diffusion operator, we use \textit{MinMod}-based sign-preserving reconstruction. The derivatives $\left(\frac{\p B_x}{\p x}\right)_{i,j}$, $\left(\frac{\p B_y}{\p y}\right)_{i,j}$, $\left(\frac{\p \Psi}{\p x}\right)_{i,j}$, $\left(\frac{\p \Psi}{\p y}\right)_{i,j}$, $\left(\frac{\p \con}{\p x}\right)_{i,j}$, and $\left(\frac{\p \con}{\p y}\right)_{i,j}$ are approximated by second order central difference. For the time update, we use the second-order SSP-RK method.
	
	\item $\othe:$ Here, we take $p=2$ for the entropy conservative part. For the diffusion operator, we use third-order ENO-based sign-preserving reconstruction. The derivatives $\left(\frac{\p B_x}{\p x}\right)_{i,j}$, $\left(\frac{\p B_y}{\p y}\right)_{i,j}$, $\left(\frac{\p \Psi}{\p x}\right)_{i,j}$, $\left(\frac{\p \Psi}{\p y}\right)_{i,j}$, $\left(\frac{\p \con}{\p x}\right)_{i,j}$, and $\left(\frac{\p \con}{\p y}\right)_{i,j}$, are approximated using fourth-order central difference. For the time update, we use the third-order SSP-RK method.
	
	\item $\ofe:$ Here, we take $p=2$ for the entropy conservative part.. For the diffusion operator, we use fourth-order ENO-based sign-preserving reconstruction. The derivatives $\left(\frac{\p B_x}{\p x}\right)_{i,j}$, $\left(\frac{\p B_y}{\p y}\right)_{i,j}$, $\left(\frac{\p \Psi}{\p x}\right)_{i,j}$, $\left(\frac{\p \Psi}{\p y}\right)_{i,j}$, $\left(\frac{\p \con}{\p x}\right)_{i,j}$, and $\left(\frac{\p \con}{\p y}\right)_{i,j}$, are approximated using fourth-order central difference. For the time update, we use the fourth-order SSP-RK method.	
\end{itemize}

To compute {\em isotropic} (MHD) solutions, we consider the source terms \eqref{eq:src_mhd}. As lower values of $\tau$ make the system stiff, we use ARK-IMEX schemes~\cite{pareschi2005implicit,kennedy2003additive}, where we treat $\mathcal{L}_{i,j}(\con(t))$ explicitly and $\s(\con_{i,j}(t))$ implicitly. We summarize these schemes as follows:

\begin{itemize}
	\item $\oti:$  The discretization of $\mathcal{L}_{i,j}(\con(t))$ is same as in the case of $\ote.$ For the ARK-IMEX time update, we follow, ~\cite{pareschi2005implicit} and consider second-order scheme which is L-stable, where  $\s(\con_{i,j}(t))$ is treated implicitly.
	
	\item $\othi:$ The discretization of $\mathcal{L}_{i,j}(\con(t))$ is same as in $\othe$. The source term $\s(\con_{i,j}(t))$ is treated implicitly, in the third-order ARK-IMEX scheme from~\cite{kennedy2003additive}.

	\item $\ofi:$ The term $\mathcal{L}_{i,j}(\con(t))$ is discretized as in $\ofe$. We use fourth order ARK-IMEX scheme from~\cite{kennedy2003additive}, where we treat $\s(\con_{i,j}(t))$, implicitly.	
\end{itemize}
\section{Numerical results}
\label{sec:NR}
In this section, we present the numerical results for the various one and two-dimensional test cases. \revg{Following~\cite{derigs2018ideal,rueda2025entropy}, we define the hyperbolic divergence cleaning speed based on the maximum non-GLM physical wave speed. For the one-dimensional cases we set
\[c_h = \alpha \left(\max\left\{ \bb{\Lambda}_x\setminus\{v_x + c_h, v_x-c_h\}\right\}\right),~~\alpha\in(0,1],\]
and for the two-dimensional cases,
\[
c_h = \alpha \left( \max\left\{ \bb{\Lambda}_x\setminus\{v_x + c_h, v_x-c_h\},  \bb{\Lambda}_y\setminus\{v_y + c_h, v_y-c_h\}\right\}\right),~~\alpha\in(0,1].
\]
where $\setminus$ denotes exclusion of the GLM waves from the eigenvalue sets and $\bb{\Lambda}_d$, $d\in\{x,y\}$, denotes the set of eigenvalues of system~\eqref{eq:glm_cgl_final} in the $d$-direction, as described in Appendix~\ref{Eiegn_value_admissible domain} (see Eqs.~\eqref{eq:fulleigen_glm_cgl_psi_x} and~\eqref{eq:fulleigen_glm_cgl_psi_y}). In both cases, $c_h$ is taken as a single global constant at each time step by taking the maximum over the entire computational domain and is therefore spatially constant. Therefore, no additional terms involving $\nabla c_h$ appear in the entropy analysis. We take $\alpha=1.0$.}

The time-step size is estimated as
$$
\Delta t =  \frac{\text{CFL}}{\max_{i,j} \left( \frac{\lambda_x(\con_{i,j})}{\Delta x} + \frac{\lambda_y(\con_{i,j})}{\Delta y} \right)},
$$
where $$\lambda_x=\max_{\lambda\in\mathbf{\Lambda}_x} |\lambda| \quad \text{and}\quad\lambda_y=\max_{\lambda\in\mathbf{\Lambda}_y} |\lambda|.$$

\revg{and $\mathbf{\Lambda}_x$ and $\mathbf{\Lambda}_y$ denote the eigenvalue sets of the system~\eqref{eq:glm_cgl_final} in the $x$- and $y$-directions, respectively, as given in Appendix~\ref{Eiegn_value_admissible domain} (see Eqs.~\eqref{eq:fulleigen_glm_cgl_psi_x} and~\eqref{eq:fulleigen_glm_cgl_psi_y}).}
We take $\text{CFL}=0.4$ for all test cases. For the source terms, we take $\tau=10^{-5}.$ We call these solutions as {\em isotropic GLM-CGL} solutions. The solution without the source terms is termed as {\em GLM-CGL} solutions. If no GLM divergence diminishing is used, the solution without source terms is called {\em CGL} solution. Similarly, solutions with source terms, but no GLM divergence diminishing, are termed as {\em isotropic CGL} solutions. To compare the isotropic solutions and MHD solutions, we have computed MHD solutions. For that, we use second-order finite volume methods with MinMod-based reconstruction. We use the Rusanov solver for the numerical flux. For 1D, we use $10000$ cells. For 2D, we use highly refined mesh of $1000 \times 1000$ cells.

For two-dimensional test cases, to show the effect of GLM terms on the simulations, we calculate the divergence errors for the magnetic field in $L_1-$norm,
\[\|\nabla\cdot\B\|_{1} = \frac{1}{N_x N_y}\sum_{i=1}^{N_x}\sum_{j=1}^{N_y}\left|(\nabla\cdot\B)_{i,j}\right|,\]
and $L_2$-norm,
\[\|\nabla\cdot\B\|_{2} = \left[\frac{1}{N_x N_y}\sum_{i=1}^{N_x}\sum_{j=1}^{N_y}\left|(\nabla\cdot\B)_{i,j}\right|^2\right]^\frac{1}{2}.\]
Here $N_x$ and $N_y$ denote the number of grid cells in the $x$ and $y$-directions. The discrete divergence error $(\nabla\cdot\B)_{i,j}$ is evaluated using
\[(\nabla\cdot\B)_{i,j} = \frac{B_{x,i+1,j}-B_{x,i-1,j}}{2\Delta x} + \frac{B_{y,i,j+1}-B_{y,i,j-1}}{2\Delta y}.\]

\revg{Although the proposed schemes do not ensure the positivity of the solutions, we do not employ any floor values to satisfy the positivity constraints. Furthermore, for each test case, we also present the minimum values of density and pressure components.}
\subsection{One-dimensional accuracy test}
\label{test:liu_2025}

\begin{table}[tbhp]
	\footnotesize
	\begin{center}
		\begin{tabular}{c|c|c|c|c|c|c|}
			\hline Number of cells  & \multicolumn{2}{|c}{$\ote$} & \multicolumn{2}{|c}{$\othe$} & \multicolumn{2}{|c}{$\ofe$}  \\
			\hline   & $L_1$ error  &  Order &  $L_1$ error      & Order & $L_1$ error      & Order \\
			\hline 12 & 4.86E-02 & -- & 7.14E-03 & -- & 1.98E-03 & -- \\
			24 & 1.51E-02 & 1.688145847 & 9.35E-04 & 2.932342894 & 2.09E-04 & 3.241997756 \\
			48 & 6.67E-03 & 1.176375673 & 1.18E-04 & 2.986256366 & 1.61E-05 & 3.704200874 \\
			96 & 1.93E-03 & 1.789540351 & 1.48E-05 & 2.996555203 & 1.16E-06 & 3.793435746 \\
			192 & 5.32E-04 & 1.85985482 & 1.85E-06 & 2.999133857 & 8.02E-08 & 3.852029024 \\
			384 & 1.46E-04 & 1.863664412 & 2.31E-07 & 2.999760591 & 5.40E-09 & 3.892946082 \\
			\hline
		\end{tabular}
		\caption{\textbf{\nameref{test:liu_2025}}: $L_1$ errors and order of accuracy for $\rho$.}
		\label{tab:1}
	\end{center}
\end{table} 

In this test case, we demonstrate the accuracy of the proposed numerical schemes for the GLM-CGL system. We adapt the MHD test problem from~\cite{liu2025entropy}. The computational domain is $[0,2\pi]$, with periodic boundary conditions. The initial data is given as,
\begin{align*}
	\rho(x,0) &= 1+0.2 \sin{(x)},\\
	\left(\bu,\pll,\per,\B,\Psi\right)& = \left(1,0,0,2,2,0.5,1.0,1.5,0\right).
\end{align*}
We do not consider the source term in this test case. The simulations are performed till time $t=1.3.$ The exact solution of the problem is $\rho(x,t) = 1+0.2 \sin{(x-t)}$, which is the advection of the density profile with unit velocity in the $x$-direction. In Table~\eqref{tab:1}, we have presented the $L_1$-errors and numerical order of convergence for density variables, using $\ote$,  $\othe$ and  $\ofe$ schemes for the GLM-CGL system. All the schemes have achieved the predicted order.


\subsection{One-dimensional artificial non-zero magnetic field divergence test}\label{test:1d_artificial}
\begin{figure}[!htbp]
	\begin{center}
		\subfigure[$B_x$ for $\ote$ scheme for CGL ]{\includegraphics[width=0.28\textwidth]{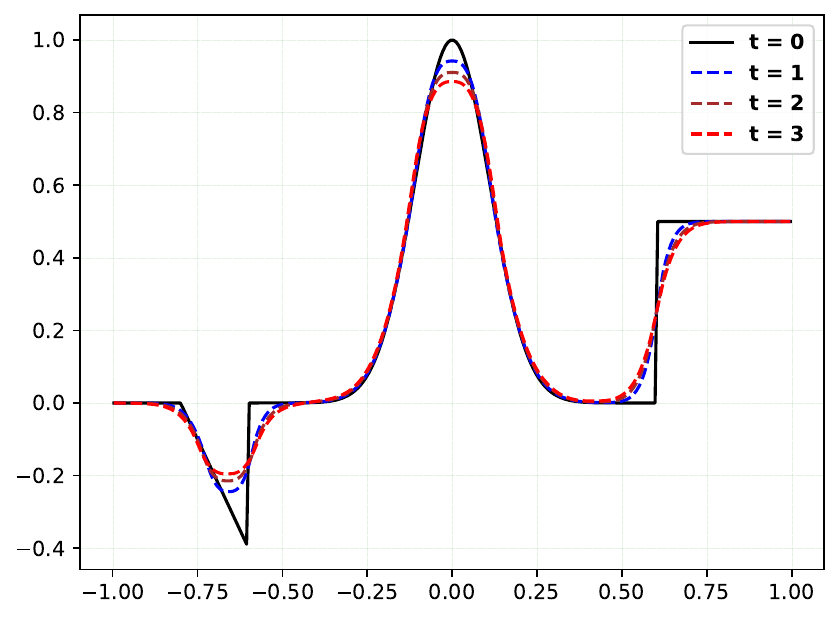}\label{fig:O2_wo_glm}}~
		\subfigure[$B_x$ for $\othe$ scheme for CGL ]{\includegraphics[width=0.28\textwidth]{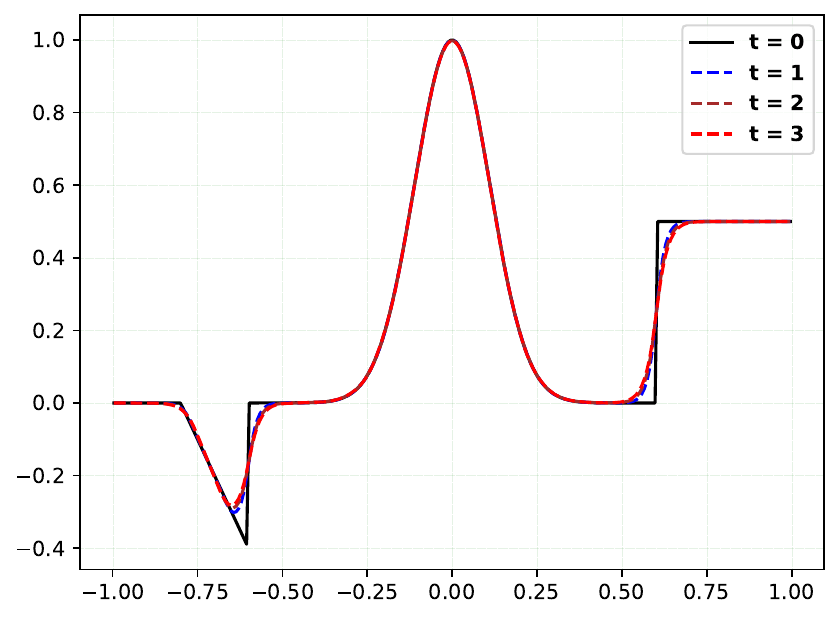}\label{fig:O3_wo_glm}}~
		\subfigure[$B_x$ for $\ofe$ scheme for CGL ]{\includegraphics[width=0.28\textwidth]{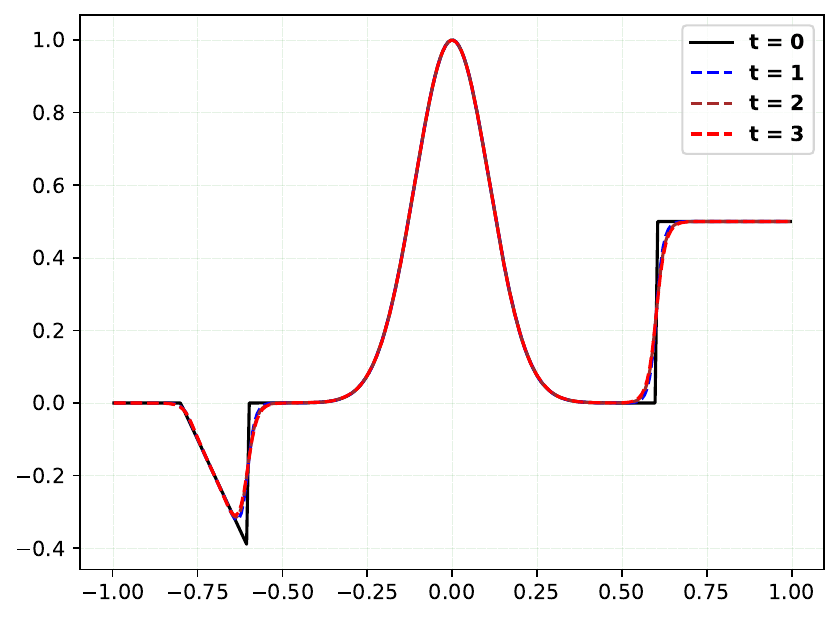}\label{fig:O4_wo_glm}}\\
		\subfigure[$B_x$ for $\ote$ scheme for GLM-CGL]{\includegraphics[width=0.28\textwidth]{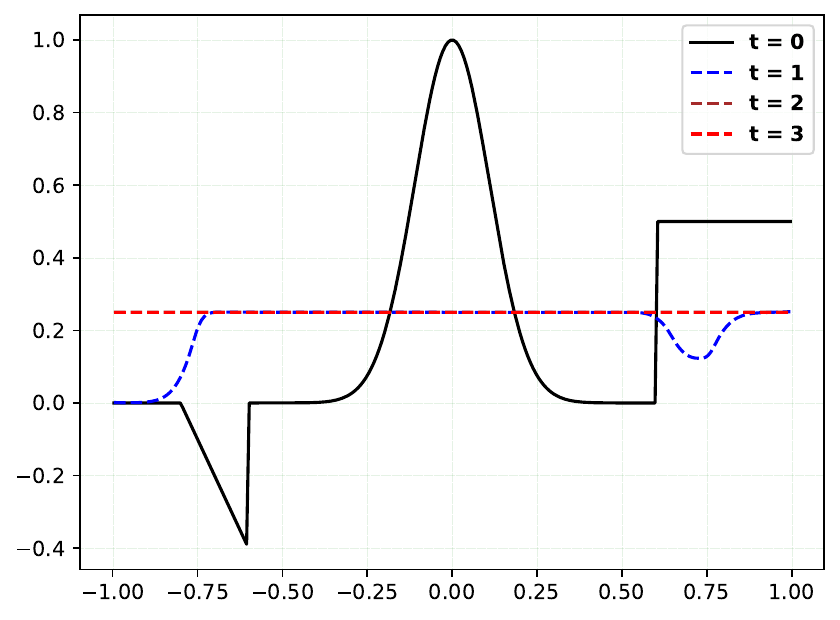}\label{fig:O2_w_glm}}~
		\subfigure[$B_x$ for $\othe$ scheme for GLM-CGL]{\includegraphics[width=0.28\textwidth]{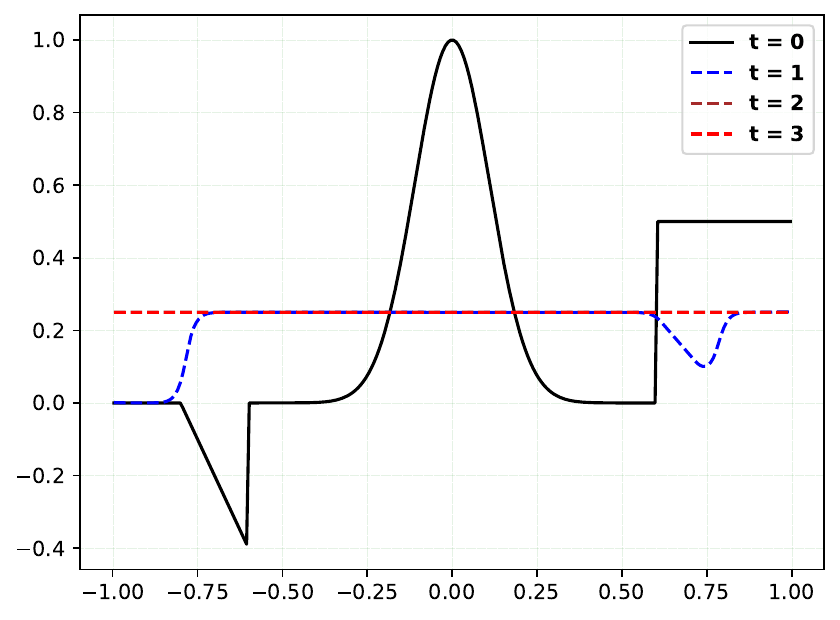}\label{fig:O3_w_glm}}~
		\subfigure[$B_x$ for $\ofe$ scheme for GLM-CGL]{\includegraphics[width=0.26\textwidth]{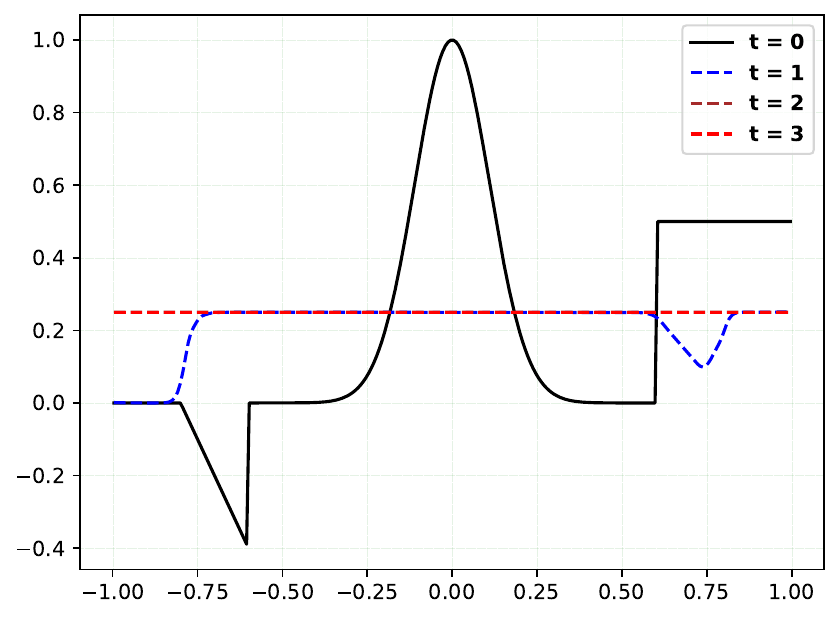}\label{fig:O4_w_glm}}\\
		\caption{\textbf{\nameref{test:1d_artificial}}: Plots of magnetic field evolution for $\ote$, $\othe$ and $\ofe$ schemes for CGL and GLM-CGL at four different time steps.}
		\label{fig:1d_artificial}
	\end{center}
\end{figure}
\revg{Following~\cite{derigs2018ideal}, we consider an initial condition with a non-zero discrete divergence of the magnetic field. In practical computations, numerically generated initial data may fail to satisfy the divergence-free constraint exactly, resulting in small divergence errors. We present the proposed GLM-CGL framework to demonstrate its ability to effectively handle such divergence errors.} The computational domain is taken to $[-1,1]$ with Neumann boundary conditions. The magnetic field component $B_x$ is given by,
\[B_x = \begin{cases}
	0.0, & \textrm{if } x \leq -0.8\\
	-2(x + 0.8), & \textrm{if } -0.8 < x \leq -0.6\\
	\exp{\left(-\frac{(x/0.11)^2}{2}\right)}, & \textrm{if } -0.6 < x \leq 0.6\\
	0.5, & \textrm{otherwise.}
\end{cases}\]
The rest of the variables are taken to be constants and given by,
\begin{align*}
	\left(\rho,\bu,\pll,\per,B_y,B_z,\Psi\right)&= \left(1,0,0,0,1,1,0,0,0\right).
\end{align*}
To demonstrate the effect of GLM-based formulation, we have compared the GLM-CGL formulation with the CGL equations without GLM. The numerical results for $\ote$, $\othe$ and $\ofe$ schemes for both the formulations are presented in Figure~\eqref{fig:1d_artificial} at time $t=0, 1, 2 $ and $3$. 

For the CGL equation without GLM, we observe that $\ote$, $\othe$ and $\ofe$ schemes (Figures~\eqref{fig:O2_wo_glm},~\eqref{fig:O3_wo_glm} and~\eqref{fig:O4_wo_glm}) preserve the initial profile of the $x-$magnetic field component $(B_x).$ So, the divergence of the initial profile is not diminished. We observe only slight diffusion in the $B_x$ profile over time. 

In Figures~\eqref{fig:O2_w_glm},~\eqref{fig:O3_w_glm} and~\eqref{fig:O4_w_glm}, we have plotted the magnetic field evolution $B_x$ for the $\ote$, $\othe$ and $\ofe$ schemes for GLM-CGL. In this case, we observe that as time increases, the divergence of the magnetic field decreases and becomes zero at $t=3$. This demonstrates that the GLM-CGL formulation deals with the divergence effectively, even in a one-dimensional case. We also observe that all the numerical schemes behave similarly, with higher-order schemes being less diffusive. Furthermore, the results are comparable to those obtained in~\cite{derigs2018ideal} for MHD equations.

\subsection{Brio-Wu shock tube problem}
\label{test:bw}
\begin{figure}[!htbp]
	\begin{center}
		\subfigure[Density for explicit schemes for CGL and GLM-CGL]{\includegraphics[width=0.28\textwidth]{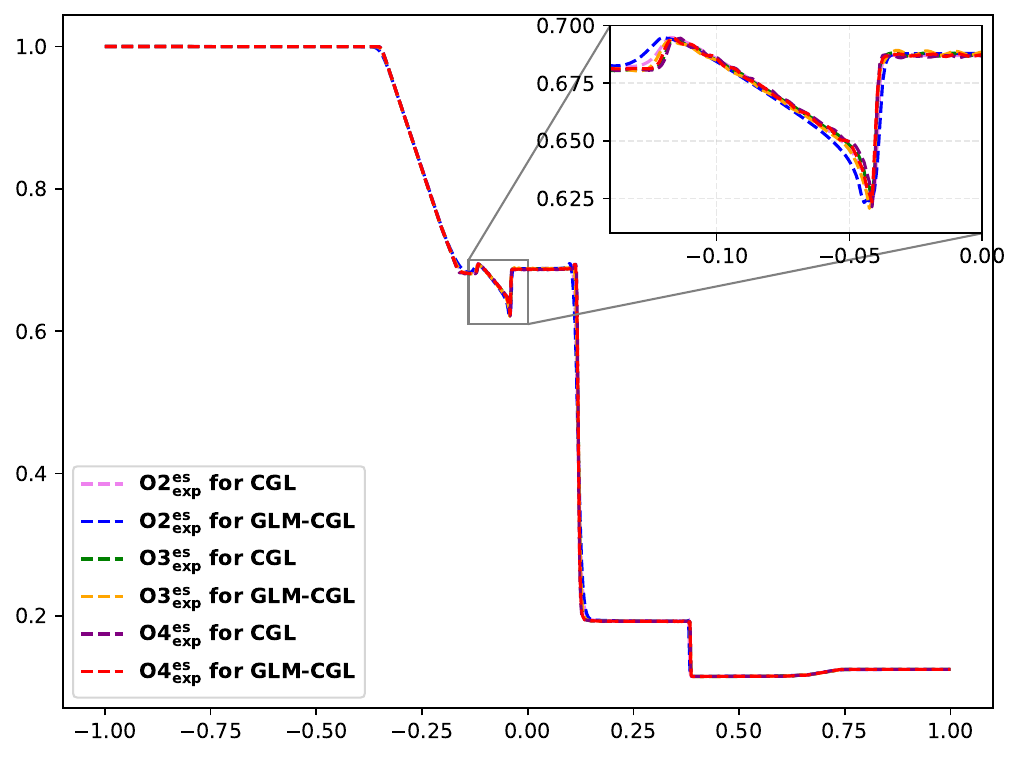} \label{fig:bw_exp_rho}}~
		\subfigure[$\pll$ for explicit schemes for CGL and GLM-CGL]{\includegraphics[width=0.28\textwidth]{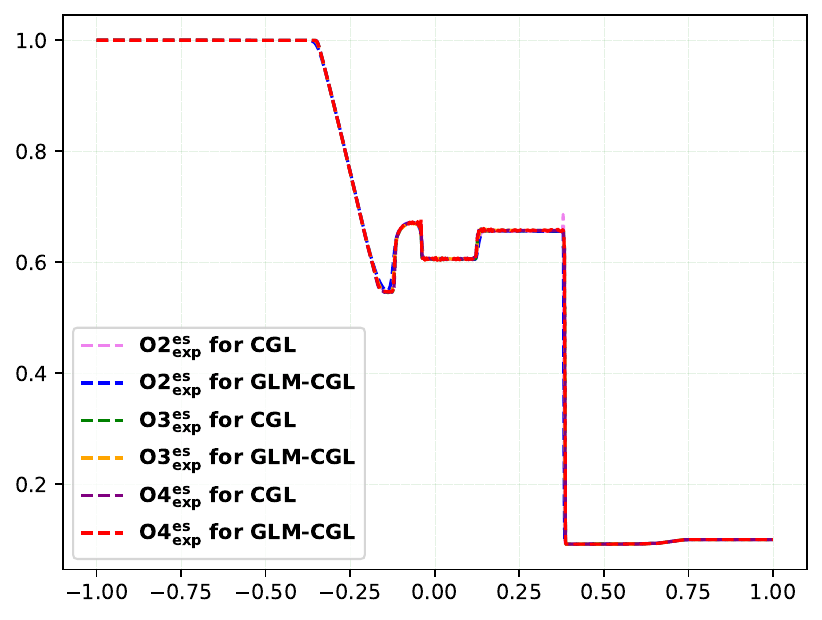}\label{fig:bw_exp_pll}}~
		\subfigure[$\per$ for explicit schemes for CGL and GLM-CGL]{\includegraphics[width=0.28\textwidth]{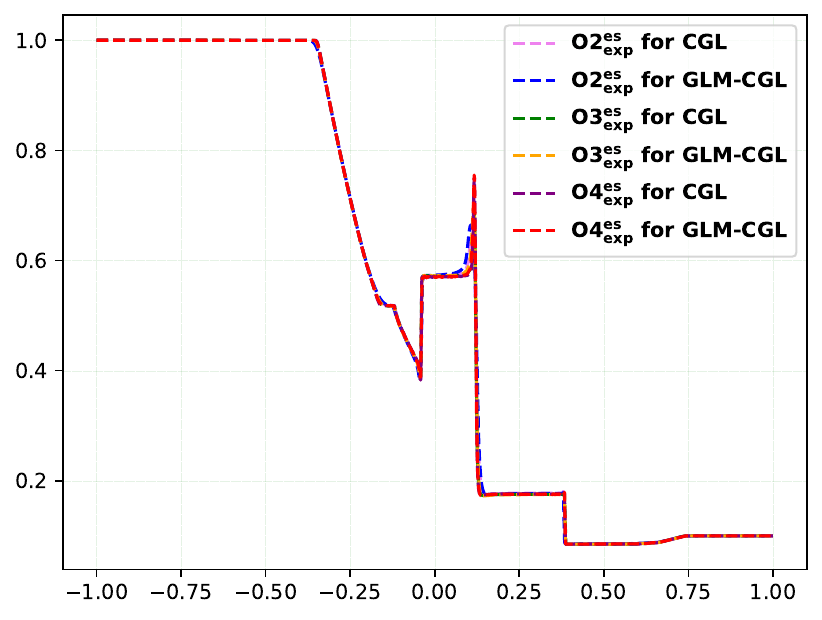}\label{fig:bw_exp_perp}}\\
		\subfigure[Density for IMEX schemes for isotropic CGL and isotropic GLM-CGL]{\includegraphics[width=0.28\textwidth]{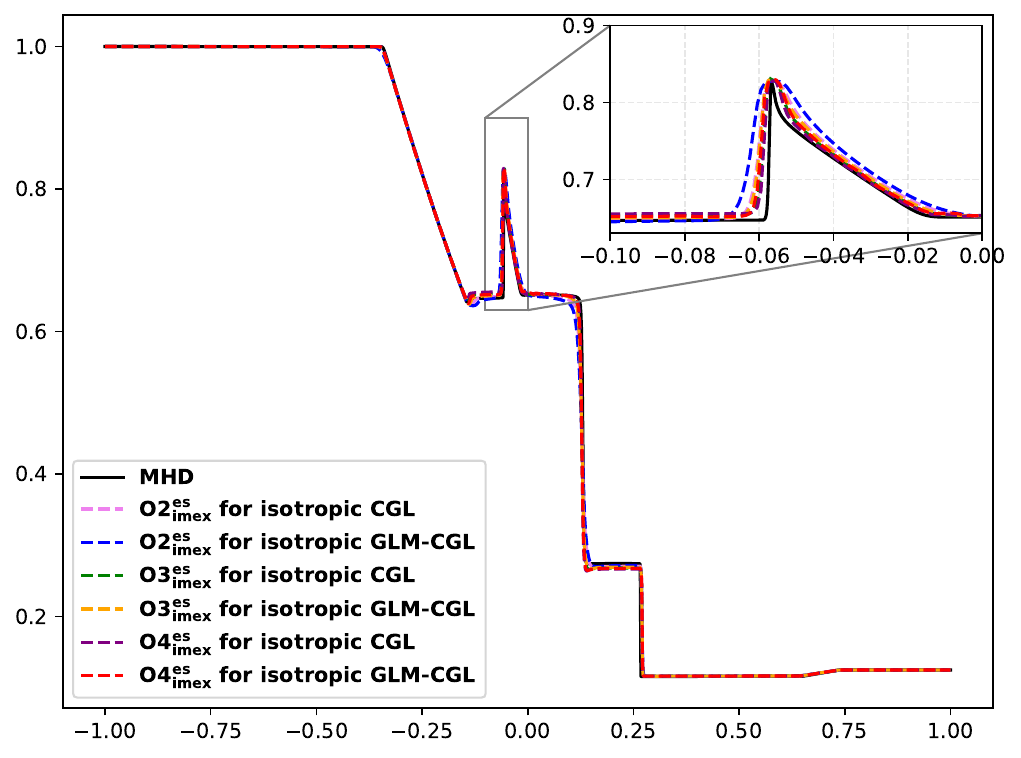}\label{fig:bw_imp_rho}}~
		\subfigure[$\pll$ for IMEX schemes for isotropic CGL and isotropic GLM-CGL]{\includegraphics[width=0.28\textwidth]{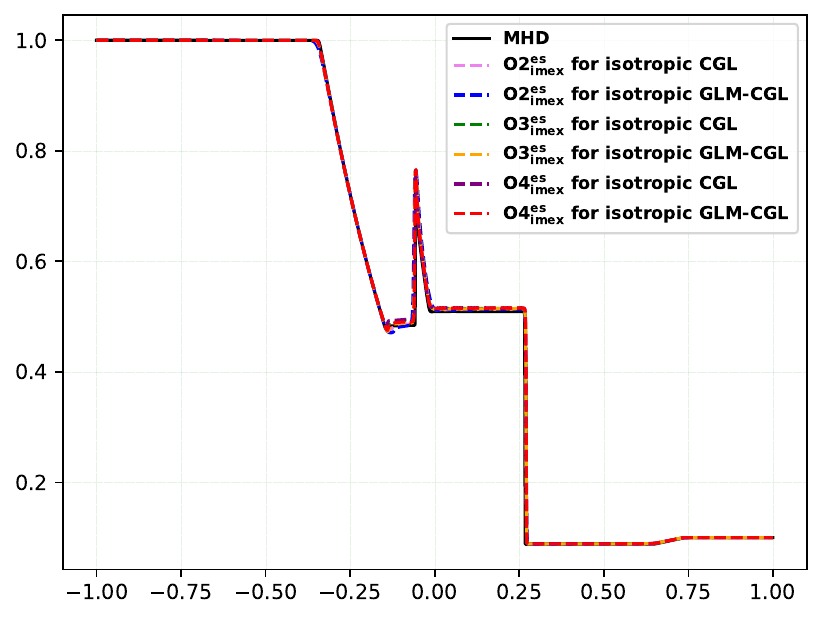}\label{fig:bw_imp_pll}}~
		\subfigure[$\per$ for IMEX schemes for isotropic CGL and isotropic GLM-CGL]{\includegraphics[width=0.28\textwidth]{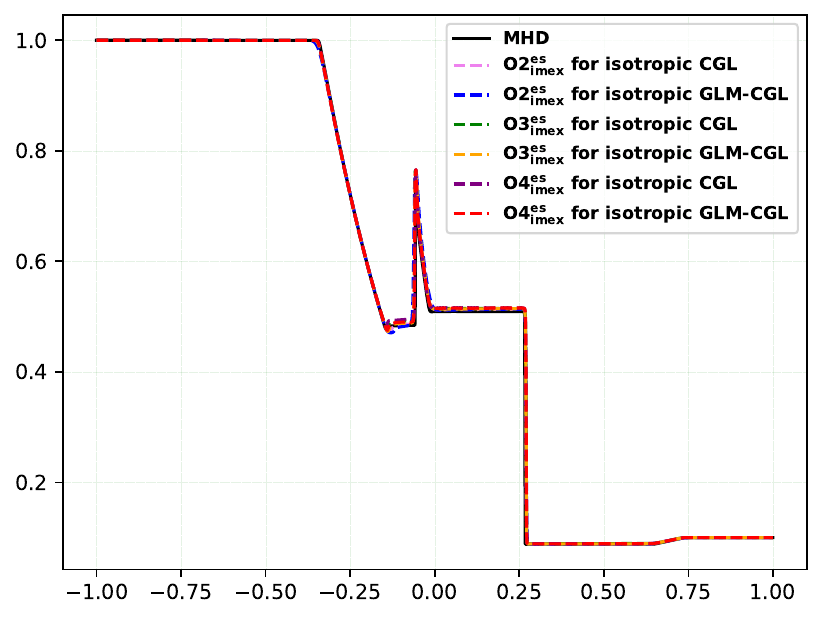}\label{fig:bw_imp_perp}}
		\caption{\textbf{\nameref{test:bw}}: Plots of density, parallel and perpendicular pressure components for explicit schemes and IMEX schemes using $2000$ cells at final time $t = 0.2$.}
		\label{fig:bw}
	\end{center}    
\end{figure}

\begin{table}[ht]
\centering
\renewcommand{\arraystretch}{1.4}
\begin{tabular}{|c|c|c|c|c|}
\hline
\multirow{2}{*}{Scheme} &
\multirow{2}{*}{System} &
$\rho_{\min}$ &
$(p_{\parallel})_{\min}$ &
$(p_{\perp})_{\min}$ \\
&&&&\\
\hline

\multirow{2}{*}{$\ote$}
& CGL
& 1.152440E-01 & 9.183145E-02 &  8.517001E-02 \\ \cline{2-5}

& GLM--CGL
& 1.152106E-01 & 9.180354E-02 &8.512142E-02   \\  \hline

\multirow{2}{*}{$\othe$}
& CGL
& 1.151120E-01 & 9.172041E-02 & 8.497582E-02  \\ \cline{2-5}

& GLM--CGL
& 1.150731E-01 & 9.168778E-02 & 8.491903E-02  \\  \hline

\multirow{2}{*}{$\ofe$}
& CGL
& 1.151282E-01  & 9.171415E-02 & 8.499921E-02 \\ \cline{2-5}

& GLM--CGL
& 1.150758E-01 & 9.166919E-02 &  8.492299E-02 \\  \hline

\multirow{2}{*}{$\oti$}
& Isotropic CGL
& 1.161367E-01 & 8.846605E-02 & 8.846605E-02 \\ \cline{2-5}

& Isotropic GLM--CGL
& 1.160663E-01 &  8.837670E-02& 8.837670E-02  \\  \hline

\multirow{2}{*}{$\othi$}
& Isotropic CGL
& 1.163002E-01 & 8.867169E-02 &   8.867169E-02\\ \cline{2-5}

& Isotropic GLM--CGL
& 1.162022E-01 & 8.854720E-02 &  8.854720E-02  \\  \hline

\multirow{2}{*}{$\ofi$}
& Isotropic CGL
&  1.163883E-01& 8.878357E-02 &  8.878357E-02 \\ \cline{2-5}

& Isotropic GLM--CGL
& 1.162905E-01 & 8.865944E-02 & 8.865944E-02 \\ \hline

\end{tabular}
\caption{\textbf{\nameref{test:bw}}: Minimum values of density $\rho$, parallel pressure $p_{\parallel}$, and perpendicular pressure $p_{\perp}$.}
\label{tab:min_bw}
\end{table}

\begin{figure}[!htbp]
	\begin{center}
		\subfigure[Density for $\ofe$ schemes for GLM-CGL]{\includegraphics[width=\textwidth]{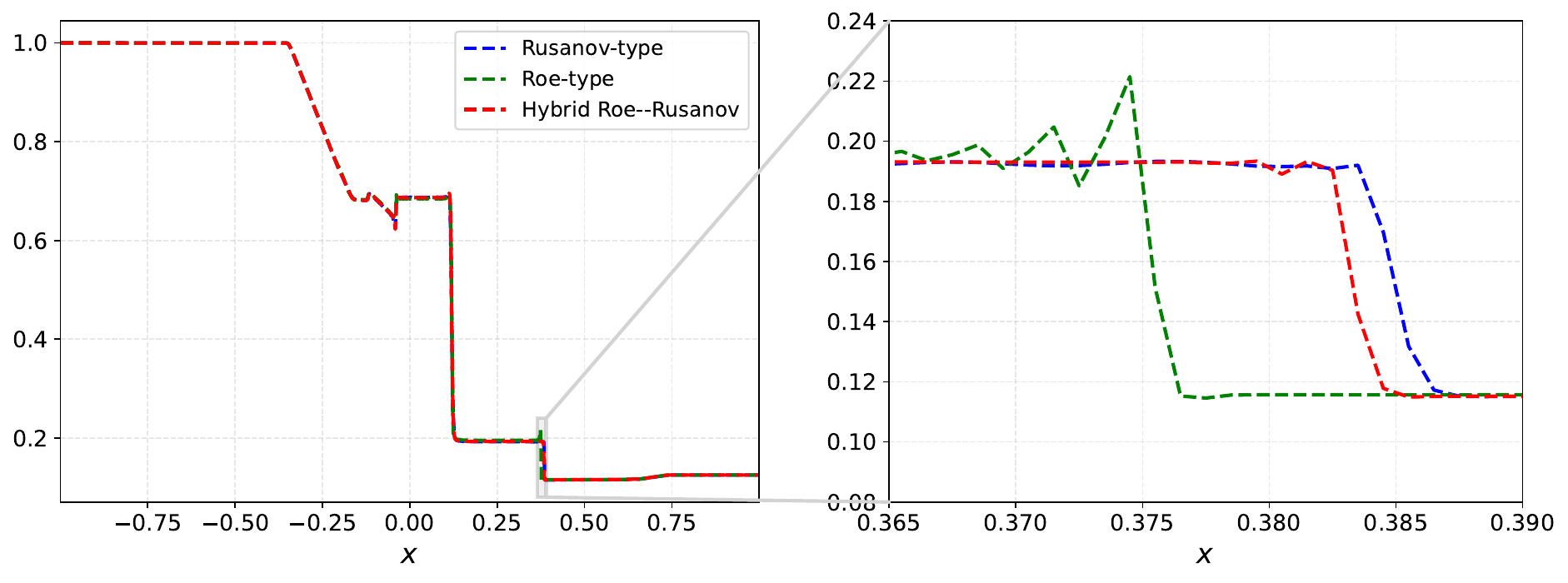} \label{fig:bw_com_rho}}\\
		\subfigure[$\pll$ for $\ofe$ schemes for GLM-CGL]{\includegraphics[width=\textwidth]{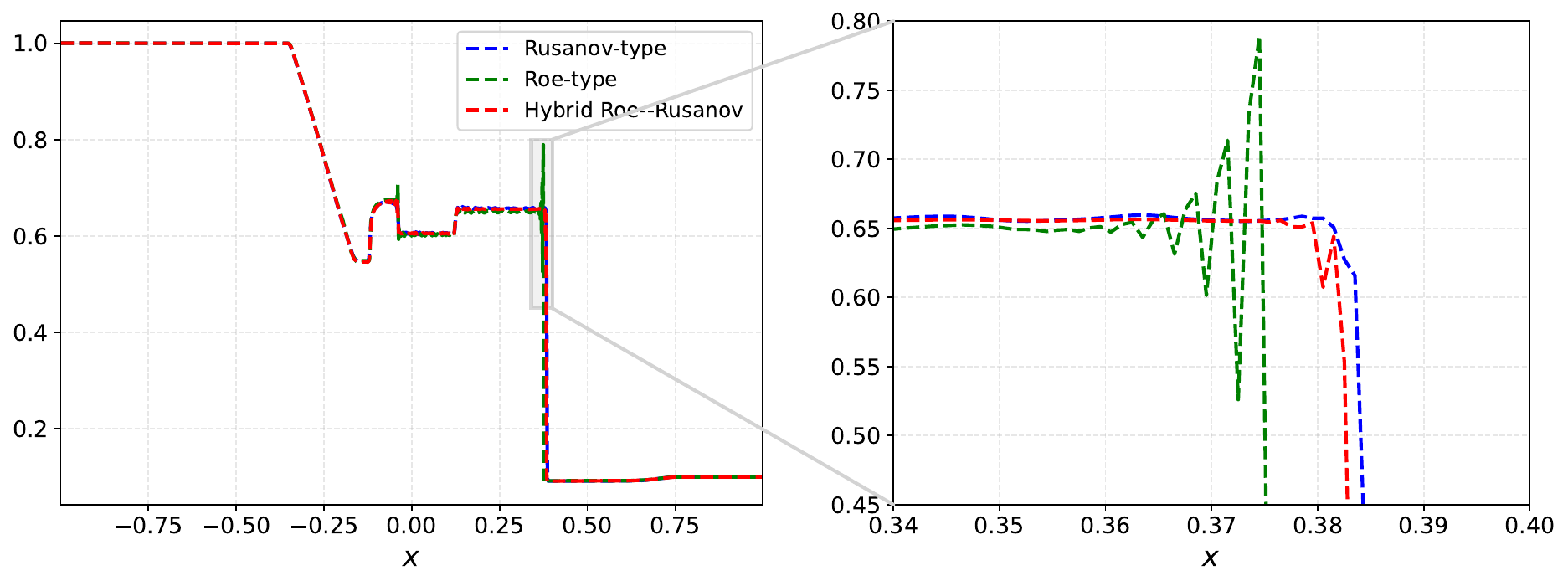}\label{fig:bw_com_pll}}
		\caption{\textbf{\nameref{test:bw}}: Comparison of the density and parallel pressure profiles obtained using the Rusanov-type, Roe-type, and hybrid Roe-Rusanov dissipation operators with $\ofe$ scheme for GLM-CGL using $2000$ cells at the final time $t=0.2$.}
		\label{fig:Comparison}
	\end{center}    
\end{figure}
In this test, we generalize the Brio-Wu shock tube test case for MHD equations~\cite{Brio1988upwind}. The test was also considered in~\cite{singh2024entropy,singh2024eigen}. Following~\cite{singh2024entropy}, we take the computational domain to be $[-1,1]$ with Neumann boundary conditions. The initial conditions are:
\[(\rho, \bu, \pll, \per, B_x,B_{y}, B_{z},\Psi) = \begin{cases}
	(1, 0, 0, 0, 1, 1, 0.75,1, 0,0), & \textrm{if } x\leq 0\\
	(0.125, 0, 0, 0, 0.1, 0.1, 0.75,-1, 0,0), & \textrm{otherwise}
\end{cases}\]

In Figure \eqref{fig:bw}, we present the numerical results at time $t=0.2$ for CGL and GLM-CGL formulations. We also consider the isotropic CGL and isotropic GLM-CGL by considering the source term and using IMEX schemes $\oti$, $\othi$ and $\ofi$. The simulations use $2000$ cells. We have also compared MHD and isotropic solutions. We have plotted profiles of density, $\pll$ and $\per$ for all the test cases.

For the CGL and GLM-CGL, we observe that all schemes successfully resolve the waves. We observe small-scale oscillations (similar to those in~\cite{singh2024entropy}), but they decrease with further refinements. \revo{We also observe that the GLM-CGL formulation exhibits slightly more dissipative behavior than the baseline CGL formulation for the same order of scheme. This behavior is expected because the GLM formulation introduces an auxiliary scalar field $\Psi$, which generates two additional hyperbolic cleaning waves, $v_d\pm c_h$. These waves continuously propagate and damp magnetic divergence errors, and the associated GLM correction introduces additional numerical dissipation that is absent in the baseline CGL formulation. Therefore, the improved divergence control comes with a small increase in numerical dissipation.}

For the isotropic cases, we note that both $\pll$ and $\per$ have the same profile (which matches the MHD pressure profile), as expected. Furthermore, the density profile matches the MHD density profile. Furthermore, we again see that isotropic GLM-CGL solutions are slightly more diffusive than the isotropic CGL solution, as discussed above. In all the cases, we also observe that higher-order schemes are more accurate.

\revg{Table~\eqref{tab:min_bw} lists the minimum values of density, $\pll$, and $\per$ at the final time. All values remain positive for all schemes and formulations, indicating that physically admissible solutions are maintained throughout the computation. For the isotropic cases, we additionally observe that $\pll=\per$, as expected.}

\revo{To further investigate the effect of the dissipation operator, we compare the Rusanov-type, Roe-type, and hybrid Roe-Rusanov dissipation operators introduced in Section~\ref{Sec:Entropy_SS}. Figure~\ref{fig:Comparison} compares the density and parallel pressure profiles obtained using the three dissipation operators. The left panels show the complete solution, while the right panels present zoomed views near the discontinuity. All three operators capture the overall wave structure well. However, the zoomed plots show that the Roe-type operator develops noticeable oscillations near the discontinuity. These oscillations are reduced by the hybrid Roe-Rusanov operator, whose solution is very close to that obtained with the Rusanov-type operator. These results demonstrate that the Rusanov-type operator provides a good balance between robustness and accuracy, while the hybrid Roe--Rusanov operator offers no significant improvement for this test case.}
\subsection{Two-dimensional accuracy test}
\label{test:liu_2025_2D} 
\begin{table}[tbhp]
	\footnotesize
	\begin{center}
		\begin{tabular}{c|c|c|c|c|c|c|}
			\hline Number of cells  & \multicolumn{2}{|c}{$\ote$} & \multicolumn{2}{|c}{$\othe$} & \multicolumn{2}{|c}{$\ofe$}  \\
			\hline   & $L_1$ error  &  Order &  $L_1$ error      & Order & $L_1$ error      & Order \\
			\hline 24 $\times$ 24 & 2.22E-02 & -- &  1.47E-03 & -- & 3.29E-04 & -- \\
			48 $\times$ 48 & 9.66E-03 & 1.197102245 & 1.88E-04 & 2.97345722 & 2.60E-05 & 3.665723172 \\
			96 $\times$ 96 & 3.00E-03 & 1.685415758 & 2.35E-05 & 2.995912704 & 1.87E-06 & 3.79493394 \\
			192 $\times$ 192 & 8.43E-04 & 1.834041352 & 2.94E-06 & 2.999439213 & 1.30E-07 & 3.845890159 \\
			384$\times$ 384 & 2.32E-04 & 1.858969497 & 3.68E-07 & 2.999783442 & 8.78E-09 & 3.888987053 \\
			\hline
		\end{tabular}
		\caption{\textbf{\nameref{test:liu_2025_2D}}: $L_1$ errors and order of accuracy for $\rho$.}
		\label{tab:3}
	\end{center}
\end{table} 
For the two-dimensional accuracy test, we generalize the one-dimensional accuracy test in Section \ref{test:liu_2025} to two dimensions for the GLM-CGL system. We consider the computational domain  $[0,2\pi]\times[0,2\pi]$ with periodic boundary conditions. 
The initial conditions are given below,
\begin{align*}
	\rho(x,y,0) &= 1+0.2 \sin{(x+y)},\\
	\left(\bu,\pll,\per,\B,\Psi\right)& = \left(0.5,0.5,0,2,2,0.5,1.0,1.5,0\right).
\end{align*}
The exact solution is then given by $\rho(x,y,t) = 1+0.2 \sin{(x+y-t)}$. For the $\ote$,  $\othe$ and  $\ofe$ schemes with GLM, the $L_1$-errors of the density are shown in Table\eqref{tab:3} at the final time of $t=1.3$. We note that all the schemes converge with the theoretically predicted order of accuracy.
\subsection{Two-dimensional Circularly polarized Alfv\'en waves}\label{test:2d_alfven}
To test the accuracy of the magnetic field components, in this test we consider the generalization of the two-dimensional MHD circularly polarized alfv\'en waves problem, which was introduced in~\cite{toth2000b}, for the GLM-CGL system.  We consider the computational domain is $\left[0,\frac{1}{\cos{\alpha}}\right]\times\left[0,\frac{1}{\sin{\alpha}}\right]$, with periodic boundary conditions. The initial condition is given as
\begin{align*}
	\left(\rho,\pll,\per,\Psi \right)& = \left(1, 0.1, 0.1, 0\right),\\
	\bu &= \left(-v_\perp \sin{\alpha},~v_\perp \cos{\alpha},~0.1 \cos{(2\pi x_{||})}\right),\\
	\B &= \left(B_{||}\cos{\alpha}-B_{\perp}\sin{\alpha},~B_{||}\sin{\alpha}+B_{\perp}\cos{\alpha},~v_z\right),
\end{align*}
where $B_{||}=1,~B_{\perp}=v_{\perp}=0.1 \sin{(2\pi x_{||})}$ and $x_{||}=x\cos{\alpha}+y\sin{\alpha}$. Circularly polarized alfv\'en wave have a unit wavelength in the direction of the $x$-axis and propagate at an angle $\alpha = 30^o$. For the GLM-CGL with $\ote$,  $\othe$ and  $\ofe$ schemes, the $L_1$-errors of $B_y$ are shown in Table\eqref{tab:4} at the final time of $t=5.0$. We note that every scheme has the desired order of accuracy.
\begin{table}[tbhp]
	\footnotesize
	\begin{center}
		\begin{tabular}{c|c|c|c|c|c|c|}
			\hline Number of cells  & \multicolumn{2}{|c}{$\ote$} & \multicolumn{2}{|c}{$\othe$} & \multicolumn{2}{|c}{$\ofe$}  \\
			\hline   & $L_1$ error  &  Order &  $L_1$ error      & Order & $L_1$ error      & Order \\
			\hline 32 $\times$ 32 & 2.53E-02 & -- & 1.25E-03 & -- & 2.92E-04 & -- \\
			64 $\times$ 64 & 8.06E-03 & 1.649122178 & 1.57E-04 & 2.997015955 & 2.39E-05 & 3.606907328 \\
			128 $\times$ 128 & 3.49E-03 & 1.208907716 & 2.01E-05 & 2.964144554 & 1.69E-06 & 3.825756677 \\
			256 $\times$ 256 & 9.84E-04 & 1.824933152 & 2.55E-06 & 2.978762804 & 1.13E-07 & 3.898604125 \\
			512 $\times$ 512 & 2.75E-04 & 1.840806962 & 3.26E-07 & 2.965870547 & 6.36E-09 & 4.153053838 \\
			\hline
		\end{tabular}
		\caption{\textbf{\nameref{test:2d_alfven}}: $L_1$ errors and order of accuracy for $B_y$.}
		\label{tab:4}
	\end{center}
\end{table} 
\subsection{Orszag-Tang Problem}\label{test:ot}
\begin{figure}[!htbp]
	\begin{center}
		\subfigure[Density for $\oti$ scheme for isotropic CGL]{\includegraphics[width=0.26\textwidth]{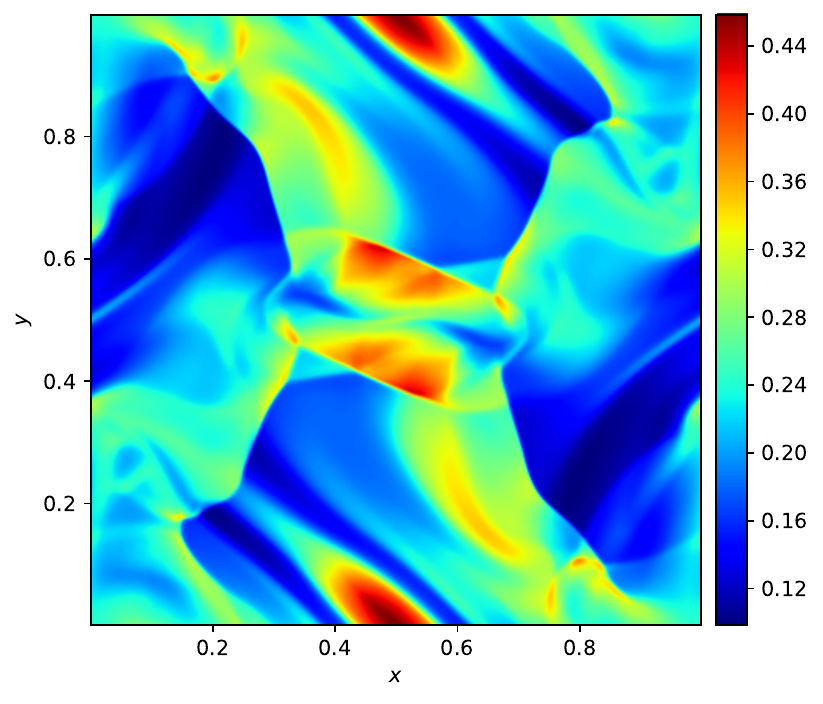}\label{fig:ot_rho_wo_o2}}~
		\subfigure[Density for $\othi$ scheme for isotropic CGL]{\includegraphics[width=0.26\textwidth]{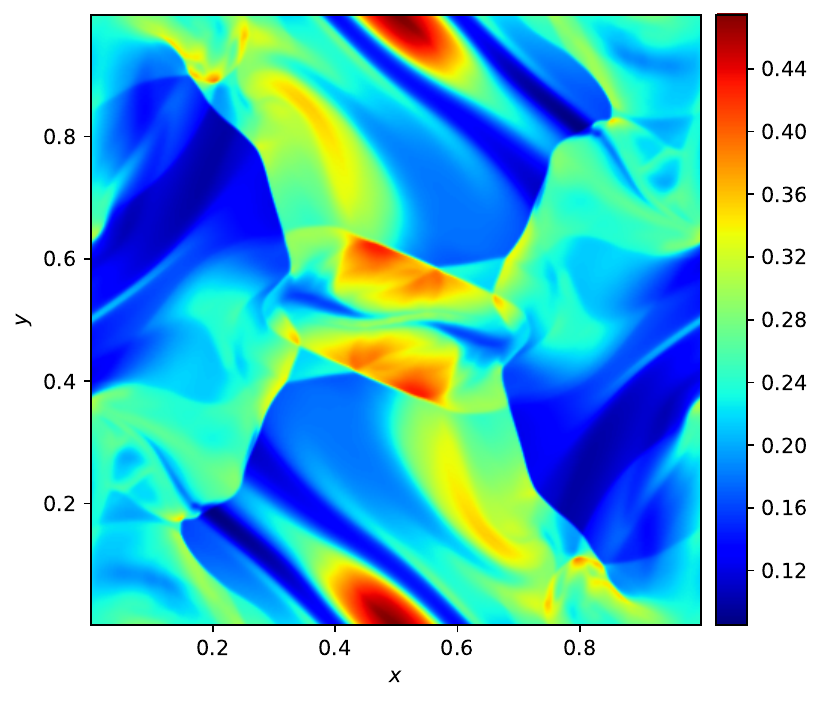}\label{fig:ot_rho_wo_o3}}~\
		\subfigure[Density for $\ofi$ scheme for isotropic CGL]{\includegraphics[width=0.26\textwidth]{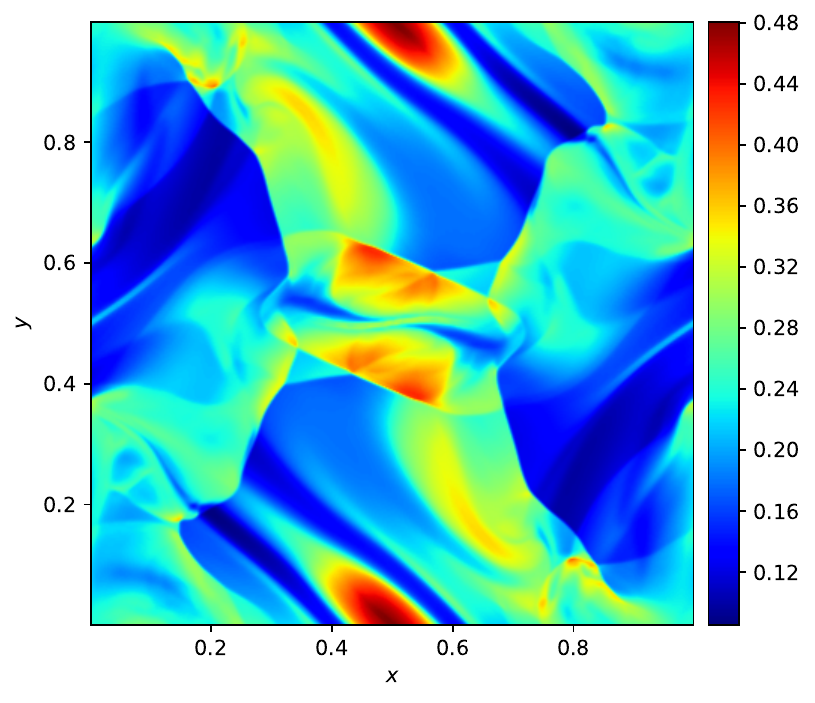}\label{fig:ot_rho_wo_o4}}\\
		\subfigure[Density for $\oti$ scheme for isotropic GLM-CGL]{\includegraphics[width=0.26\textwidth]{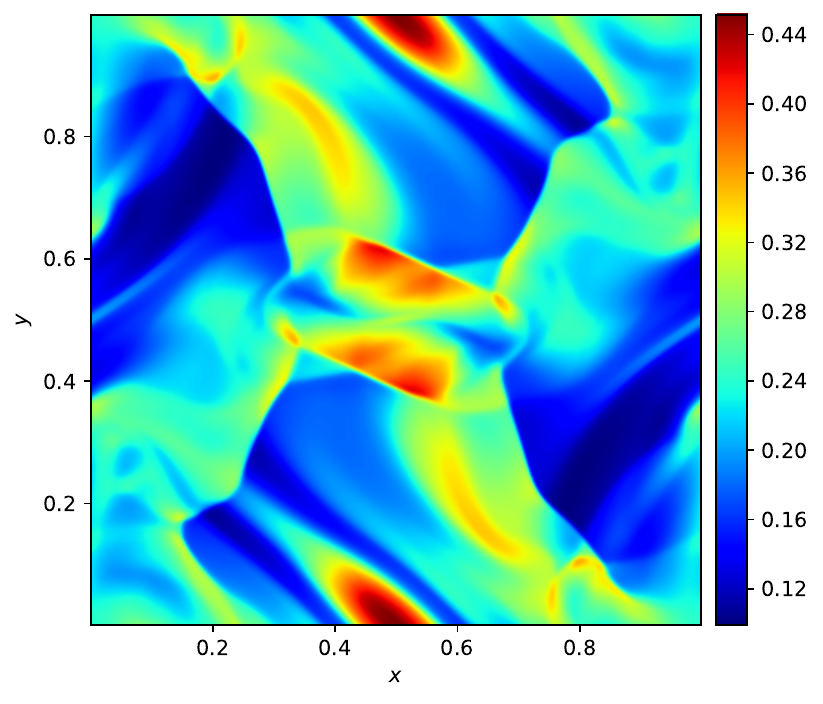}\label{fig:ot_rho_w_o2}}~
		\subfigure[Density for $\othi$ scheme for isotropic GLM-CGL]{\includegraphics[width=0.26\textwidth]{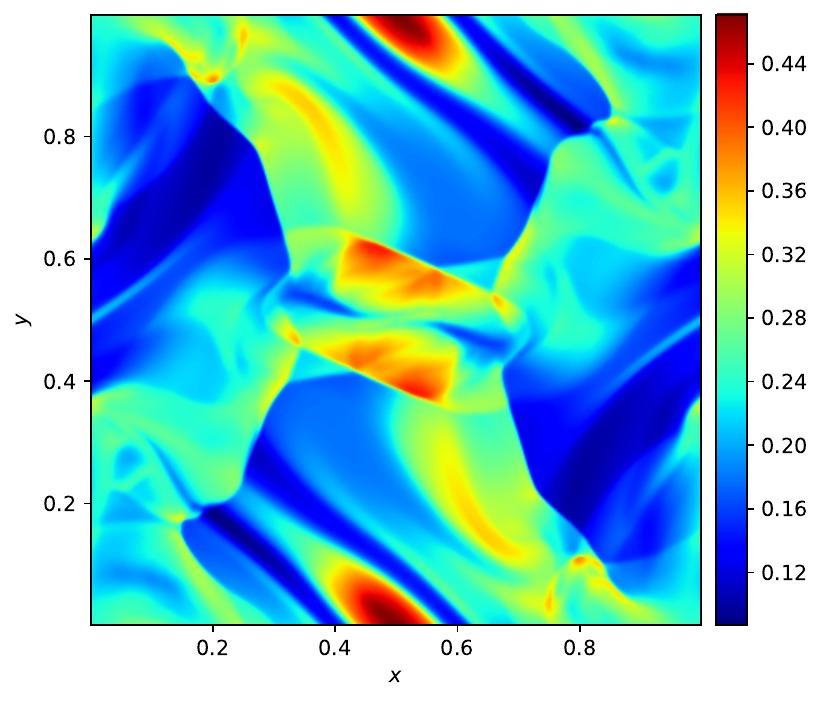}\label{fig:ot_rho_w_o3}}~
		\subfigure[Density for $\ofi$ scheme for isotropic GLM-CGL]{\includegraphics[width=0.26\textwidth]{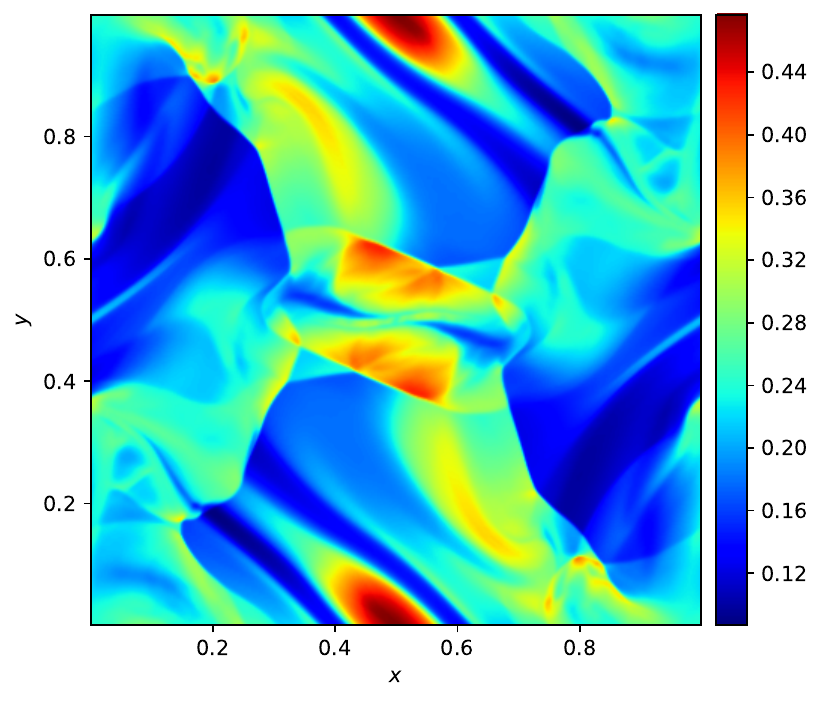}\label{fig:ot_rho_w_o4}}\\
		\subfigure[$|(\nabla\cdot\B)_{i,j}|$ for $\oti$ scheme for isotropic CGL]{\includegraphics[width=0.26\textwidth]{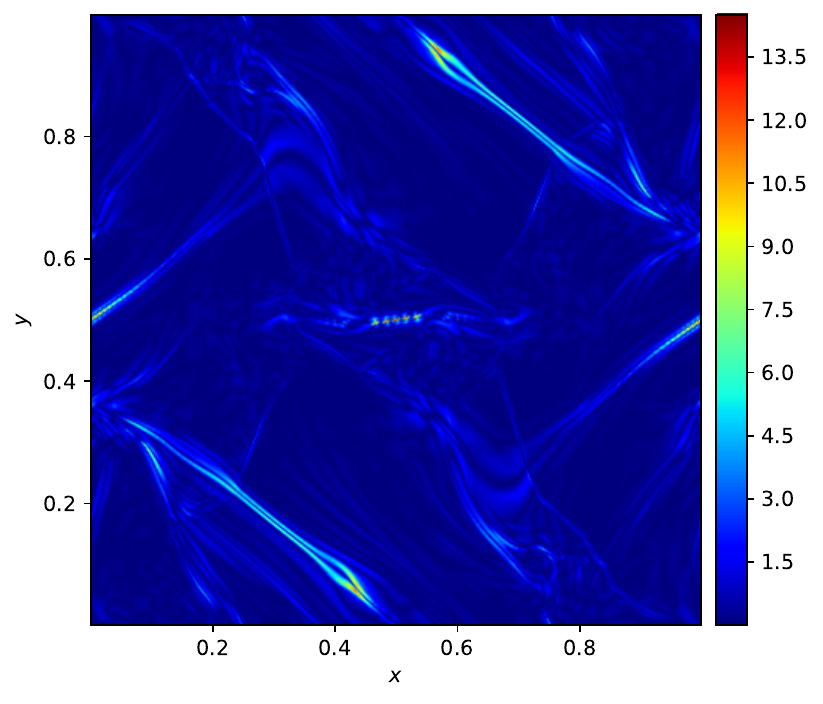}\label{fig:ot_db_wo_o2}}~
		\subfigure[$|(\nabla\cdot\B)_{i,j}|$ for $\othi$ scheme for isotropic CGL]{\includegraphics[width=0.26\textwidth]{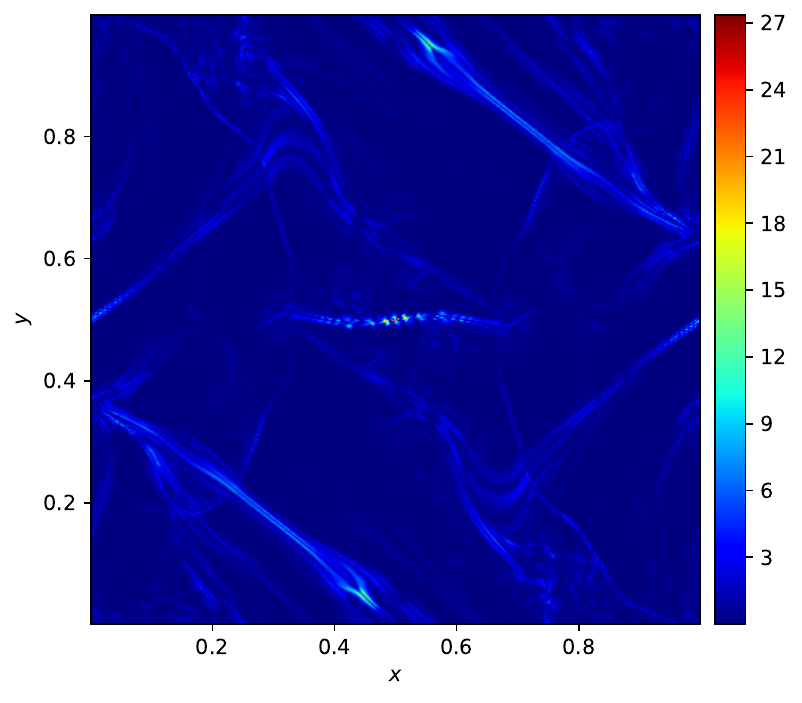}\label{fig:ot_db_wo_o3}}~
		\subfigure[$|(\nabla\cdot\B)_{i,j}|$ for $\ofi$ scheme for isotropic CGL]{\includegraphics[width=0.26\textwidth]{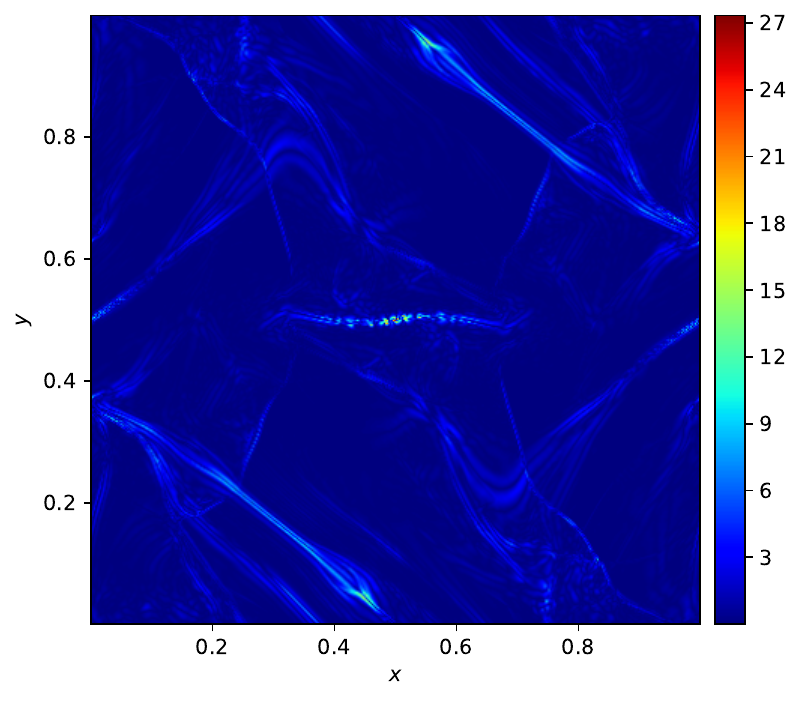}\label{fig:ot_db_wo_o4}}\\
		\subfigure[$|(\nabla\cdot\B)_{i,j}|$ for $\oti$ scheme for isotropic GLM-CGL]{\includegraphics[width=0.26\textwidth]{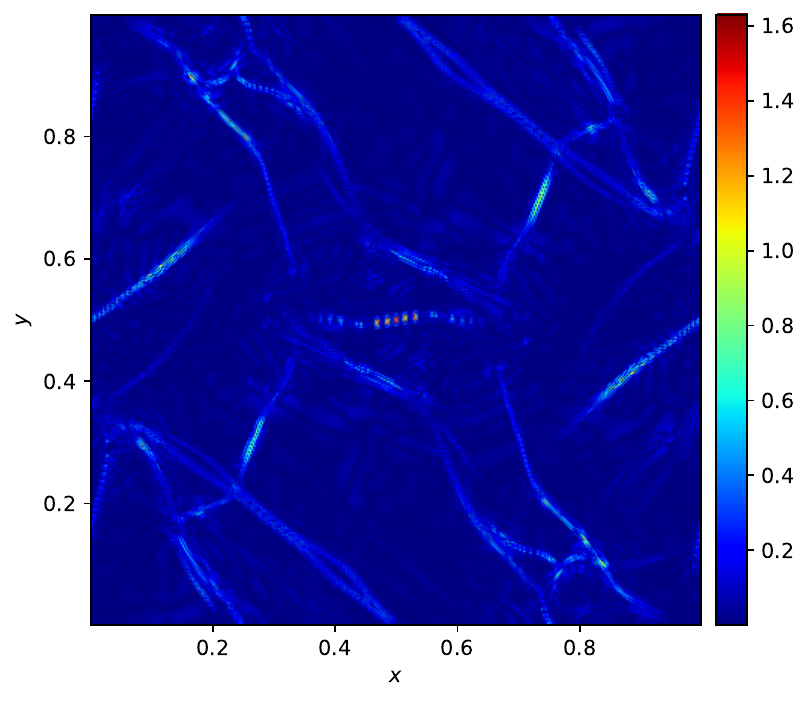}\label{fig:ot_db_w_o2}}~
		\subfigure[$|(\nabla\cdot\B)_{i,j}|$ for $\othi$ scheme for isotropic GLM-CGL]{\includegraphics[width=0.26\textwidth]{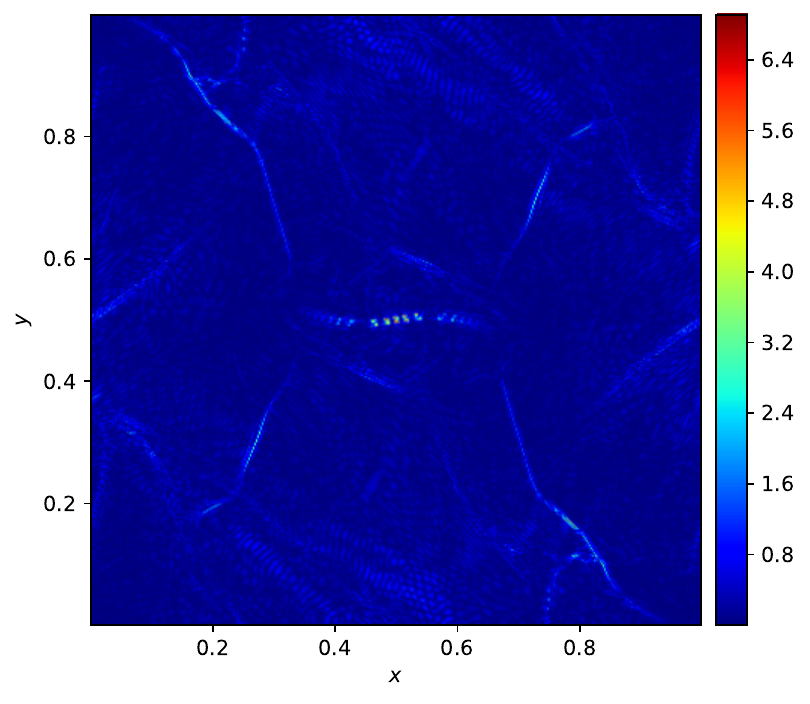}\label{fig:ot_db_w_o3}}~
		\subfigure[$|(\nabla\cdot\B)_{i,j}|$ for $\ofi$ scheme for isotropic GLM-CGL]{\includegraphics[width=0.26\textwidth]{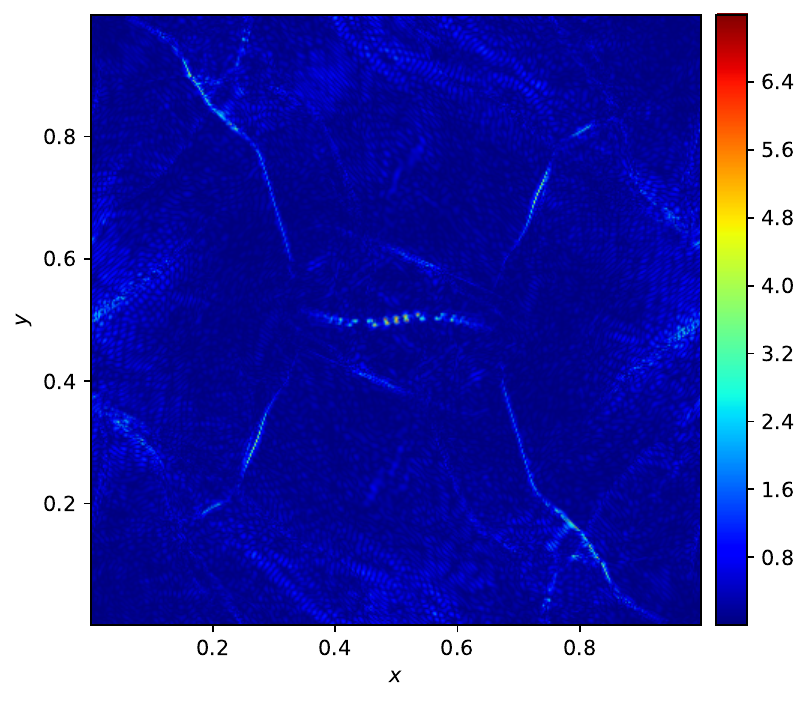}\label{fig:ot_db_w_o4}}\\
		\caption{\textbf{\nameref{test:ot}}: Plots of density and $|(\nabla\cdot\B)_{i,j}|$ for $\oti$, $\othi$ and $\ofi$ schemes for isotropic CGL and isotropic GLM-CGL at time $t=0.5$.}
		\label{fig:ot_mhd_rho_divb}
	\end{center}
\end{figure}
\begin{figure}[!htbp]
	\begin{center}
		\subfigure[$\|\nabla\cdot\B\|_{1}$ and $\|\nabla\cdot\B\|_{2}$ for $\oti$ scheme for isotropic CGL and isotropic GLM-CGL ]{\includegraphics[width=0.3\textwidth]{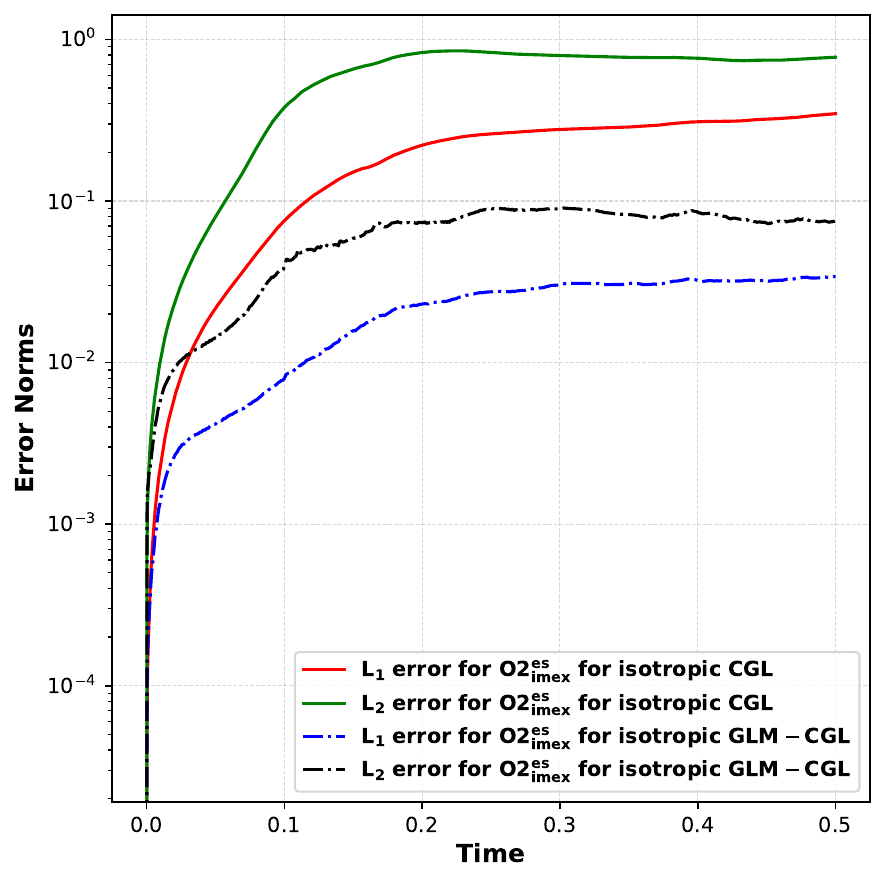}\label{fig:ot_error_o2}}~
		\subfigure[$\|\nabla\cdot\B\|_{1}$ and $\|\nabla\cdot\B\|_{2}$ for $\othi$ scheme for isotropic CGL and isotropic GLM-CGL ]{\includegraphics[width=0.3\textwidth]{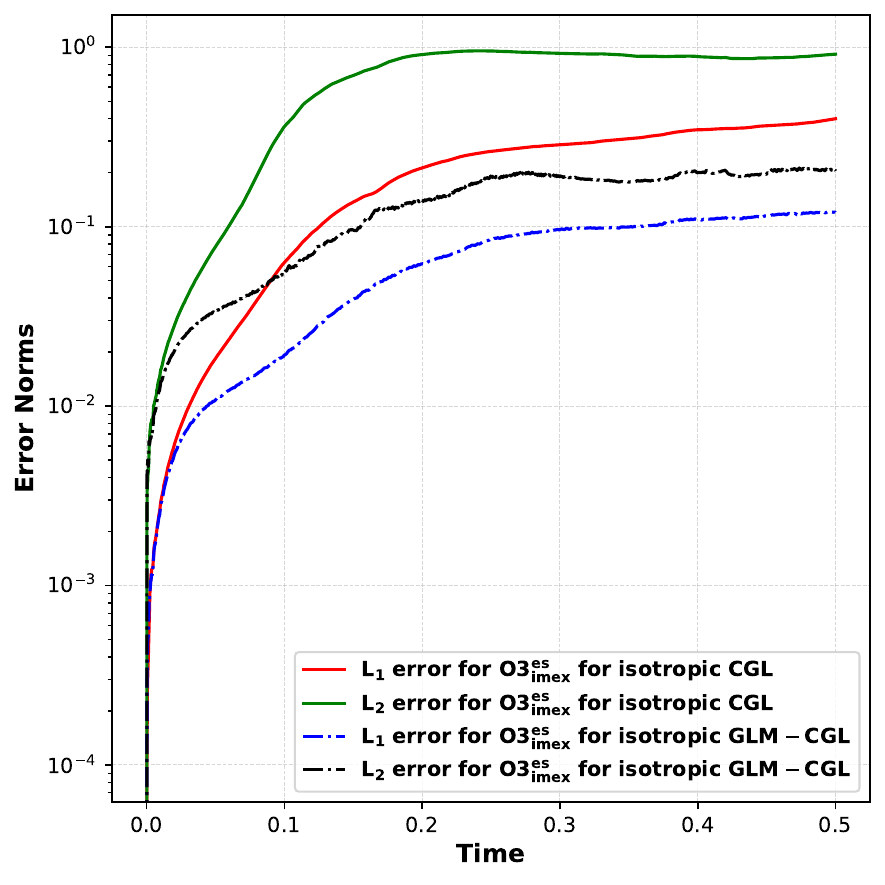}\label{fig:ot_error_o3}}~
		\subfigure[$\|\nabla\cdot\B\|_{1}$ and $\|\nabla\cdot\B\|_{2}$ for $\ofi$ scheme for isotropic CGL and isotropic GLM-CGL ]{\includegraphics[width=0.3\textwidth]{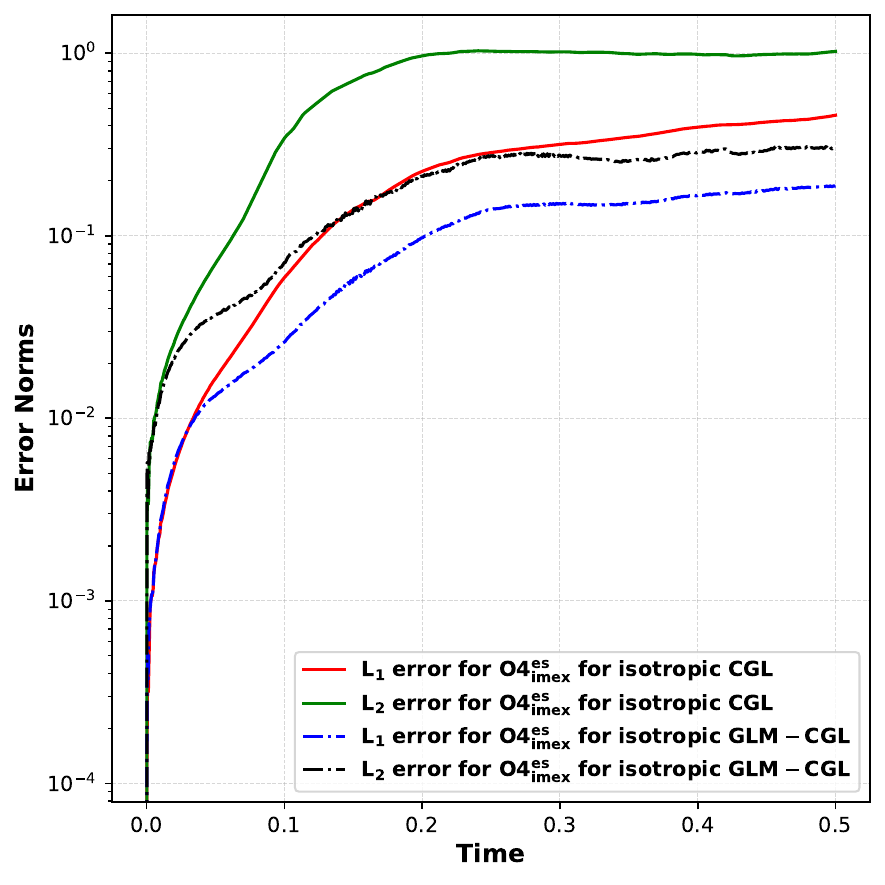}\label{fig:ot_error_o4}}
		\caption{\textbf{\nameref{test:ot}}: Evolution of the magnetic field divergence constraint errors till time $t=0.5$.}
		\label{fig:ot_error}
	\end{center}
\end{figure}
\begin{figure}[!htbp]
	\begin{center}
		\subfigure[Cut of pressure component $\pll$ for isotropic CGL and isotropic GLM-CGL along $y = 0.3125$]{\includegraphics[width=\textwidth]{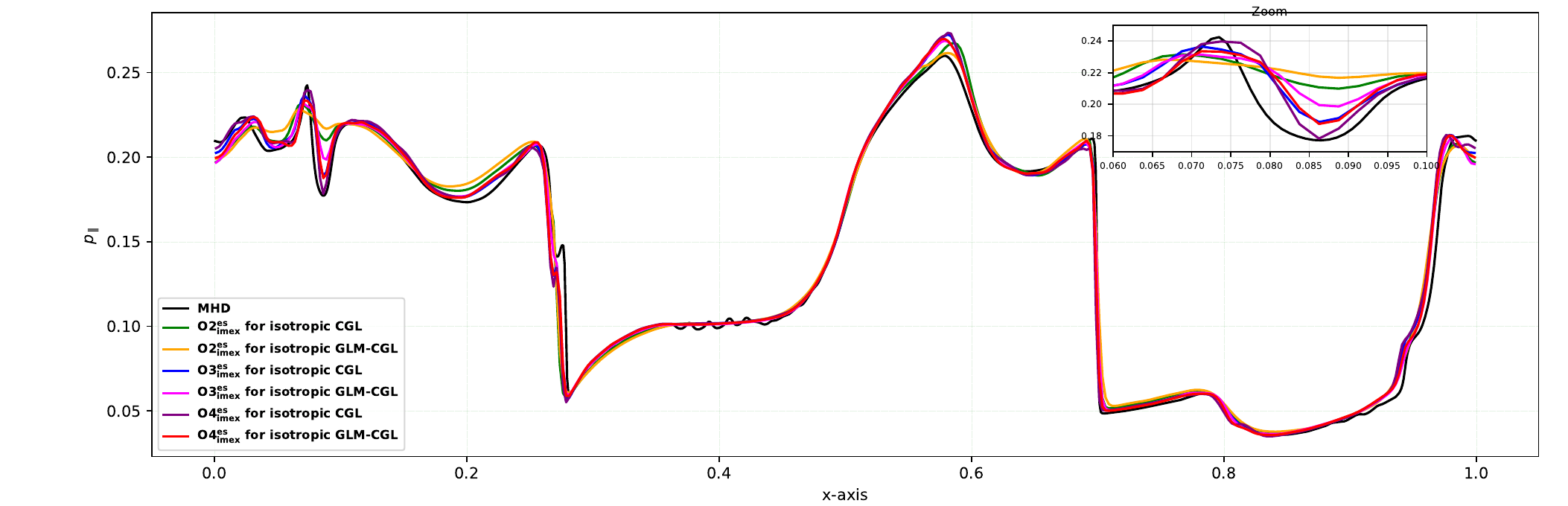}\label{fig:ot_cut_pll}}\\[1ex]
		\subfigure[Cut of pressure component $\per$ for isotropic CGL and isotropic GLM-CGL along $y = 0.3125$]{\includegraphics[width=\textwidth]{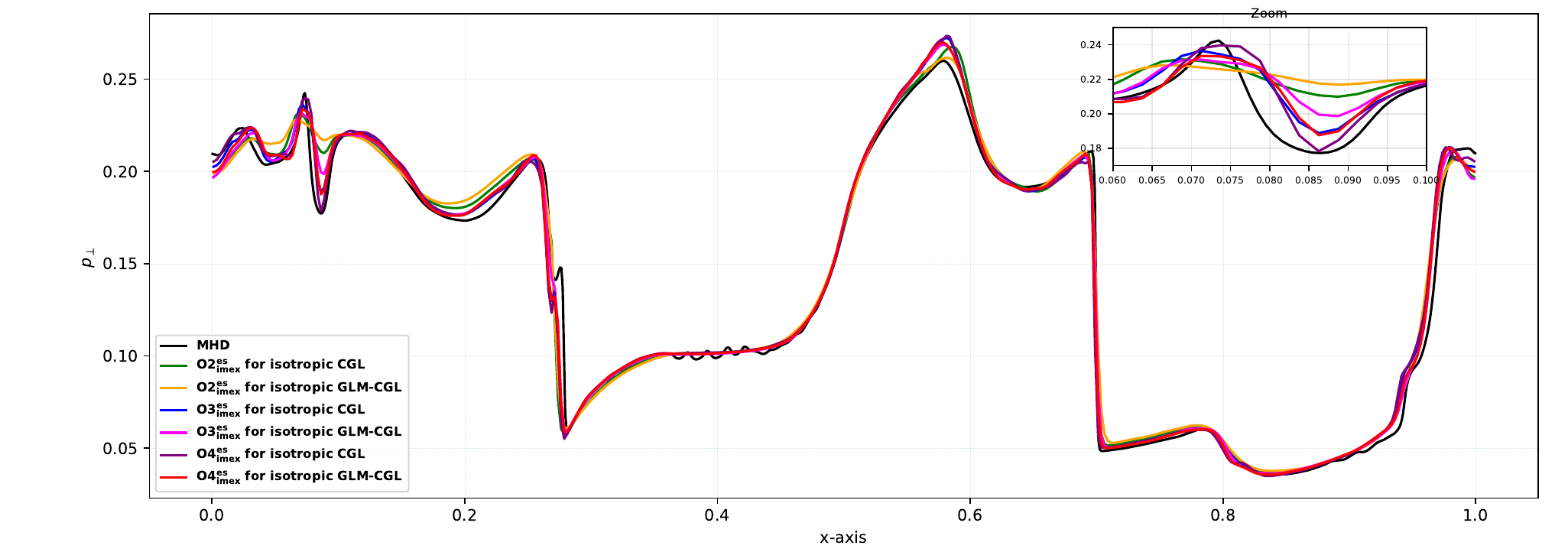}\label{OT/OT_perp.pdf}}
		\caption{\textbf{\nameref{test:ot}}: Cut plots of pressure components along $y = 0.3125$ at time $t=0.5$.}
		\label{fig:ot_cut_plot}
	\end{center}
\end{figure}

\begin{table}[ht]
\centering
\renewcommand{\arraystretch}{1.4}
\begin{tabular}{|c|c|c|c|c|c|c|}
\hline
\multirow{2}{*}{Scheme} &
\multirow{2}{*}{System} &
$\rho_{\min}$ &
$(p_{\parallel})_{\min}$ &
$(p_{\perp})_{\min}$ &
$\|\nabla\!\cdot\!\mathbf{B}\|_{1}$ &
$\|\nabla\!\cdot\!\mathbf{B}\|_{2}$ \\
&&&&&&\\
\hline

\multirow{2}{*}{$\oti$}
& Isotropic CGL
& 9.84216E-02 & 3.508819E-02 & 3.508814E-02 & 3.470541E-01 & 7.762731E-01\\ \cline{2-7}

& Isotropic GLM--CGL
& 9.899383E-02 & 3.569912E-02 & 3.569648E-02 & 3.419513E-02 & 7.513952E-02 \\  \hline

\multirow{2}{*}{$\othi$}
& Isotropic CGL
& 8.530408E-02 & 2.718044E-02 & 2.717974E-02 & 3.994897E-01& 9.149954E-01 \\ \cline{2-7}

& Isotropic GLM--CGL
& 8.687611E-02 & 2.829207E-02 & 2.829227E-02 & 1.200279E-01 & 2.046241E-01 \\  \hline

\multirow{2}{*}{$\ofi$}
& Isotropic CGL
& 8.554454E-02 & 2.723577E-02 & 2.723303E-02 & 4.573710E-01 & 1.023018E+00 \\ \cline{2-7}

& Isotropic GLM--CGL
& 8.69596E-02 & 2.812515E-02 & 2.812516E-02 & 1.873823E-01 & 3.020084E-01 \\ \hline

\end{tabular}
\caption{\textbf{\nameref{test:ot}}: Minimum values of density $\rho$, parallel pressure $p_{\parallel}$, and perpendicular pressure $p_{\perp}$, together with the value of $L_1$ and $L_2$ norm of $\nabla\cdot\B$ at final time.}
\label{tab:div_OT}
\end{table}

\begin{figure}[!htbp]
	\begin{center}
		\subfigure[Density for isotropic CGL at time $t=0.5$]{\includegraphics[width=0.24\textwidth]{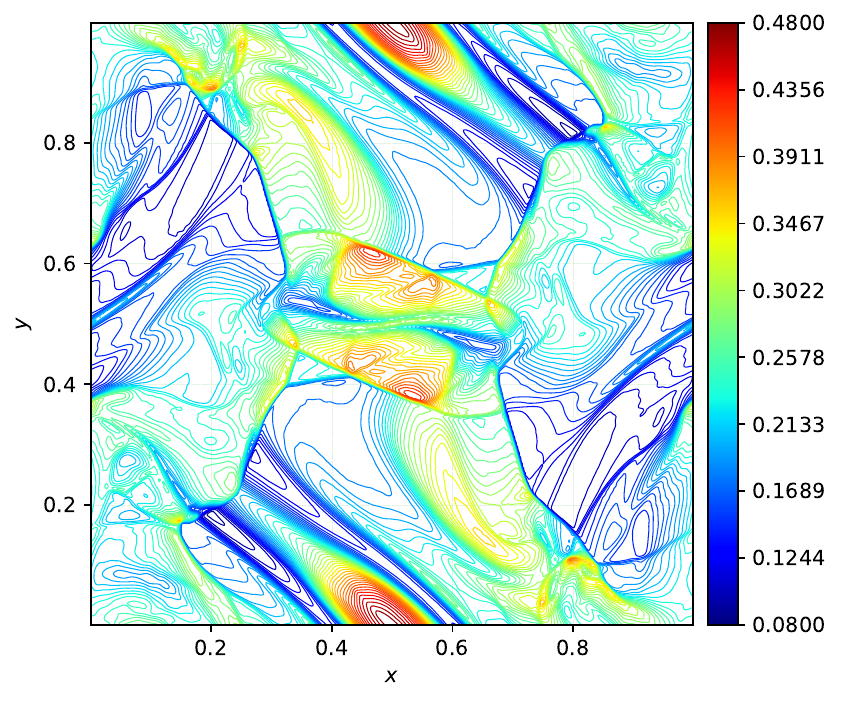}\label{fig:ot_t_0_5_wo}}~
		\subfigure[Density for isotropic CGL at time $t=1$]{\includegraphics[width=0.24\textwidth]{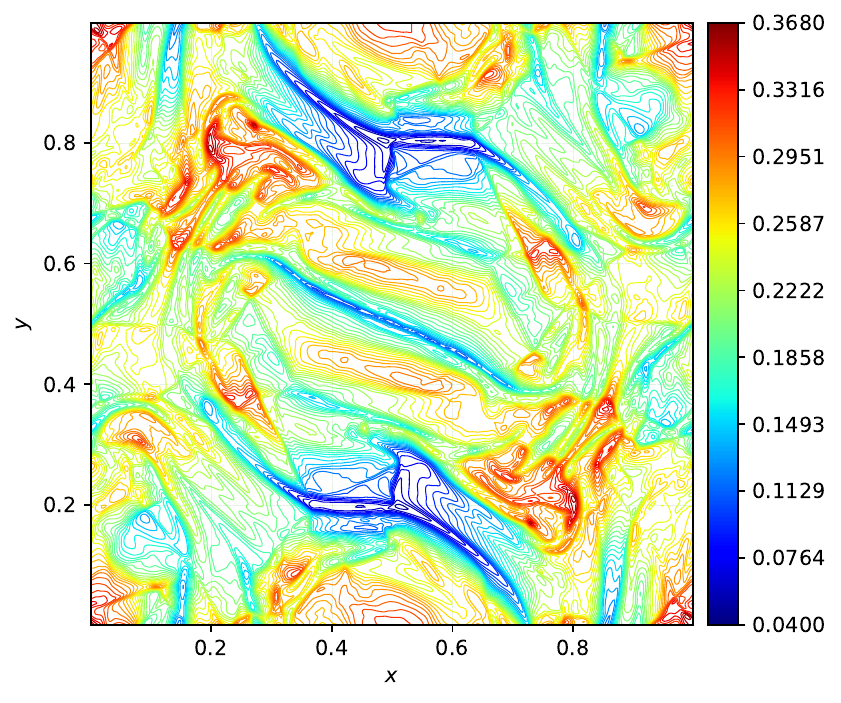}\label{fig:ot_t_1_wo}}~
        \subfigure[Density for isotropic CGL at time $t=2$]{\includegraphics[width=0.24\textwidth]{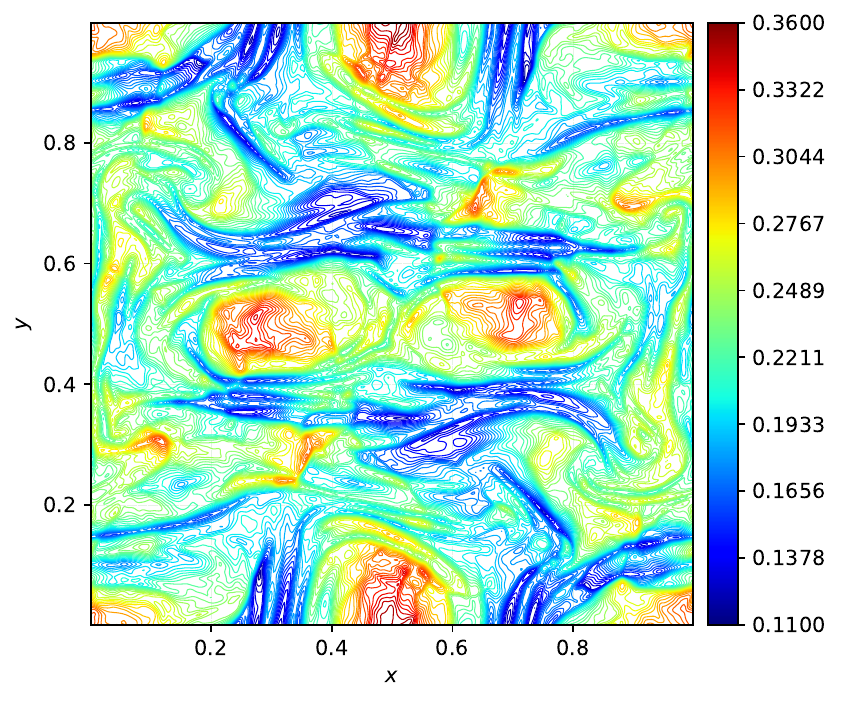}\label{fig:ot_t_2_wo}}~
        \subfigure[Density for isotropic CGL at time $t=5$]{\includegraphics[width=0.24\textwidth]{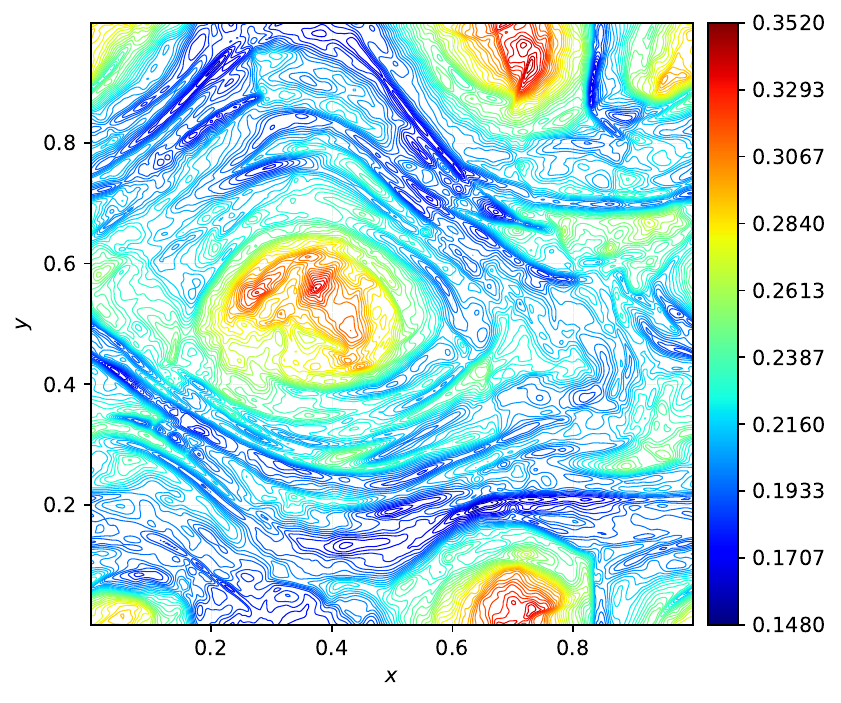}\label{fig:ot_t_5_wo}}\\
        \subfigure[Density for isotropic GLM-CGL at time $t=0.5$]{\includegraphics[width=0.24\textwidth]{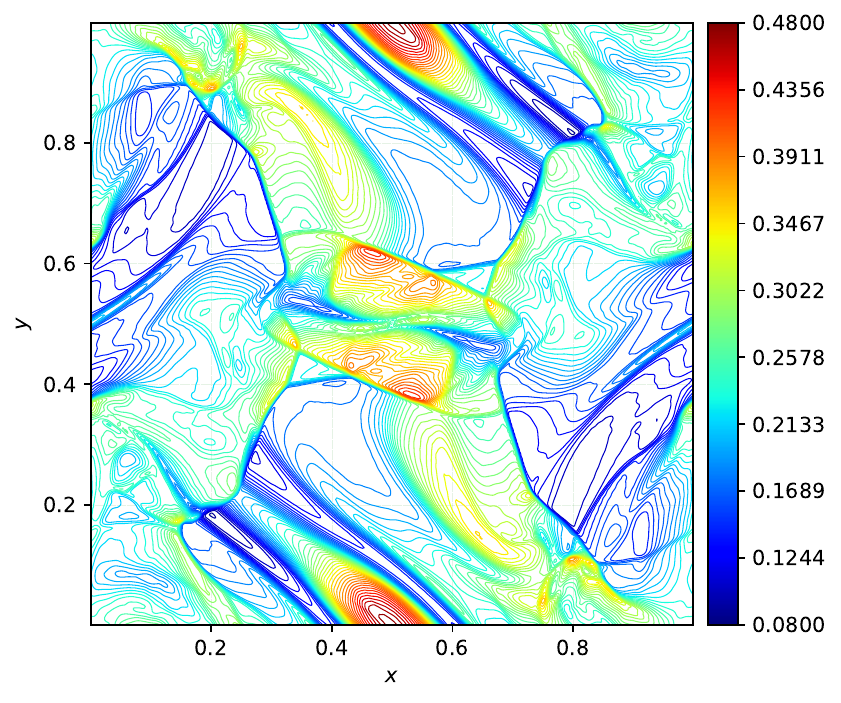}\label{fig:ot_t_0_5_w}}~
		\subfigure[Density for isotropic GLM-CGL at time $t=1$]{\includegraphics[width=0.24\textwidth]{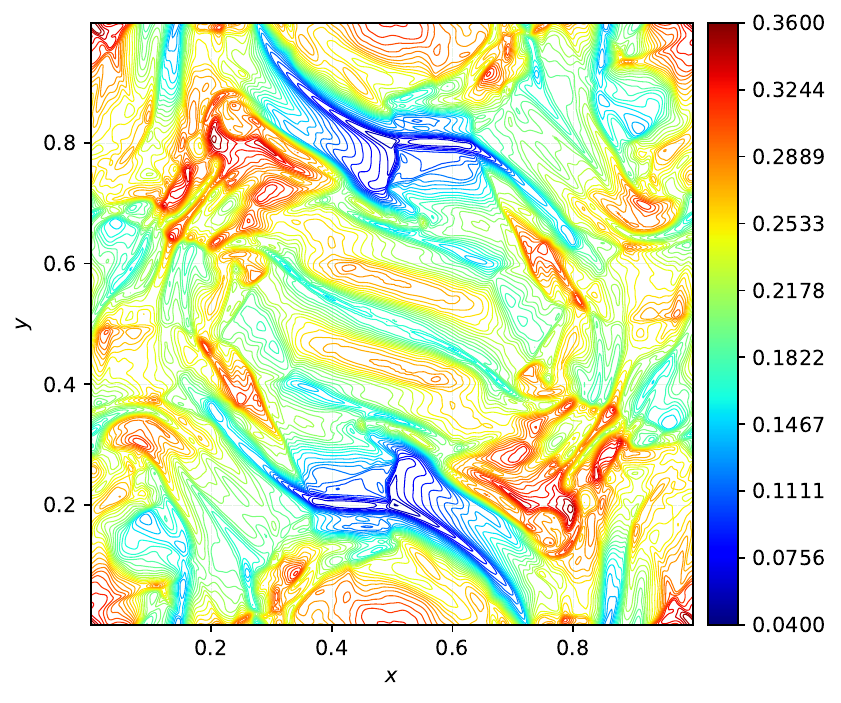}\label{fig:ot_t_1_w}}~
        \subfigure[Density for isotropic GLM-CGL at time $t=2$]{\includegraphics[width=0.24\textwidth]{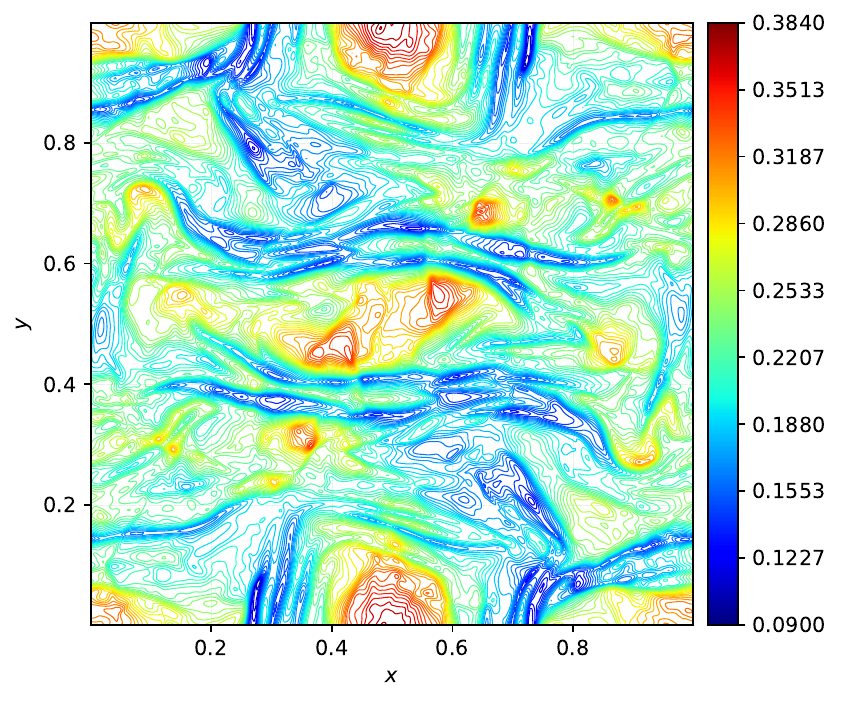}\label{fig:ot_t_2_w}}~
        \subfigure[Density for isotropic GLM-CGL at time $t=5$]{\includegraphics[width=0.24\textwidth]{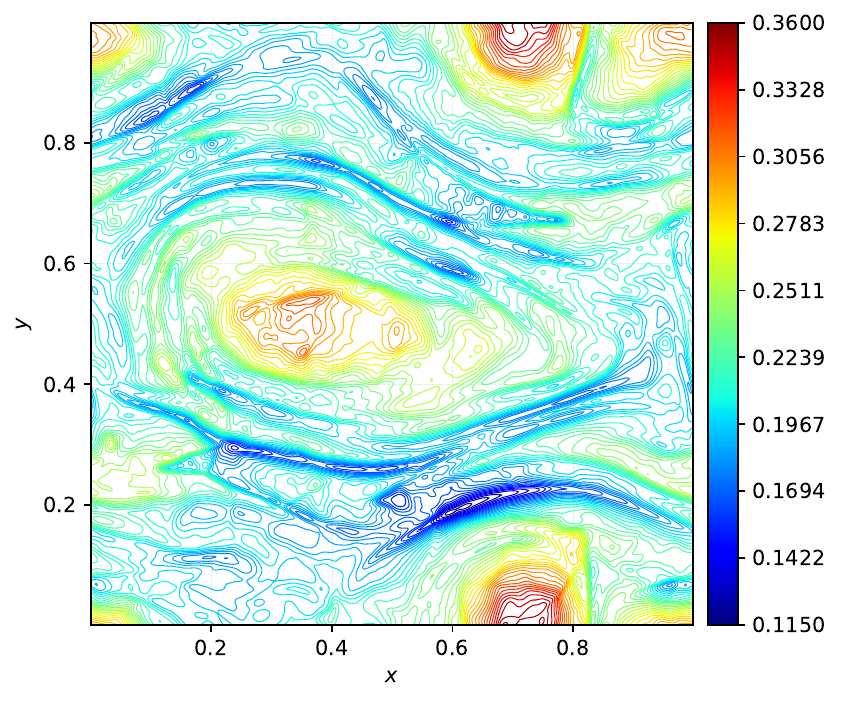}\label{fig:ot_t_5_w}}
		\caption{\textbf{\nameref{test:ot}}: Plots of density for $\ofi$ scheme  at different times using $400\times400$ cells.}
		\label{fig:ot_time}
	\end{center}
\end{figure}

Following the Orszag-Tang problem for MHD ~\cite{orszag1979small,toth2000b,chandrashekar2016entropy}, we present a generalized version of the problem for the GLM-CGL (see~\cite{singh2024entropy}).  The domain for the problem is $[0,1]\times[0,1]$, with periodic boundary conditions. The initial state is given as, 
\begin{align*}
	\left(\rho,\pll,\per,\Psi \right)& = \left(\frac{25}{36\pi},\frac{5}{12\pi},\frac{5}{12\pi}, 0 \right),\\
	\bu &= \left(-\sin{(2\pi y)}, \sin{(2\pi x)}, 0\right),\\
	\B &= \frac{1}{\sqrt{4\pi}}\left(-\sin{(2\pi y)}, \sin{(4\pi x)}, 0\right).
\end{align*}
It was observed in~\cite{singh2024entropy,bhoriya2024} that the CGL model without the source term (anisotropic case) has difficulty in simulating this test case. \revo{We found that similar stability issues also occur in the GLM-CGL model without the source term. Therefore, we consider the isotropic CGL and isotropic GLM-CGL cases only.} This also makes the solution suitable to compare with the MHD solution.

\revg{The numerical results are presented in Figures \eqref{fig:ot_mhd_rho_divb}, \eqref{fig:ot_error} and~\eqref{fig:ot_cut_plot}. We have used $400\times400$ cells and simulated till the final time $t=0.5$, using the numerical schemes $\oti$, $\othi$ and $\ofi$. From the 2D density plots, we do not see any significant difference between the solutions. However, to observe more closely, we have plotted a one-dimensional cut plot at $y=0.3125$ of $\pll$ and $\per$ for different schemes and compared with the MHD solutions. All the solutions are similar to the MHD solutions. We observe that all schemes produce solutions that are close to the MHD solution. We further observe that GLM-CGL solutions are slightly more diffusive when compared with CGL solutions of the same order, consistent with the observations discussed above.}

In Figure~\eqref{fig:ot_mhd_rho_divb}, we observed that the divergence errors for the GLM-CGL model are significantly lower than the CGL model (almost one-third) for all the schemes. \revo{The colorbars in all divergence plots are dynamically scaled using the absolute local minimum and maximum values of $\nabla\cdot\B$ in each plot.}

This is further observed in the Figure~\eqref{fig:ot_error}, where we have plotted the time evolution of the $L_1$ and $L_2$ errors for divergence of the magnetic field. We again observe that the GLM-CGL divergence errors are significantly lower than the CGL case, for all orders.

\revg{Table~\eqref{tab:div_OT} lists the minimum values of density, $\pll$, and $\per$ as well as the value of $L_1$ and $L_2$ norm of $\nabla\cdot\B$ at the final time. All reported values of density and pressure remain positive for all schemes and formulations, indicating that physically admissible solutions are maintained throughout the computation. The table also demonstrates the significant reduction in divergence errors achieved by the GLM formulation.}

\revo{To test robustness, we extend the Orszag-Tang vortex simulation to a longer final time $t=5$. We plot the contour solutions for the fourth-order scheme only, however all other schemes also produce stable solutions. We plot the contours of the solutions to show the waves more effectively. As shown in Figure~\eqref{fig:ot_time}, $\ofi$ scheme for both isotropic CGL and isotropic GLM-CGL remains stable throughout the computation, with no numerical breakdown observed. In addition, density contours at $t=0.5,~1,~2,~5$ obtained using the $\ofi$ scheme on a $400\times400$ grid, are presented. These contours demonstrate the progressive development of shock-driven MHD turbulence. At $t=1$, the interaction of shocks and vortices produces more complex flow structures. By $t=2$, additional small-scale features appear throughout the domain, and at $t=5$, the solution exhibits a fully developed shock-driven MHD turbulence. The isotropic GLM-CGL solution remains stable and captures the same large-scale turbulent structures as the isotropic CGL solution. However, at later times $t=2$ and $t=5$, the isotropic GLM-CGL solution appears slightly smoother, whereas the isotropic CGL solution retains more fine-scale filamentary structures. This difference is likely due to the additional numerical dissipation introduced by the hyperbolic divergence-cleaning mechanism.}
\subsection{Rotor problem}\label{test:rt}
\begin{figure}[!htbp]
	\begin{center}
		\subfigure[$\per$ for $\ote$ scheme for CGL]{\includegraphics[width=0.26\textwidth]{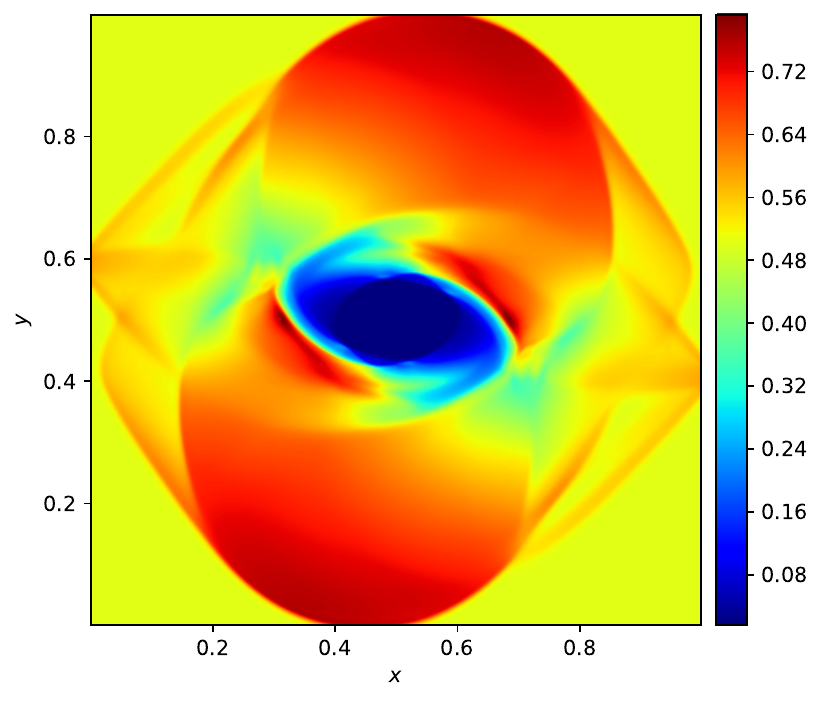}\label{fig:rt_rho_wo_o2}}~
		\subfigure[$\per$ for $\othe$ scheme for CGL]{\includegraphics[width=0.26\textwidth]{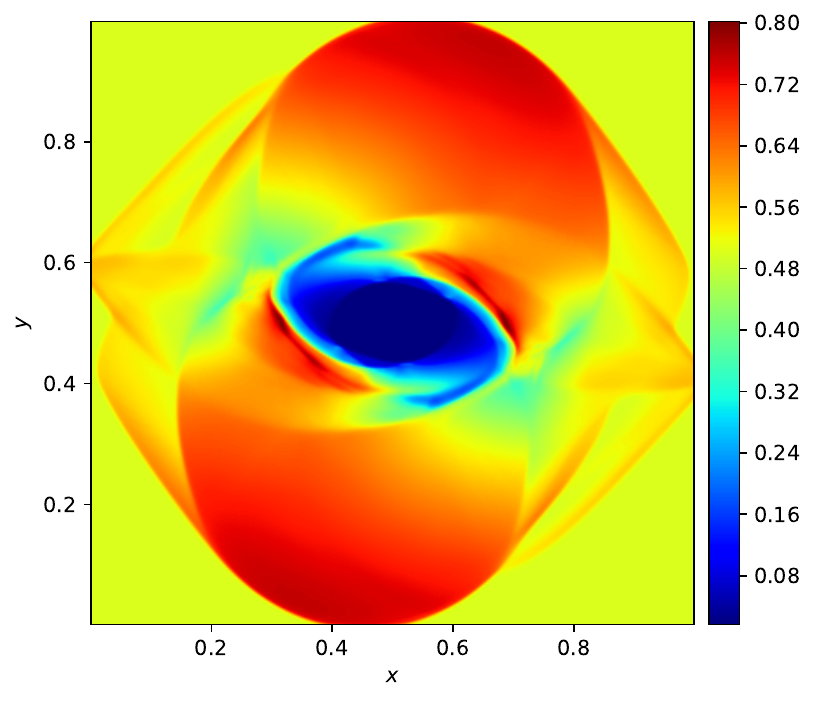}\label{fig:rt_rho_wo_o3}}~
		\subfigure[$\per$ for $\ofe$ scheme for CGL]{\includegraphics[width=0.26\textwidth]{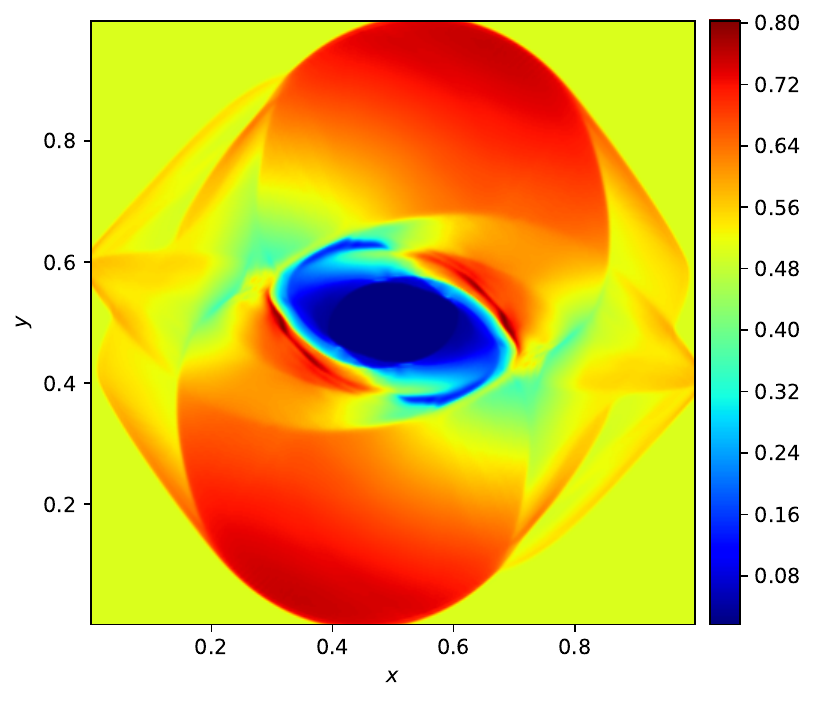}\label{fig:rt_rho_wo_o4}}\\
		\subfigure[$\per$ for $\ote$ scheme for GLM-CGL]{\includegraphics[width=0.26\textwidth]{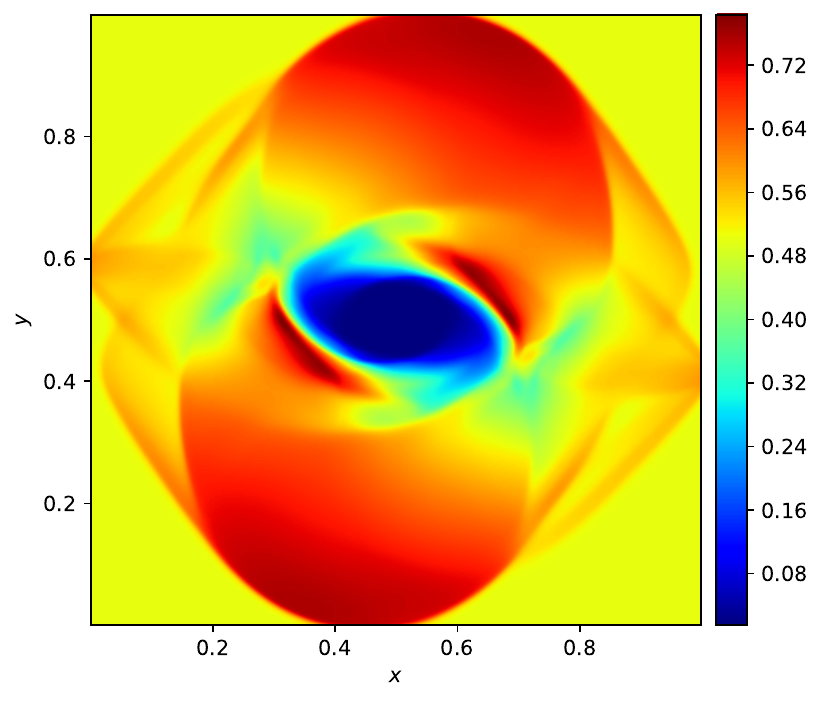}\label{fig:rt_rho_w_o2}}~
		\subfigure[$\per$ for $\othe$ scheme for GLM-CGL]{\includegraphics[width=0.26\textwidth]{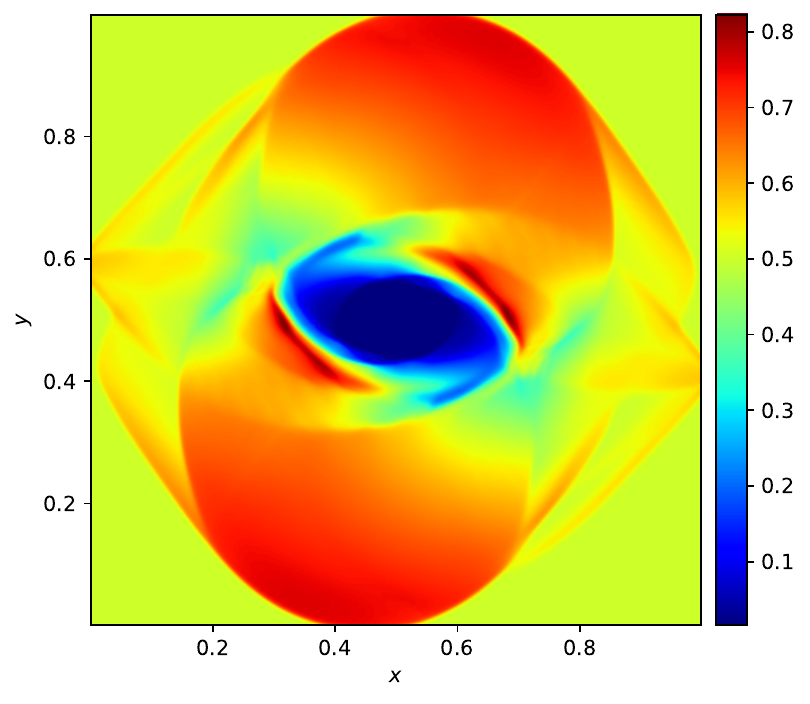}\label{fig:rt_rho_w_o3}}~
		\subfigure[$\per$ for $\ofe$ scheme for GLM-CGL]{\includegraphics[width=0.26\textwidth]{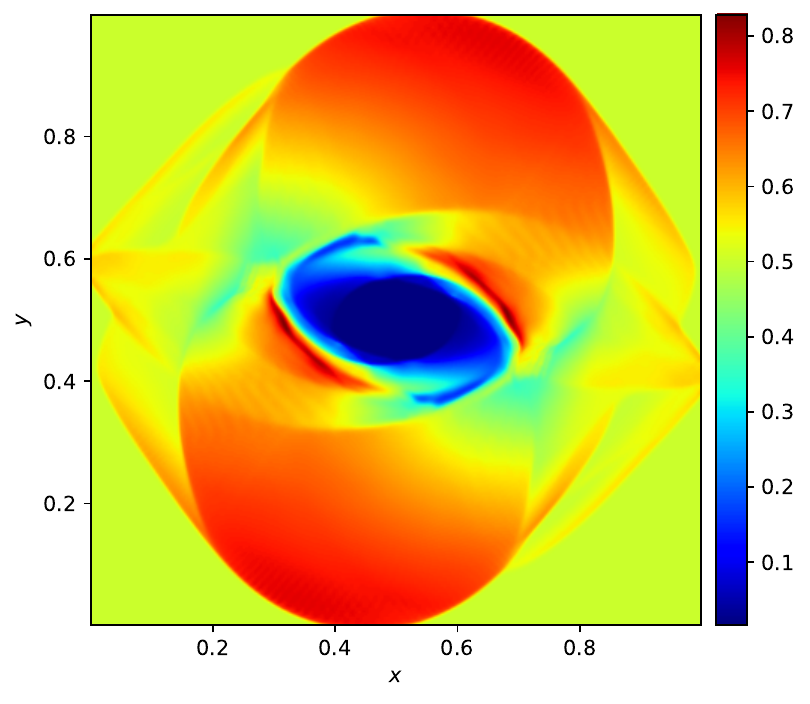}\label{fig:rt_rho_w_o4}}\\
		\subfigure[$|(\nabla\cdot\B)_{i,j}|$ for $\ote$ scheme for CGL]{\includegraphics[width=0.26\textwidth]{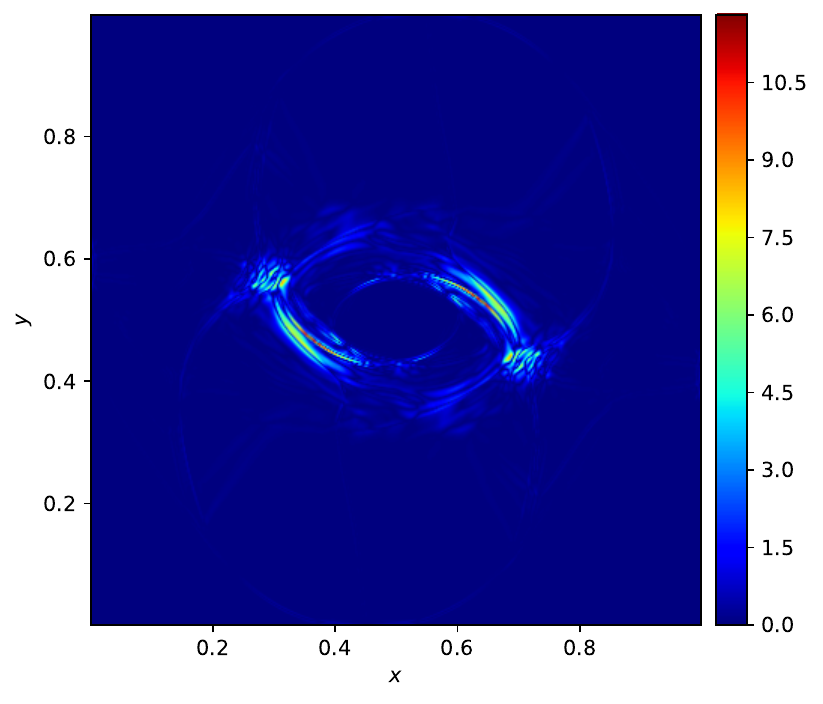}\label{fig:rt_db_wo_o2}}~
		\subfigure[$|(\nabla\cdot\B)_{i,j}|$ for $\othe$ scheme for CGL]{\includegraphics[width=0.26\textwidth]{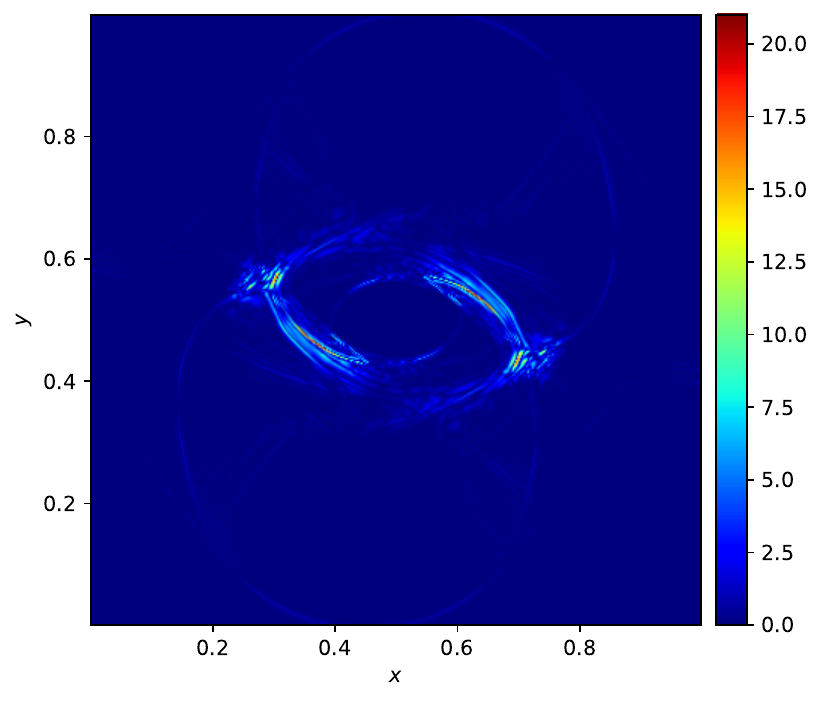}\label{fig:rt_db_wo_o3}}~
		\subfigure[$|(\nabla\cdot\B)_{i,j}|$ for $\ofe$ scheme for CGL]{\includegraphics[width=0.26\textwidth]{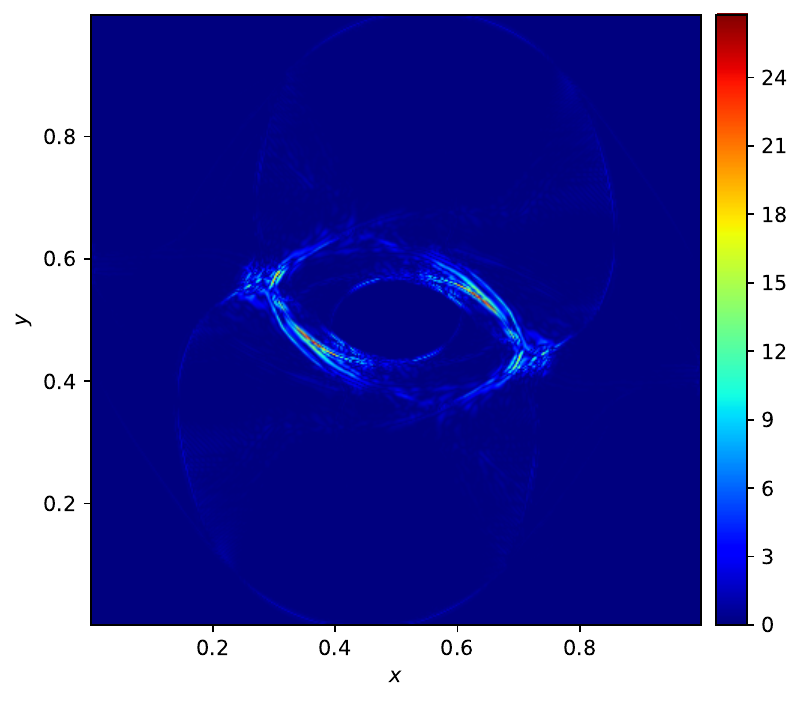}\label{fig:rt_db_wo_o4}}\\
		\subfigure[$|(\nabla\cdot\B)_{i,j}|$ for $\ote$ scheme for GLM-CGL]{\includegraphics[width=0.26\textwidth]{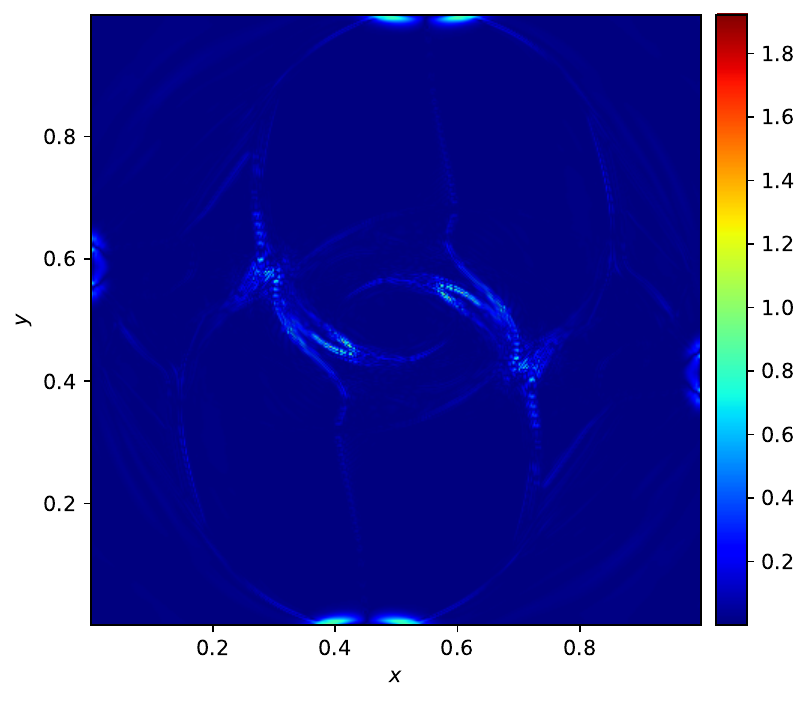}\label{fig:rt_db_w_o2}}~
		\subfigure[$|(\nabla\cdot\B)_{i,j}|$ for $\othe$ scheme for GLM-CGL]{\includegraphics[width=0.26\textwidth]{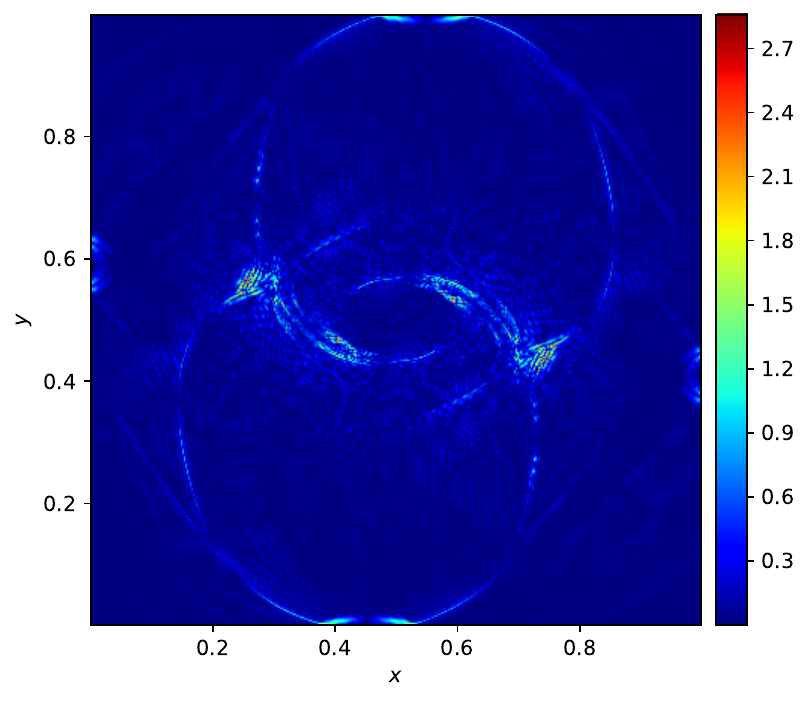}\label{fig:rt_db_w_o3}}~
		\subfigure[$|(\nabla\cdot\B)_{i,j}|$ for $\ofe$ scheme for GLM-CGL]{\includegraphics[width=0.26\textwidth]{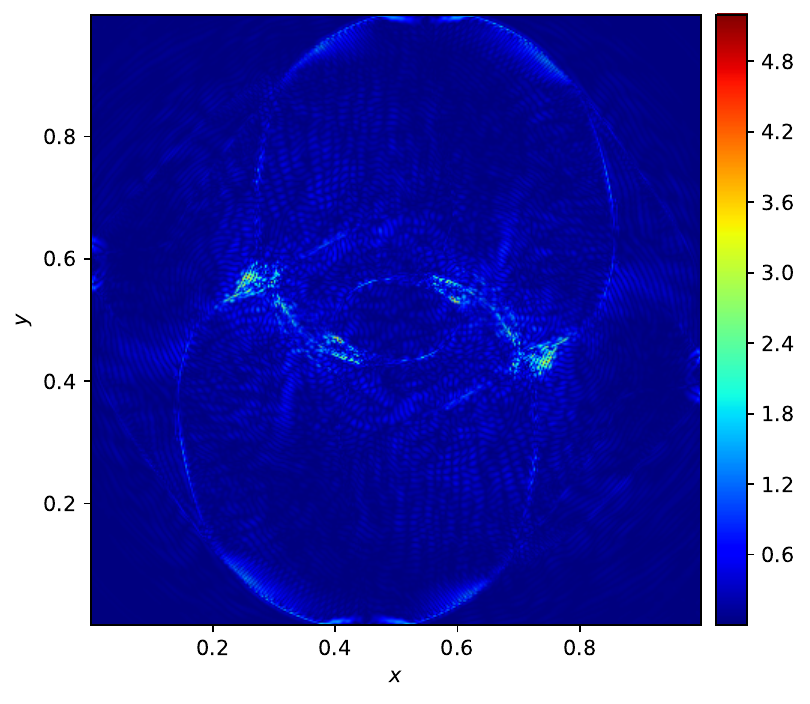}\label{fig:rt_db_w_o4}}\\
		\caption{\textbf{\nameref{test:rt}}: Plots of $\per$ and $|(\nabla\cdot\B)_{i,j}|$ for $\ote$, $\othe$ and $\ofe$ schemes for CGL and GLM-CGL at time $t=0.295$.}
		\label{fig:rt_cgl_rho_divb}
	\end{center}
\end{figure}
\begin{figure}[!htbp]
	\begin{center}
		\subfigure[$\per$ for $\oti$ scheme for isotropic CGL]{\includegraphics[width=0.26\textwidth]{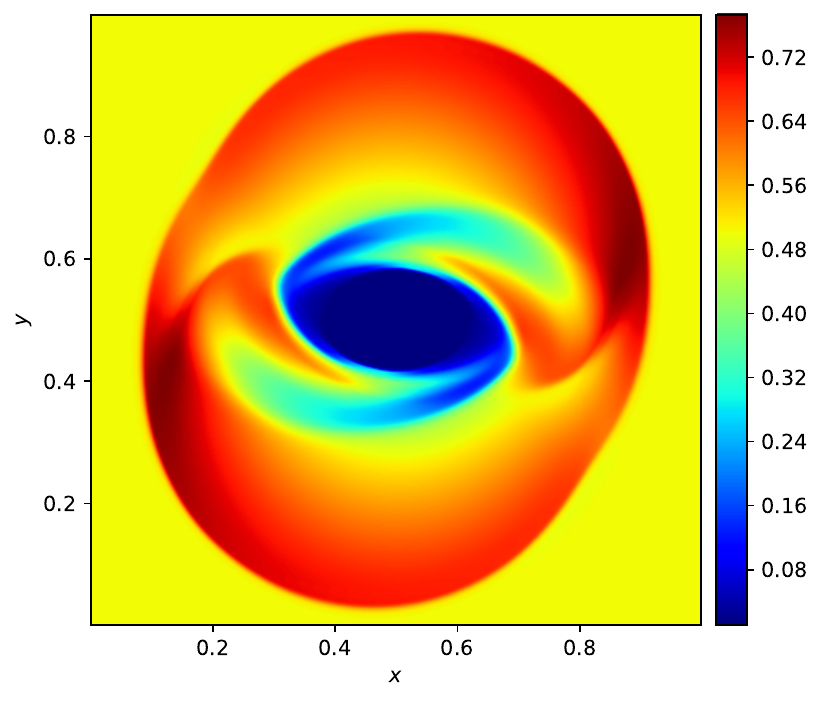}\label{fig:rt_par_wo_m_o2}}~
		\subfigure[$\per$ for $\othi$ scheme for isotropic CGL]{\includegraphics[width=0.26\textwidth]{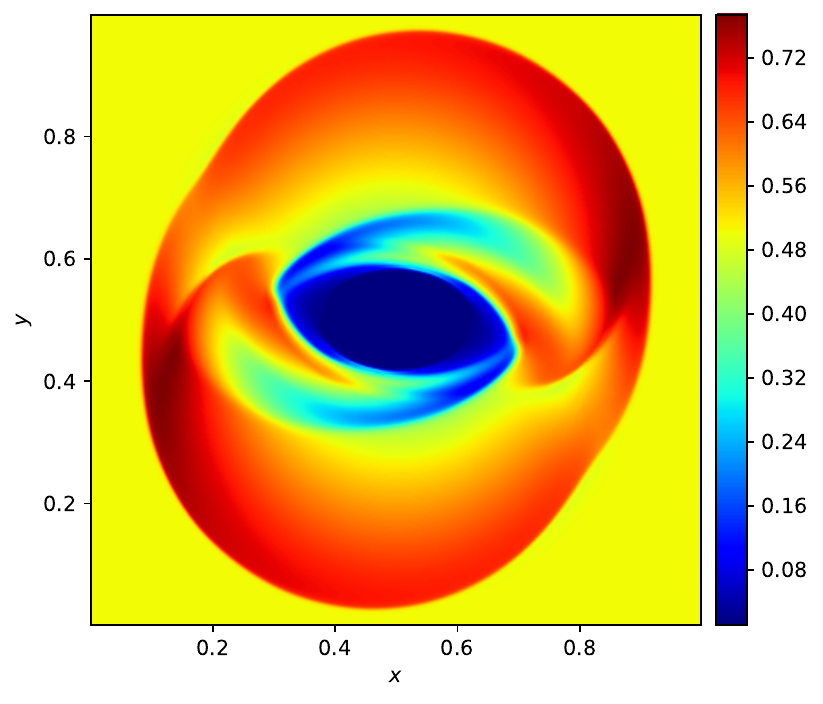}\label{fig:rt_par_wo_m_o3}}~
		\subfigure[$\per$ for $\ofi$ scheme for isotropic CGL]{\includegraphics[width=0.26\textwidth]{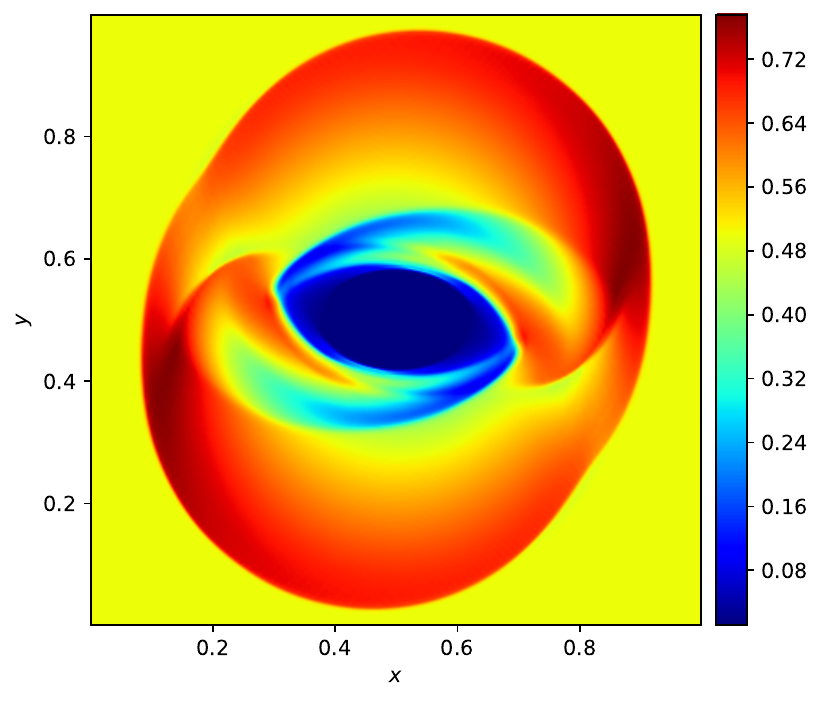}\label{fig:rt_par_wo_m_o4}}\\
		\subfigure[$\per$ for $\oti$ scheme for isotropic GLM-CGL]{\includegraphics[width=0.26\textwidth]{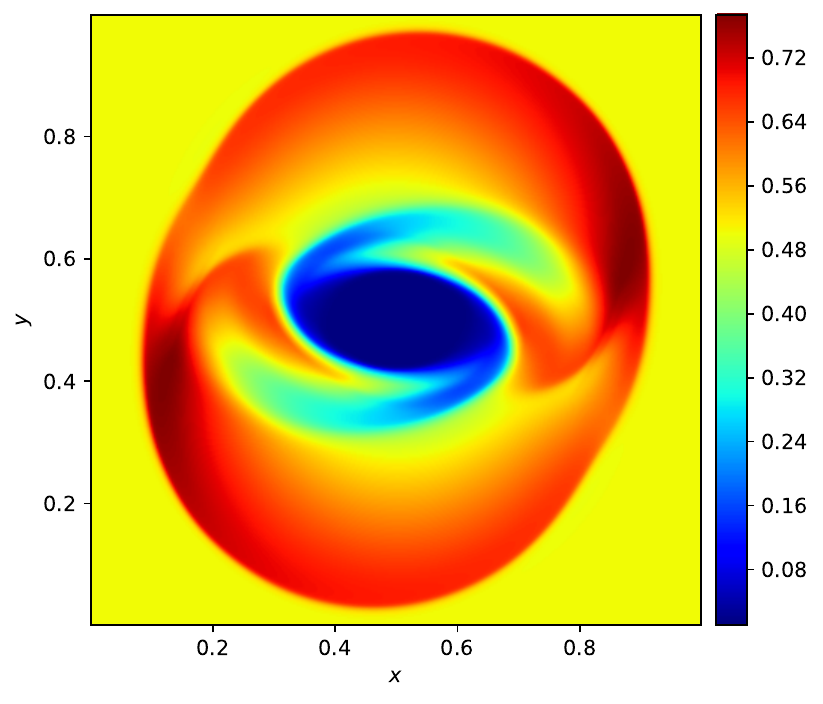}\label{fig:rt_par_w_m_o2}}~
		\subfigure[$\per$ for $\othi$ scheme for isotropic GLM-CGL]{\includegraphics[width=0.26\textwidth]{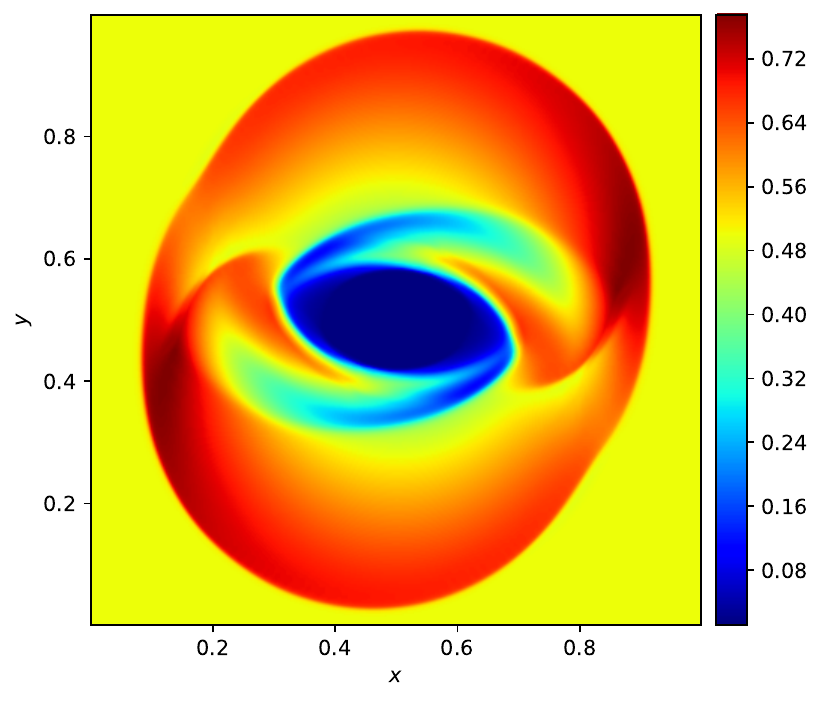}\label{fig:rt_par_w_m_o3}}~
		\subfigure[$\per$ for $\ofi$ scheme for isotropic GLM-CGL]{\includegraphics[width=0.26\textwidth]{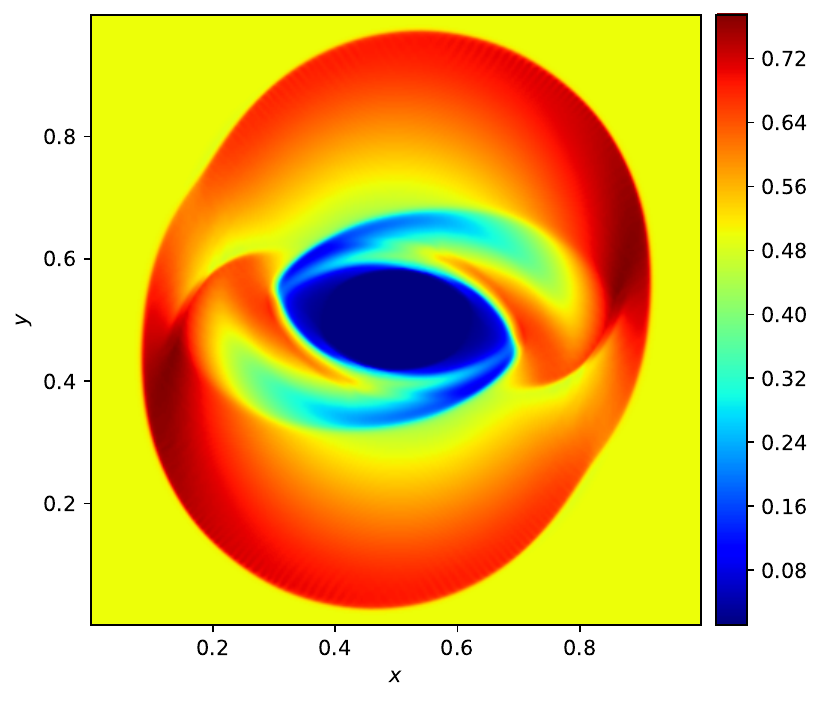}\label{fig:rt_par_w_m_o4}}\\
		\subfigure[$|(\nabla\cdot\B)_{i,j}|$ for $\oti$ scheme for isotropic CGL]{\includegraphics[width=0.26\textwidth]{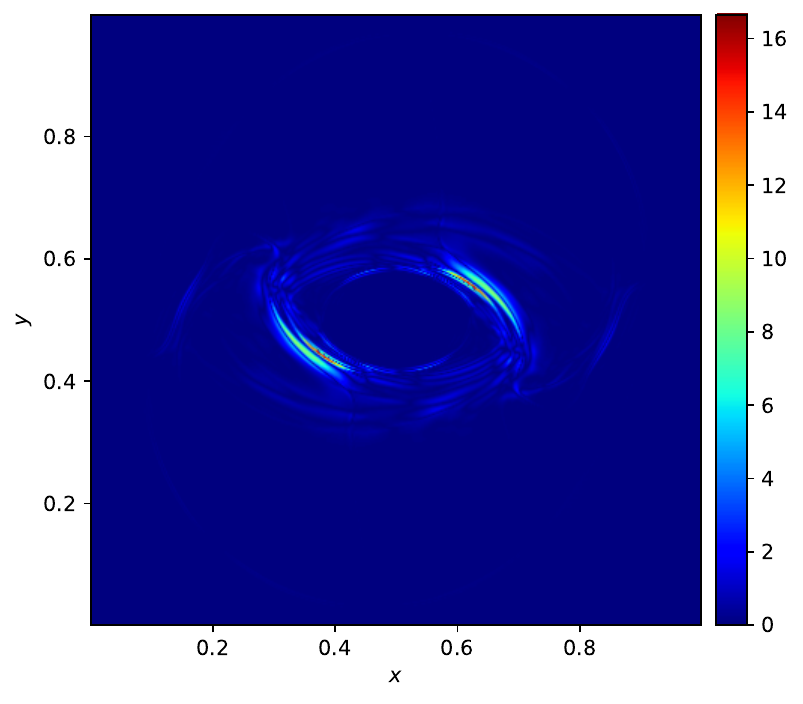}\label{fig:rt_db_wo_m_o2}}~
		\subfigure[$|(\nabla\cdot\B)_{i,j}|$ for $\othi$ scheme for isotropic CGL]{\includegraphics[width=0.26\textwidth]{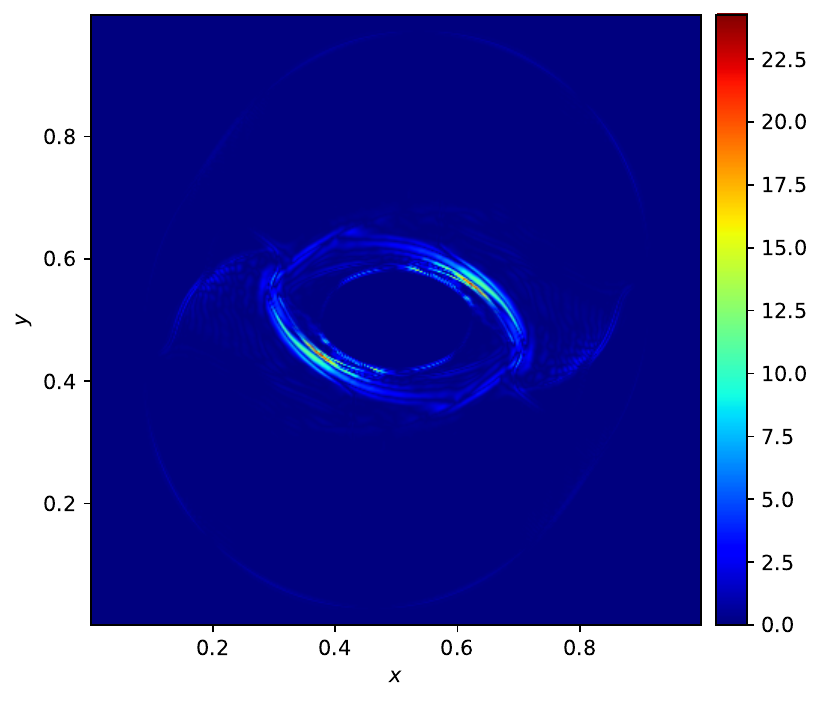}\label{fig:rt_db_wo_m_o3}}~
		\subfigure[$|(\nabla\cdot\B)_{i,j}|$ for $\ofi$ scheme for isotropic CGL]{\includegraphics[width=0.26\textwidth]{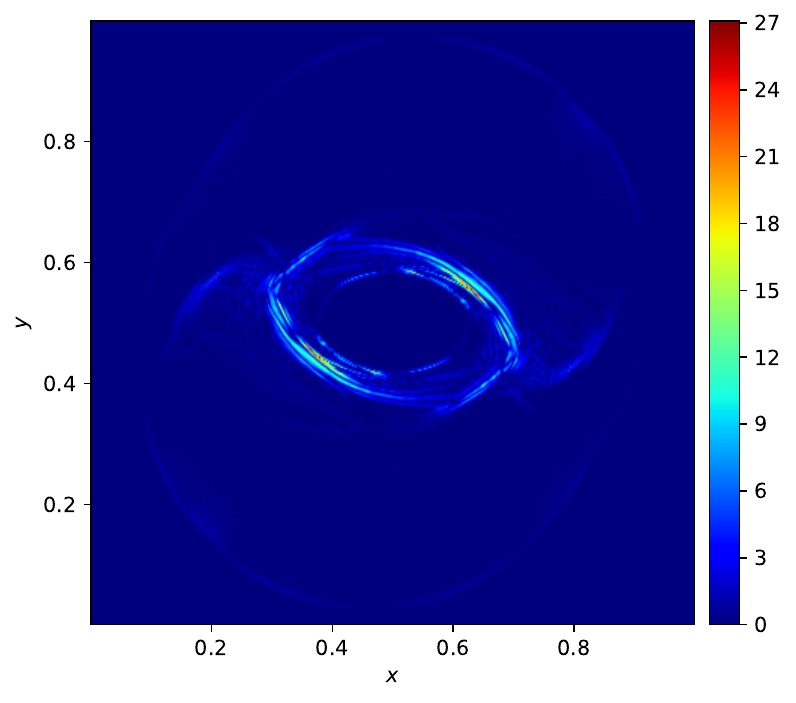}\label{fig:rt_db_wo_m_o4}}\\
		\subfigure[$|(\nabla\cdot\B)_{i,j}|$ for $\oti$ scheme for isotropic GLM-CGL]{\includegraphics[width=0.26\textwidth]{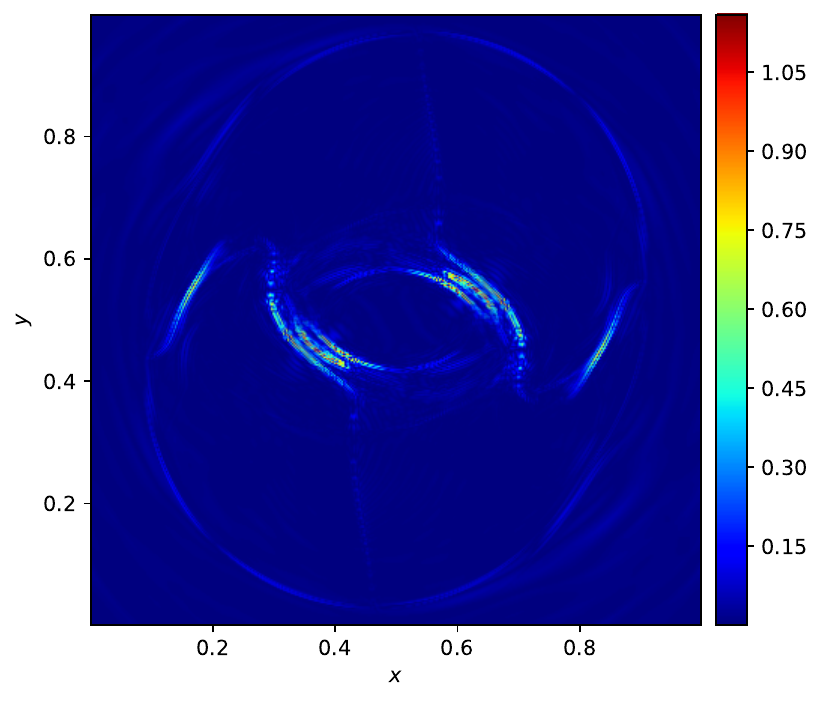}\label{fig:rt_db_w_m_o2}}~
		\subfigure[$|(\nabla\cdot\B)_{i,j}|$ for $\othi$ scheme for isotropic GLM-CGL]{\includegraphics[width=0.26\textwidth]{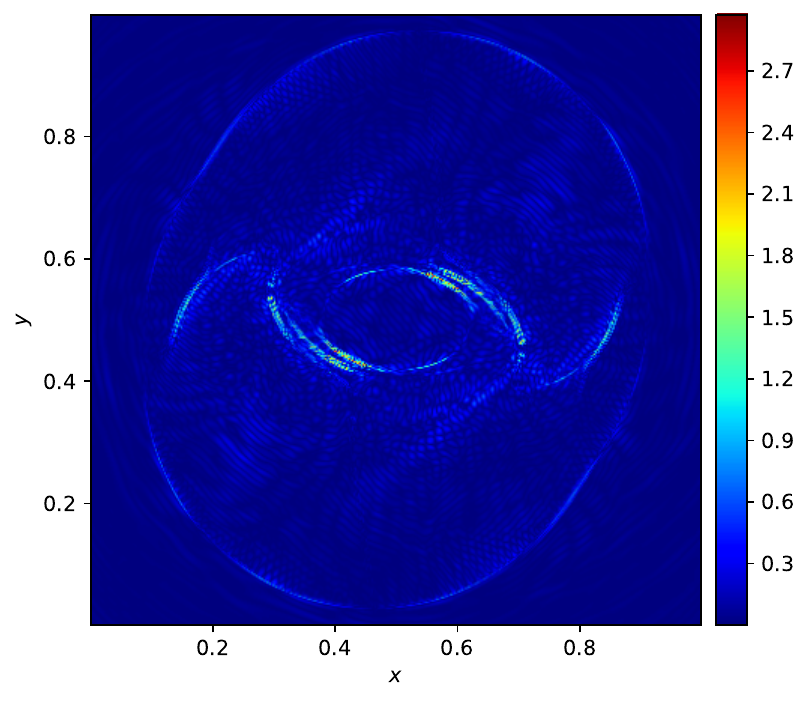}\label{fig:rt_db_w_m_o3}}~
		\subfigure[$|(\nabla\cdot\B)_{i,j}|$ for $\ofi$ scheme for isotropic GLM-CGL]{\includegraphics[width=0.26\textwidth]{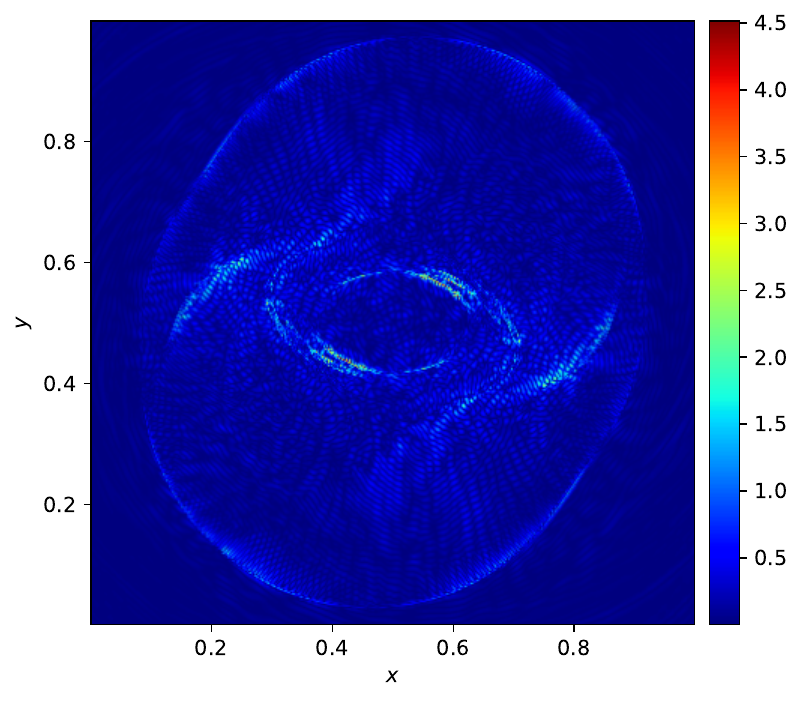}\label{fig:rt_db_w_m_o4}}\\
		\caption{\textbf{\nameref{test:rt}}: Plots of $\per$ and $|(\nabla\cdot\B)_{i,j}|$ for $\oti$, $\othi$ and $\ofi$ schemes for isotropic CGL and isotropic GLM-CGL at time $t=0.295$.}
		\label{fig:rt_mhd_par_divb}
	\end{center}
\end{figure}
\begin{figure}[!htbp]
	\begin{center}	
		\subfigure[Cut of pressure component $\pll$ for isotropic CGL and isotropic GLM-CGL along $y = 0.489$ using $400 \times400$ cells  ]{\includegraphics[width=0.4\textwidth]{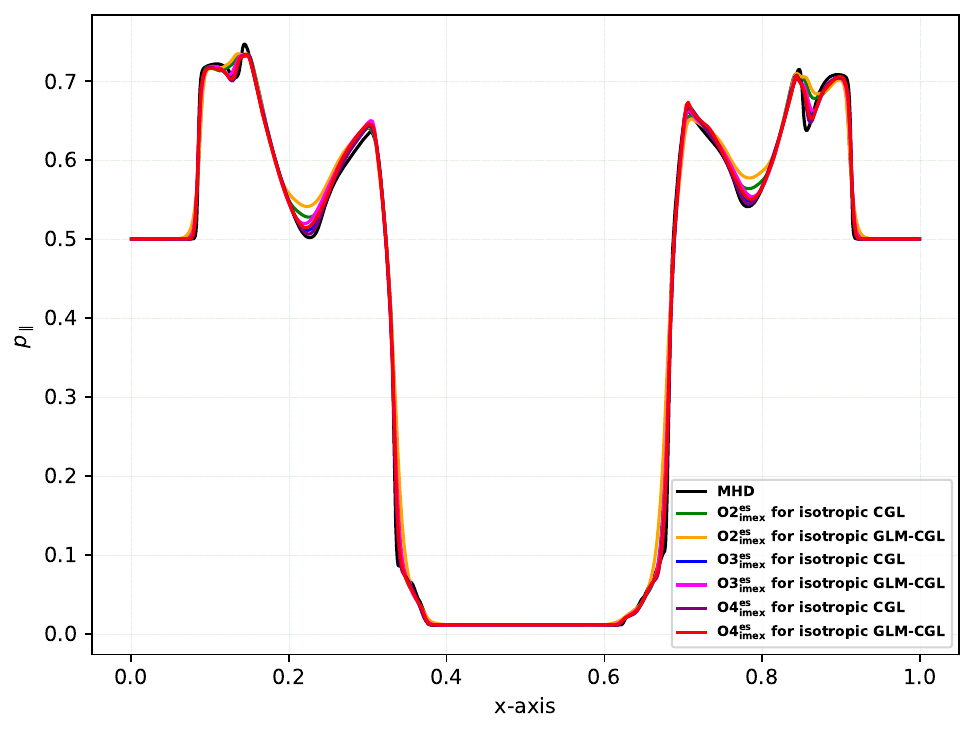}\label{fig:rt_cut_pll}}~
		\subfigure[Cut of pressure component $\per$ for isotropic CGL and isotropic GLM-CGL along $y = 0.489$ using $400 \times400$ cells  ]{\includegraphics[width=0.4\textwidth]{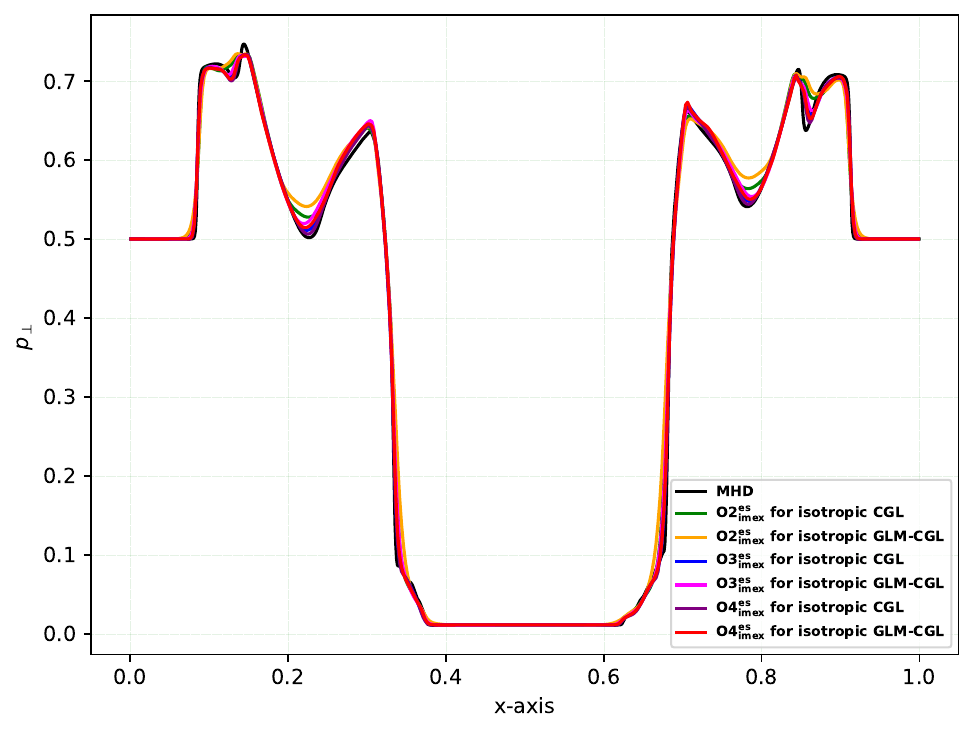}\label{fig:rt_cut_per}}\\
		\caption{\textbf{\nameref{test:rt}}: Cut plot of pressure components $\pll$ and $\per$ at time $t=0.295$.}
		\label{fig:rt_cut}
	\end{center}
\end{figure}

\begin{figure}[!htbp]
	\begin{center}	
		\subfigure[$\|\nabla\cdot\B\|_{1}$ and $\|\nabla\cdot\B\|_{2}$ for $\ote$ scheme for CGL and GLM-CGL ]{\includegraphics[width=0.28\textwidth]{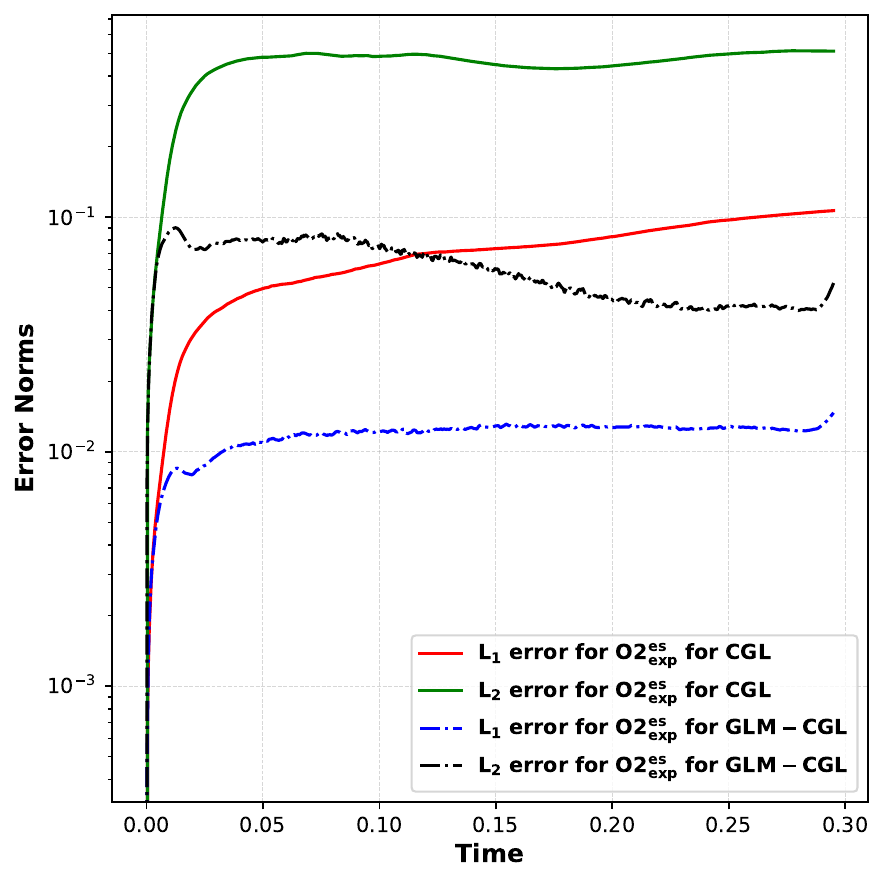}\label{fig:rt_error_cgl_o2}}~
		\subfigure[$\|\nabla\cdot\B\|_{1}$ and $\|\nabla\cdot\B\|_{2}$ for $\othe$ scheme for CGL and GLM-CGL ]{\includegraphics[width=0.28\textwidth]{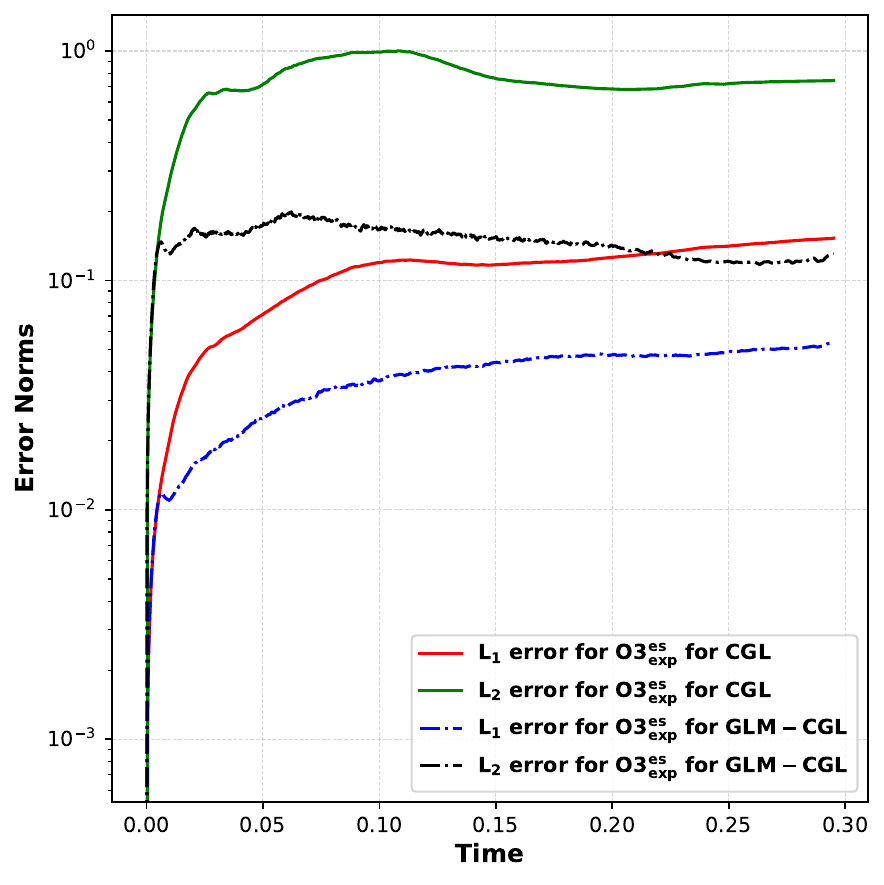}\label{fig:rt_error_cgl_o3}}~
		\subfigure[$\|\nabla\cdot\B\|_{1}$ and $\|\nabla\cdot\B\|_{2}$ for $\ofe$ scheme for CGL and GLM-CGL ]{\includegraphics[width=0.28\textwidth]{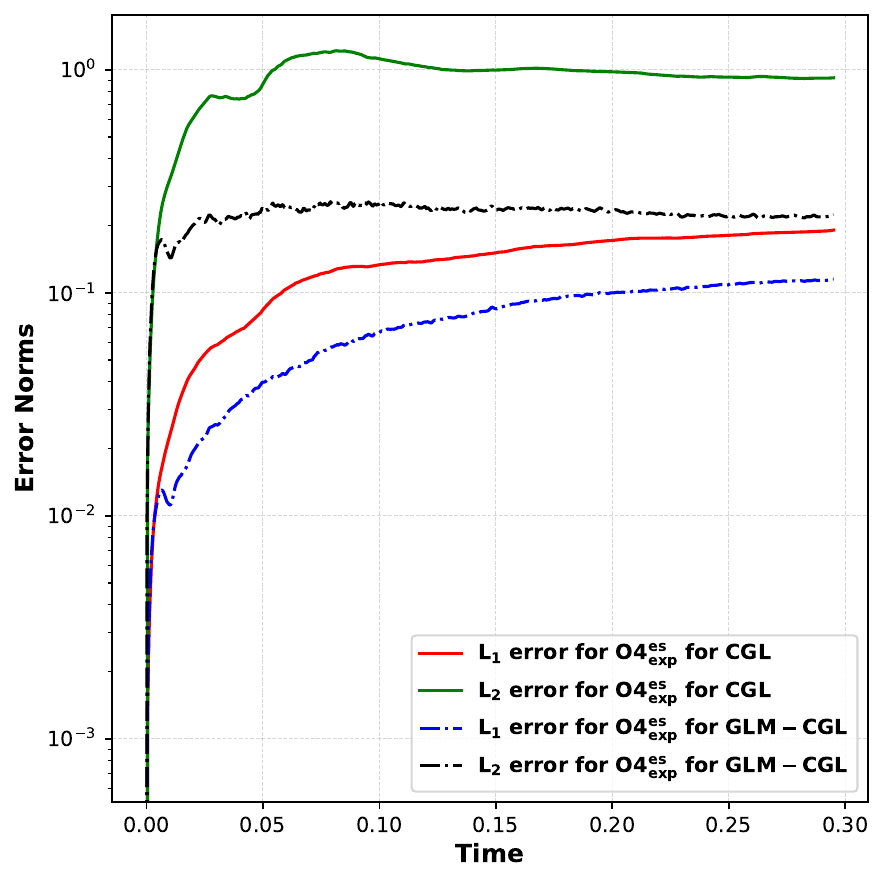}\label{fig:rt_error_cgl_o4}}\\
		\subfigure[$\|\nabla\cdot\B\|_{1}$ and $\|\nabla\cdot\B\|_{2}$ for $\oti$ scheme for isotropic CGL and isotropic GLM-CGL ]{\includegraphics[width=0.28\textwidth]{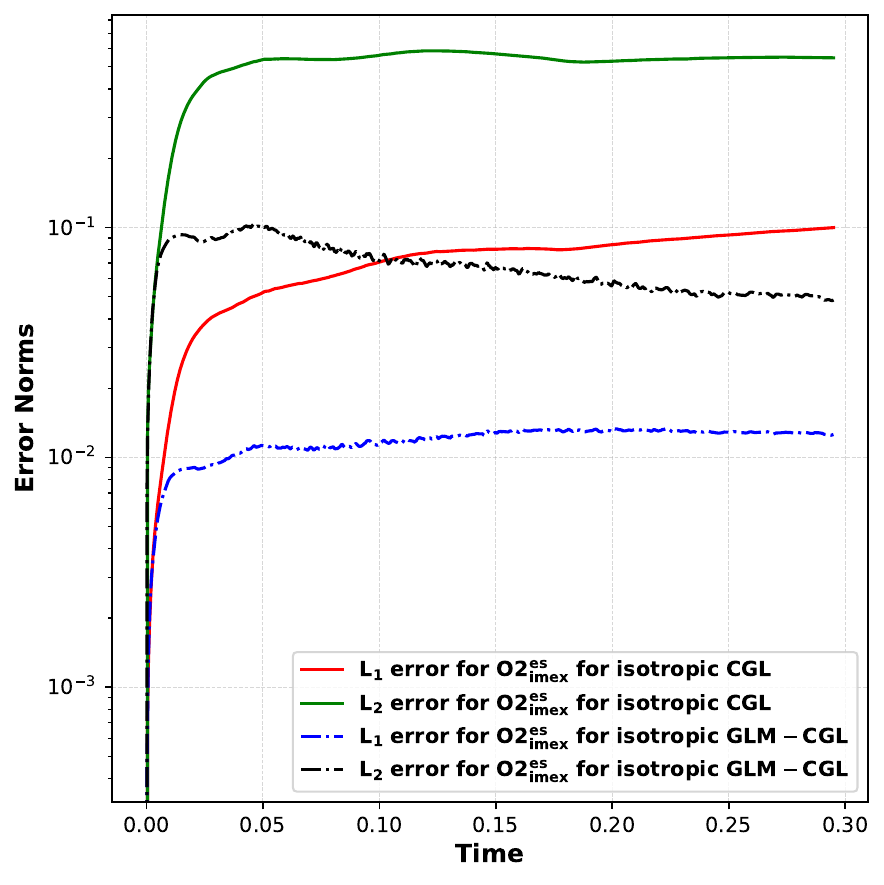}\label{fig:rt_error_mhd_o2}}~
		\subfigure[$\|\nabla\cdot\B\|_{1}$ and $\|\nabla\cdot\B\|_{2}$ for $\othi$ scheme for isotropic CGL and isotropic GLM-CGL ]{\includegraphics[width=0.28\textwidth]{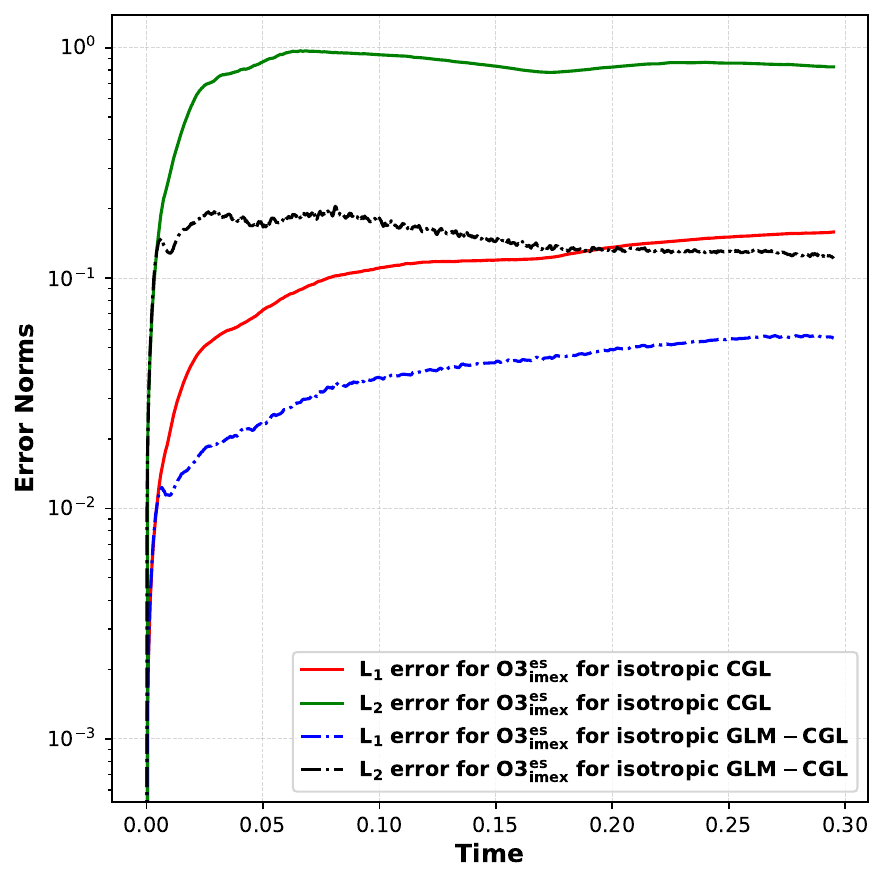}\label{fig:rt_error_mhd_o3}}~
		\subfigure[$\|\nabla\cdot\B\|_{1}$ and $\|\nabla\cdot\B\|_{2}$ for $\ofi$ scheme for isotropic CGL and isotropic GLM-CGL ]{\includegraphics[width=0.28\textwidth]{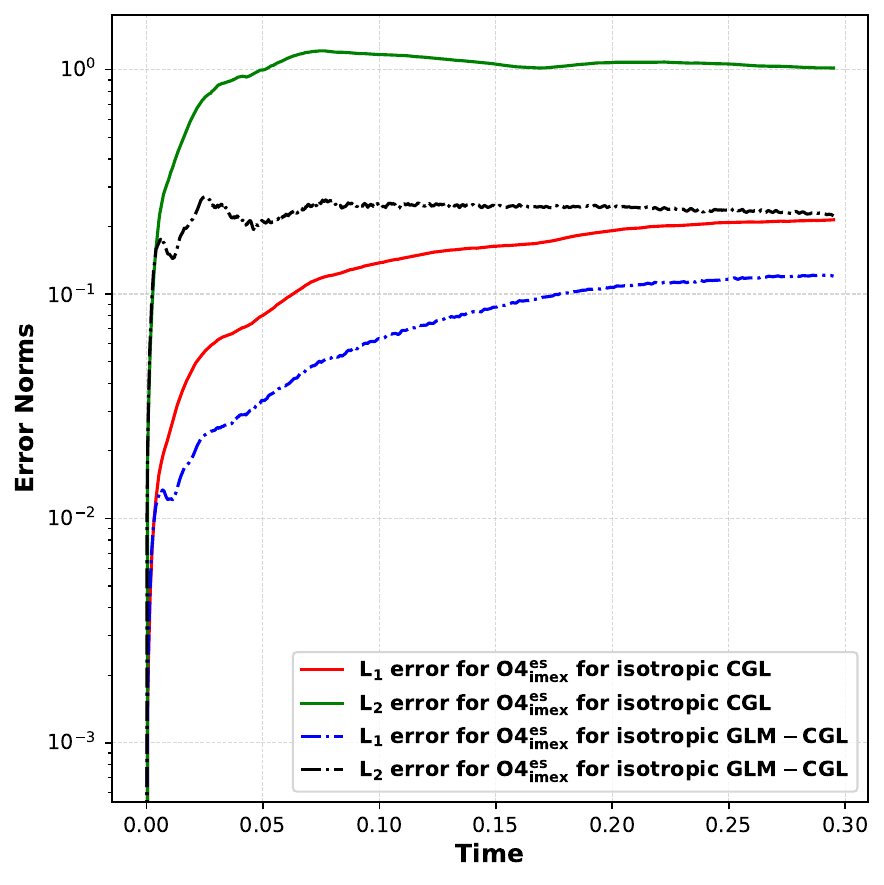}\label{fig:rt_error_mhd_o4}}\\
		\caption{\textbf{\nameref{test:rt}}: Evolution of the magnetic field divergence constraint errors till time $t=0.295$.}
		\label{fig:rt_cut_error}
	\end{center}
\end{figure}

\begin{table}[ht]
\centering
\renewcommand{\arraystretch}{1.4}
\begin{tabular}{|c|c|c|c|c|c|c|}
\hline
\multirow{2}{*}{Scheme} &
\multirow{2}{*}{System} &
$\rho_{\min}$ &
$(p_{\parallel})_{\min}$ &
$(p_{\perp})_{\min}$ &
$\|\nabla\!\cdot\!\mathbf{B}\|_{1}$ &
$\|\nabla\!\cdot\!\mathbf{B}\|_{2}$ \\
&&&&&&\\
\hline

\multirow{2}{*}{$\ote$}
& CGL
& 6.966201E-01 & 5.184962E-03 & 1.586086E-02 & 1.067062E-01 & 5.110535E-01 \\ \cline{2-7}

& GLM--CGL
& 6.946072E-01 & 5.004655E-03 & 1.524349E-02 &  1.466224E-02 & 5.243560E-02  \\  \hline

\multirow{2}{*}{$\othe$}
& CGL
& 5.922451E-01 & 3.268449E-03 & 1.634908E-02 &  1.527916E-01 & 7.431989E-01 \\ \cline{2-7}

& GLM--CGL
& 5.987161E-01 & 5.244759E-03 & 1.62247E-02 &  5.363814E-02 & 1.309093E-01  \\  \hline

\multirow{2}{*}{$\ofe$}
& CGL
& 5.531599E-01 & 1.466932E-03 & 1.645908E-02 &  1.913728E-01 & 9.218534E-01 \\ \cline{2-7}

& GLM--CGL
& 5.781969E-01 & 2.033486E-03 & 1.636593E-02 &  1.153811E-01 & 2.244592E-01  \\  \hline

\multirow{2}{*}{$\oti$}
& Isotropic CGL
& 6.961634E-01 & 1.088873E-02 & 1.088872E-02 &  9.985200E-02 & 5.471052E-01 \\ \cline{2-7}

& Isotropic GLM--CGL
&  7.247776E-01 & 1.05835E-02 & 1.058388E-02 &  1.257881E-02 & 4.836722E-02  \\  \hline

\multirow{2}{*}{$\othi$}
& Isotropic CGL
& 5.924475E-01 & 1.120533E-02 & 1.120554E-02 &  1.585083E-01 & 8.256589E-01 \\ \cline{2-7}

& Isotropic GLM--CGL
& 6.101004E-01 & 1.114166E-02 & 1.114187E-02 &  5.486058E-02 & 1.216828E-01  \\  \hline

\multirow{2}{*}{$\ofi$}
& Isotropic CGL
& 5.623716E-01 & 1.124085E-02 & 1.124092E-02 &  2.141761E-01 & 1.016381E+00  \\ \cline{2-7}

& Isotropic GLM--CGL
& 5.771091E-01 & 1.122265E-02 & 1.122283E-02 &  1.200924E-01 & 2.230905E-01  \\ \hline

\end{tabular}
\caption{\textbf{\nameref{test:rt}}: Minimum values of density $\rho$, parallel pressure $p_{\parallel}$, and perpendicular pressure $p_{\perp}$, together with the value of $L_1$ and $L_2$ norm of $\nabla\cdot\B$ at final time.}
\label{tab:div_rt}
\end{table}
Another interesting test case for MHD equations is the rotor test case, which is considered in ~\cite{balsara1999staggered,balsara2004second,toth2000b,derigs2018ideal,fuchs2011approximate}. Here, we generalize this test to the CGL model. We consider a computational domain of $[0,1]\times[0,1]$ with Neumann boundary conditions.  The initial density profile is given by,
%
\[\rho = \begin{cases}
	10.0, & \textrm{if } r< 0.1,\\
	1 + 9f(r), & \textrm{if } 0.1\leq r<0.115,\\
	1.0, & \textrm{otherwise,}
\end{cases}\]
where, $r(x,y) = |(x,y) - (0.5,0.5)|$ and $f(r) = \frac{23 - 200r}{3}$. The initial velocity vector is taken to be, 
\[\bu= \begin{cases}
	\left(-(10y-5),~(10x-5),~0\right), & \textrm{if } r< 0.1,\\
	\left(-(10y-5)f(r),~(10x-5)f(r),~0\right), & \textrm{if } 0.1\leq r<0.115,\\
	\left(0,~0,~0\right), & \textrm{otherwise.}
\end{cases}\]
The rest of the states are taken to be,
\[\left(\pll, \per, B_{x}, B_{y}, B_{z},\Psi\right) = 
\left(0.5,0.5,\frac{2.5}{\sqrt{4\pi}},0,0,0\right).\]
The numerical simulations are performed using $400\times400$ cells till the final time $t=0.295.$  The results for the CGL and GLM-CGL systems are plotted in Figure~\eqref{fig:rt_cgl_rho_divb}, using $\ote$, $\othe$ and $\ofe$  schemes. We see that all the schemes have resolved complicated shock structures in $\per$ variable. We have also plotted the absolute value of the magnetic field divergence. When comparing the results of CGL and GLM-CGL, we observe that the GLM-CGL model produces significantly lower divergence error when compared with the CGL model. 

In Figure~\eqref{fig:rt_mhd_par_divb}, we have plotted the results for the isotropic case using $\oti$, $\othi$ and $\ofi$ schemes at $400\times400$ cells at final time $t=0.295.$ We again see that the solutions are very similar to the MHD case, and all the schemes are able to resolve wave structures. Furthermore, similar to the anisotropic case, the GLM-CGL model produces significantly lower divergence errors. \revo{The colorbars in all divergence plots are dynamically scaled using the absolute local minimum and maximum values of $\nabla\cdot\B$ in each plot.}

To compare the isotropic solutions with the MHD solution, in Figure~\eqref{fig:rt_cut}, we have plotted a one-dimensional cut of the isotropic solutions and compared it with the MHD solution. We observe that both pressure components match the MHD pressure. Again, we also observe that the GLM-CGL solutions are slightly more diffusive than the CGL solutions for the same scheme, consistent with the observations discussed above. 

To monitor the divergence error over time, we have plotted the evolution of the $L_1$ and $L_2$ errors of the divergence of the magnetic field. We note that the errors are stable; however, GLM-CGL errors are significantly less than the CGL errors for both isotropic and anisotropic cases.

\revg{Table~\eqref{tab:div_rt} lists the minimum values of density, $\pll$, and $\per$ as well as the value of $L_1$ and $L_2$ norm of $\nabla\cdot\B$ at the final time. All reported values of density and pressure remain positive for all schemes and formulations, indicating that physically admissible solutions are maintained throughout the computation. The table also demonstrates the significant reduction in divergence errors achieved by the GLM formulation.}
\subsection{Field loop advection problem}\label{test:fl}
Following~\cite{gardiner2005unsplit,Hirabayashi2016new}, we consider the field loop advection problem for the CGL model. The computation domain for the test case is $[-1,1]\times[-0.5,0.5]$ with periodic boundary conditions. The solutions are computed till the final time $t=2.0$. The initial conditions are given by,
\[\left(\rho, \bu, \pll, \per, \Psi\right) = 
\left(2\times10^6, 1,2,0,2\times10^6,2\times10^6,0\right)\]
The magnetic field loop is given in the form of a vector potential as
\[A_z(x,y) = \begin{cases}
	(R-r), & \textrm{if } r\leq R,\\
	0, & \textrm{otherwise}.
\end{cases}\]
where $r=\sqrt{x^2+y^2}$ is the distance from the origin and $R=0.3$. The numerical results are presented in Figures~\eqref{fig:fl_cgl_mbs_divb}, \eqref{fig:fl_mhd_par_divb} and \eqref{fig:fl_error} using $400\times200$ cells. We have plotted $|\B|^2$ for CGL, GLM-CGL, isotropic CGL and isotropic GLM-CGL. We observe that the third and fourth order schemes are much less diffusive than the second order schemes. From the $|(\nabla\cdot\B)_{i,j}|$ plots, we observed that the GLM-CGL model produces much lower values than the CGL model. \revo{The colorbars in all divergence plots are dynamically scaled using the absolute local minimum and maximum values of $\nabla\cdot\B$ in each plot.} The time evolution of the $L_1$ and $L_2$ divergence of the magnetic field shows that CGL and GLM-CGL models have similar evolutions. 

\revg{Table~\eqref{tab:div_fl} lists the minimum values of density, $\pll$, and $\per$ as well as the value of $L_1$ and $L_2$ norm of $\nabla\cdot\B$ at the final time. All reported values of density and pressure remain positive for all schemes and formulations, indicating that physically admissible solutions are maintained throughout the computation. The table also demonstrates the significant reduction in divergence errors achieved by the GLM formulation.}
\begin{figure}[!htbp]
	\begin{center}
		\subfigure[$|\B|^2$ for $\ote$ scheme for CGL]{\includegraphics[width=0.32\textwidth]{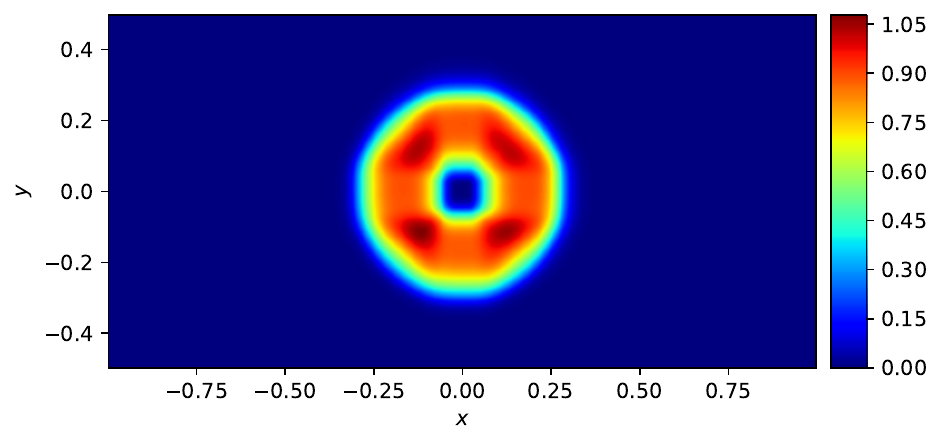}\label{fig:mbs_wo_glm_O2}}~
		\subfigure[$|\B|^2$ for $\othe$ scheme for CGL]{\includegraphics[width=0.32\textwidth]{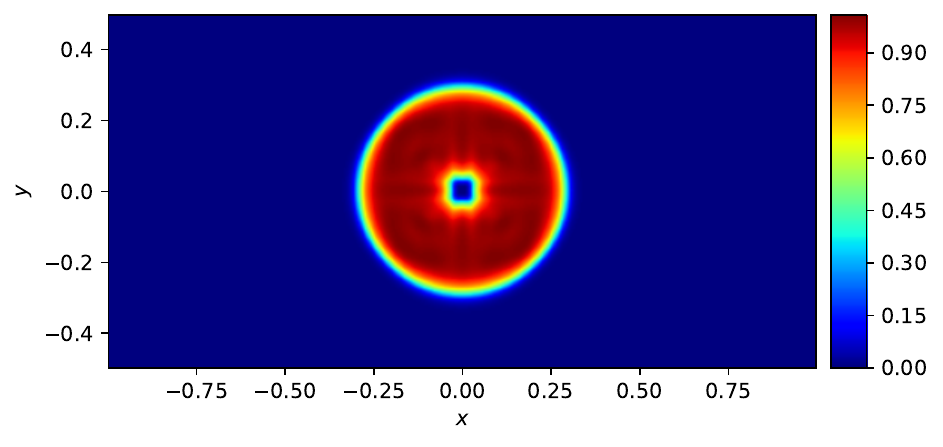}\label{fig:mbs_wo_glm_O3}}~
		\subfigure[$|\B|^2$ for $\ofe$ scheme for CGL]{\includegraphics[width=0.32\textwidth]{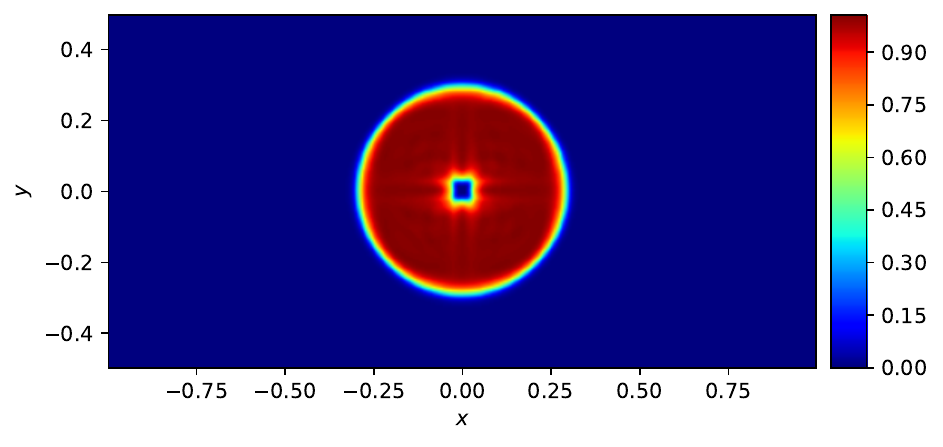}\label{fig:mbs_wo_glm_O4}}\\
		\subfigure[$|\B|^2$ for $\ote$ scheme for GLM-CGL]{\includegraphics[width=0.32\textwidth]{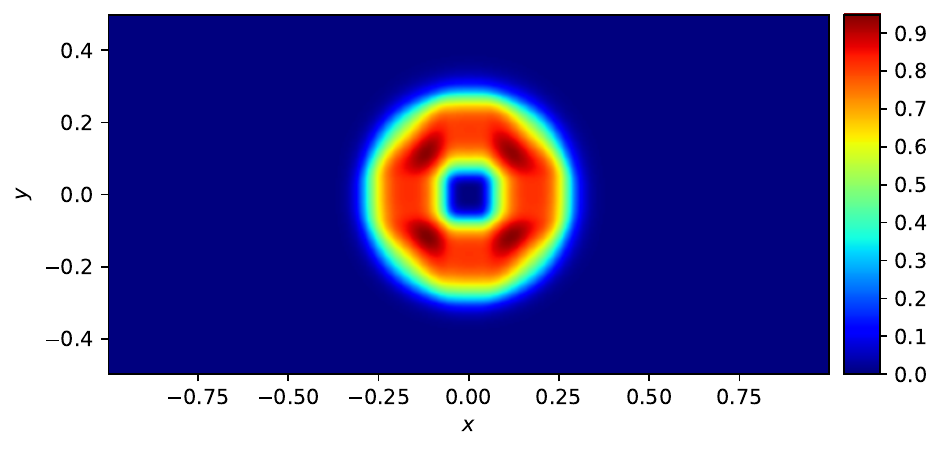}\label{fig:mbs_w_glm_O2}}~
		\subfigure[$|\B|^2$ for $\othe$ scheme for GLM-CGL]{\includegraphics[width=0.32\textwidth]{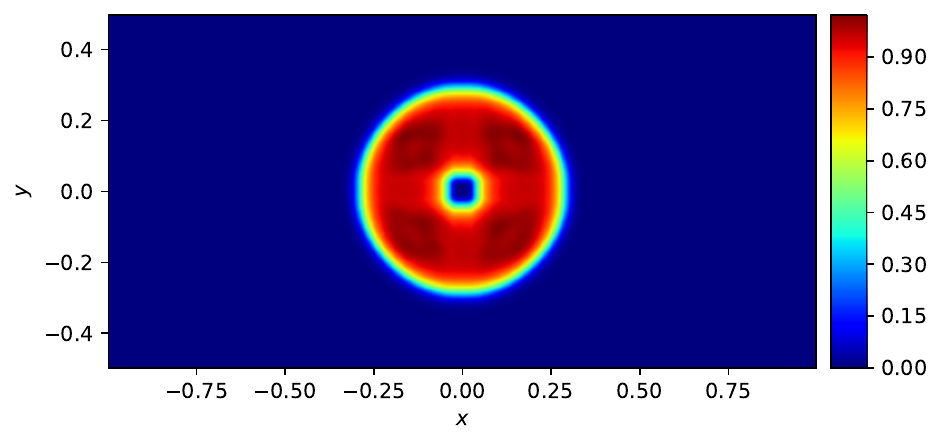}\label{fig:mbs_w_glm_O3}}~
		\subfigure[$|\B|^2$ for $\ofe$ scheme for GLM-CGL]{\includegraphics[width=0.32\textwidth]{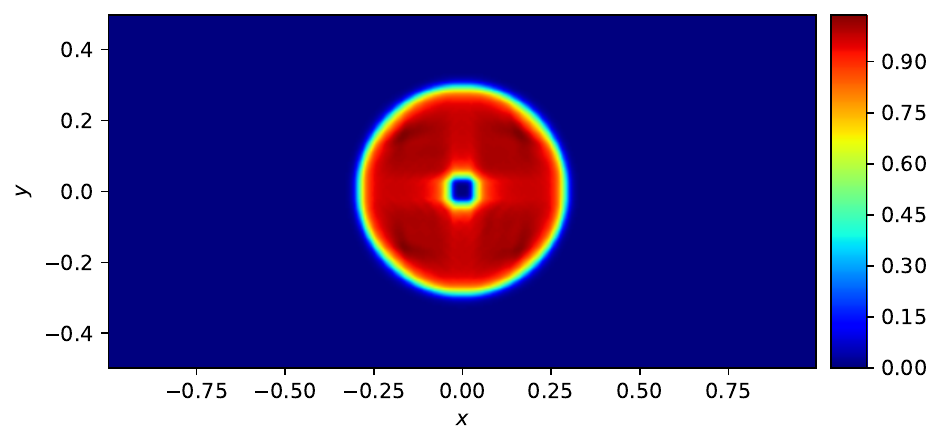}\label{fig:mbs_w_glm_O4}}\\
		\subfigure[$|(\nabla\cdot\B)_{i,j}|$ for $\ote$ scheme for CGL]{\includegraphics[width=0.32\textwidth]{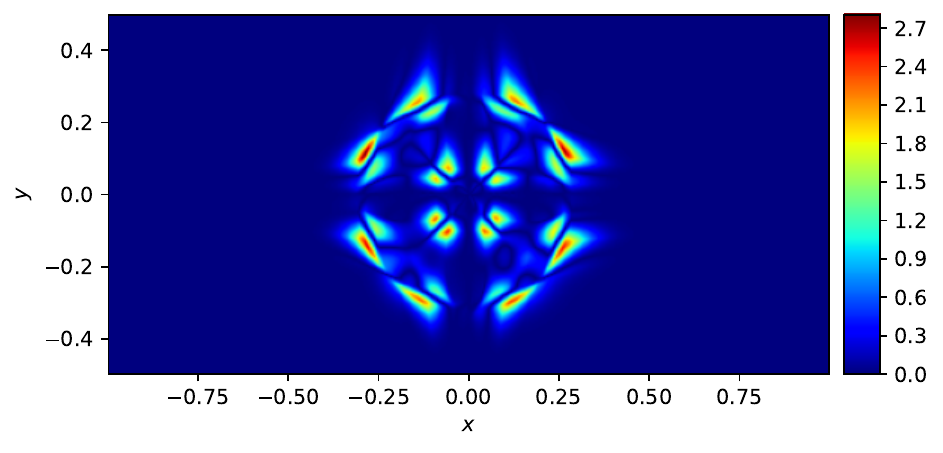}\label{fig:db_wo_glm_O2}}~
		\subfigure[$|(\nabla\cdot\B)_{i,j}|$ for $\othe$ scheme for CGL]{\includegraphics[width=0.32\textwidth]{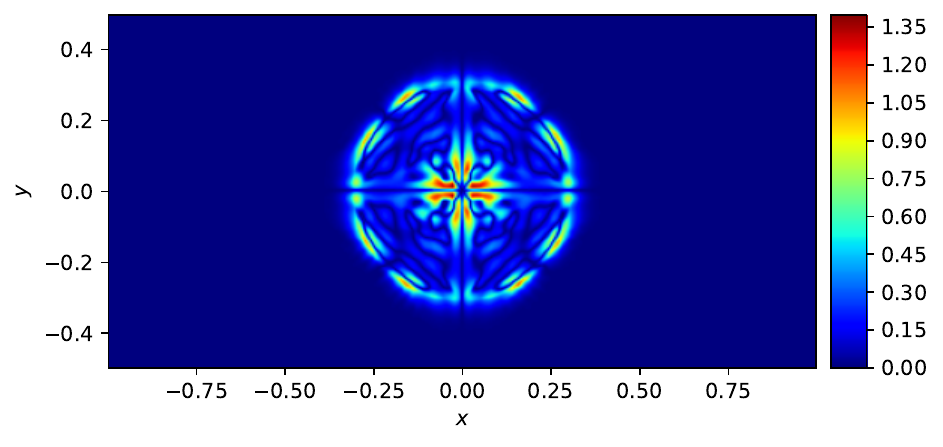}\label{fig:db_wo_glm_O3}}~
		\subfigure[$|(\nabla\cdot\B)_{i,j}|$ for $\ofe$ scheme for CGL]{\includegraphics[width=0.32\textwidth]{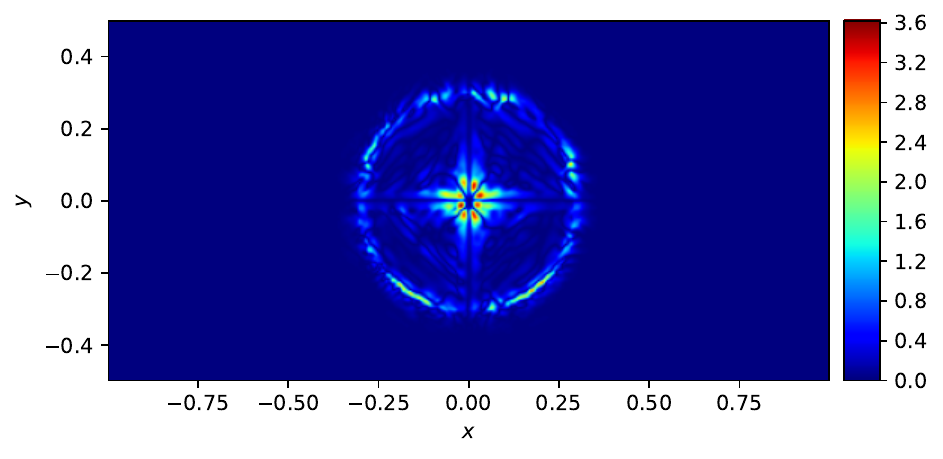}\label{fig:db_wo_glm_O4}}\\
		\subfigure[$|(\nabla\cdot\B)_{i,j}|$ for $\ote$ scheme for GLM-CGL]{\includegraphics[width=0.32\textwidth]{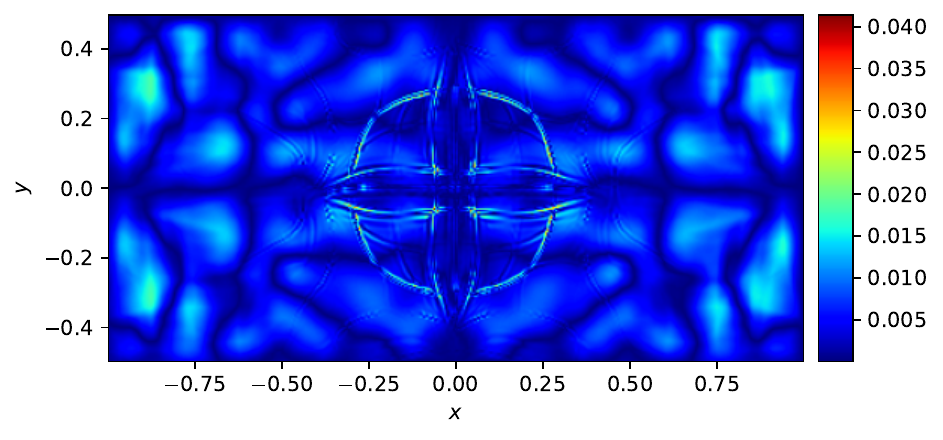}\label{fig:db_w_glm_O2}}~
		\subfigure[$|(\nabla\cdot\B)_{i,j}|$ for $\othe$ scheme for GLM-CGL]{\includegraphics[width=0.32\textwidth]{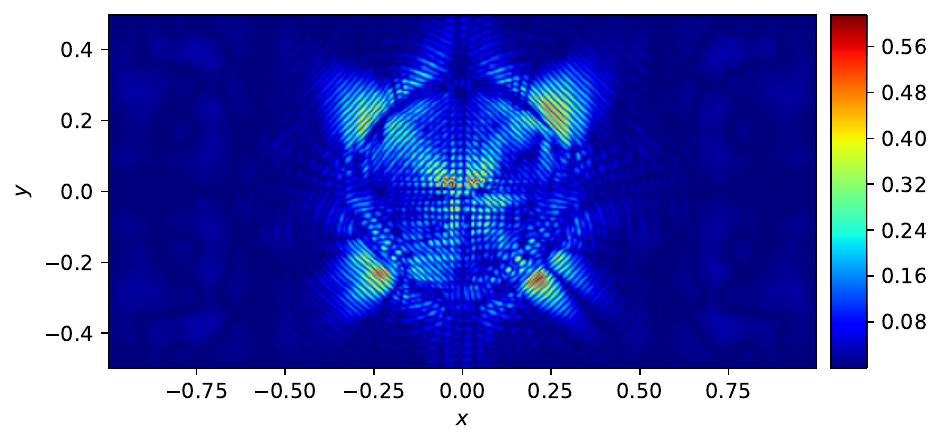}\label{fig:db_w_glm_O3}}~
		\subfigure[$|(\nabla\cdot\B)_{i,j}|$ for $\ofe$ scheme for GLM-CGL]{\includegraphics[width=0.32\textwidth]{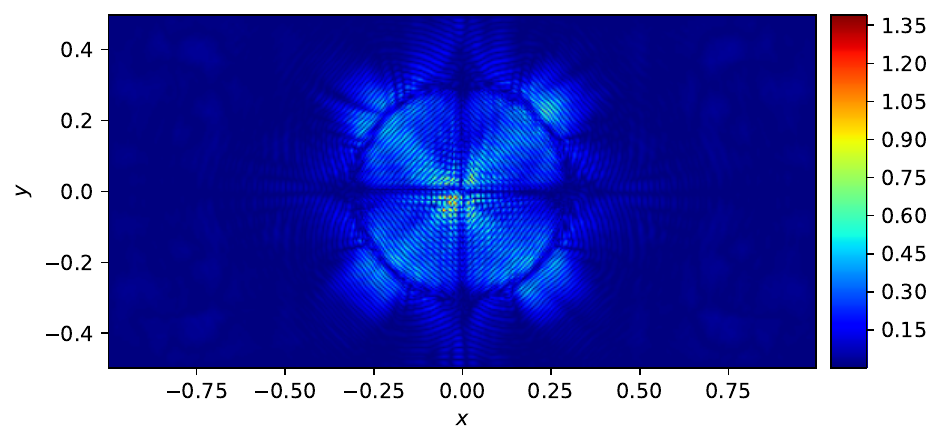}\label{fig:db_w_glm_O4}}\\
		\caption{\textbf{\nameref{test:fl}}: Plots of $|\B|^2$ and $|(\nabla\cdot\B)_{i,j}|$ for $\ote$, $\othe$ and $\ofe$ schemes for CGL and GLM-CGL at time $t=2.0$.}
		\label{fig:fl_cgl_mbs_divb}
	\end{center}
\end{figure}
\begin{figure}[!htbp]
	\begin{center}
		\subfigure[$|\B|^2$ for $\oti$ scheme for isotropic CGL]{\includegraphics[width=0.32\textwidth]{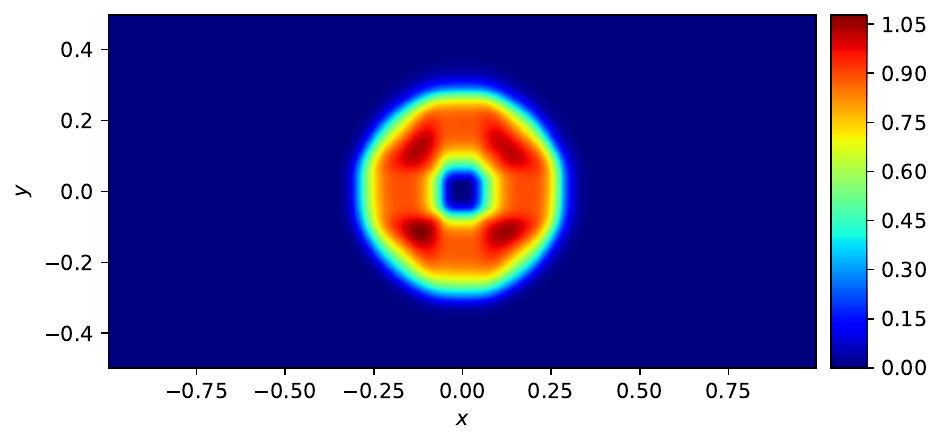}\label{fig:mbs_wo_glm_m_O2}}~
		\subfigure[$|\B|^2$ for $\othi$ scheme for isotropic CGL]{\includegraphics[width=0.32\textwidth]{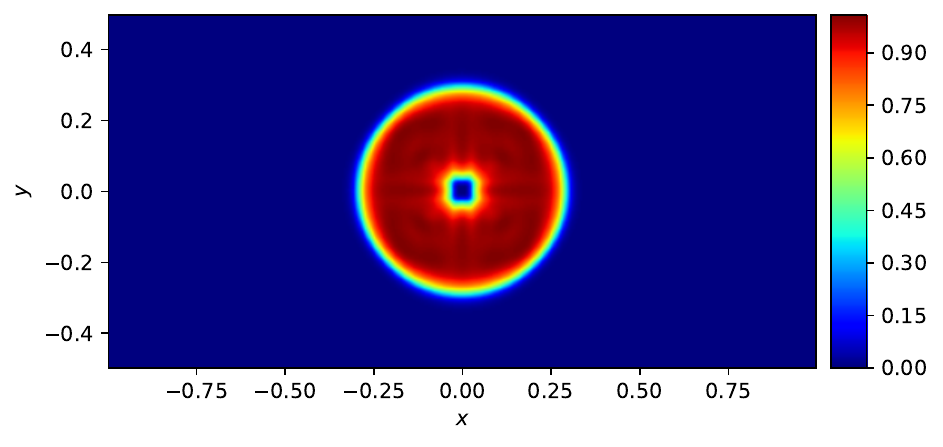}\label{fig:mbs_wo_glm_m_O3}}~
		\subfigure[$|\B|^2$ for $\ofi$ scheme for isotropic CGL]{\includegraphics[width=0.32\textwidth]{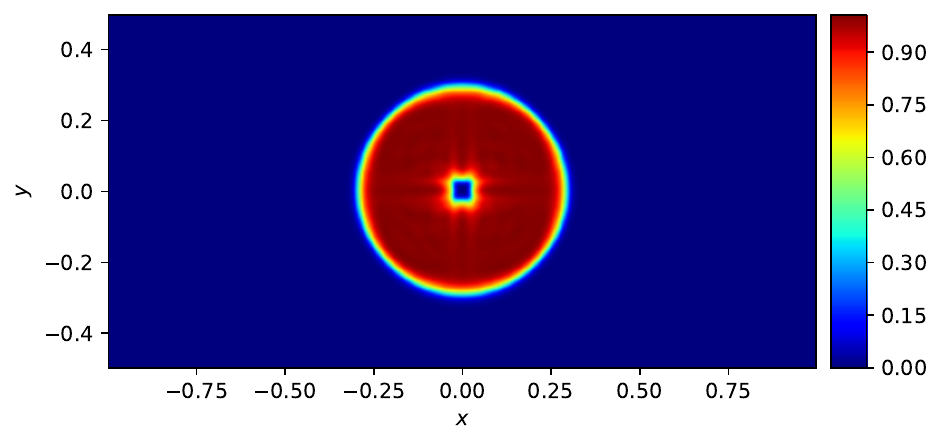}\label{fig:mbs_wo_glm_m_O4}}\\
		\subfigure[$|\B|^2$ for $\oti$ scheme for isotropic GLM-CGL]{\includegraphics[width=0.32\textwidth]{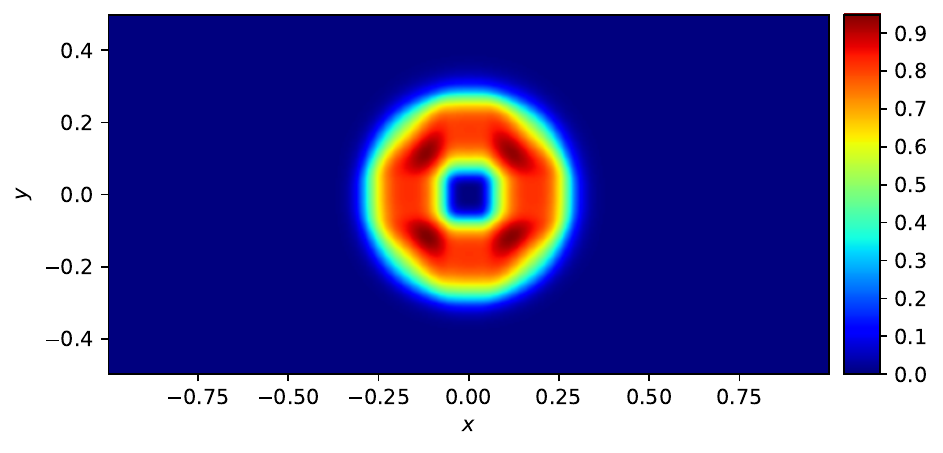}\label{fig:mbs_w_glm_m_O2}}~
		\subfigure[$|\B|^2$ for $\othi$ scheme for isotropic GLM-CGL]{\includegraphics[width=0.32\textwidth]{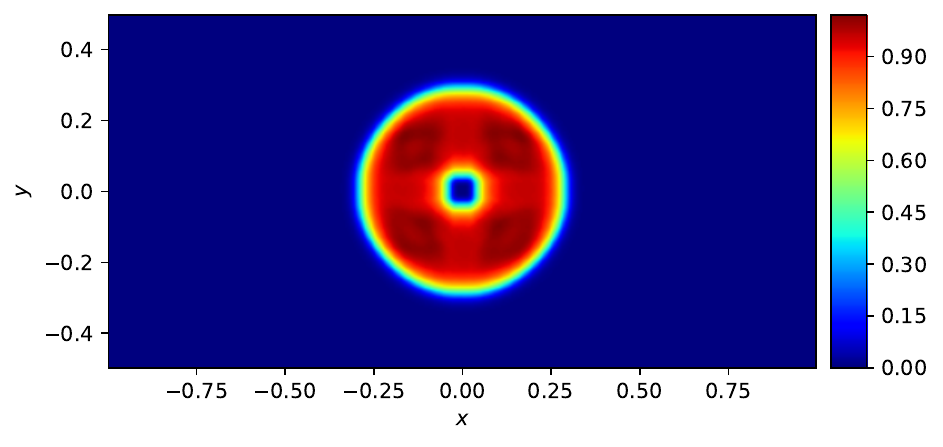}\label{fig:mbs_w_glm_m_O3}}~
		\subfigure[$|\B|^2$ for $\ofi$ scheme for isotropic GLM-CGL]{\includegraphics[width=0.32\textwidth]{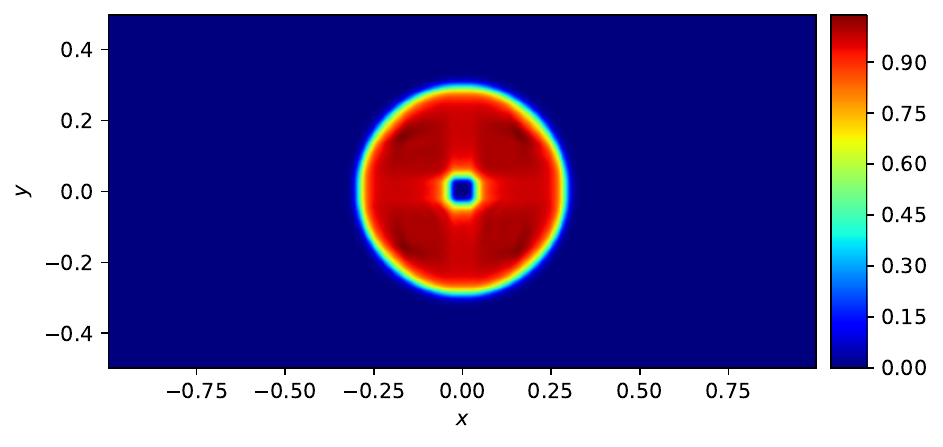}\label{fig:mbs_w_glm_m_O4}}\\
		\subfigure[$|(\nabla\cdot\B)_{i,j}|$ for $\oti$ scheme for isotropic CGL]{\includegraphics[width=0.32\textwidth]{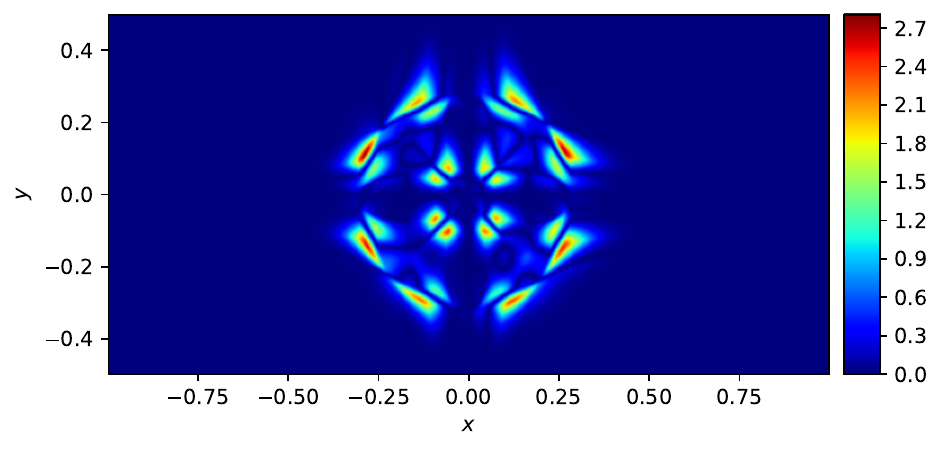}\label{fig:db_wo_glm_m_O2}}~
		\subfigure[$|(\nabla\cdot\B)_{i,j}|$ for $\othi$ scheme for isotropic CGL]{\includegraphics[width=0.32\textwidth]{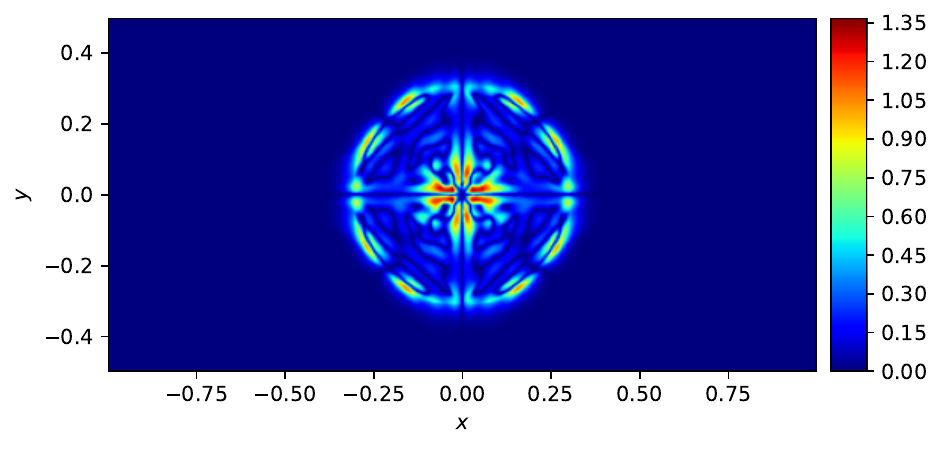}\label{fig:db_wo_glm_m_O3}}~
		\subfigure[$|(\nabla\cdot\B)_{i,j}|$ for $\ofi$ scheme for isotropic CGL]{\includegraphics[width=0.32\textwidth]{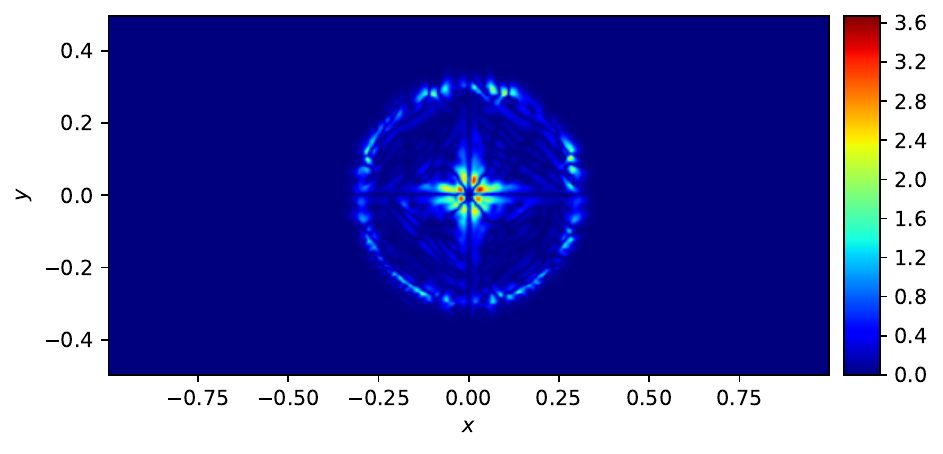}\label{fig:db_wo_glm_m_O4}}\\
		\subfigure[$|(\nabla\cdot\B)_{i,j}|$ for $\oti$ scheme for isotropic GLM-CGL]{\includegraphics[width=0.32\textwidth]{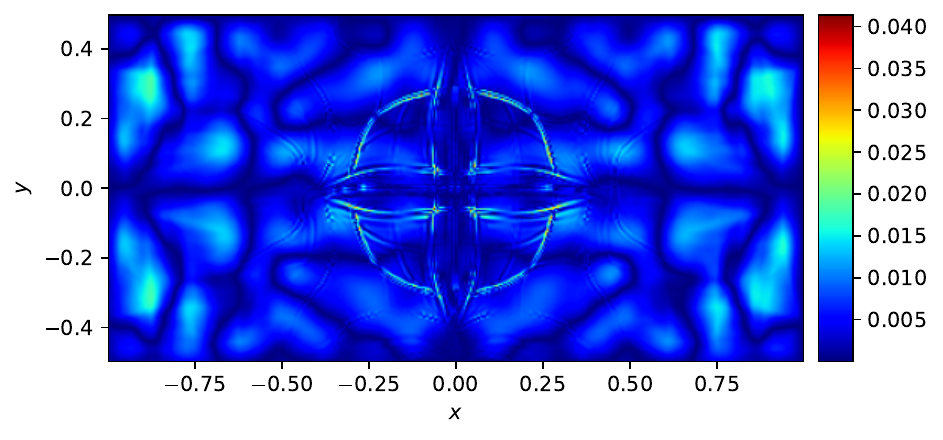}\label{fig:db_w_glm_m_O2}}~
		\subfigure[$|(\nabla\cdot\B)_{i,j}|$ for $\othi$ scheme for isotropic GLM-CGL]{\includegraphics[width=0.32\textwidth]{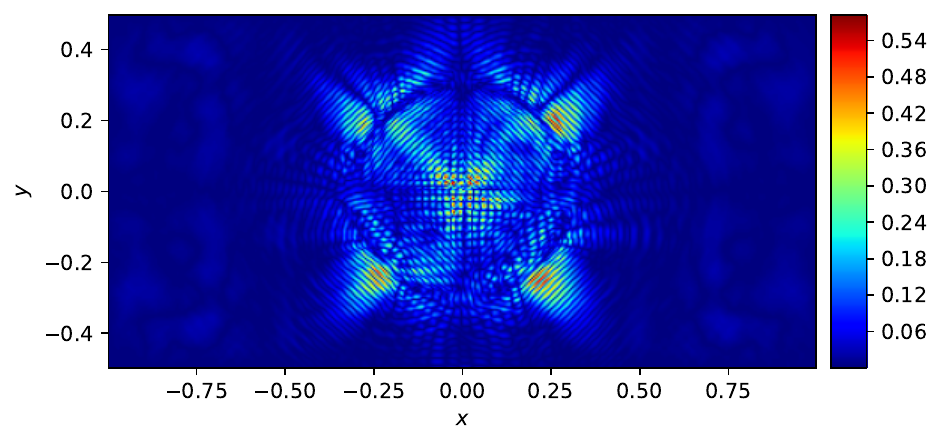}\label{fig:db_w_glm_m_O3}}~
		\subfigure[$|(\nabla\cdot\B)_{i,j}|$ for $\ofi$ scheme for isotropic GLM-CGL]{\includegraphics[width=0.32\textwidth]{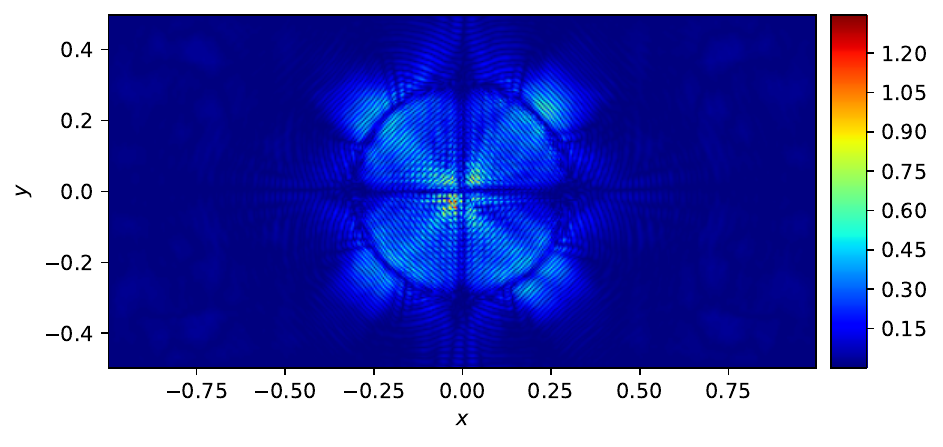}\label{fig:db_w_glm_m_O4}}\\
		\caption{\textbf{\nameref{test:fl}}: Plots of $|\B|^2$ and $|(\nabla\cdot\B)_{i,j}|$ for $\oti$, $\othi$ and $\ofi$ schemes for isotropic CGL and isotropic GLM-CGL at time $t=2.0$.}
		\label{fig:fl_mhd_par_divb}
	\end{center}
\end{figure}
\begin{figure}[!htbp]
	\begin{center}	
		\subfigure[$\|\nabla\cdot\B\|_{1}$ and $\|\nabla\cdot\B\|_{2}$ for $\ote$ scheme for CGL and GLM-CGL ]{\includegraphics[width=0.28\textwidth]{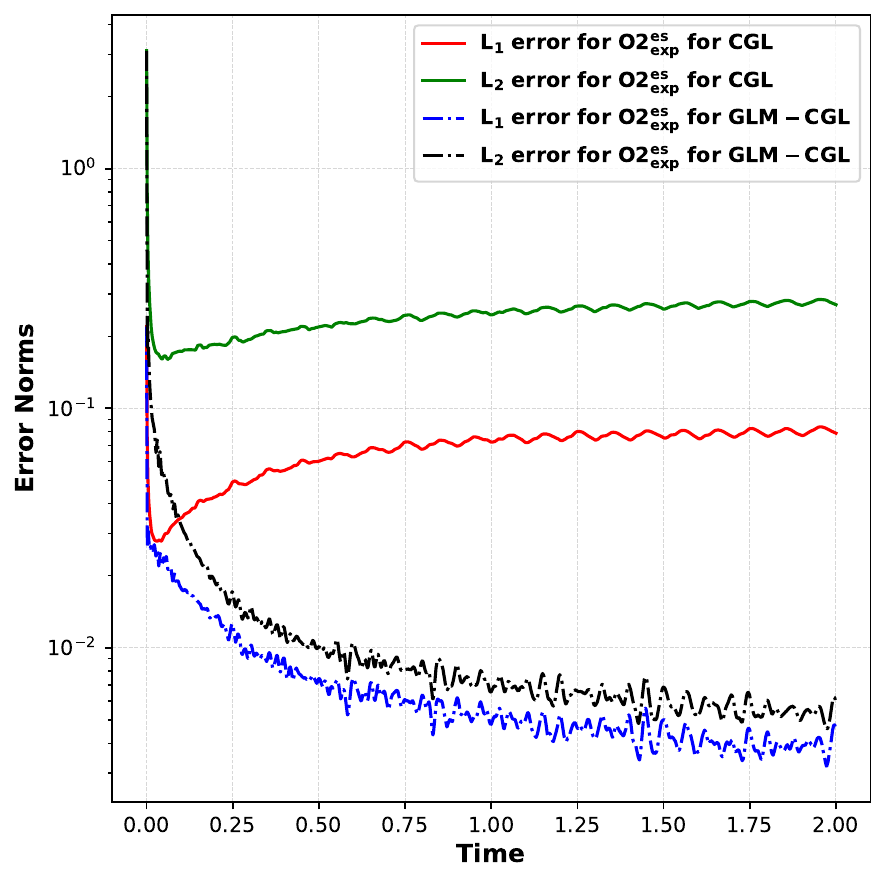}\label{fig:error_fl_O2}}~
		\subfigure[$\|\nabla\cdot\B\|_{1}$ and $\|\nabla\cdot\B\|_{2}$ for $\othe$ scheme for CGL and GLM-CGL ]{\includegraphics[width=0.28\textwidth]{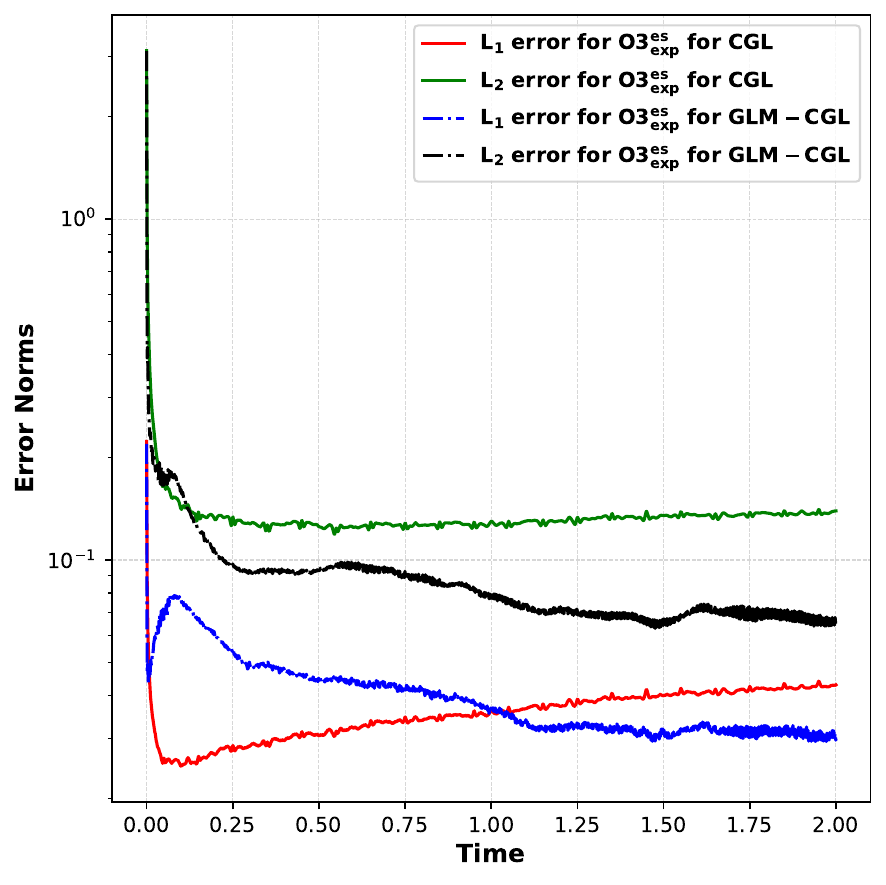}\label{fig:error_fl_O3}}~
		\subfigure[$\|\nabla\cdot\B\|_{1}$ and $\|\nabla\cdot\B\|_{2}$ for $\ofe$ scheme for CGL and GLM-CGL ]{\includegraphics[width=0.28\textwidth]{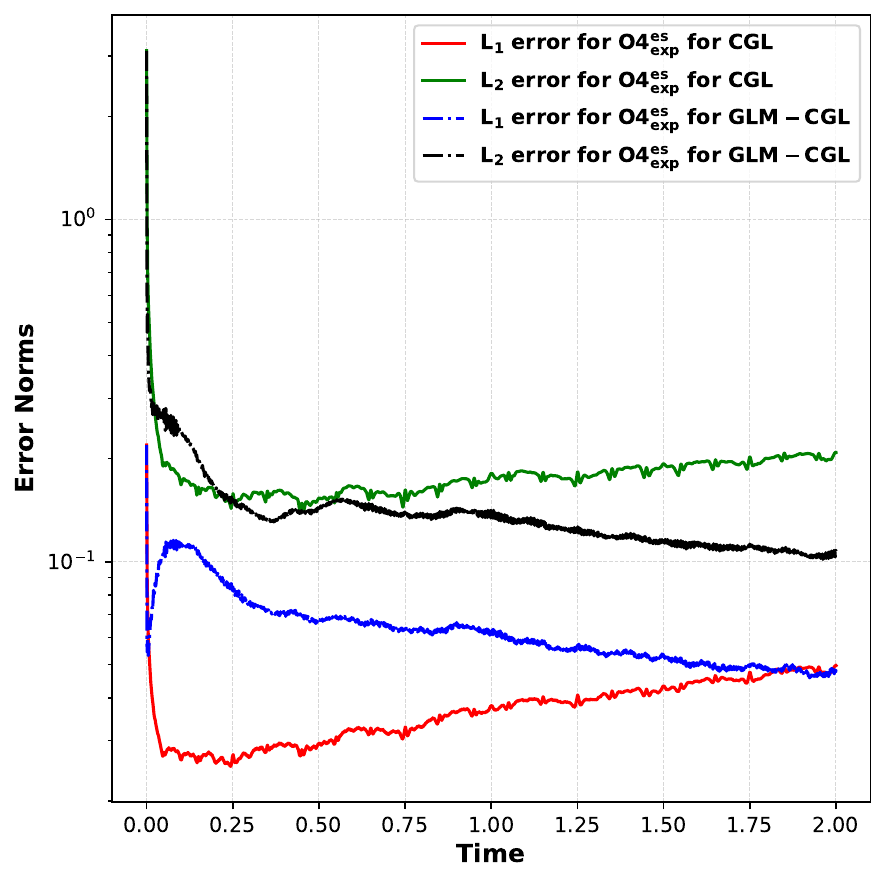}\label{fig:error_fl_O4}}\\
		\subfigure[$\|\nabla\cdot\B\|_{1}$ and $\|\nabla\cdot\B\|_{2}$ for $\oti$ scheme for isotropic CGL and isotropic GLM-CGL ]{\includegraphics[width=0.28\textwidth]{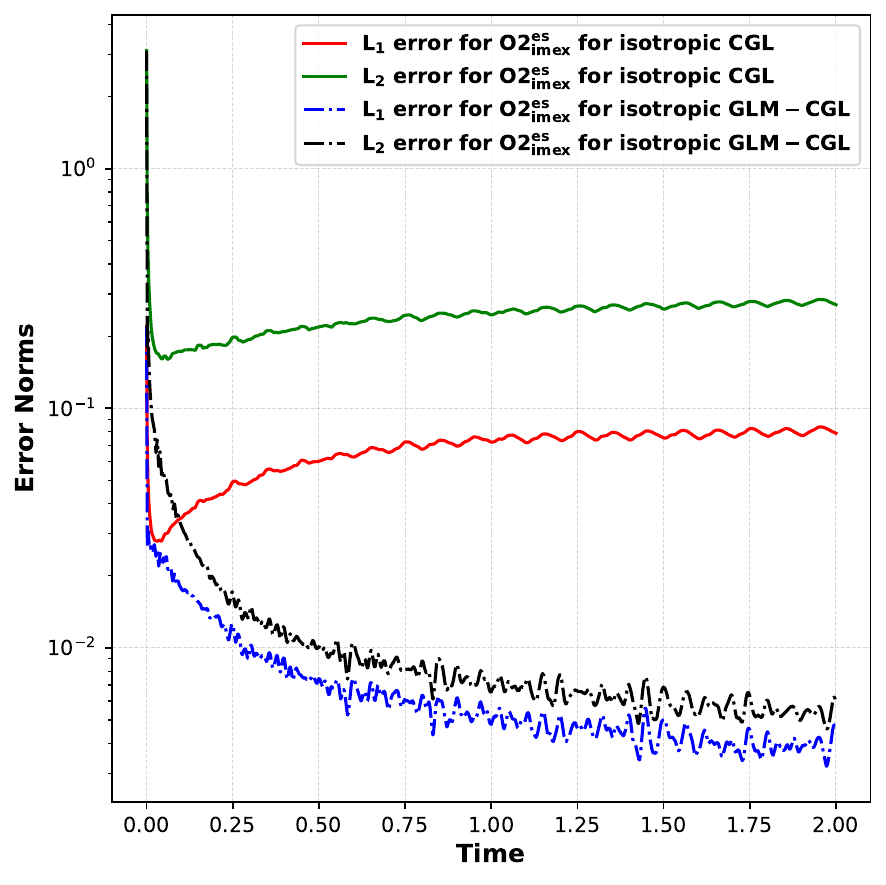}\label{fig:error_fl_m_O2}}~
		\subfigure[$\|\nabla\cdot\B\|_{1}$ and $\|\nabla\cdot\B\|_{2}$ for $\othi$ scheme for isotropic CGL and isotropic GLM-CGL ]{\includegraphics[width=0.28\textwidth]{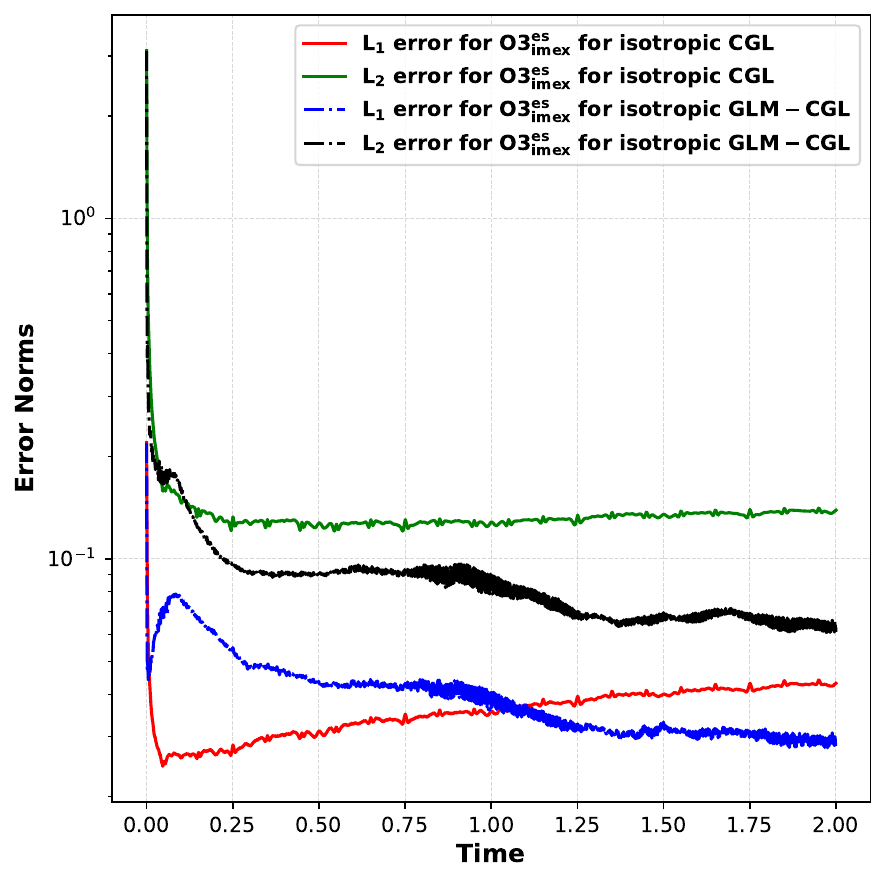}\label{fig:error_fl_m_O3}}~
		\subfigure[$\|\nabla\cdot\B\|_{1}$ and $\|\nabla\cdot\B\|_{2}$ for $\ofi$ scheme for isotropic CGL and isotropic GLM-CGL ]{\includegraphics[width=0.28\textwidth]{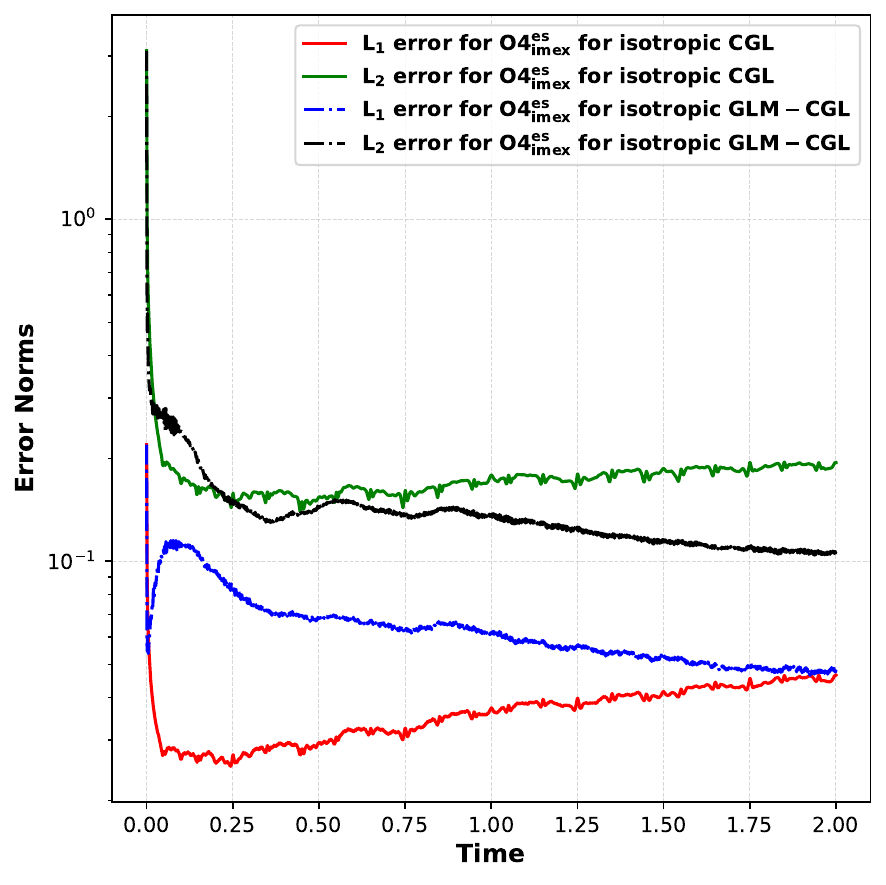}\label{fig:error_fl_m_O4}}\\
		\caption{\textbf{\nameref{test:fl}}: Evolution of the magnetic field divergence constraint errors till time $t=2.0$.}
		\label{fig:fl_error}
	\end{center}
\end{figure}

\begin{table}[ht]
\centering
\renewcommand{\arraystretch}{1.4}
\begin{tabular}{|c|c|c|c|c|c|c|}
\hline
\multirow{2}{*}{Scheme} &
\multirow{2}{*}{System} &
$\rho_{\min}$ &
$(p_{\parallel})_{\min}$ &
$(p_{\perp})_{\min}$ &
$\|\nabla\!\cdot\!\mathbf{B}\|_{1}$ &
$\|\nabla\!\cdot\!\mathbf{B}\|_{2}$ \\
&&&&&&\\
\hline

\multirow{2}{*}{$\ote$}
& CGL
& 2.000000E+06 & 2.000000E+06 & 2.000000E+06 &  7.870649E-02 & 2.706245E-01 \\ \cline{2-7}

& GLM--CGL
& 2.000000E+06 & 2.000000E+06 & 2.000000E+06 &  4.556005E-03 & 5.918808E-03  \\  \hline

\multirow{2}{*}{$\othe$}
& CGL
& 2.000000E+06 & 2.000000E+06 & 2.000000E+06 &  4.301938E-02 & 1.393251E-01 \\ \cline{2-7}

& GLM--CGL
& 2.000000E+06 & 2.000000E+06 & 2.000000E+06 &  2.982112E-02 & 6.506547E-02  \\  \hline

\multirow{2}{*}{$\ofe$}
& CGL
& 1.999999E+06 & 1.999999E+06 & 1.999999E+06 &  4.975039E-02 & 2.085295E-01 \\ \cline{2-7}

& GLM--CGL
& 2.000000E+06 & 2.000000E+06 & 2.000000E+06 & 4.800345E-02 & 1.075252E-01  \\  \hline
\multirow{2}{*}{$\oti$}
& Isotropic CGL
& 2.000000E+06 & 2.000000E+06 & 2.000000E+06 &  7.870441E-02 & 2.706190E-01 \\ \cline{2-7}

& Isotropic GLM--CGL
& 2.000000E+06 & 2.000000E+06 & 2.000000E+06 &  4.557633E-03 & 5.920931E-03  \\  \hline

\multirow{2}{*}{$\othi$}
& Isotropic CGL
& 2.000000E+06 & 2.000000E+06 & 2.000000E+06 &  4.306339E-02 & 1.388938E-01  \\ \cline{2-7}

& Isotropic GLM--CGL
& 2.000000E+06 & 2.000000E+06 & 2.000000E+06 &  2.866595E-02 & 6.276774E-02  \\  \hline

\multirow{2}{*}{$\ofi$}
& Isotropic CGL
& 1.999999E+06 & 1.999999E+06 & 1.999999E+06 &  4.641830E-02 & 1.943417E-01  \\ \cline{2-7}

& Isotropic GLM--CGL
& 2.000000E+06 & 2.000000E+06 & 2.000000E+06 &  4.768871E-02 & 1.063972E-01  \\ \hline

\end{tabular}
\caption{\textbf{\nameref{test:fl}}: Minimum values of density $\rho$, parallel pressure $p_{\parallel}$, and perpendicular pressure $p_{\perp}$, together with the value of $L_1$ and $L_2$ norm of $\nabla\cdot\B$ at final time.}
\label{tab:div_fl}
\end{table}
\subsection{CGL Riemann problem}\label{test:CRP}
Following the MHD Riemann problem in ~\cite{torrilhon2005locally,artebrant2008increasing}, we present the corresponding CGL test case. The computational domain is taken to be $[-0.4,0.4]\times[-0.4,0.4]$ with Neumann boundary conditions. The initial conditions are given below: 
\[\left(\rho, \pll, \per\right)  = \begin{cases}
	(10, 15, 15)), & \textrm{if } x<0,~y<0,\\
	(1,0.5,0.5), & \textrm{otherwise.}
\end{cases}\]
\[\left(\bu, B_{x}, B_{y}, B_{z},\Psi\right) = 
\left(0,0,0,\frac{1}{\sqrt{2}},\frac{1}{\sqrt{2}},0,0\right).\]

The numerical solutions for $B_y$ are plotted in Figures~\eqref{fig:crp_cgl_by_divb}, \eqref{fig:crp_mhd_by_divb}, and  \eqref{fig:crp_error}. We observe that the higher-order schemes are much more accurate than the second-order schemes. We also note that the results of the isotropic cases are similar to the MHD case in~\cite{torrilhon2005locally}. Furthermore, CGL and GLM-CGL results are different from the isotropic case. 

From the plots of $|(\nabla\cdot\B)_{i,j}|$, in Figures~\eqref{fig:crp_cgl_by_divb} and \eqref{fig:crp_mhd_by_divb},  we observe that in each case, GLM-CGL produces much lower values than the CGL case. \revo{The colorbars in all divergence plots are dynamically scaled using the absolute local minimum and maximum values of $\nabla\cdot\B$ in each plot.} This is also evident from the time evolution of the $L_1$ and $L_2$ norms of the divergence of the magnetic field in Figure~\eqref{fig:crp_error}.

\revg{Table~\eqref{tab:div_CRP} lists the minimum values of density, $\pll$, and $\per$ as well as the value of $L_1$ and $L_2$ norm of $\nabla\cdot\B$ at the final time. All reported values of density and pressure remain positive for all schemes and formulations, indicating that physically admissible solutions are maintained throughout the computation. The table also demonstrates the significant reduction in divergence errors achieved by the GLM formulation.}
\begin{figure}[!htbp]
	\begin{center}
		\subfigure[$B_y$ for $\ote$ scheme for CGL]{\includegraphics[width=0.26\textwidth]{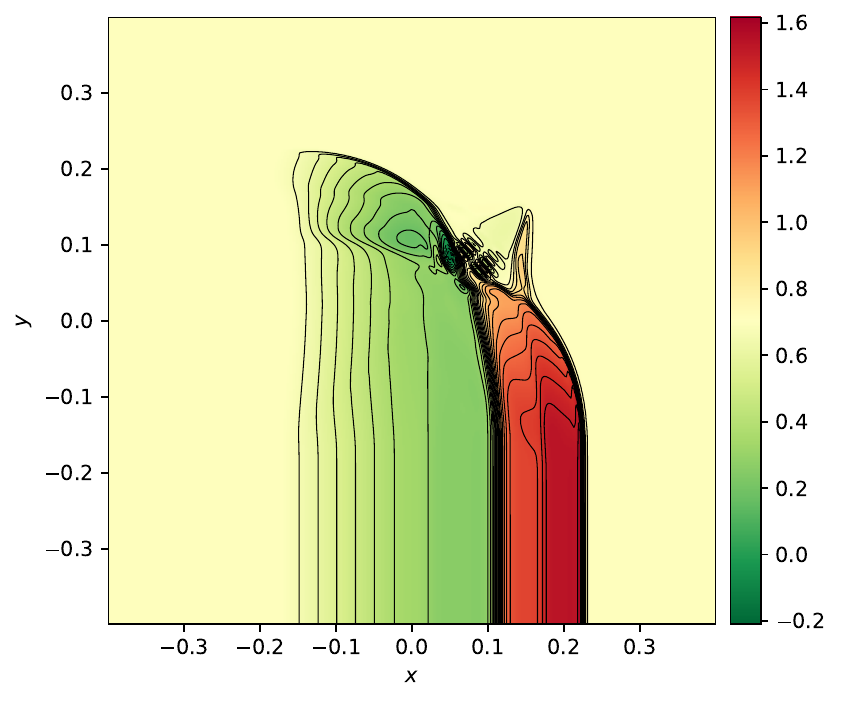}\label{fig:crp_by_wo_o2}}~
		\subfigure[$B_y$ for $\othe$ scheme for CGL]{\includegraphics[width=0.26\textwidth]{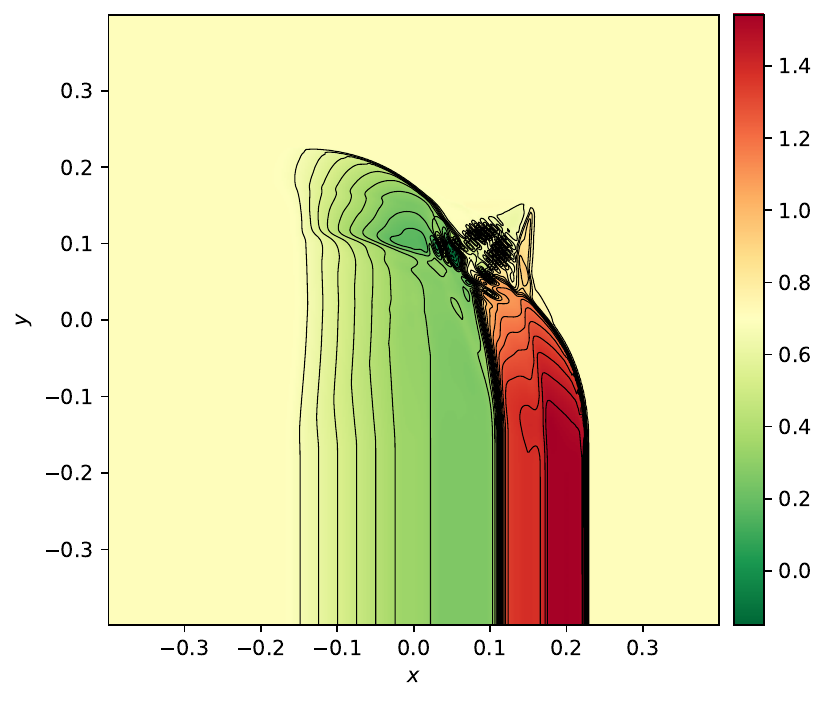}\label{fig:crp_by_wo_o3}}~
		\subfigure[$B_y$ for $\ofe$ scheme for CGL]{\includegraphics[width=0.26\textwidth]{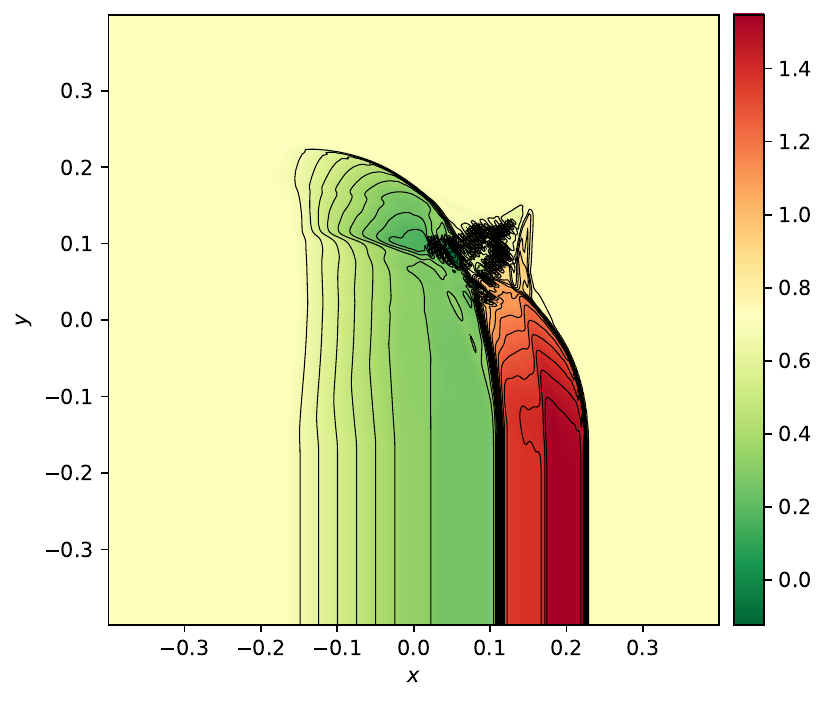}\label{fig:crp_by_wo_o4}}\\
		\subfigure[$B_y$ for $\ote$ scheme for GLM-CGL]{\includegraphics[width=0.26\textwidth]{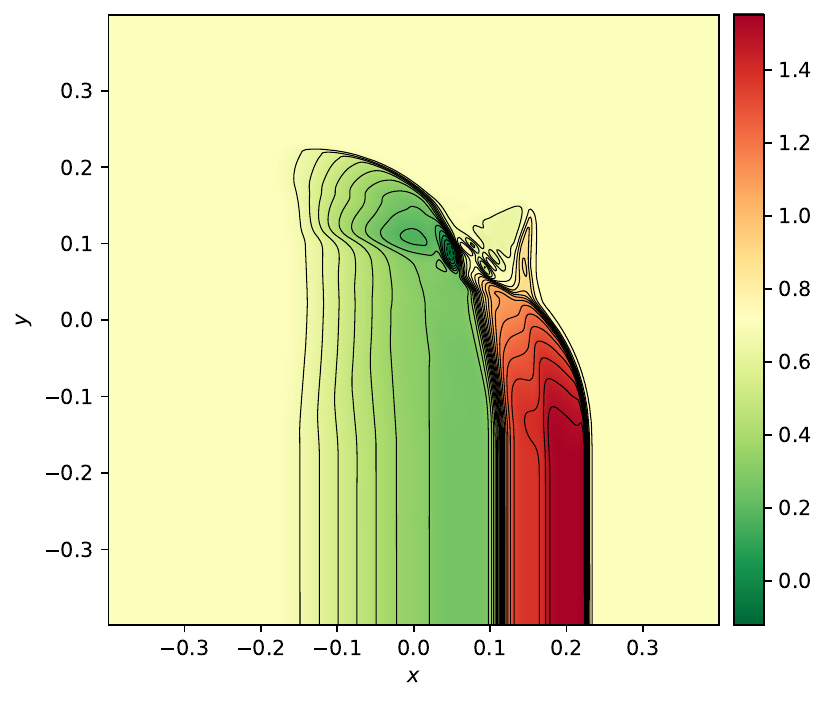}\label{fig:crp_by_w_o2}}~
		\subfigure[$B_y$ for $\othe$ scheme for GLM-CGL]{\includegraphics[width=0.26\textwidth]{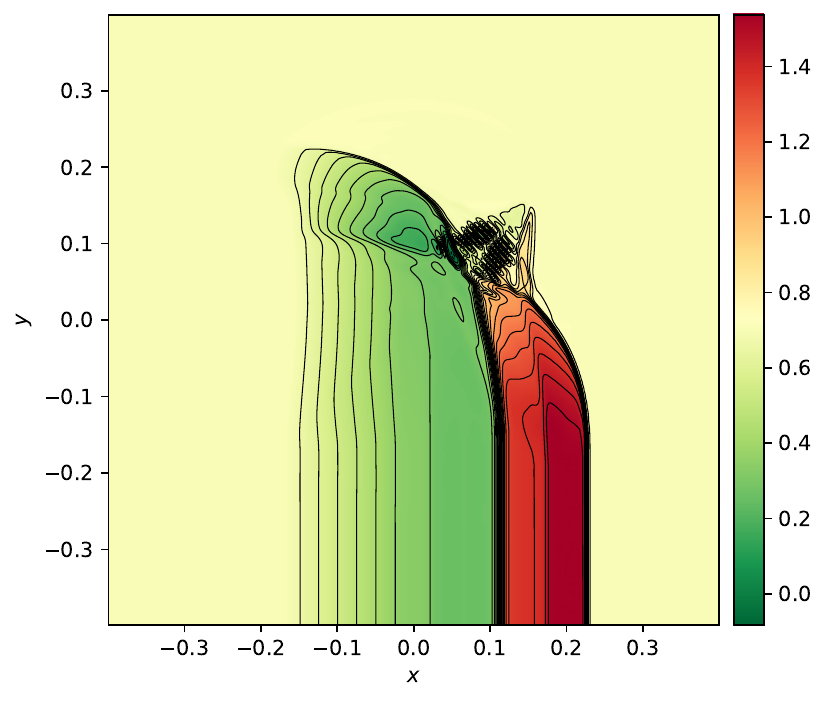}\label{fig:crp_by_w_o3}}~
		\subfigure[$B_y$ for $\ofe$ scheme for GLM-CGL]{\includegraphics[width=0.26\textwidth]{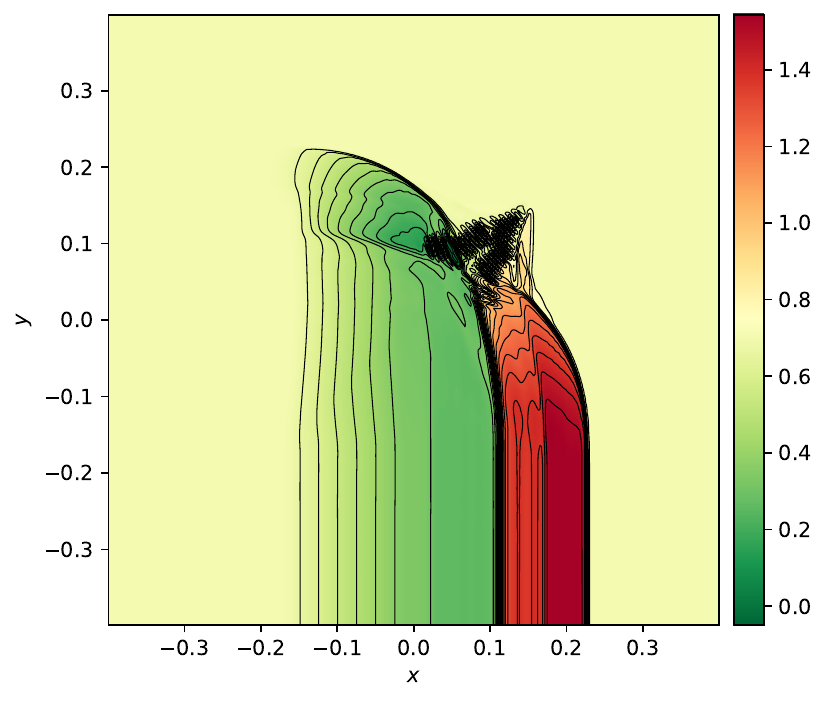}\label{fig:crp_by_w_o4}}\\
		\subfigure[$|(\nabla\cdot\B)_{i,j}|$ for $\ote$ scheme for CGL]{\includegraphics[width=0.26\textwidth]{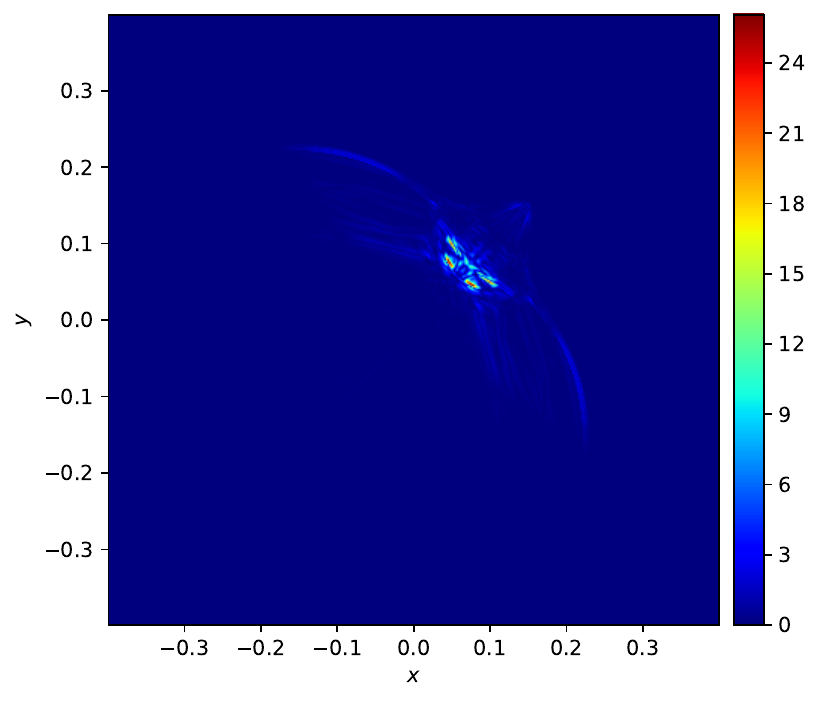}\label{fig:crp_db_wo_o2}}~
		\subfigure[$|(\nabla\cdot\B)_{i,j}|$ for $\othe$ scheme for CGL]{\includegraphics[width=0.26\textwidth]{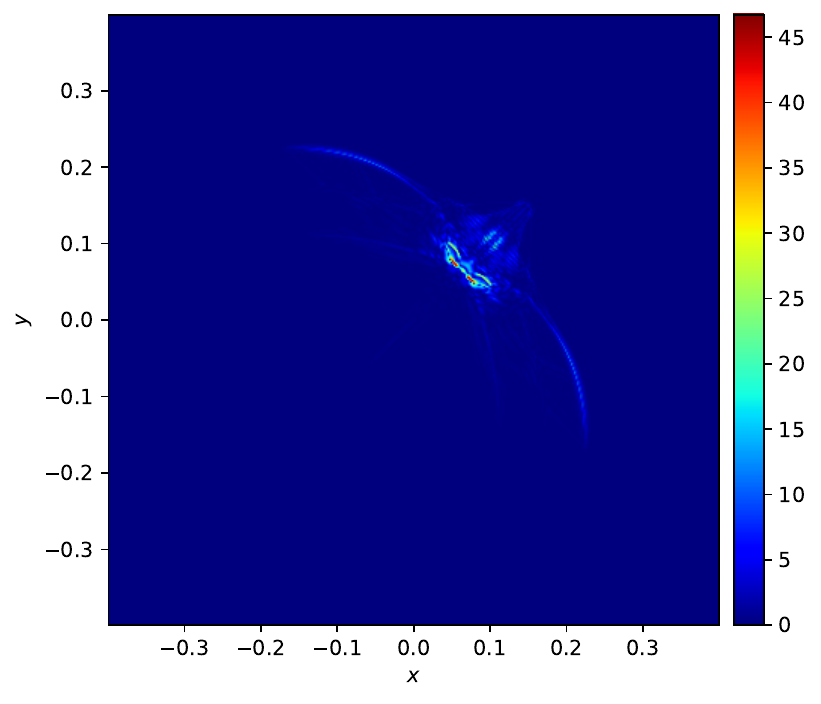}\label{fig:crp_db_wo_o3}}~
		\subfigure[$|(\nabla\cdot\B)_{i,j}|$ for $\ofe$ scheme for CGL]{\includegraphics[width=0.26\textwidth]{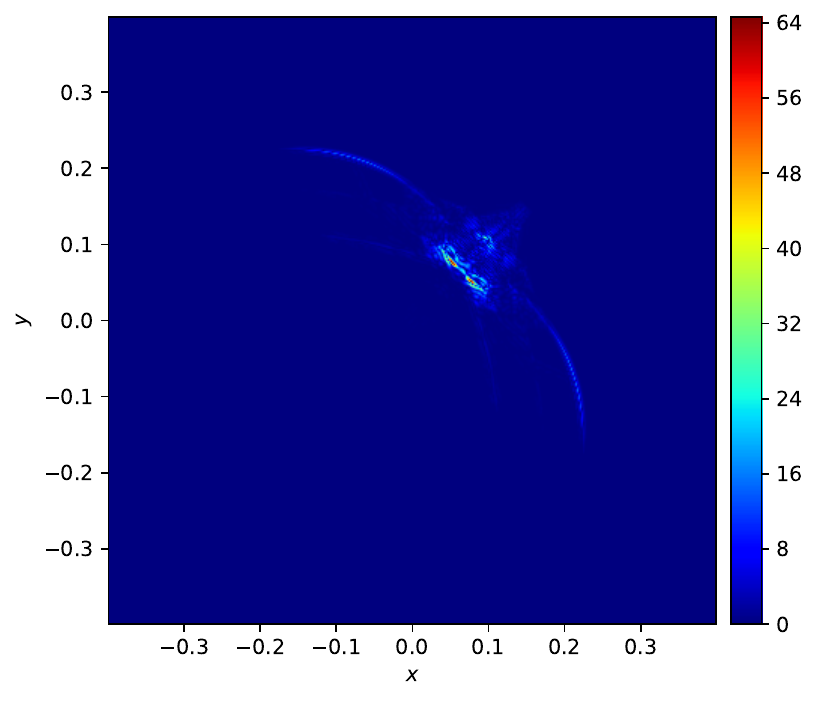}\label{fig:crp_db_wo_o4}}\\
		\subfigure[$|(\nabla\cdot\B)_{i,j}|$ for $\ote$ scheme for GLM-CGL]{\includegraphics[width=0.26\textwidth]{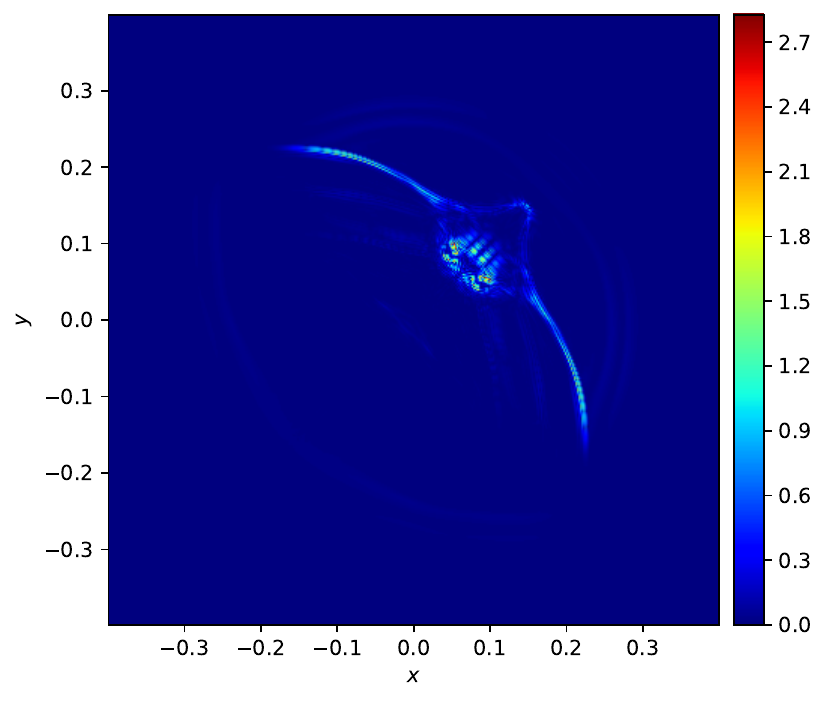}\label{fig:crp_db_w_o2}}~
		\subfigure[$|(\nabla\cdot\B)_{i,j}|$ for $\othe$ scheme for GLM-CGL]{\includegraphics[width=0.26\textwidth]{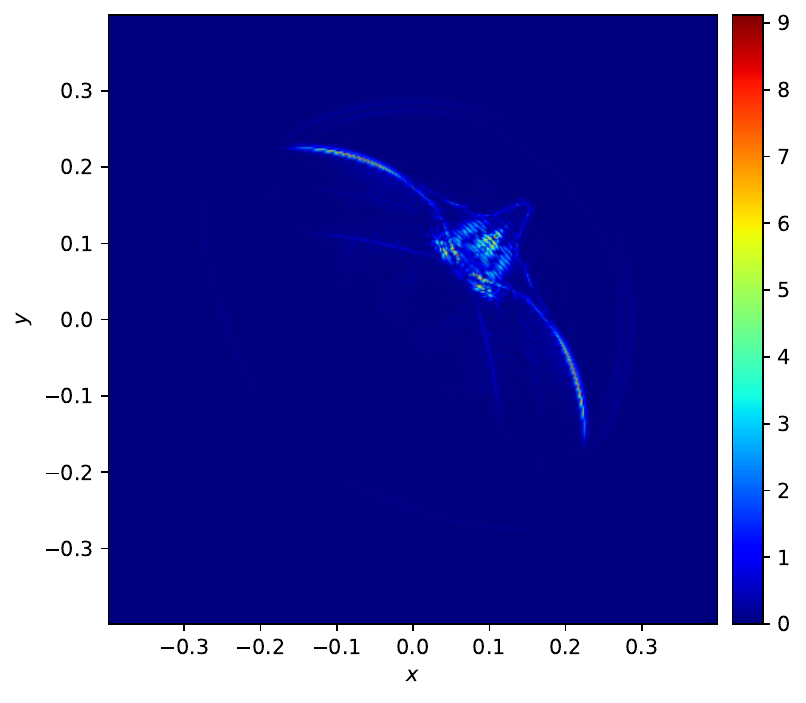}\label{fig:crp_db_w_o3}}~
		\subfigure[$|(\nabla\cdot\B)_{i,j}|$ for $\ofe$ scheme for GLM-CGL]{\includegraphics[width=0.26\textwidth]{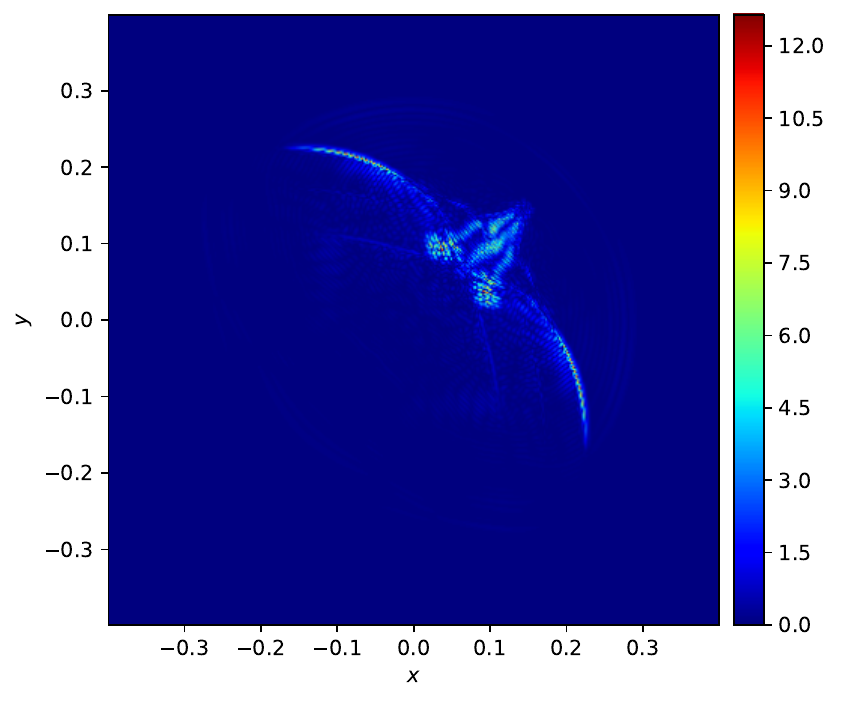}\label{fig:crp_db_w_o4}}\\
		\caption{\textbf{\nameref{test:CRP}}: Plots of $B_y$ and $|(\nabla\cdot\B)_{i,j}|$ for $\ote$, $\othe$ and $\ofe$ schemes for CGL and GLM-CGL at time $t=0.1$.}
		\label{fig:crp_cgl_by_divb}
	\end{center}
\end{figure}
\begin{figure}[!htbp]
	\begin{center}
		\subfigure[$B_y$ for $\oti$ scheme for isotropic CGL]{\includegraphics[width=0.26\textwidth]{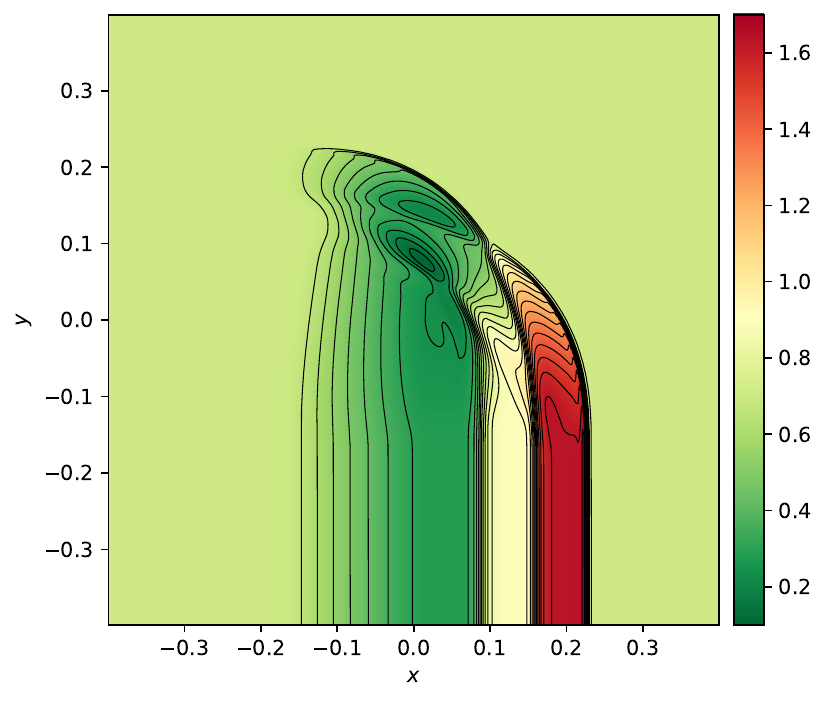}\label{fig:crp_by_wo_m_o2}}~
		\subfigure[$B_y$ for $\othi$ scheme for isotropic CGL]{\includegraphics[width=0.26\textwidth]{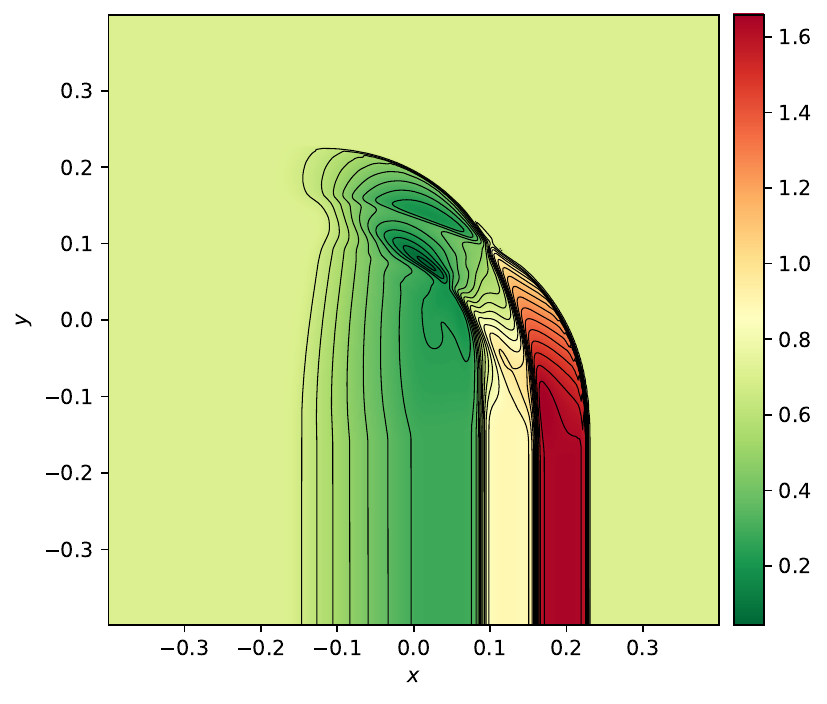}\label{fig:crp_by_wo_m_o3}}~
		\subfigure[$B_y$ for $\ofi$ scheme for isotropic CGL]{\includegraphics[width=0.26\textwidth]{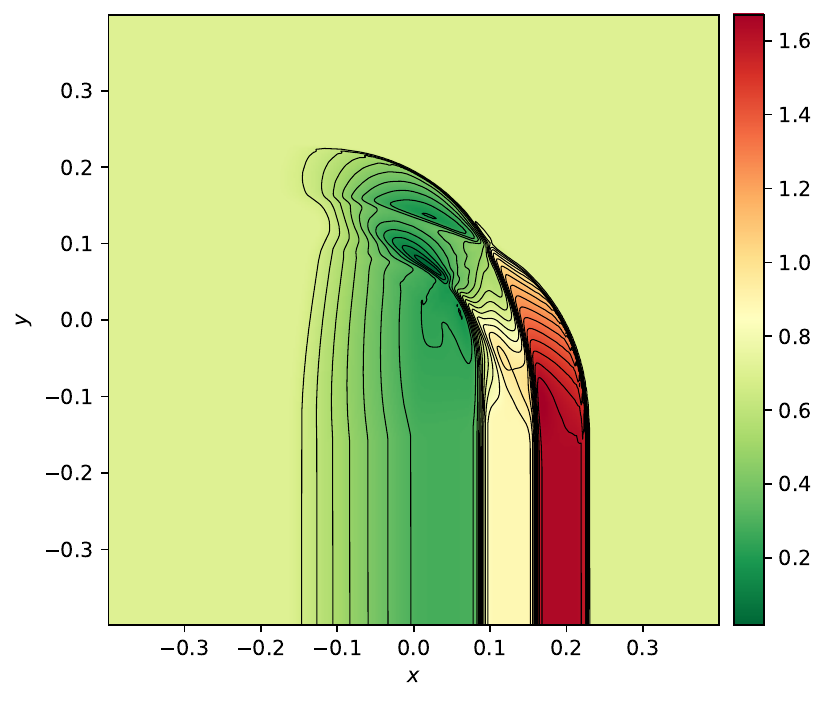}\label{fig:crp_by_wo_m_o4}}\\
		\subfigure[$B_y$ for $\oti$ scheme for isotropic GLM-CGL]{\includegraphics[width=0.26\textwidth]{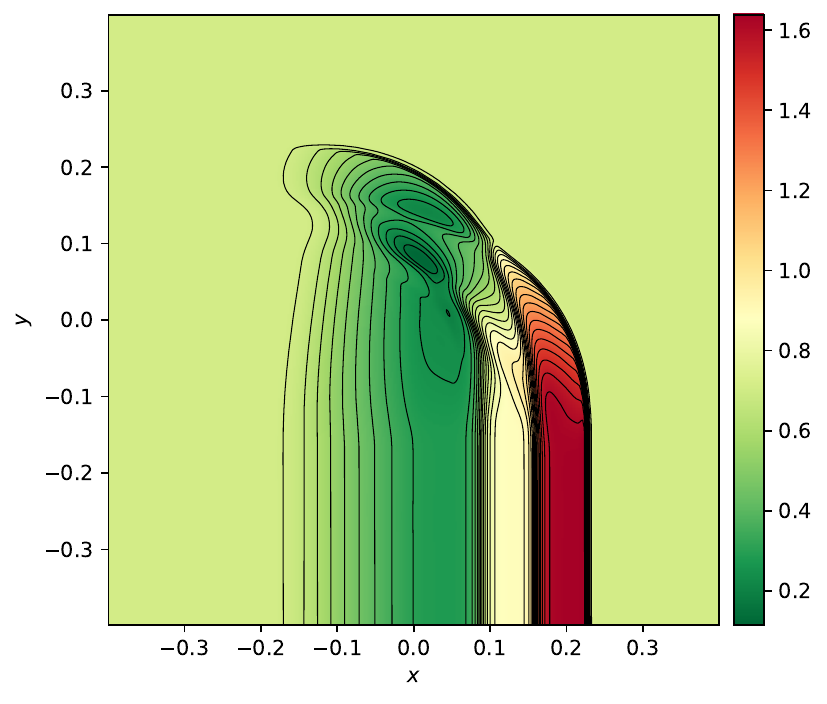}\label{fig:crp_by_w_m_o2}}~
		\subfigure[$B_y$ for $\othi$ scheme for isotropic GLM-CGL]{\includegraphics[width=0.26\textwidth]{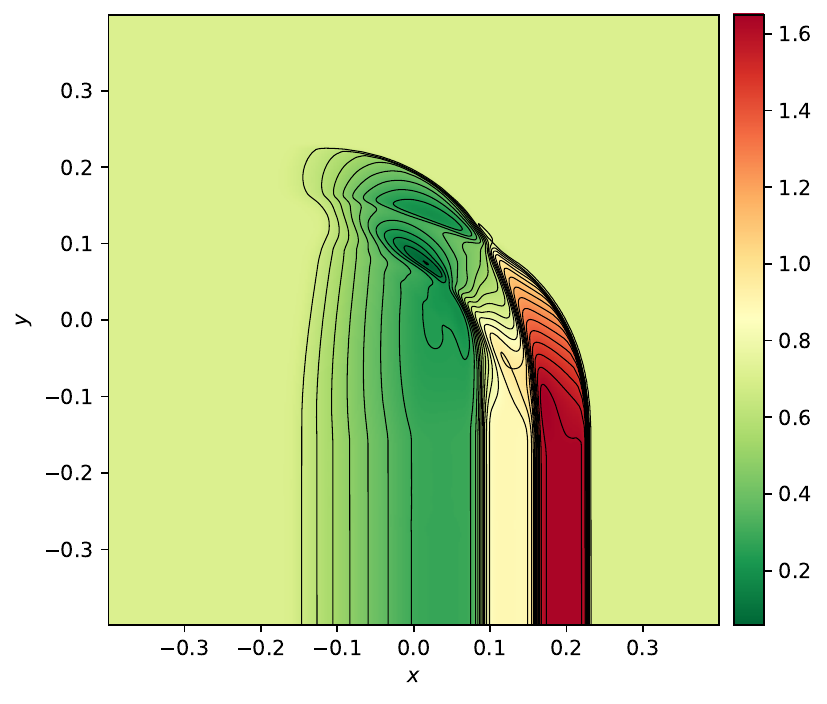}\label{fig:crp_by_w_m_o3}}~
		\subfigure[$B_y$ for $\ofi$ scheme for isotropic GLM-CGL]{\includegraphics[width=0.26\textwidth]{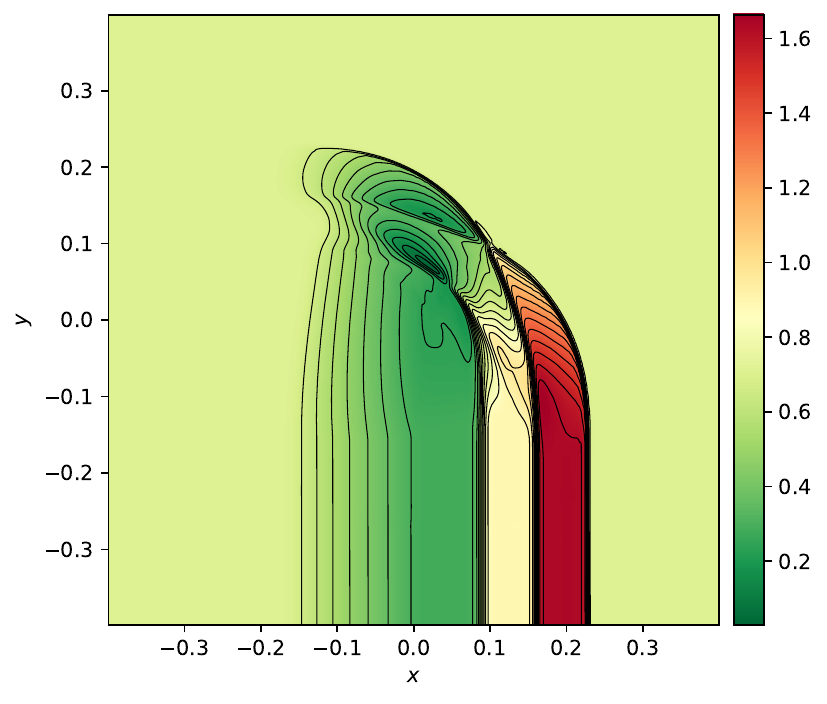}\label{fig:crp_by_w_m_o4}}\\
		\subfigure[$|(\nabla\cdot\B)_{i,j}|$ for $\oti$ scheme for isotropic CGL]{\includegraphics[width=0.26\textwidth]{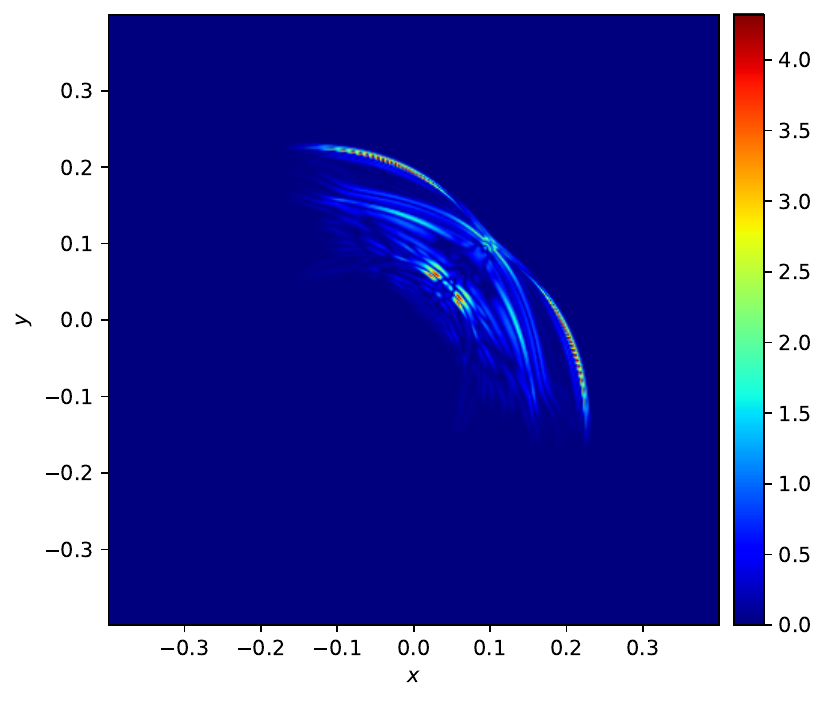}\label{fig:crp_db_wo_m_o2}}~
		\subfigure[$|(\nabla\cdot\B)_{i,j}|$ for $\othi$ scheme for isotropic CGL]{\includegraphics[width=0.26\textwidth]{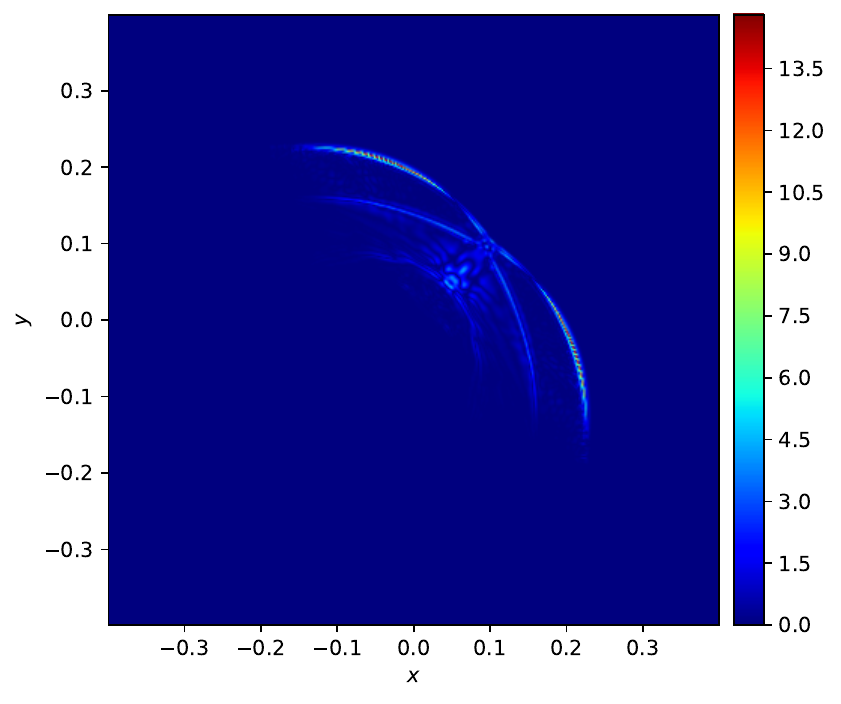}\label{fig:crp_db_wo_m_o3}}~
		\subfigure[$|(\nabla\cdot\B)_{i,j}|$ for $\ofi$ scheme for isotropic CGL]{\includegraphics[width=0.26\textwidth]{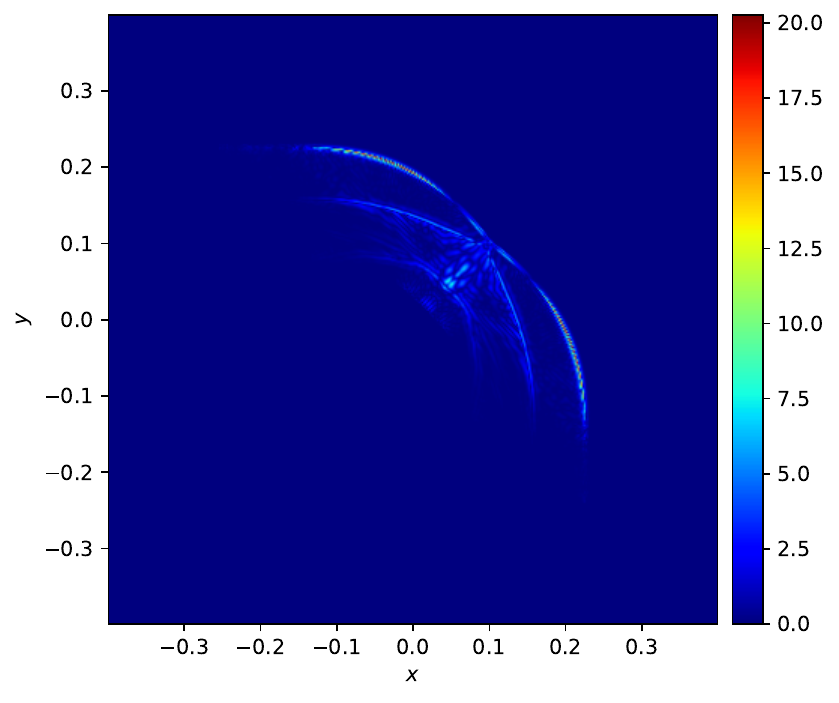}\label{fig:crp_db_wo_m_o4}}\\
		\subfigure[$|(\nabla\cdot\B)_{i,j}|$ for $\oti$ scheme for isotropic GLM-CGL]{\includegraphics[width=0.26\textwidth]{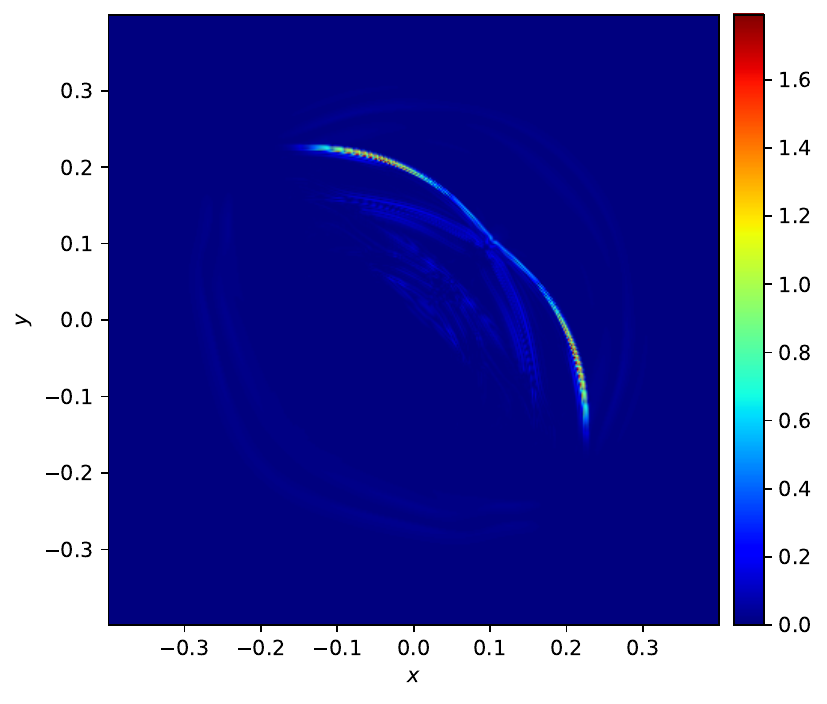}\label{fig:crp_db_w_m_o2}}~
		\subfigure[$|(\nabla\cdot\B)_{i,j}|$ for $\othi$ scheme for isotropic GLM-CGL]{\includegraphics[width=0.26\textwidth]{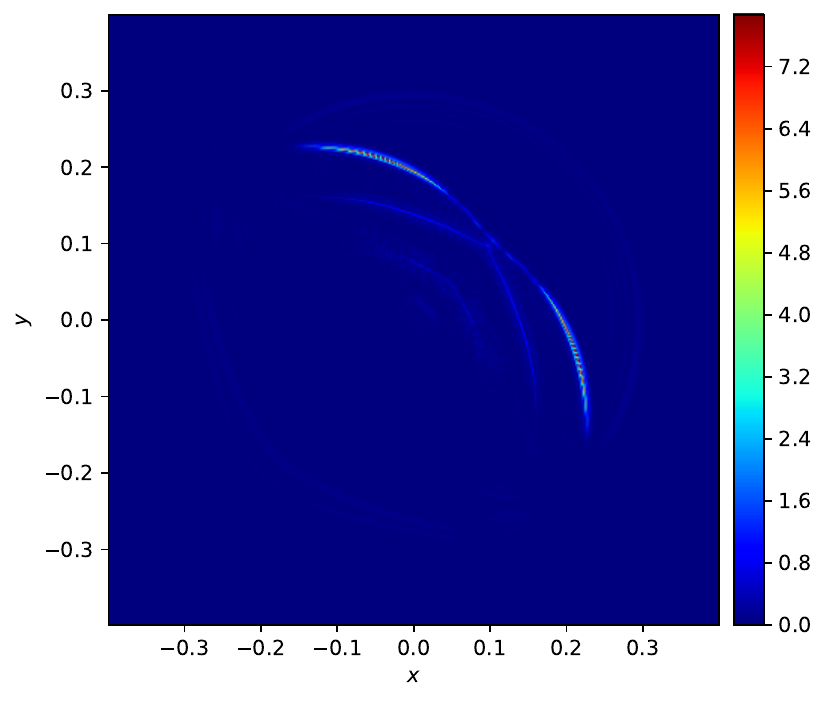}\label{fig:crp_db_w_m_o3}}~
		\subfigure[$|(\nabla\cdot\B)_{i,j}|$ for $\ofi$ scheme for isotropic GLM-CGL]{\includegraphics[width=0.26\textwidth]{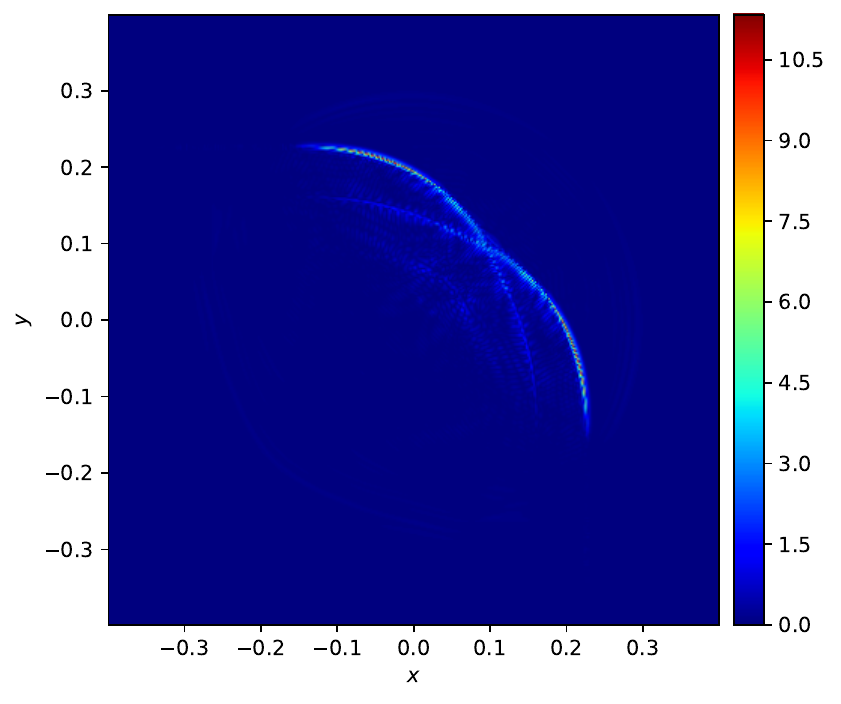}\label{fig:crp_db_w_m_o4}}\\
		\caption{\textbf{\nameref{test:CRP}}: Plots of $B_y$ and $|(\nabla\cdot\B)_{i,j}|$ for $\oti$, $\othi$ and $\ofi$ schemes for isotropic CGL and isotropic GLM-CGL at time $t=0.1$.}
		\label{fig:crp_mhd_by_divb}
	\end{center}
\end{figure}
\begin{figure}[!htbp]
	\begin{center}	
		\subfigure[$\|\nabla\cdot\B\|_{1}$ and $\|\nabla\cdot\B\|_{2}$ for $\ote$ scheme for CGL and GLM-CGL ]{\includegraphics[width=0.28\textwidth]{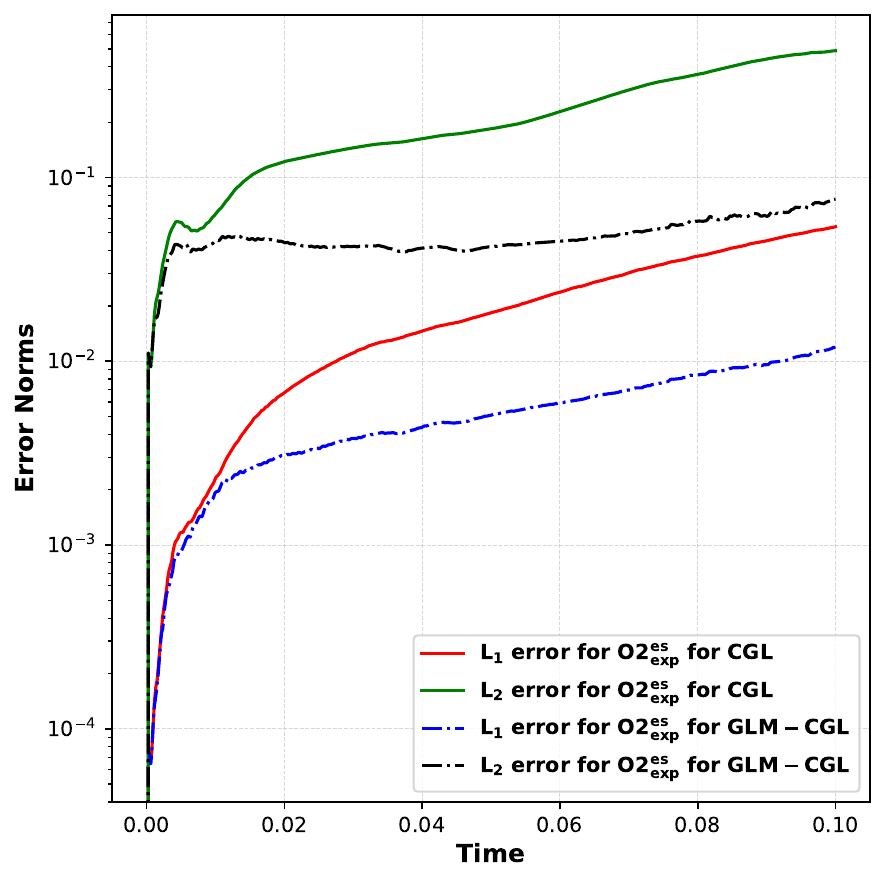}\label{fig:crp_error_cgl_o2}}~
		\subfigure[$\|\nabla\cdot\B\|_{1}$ and $\|\nabla\cdot\B\|_{2}$ for $\othe$ scheme for CGL and GLM-CGL ]{\includegraphics[width=0.28\textwidth]{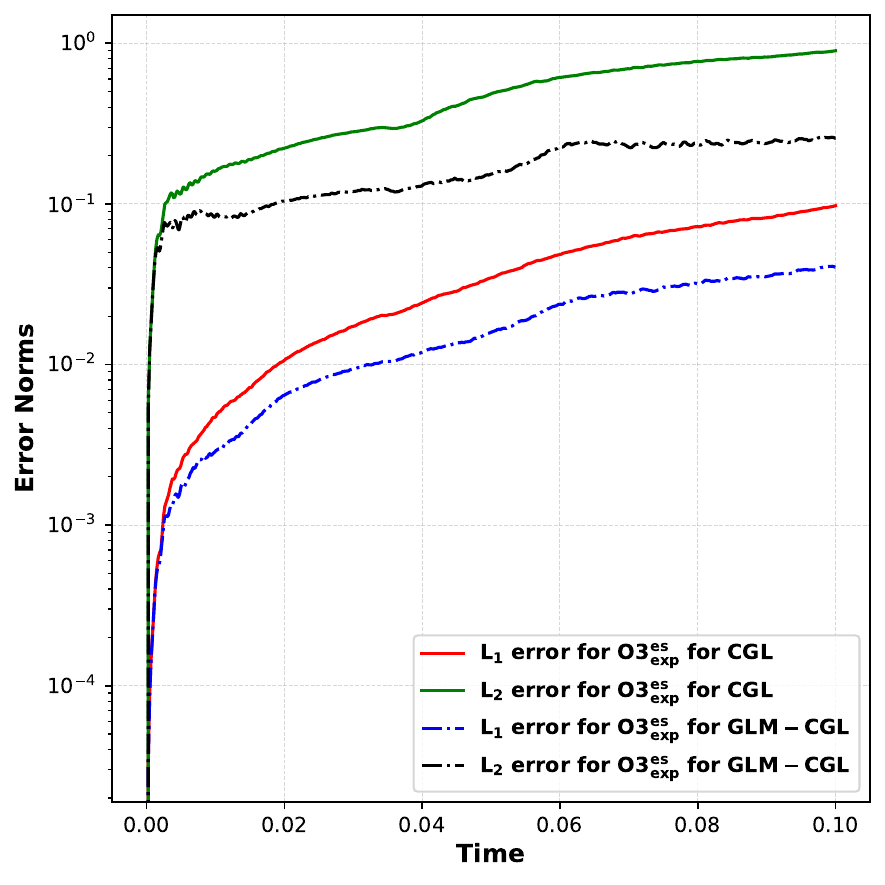}\label{fig:crp_error_cgl_o3}}~
		\subfigure[$\|\nabla\cdot\B\|_{1}$ and $\|\nabla\cdot\B\|_{2}$ for $\ofe$ scheme for CGL and GLM-CGL ]{\includegraphics[width=0.28\textwidth]{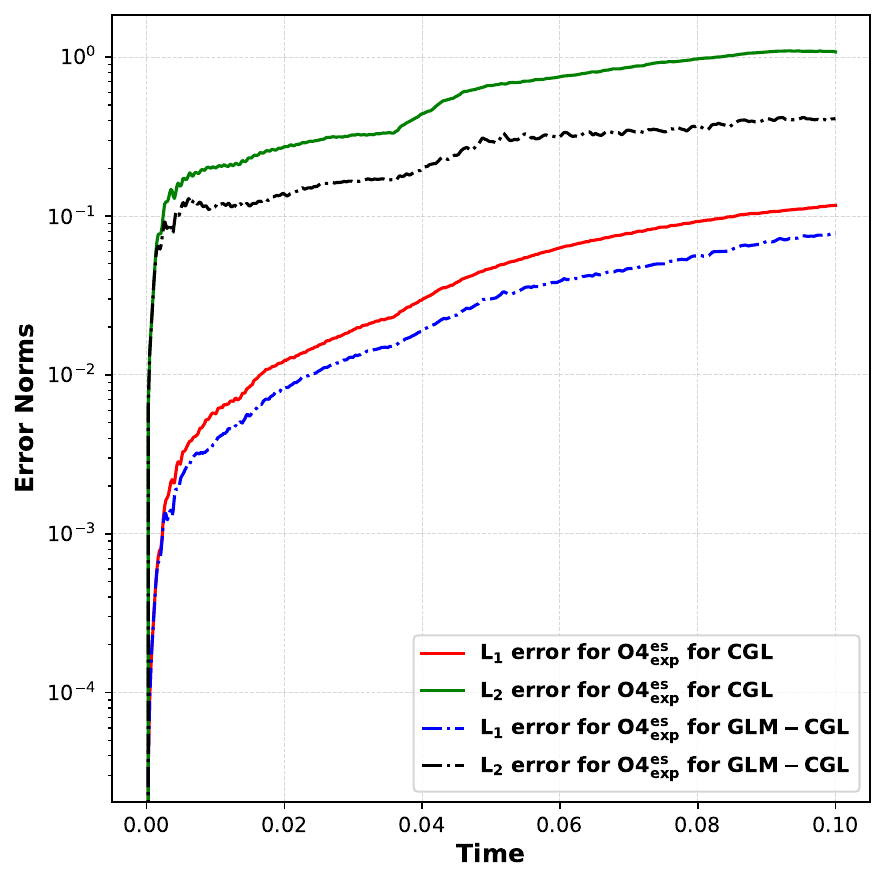}\label{fig:crp_error_cgl_o4}}\\
		\subfigure[$\|\nabla\cdot\B\|_{1}$ and $\|\nabla\cdot\B\|_{2}$ for $\oti$ scheme for isotropic CGL and isotropic GLM-CGL ]{\includegraphics[width=0.28\textwidth]{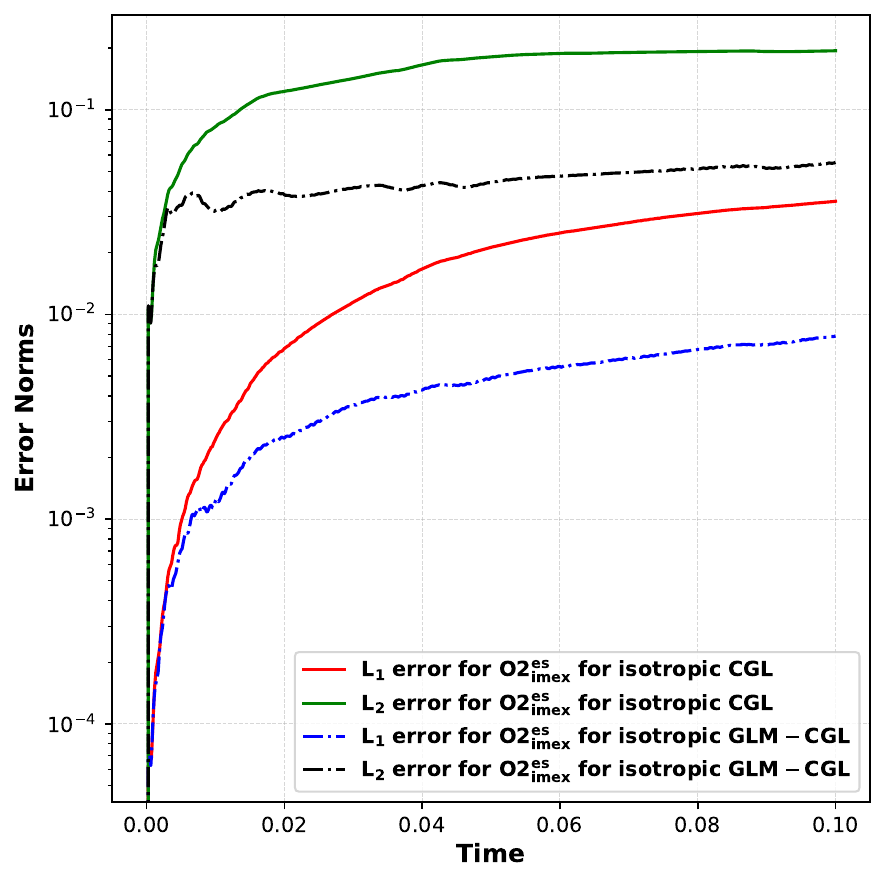}\label{fig:crp_error_mhd_o2}}~
		\subfigure[$\|\nabla\cdot\B\|_{1}$ and $\|\nabla\cdot\B\|_{2}$ for $\othi$ scheme for isotropic CGL and isotropic GLM-CGL ]{\includegraphics[width=0.28\textwidth]{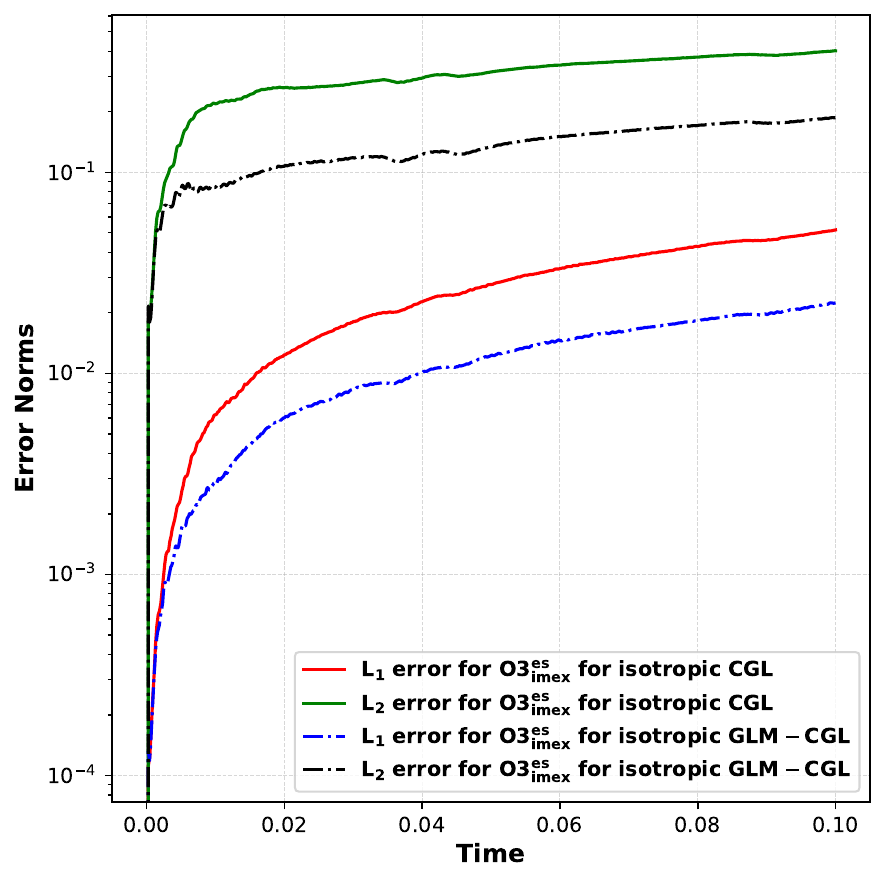}\label{fig:crp_error_mhd_o3}}~
		\subfigure[$\|\nabla\cdot\B\|_{1}$ and $\|\nabla\cdot\B\|_{2}$ for $\ofi$ scheme for isotropic CGL and isotropic GLM-CGL ]{\includegraphics[width=0.28\textwidth]{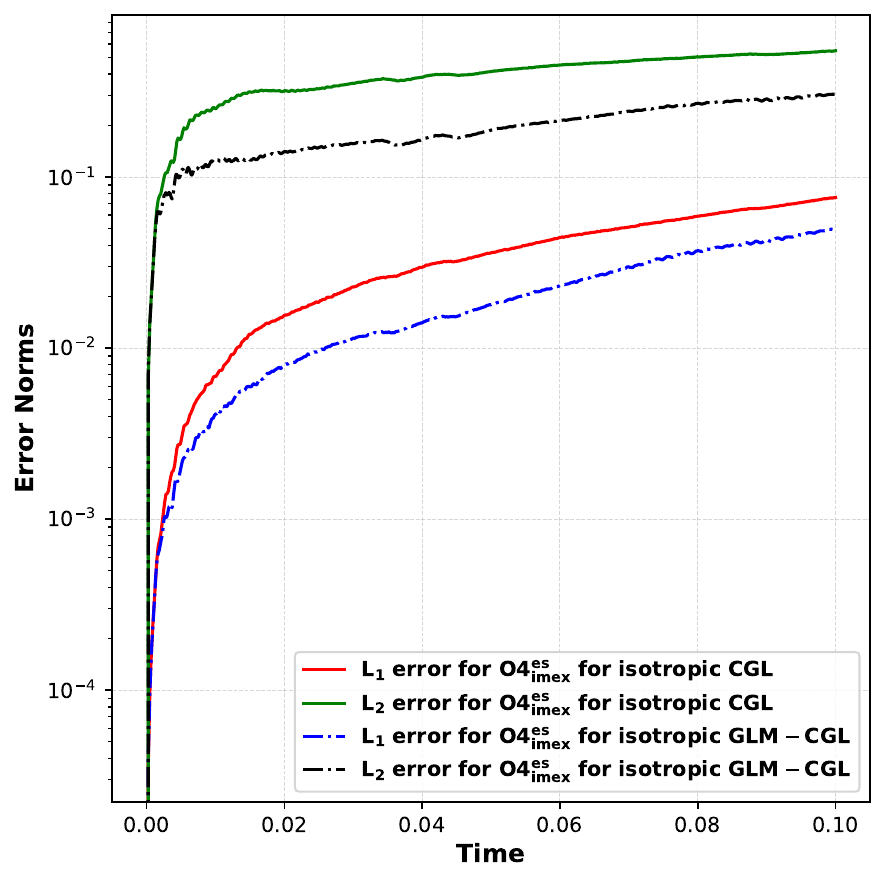}\label{fig:crp_error_mhd_o4}}\\
		\caption{\textbf{\nameref{test:CRP}}: Evolution of the magnetic field divergence constraint errors till time $t=0.1$.}
		\label{fig:crp_error}
	\end{center}
\end{figure}

\begin{table}[ht]
\centering
\renewcommand{\arraystretch}{1.4}
\begin{tabular}{|c|c|c|c|c|c|c|}
\hline
\multirow{2}{*}{Scheme} &
\multirow{2}{*}{System} &
$\rho_{\min}$ &
$(p_{\parallel})_{\min}$ &
$(p_{\perp})_{\min}$ &
$\|\nabla\!\cdot\!\mathbf{B}\|_{1}$ &
$\|\nabla\!\cdot\!\mathbf{B}\|_{2}$ \\
&&&&&&\\
\hline

\multirow{2}{*}{$\ote$}
& CGL
& 1.000000E+00 & 5.000000E-01 & 5.000000E-01 &  5.396165E-02 & 4.893650E-01 \\ \cline{2-7}

& GLM--CGL
& 9.999987E-01 & 4.999992E-01 & 4.999987E-01 &  1.190260E-02 & 7.621534E-02  \\  \hline

\multirow{2}{*}{$\othe$}
& CGL
& 9.965900E-01 & 4.965189E-01 & 4.963302E-01 &  9.750131E-02 & 8.984835E-01 \\ \cline{2-7}

& GLM--CGL
& 9.997391E-01 & 4.998337E-01 & 4.997380E-01 &  4.041956E-02 & 2.566261E-01  \\  \hline

\multirow{2}{*}{$\ofe$}
& CGL
& 9.952653E-01 & 4.953468E-01 & 4.942906E-01 &  1.167868E-01 & 1.079055E+00 \\ \cline{2-7}

& GLM--CGL
& 9.990223E-01 & 4.993633E-01 & 4.989795E-01 &  7.788214E-02 & 4.096734E-01  \\  \hline

\multirow{2}{*}{$\oti$}
& Isotropic CGL
& 1.000000E+00 & 5.000000E-01 & 5.000000E-01 &  3.570736E-02 & 1.939253E-01 \\ \cline{2-7}

& Isotropic GLM--CGL
& 9.999929E-01 & 4.999941E-01 & 4.999941E-01 &  7.848564E-03 & 5.530822E-02  \\  \hline

\multirow{2}{*}{$\othi$}
& Isotropic CGL
&  9.940200E-01& 4.941600E-01 & 4.940666E-01 &  5.170539E-02 & 4.023279E-01  \\ \cline{2-7}

& Isotropic GLM--CGL
& 9.995013E-01 & 4.995770E-01 & 4.995738E-01 &  2.241917E-02 & 1.868455E-01  \\  \hline

\multirow{2}{*}{$\ofi$}
& Isotropic CGL
& 9.902849E-01 & 4.904222E-01 & 4.903417E-01 &  7.598519E-02 & 5.475922E-01  \\ \cline{2-7}

& Isotropic GLM--CGL
& 9.981122E-01 & 4.983336E-01 & 4.983153E-01 &  4.976239E-02 & 3.056925E-01  \\ \hline

\end{tabular}
\caption{\textbf{\nameref{test:CRP}}: Minimum values of density $\rho$, parallel pressure $p_{\parallel}$, and perpendicular pressure $p_{\perp}$, together with the value of $L_1$ and $L_2$ norm of $\nabla\cdot\B$ at final time.}
\label{tab:div_CRP}
\end{table}
\subsection{Two-Dimensional Riemann problem}
\label{test:2DRP}
In the last test case, we consider a two-dimensional Riemann problem with four states, motivated by the similar MHD test case in ~\cite{dai1998simple,artebrant2008increasing}. We consider a rectangular computational domain given by $[-1.5,1.5]\times[-1.5,1.5]$. We enforce Neumann boundary conditions. The initial condition consists of four states, which are given as follows:
\[\left(\rho, \bu, \pll, \per\right),  = \begin{cases}
	(1, 0.75, -0.5, 0, 1, 1), & \textrm{if } x>0,~y>0\\
	(2, 0.75, 0.5, 0, 1, 1), & \textrm{if } x<0,~y>0\\
	(1, -0.75, 0.5, 0, 1, 1), & \textrm{if } x<0,~y<0\\
	(3, -0.75, -0.5, 0, 1, 1), & \textrm{if } x>0,~y<0\\
\end{cases}\]
\[\left(B_{x}, B_{y}, B_{z},\Psi\right) = 
\left(\frac{2}{\sqrt{4\pi}},0,\frac{1}{\sqrt{4\pi}},0\right).\]
The solutions are computed using $400\times400$ cells till the final time of $t=1$. In Fig.\eqref{fig:2DRP_cgl_by_divb}, we have plotted the schlieren images of $B_y$ and $|(\nabla\cdot\B)_{i,j}|$ for CGL and GLM-CGL using $\ote$, $\othe$ and $\ofe$ schemes, and in Fig.\eqref{fig:2DRP_mhd_by_divb}, we have plotted the schlieren image of $B_y$ and $|(\nabla\cdot\B)_{i,j}|$ for isotropic CGL and isotropic GLM-CGL using $\oti$, $\othi$ and $\ofi$ schemes. We observe that the results for isotropic CGL are similar to those presented in~\cite{artebrant2008increasing}. Furthermore, higher-order schemes produce much sharper solutions than the second-order schemes.

We also compare the $|(\nabla\cdot\B)_{i,j}|$ for CGL and GLM-CGL models. In each case, we observe that GLM-CGL produces much lower values of the divergence errors. \revo{The colorbars in all divergence plots are dynamically scaled using the absolute local minimum and maximum values of $\nabla\cdot\B$ in each plot.} This can also be observed when we compare the time evolution of the $L_1$ and $L_2$ norms divergence of the magnetic fields. We consistently observe that the GLM-CGL produces much lower values than the CGL model in all cases.

\revg{Table~\eqref{tab:div_2DRP} lists the minimum values of density, $\pll$, and $\per$ as well as the value of $L_1$ and $L_2$ norm of $\nabla\cdot\B$ at the final time. All reported values of density and pressure remain positive for all schemes and formulations, indicating that physically admissible solutions are maintained throughout the computation. The table also demonstrates the significant reduction in divergence errors achieved by the GLM formulation.}
\begin{figure}[!htbp]
	\begin{center}
		\subfigure[Schlieren image of $B_y$ for $\ote$ scheme for CGL]{\includegraphics[width=0.26\textwidth]{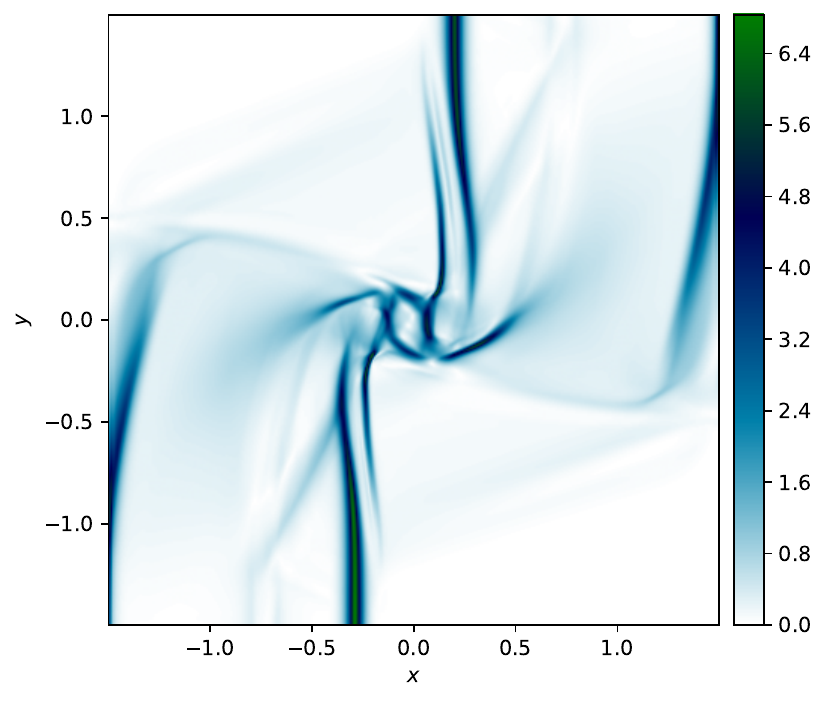}\label{fig:2DRP_by_wo_o2}}~
		\subfigure[Schlieren image of $B_y$ for $\othe$ scheme for CGL]{\includegraphics[width=0.26\textwidth]{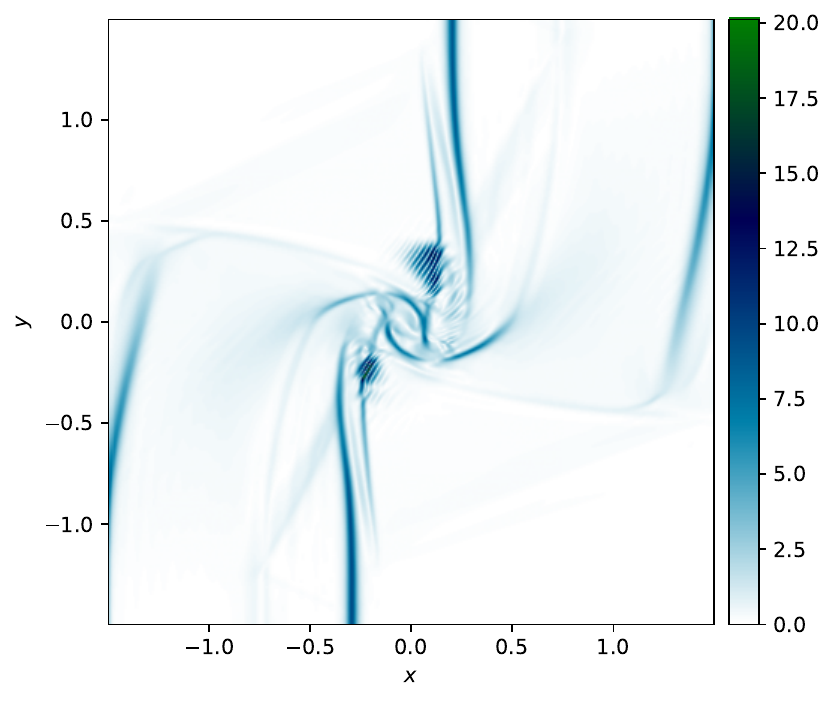}\label{fig:2DRP_by_wo_o3}}~
		\subfigure[Schlieren image of $B_y$ for $\ofe$ scheme for CGL]{\includegraphics[width=0.26\textwidth]{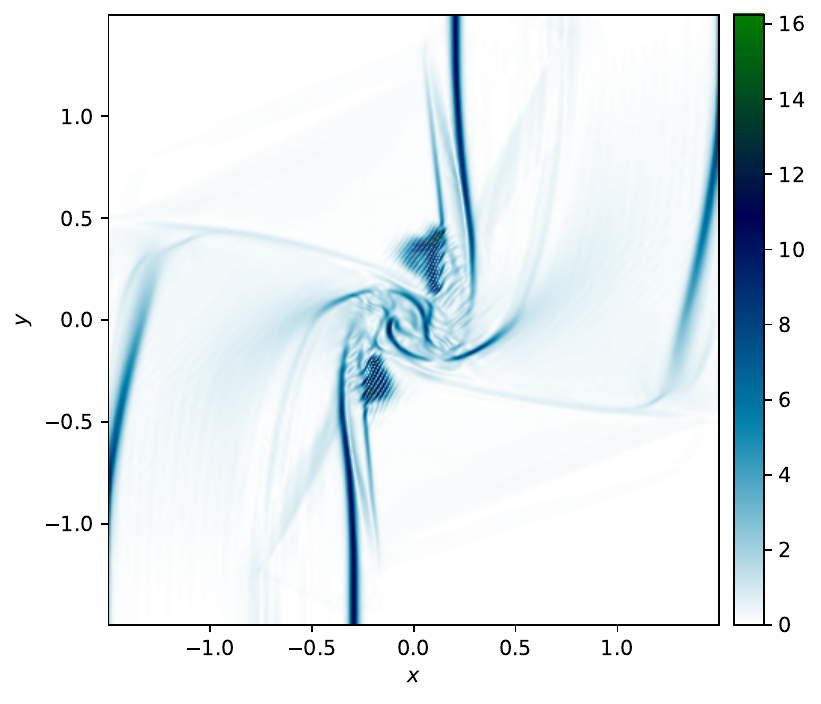}\label{fig:2DRP_by_wo_o4}}\\
		\subfigure[Schlieren image of $B_y$ for $\ote$ scheme for GLM-CGL]{\includegraphics[width=0.26\textwidth]{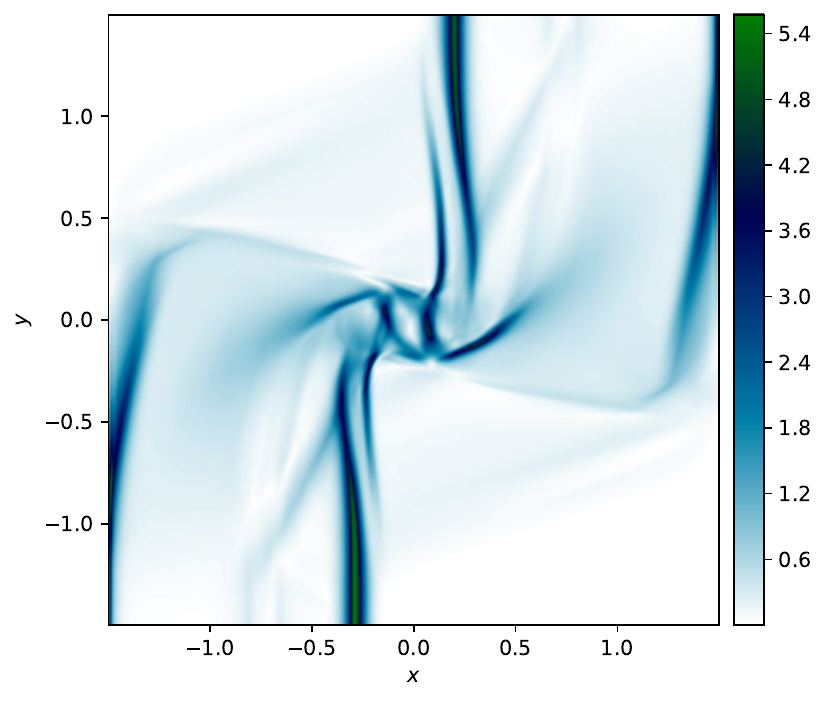}\label{fig:2DRP_by_w_o2}}~
		\subfigure[Schlieren image of $B_y$ for $\othe$ scheme for GLM-CGL]{\includegraphics[width=0.26\textwidth]{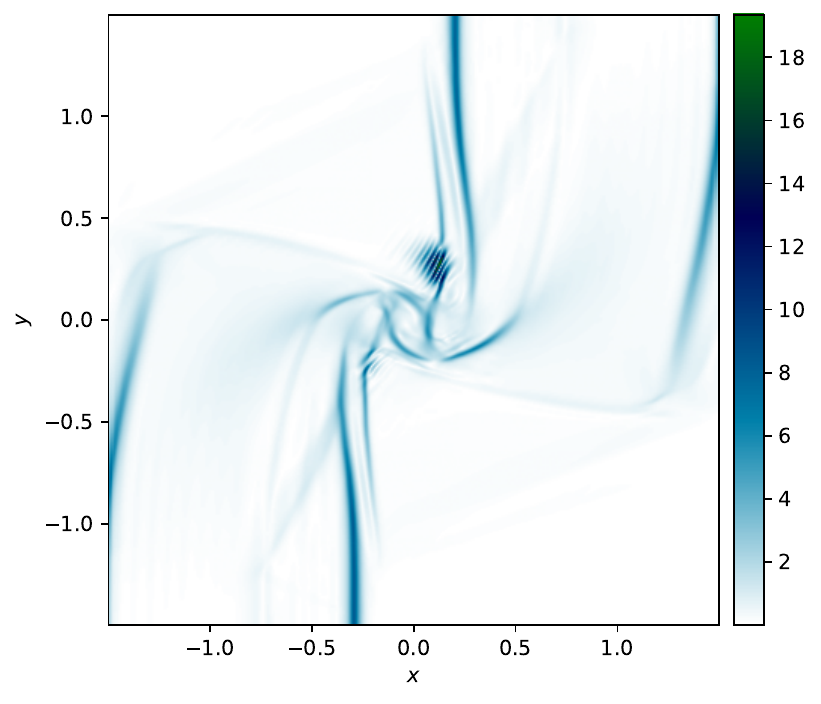}\label{fig:2DRP_by_w_o3}}~
		\subfigure[Schlieren image of $B_y$ for $\ofe$ scheme for GLM-CGL]{\includegraphics[width=0.26\textwidth]{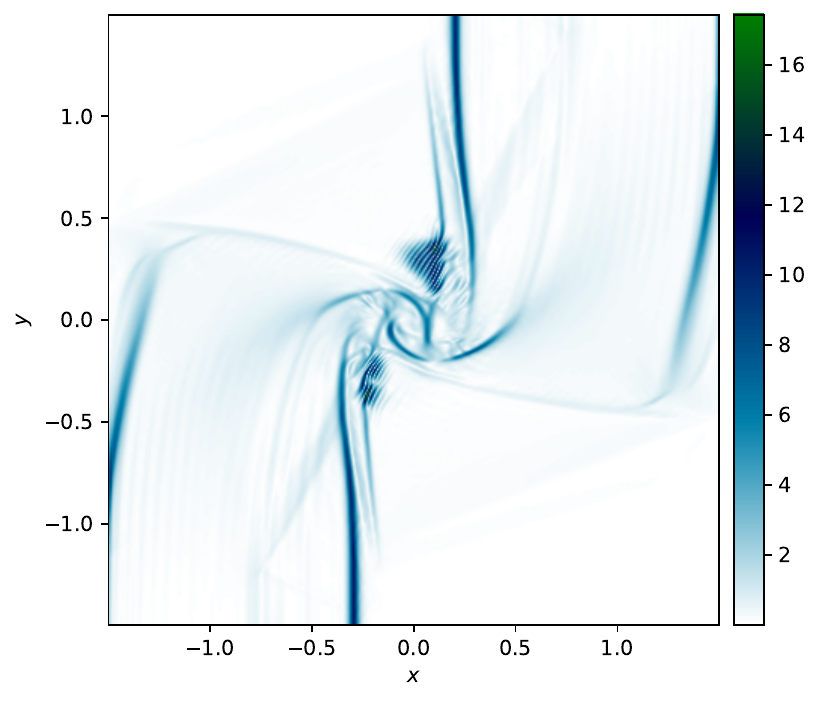}\label{fig:2DRP_by_w_o4}}\\
		\subfigure[$|(\nabla\cdot\B)_{i,j}|$ for $\ote$ scheme for CGL]{\includegraphics[width=0.26\textwidth]{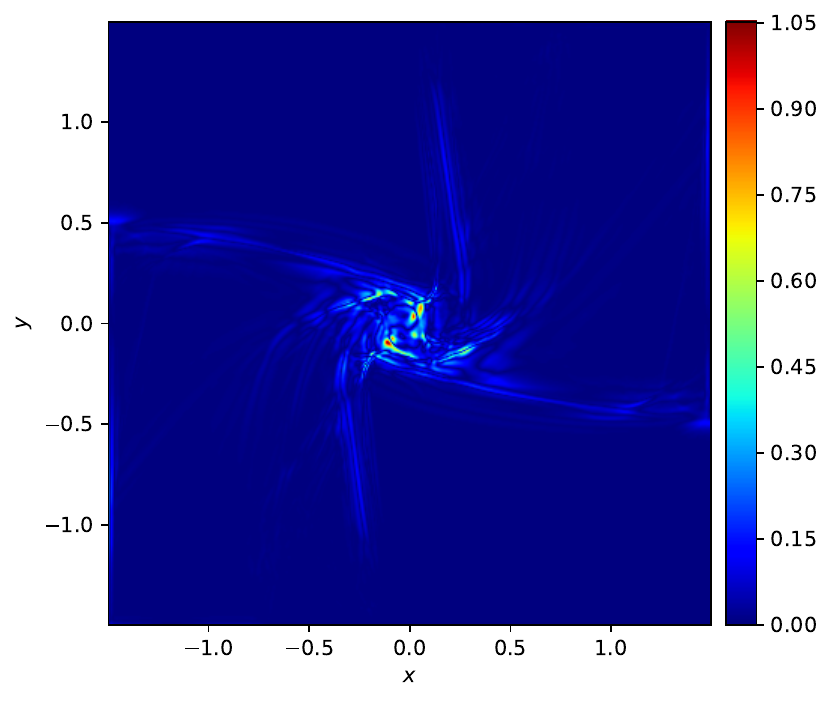}\label{fig:2DRP_db_wo_o2}}~
		\subfigure[$|(\nabla\cdot\B)_{i,j}|$ for $\othe$ scheme for CGL]{\includegraphics[width=0.26\textwidth]{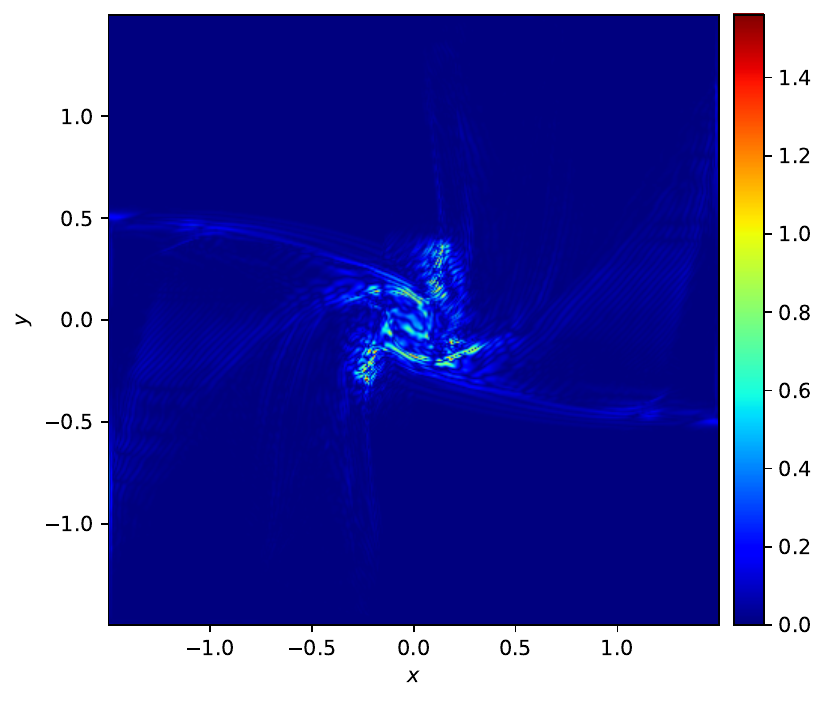}\label{fig:2DRP_db_wo_o3}}~
		\subfigure[$|(\nabla\cdot\B)_{i,j}|$ for $\ofe$ scheme for CGL]{\includegraphics[width=0.26\textwidth]{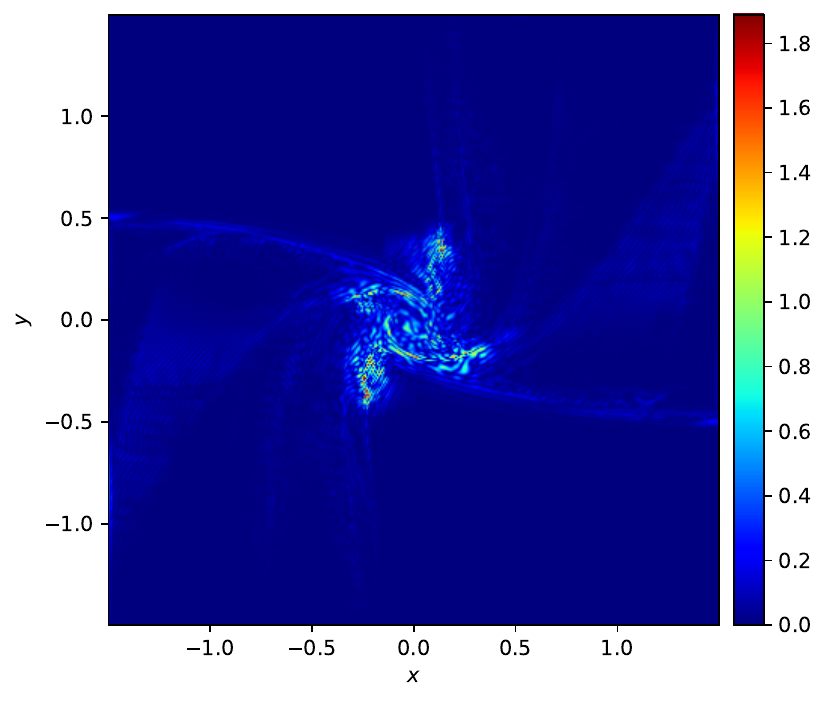}\label{fig:2DRP_db_wo_o4}}\\
		\subfigure[$|(\nabla\cdot\B)_{i,j}|$ for $\ote$ scheme for GLM-CGL]{\includegraphics[width=0.26\textwidth]{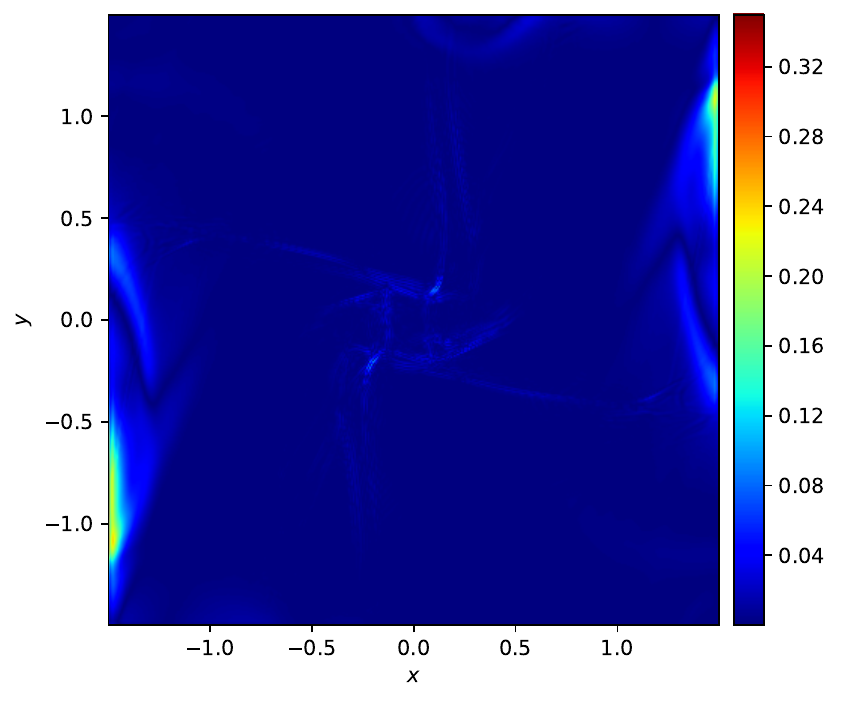}\label{fig:2DRP_db_w_o2}}~
		\subfigure[$|(\nabla\cdot\B)_{i,j}|$ for $\othe$ scheme for GLM-CGL]{\includegraphics[width=0.26\textwidth]{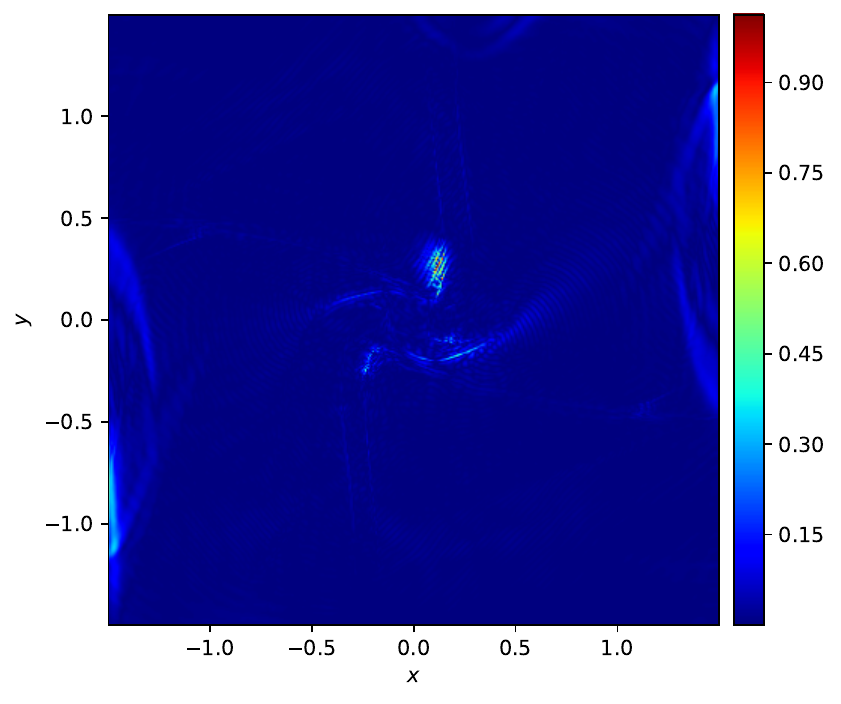}\label{fig:2DRP_db_w_o3}}~
		\subfigure[$|(\nabla\cdot\B)_{i,j}|$ for $\ofe$ scheme for GLM-CGL]{\includegraphics[width=0.26\textwidth]{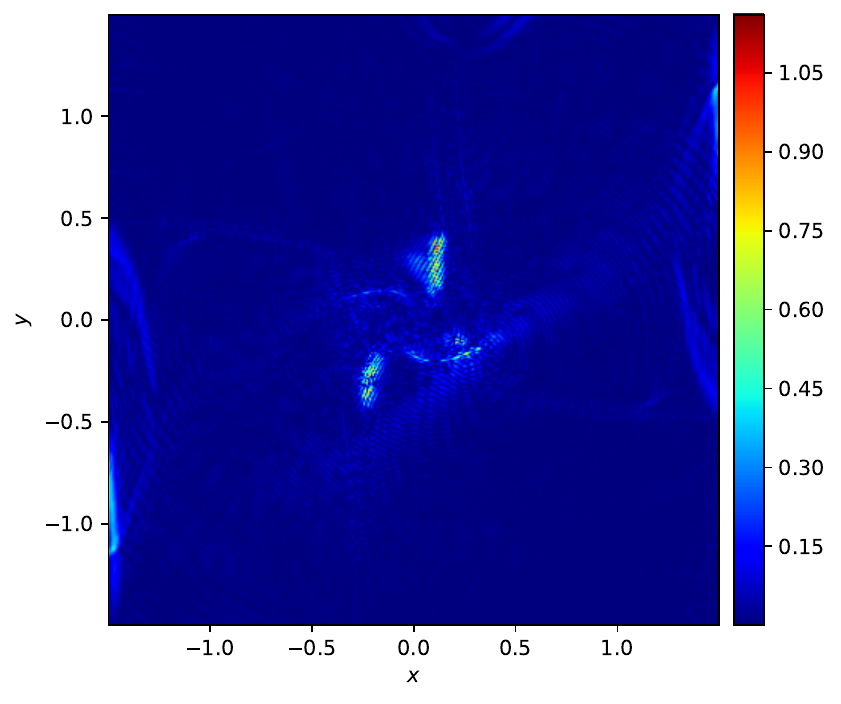}\label{fig:2DRP_db_w_o4}}\\
		\caption{\textbf{\nameref{test:2DRP}}: Schlieren image of $B_y$ and plots of $|(\nabla\cdot\B)_{i,j}|$ for $\ote$, $\othe$ and $\ofe$ schemes for CGL and GLM-CGL at time $t=1.0$.}
		\label{fig:2DRP_cgl_by_divb}
	\end{center}
\end{figure}
\begin{figure}[!htbp]
	\begin{center}
		\subfigure[Schlieren image of $B_y$ for $\oti$ scheme for isotropic CGL]{\includegraphics[width=0.26\textwidth]{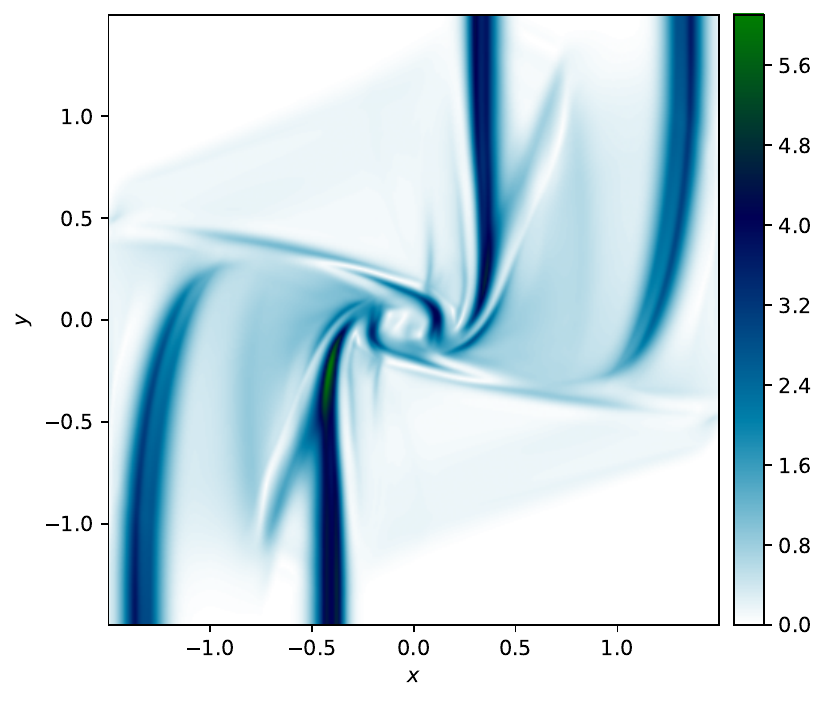}\label{fig:2DRP_by_wo_m_o2}}~
		\subfigure[Schlieren image of $B_y$ for $\othi$ scheme for isotropic CGL]{\includegraphics[width=0.26\textwidth]{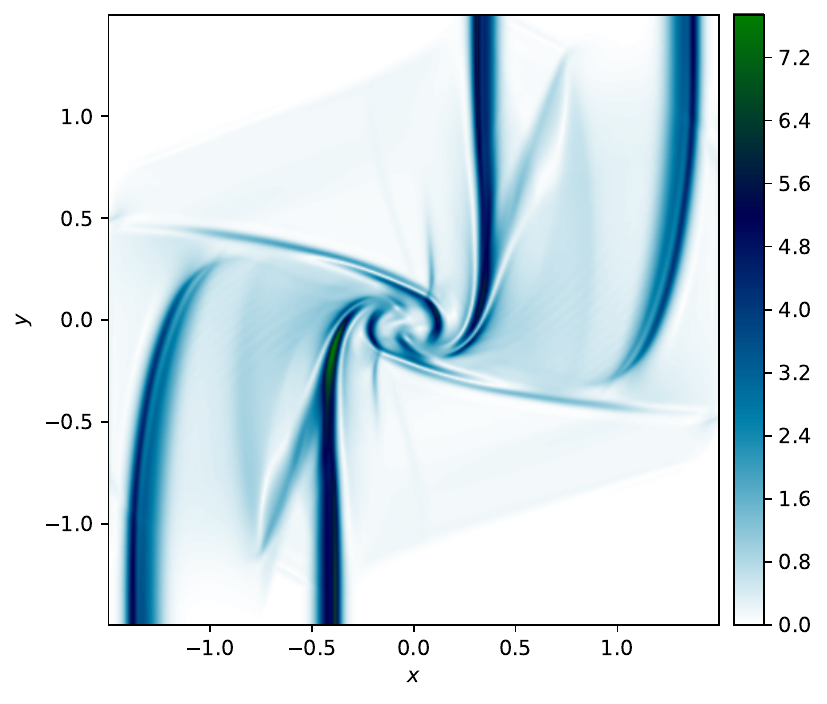}\label{fig:2DRP_by_wo_m_o3}}~
		\subfigure[Schlieren image of $B_y$ for $\ofi$ scheme for isotropic CGL]{\includegraphics[width=0.26\textwidth]{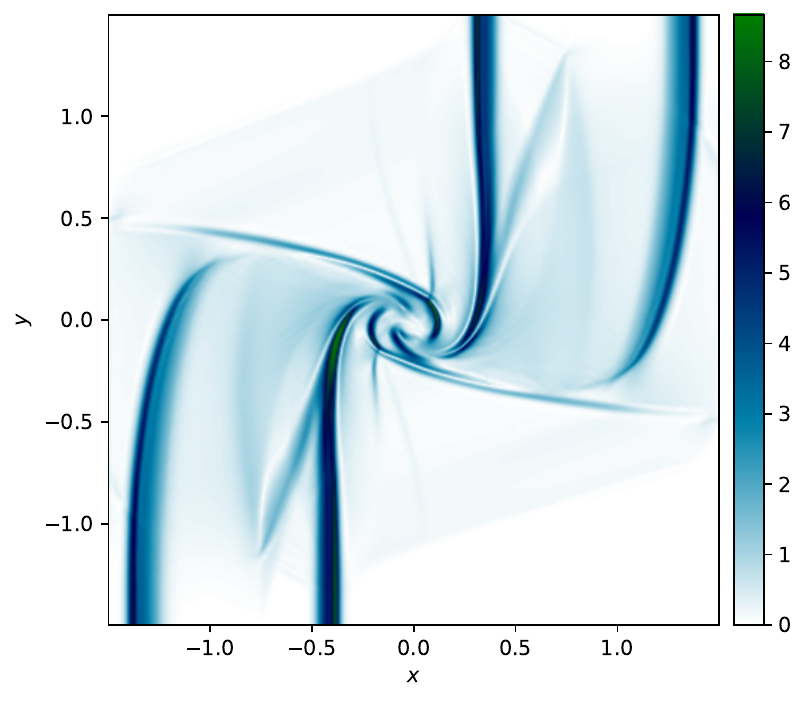}\label{fig:2DRP_by_wo_m_o4}}\\
		\subfigure[Schlieren image of $B_y$ for $\oti$ scheme for isotropic GLM-CGL]{\includegraphics[width=0.26\textwidth]{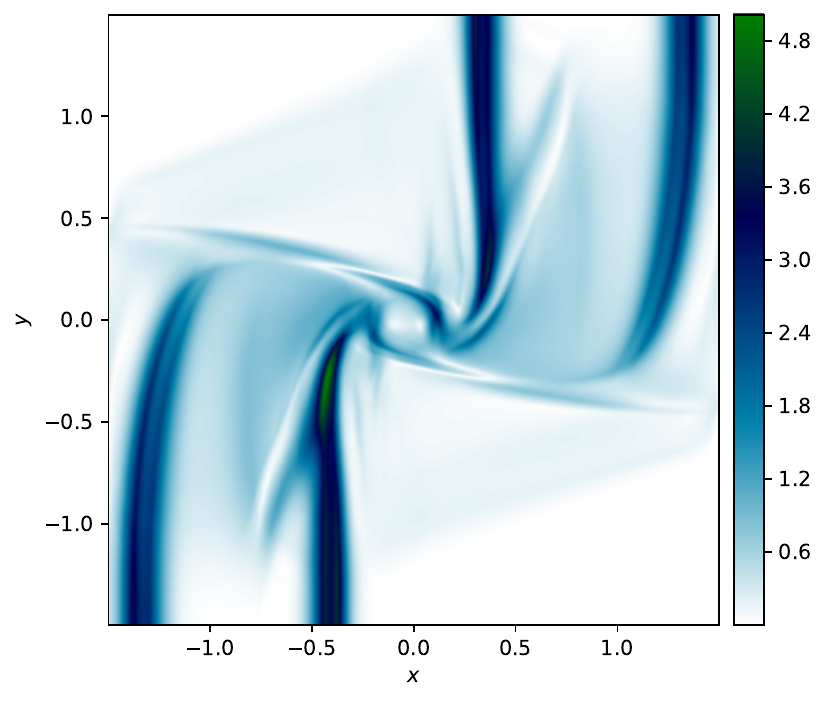}\label{fig:2DRP_by_w_m_o2}}~
		\subfigure[Schlieren image of $B_y$ for $\othi$ scheme for isotropic GLM-CGL]{\includegraphics[width=0.26\textwidth]{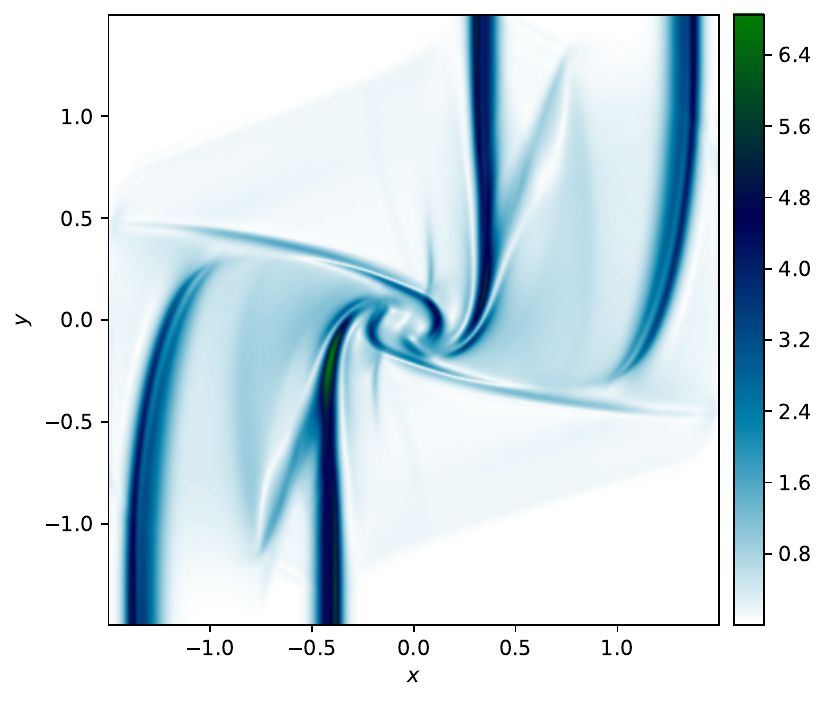}\label{fig:2DRP_by_w_m_o3}}~
		\subfigure[Schlieren image of $B_y$ for $\ofi$ scheme for isotropic GLM-CGL]{\includegraphics[width=0.26\textwidth]{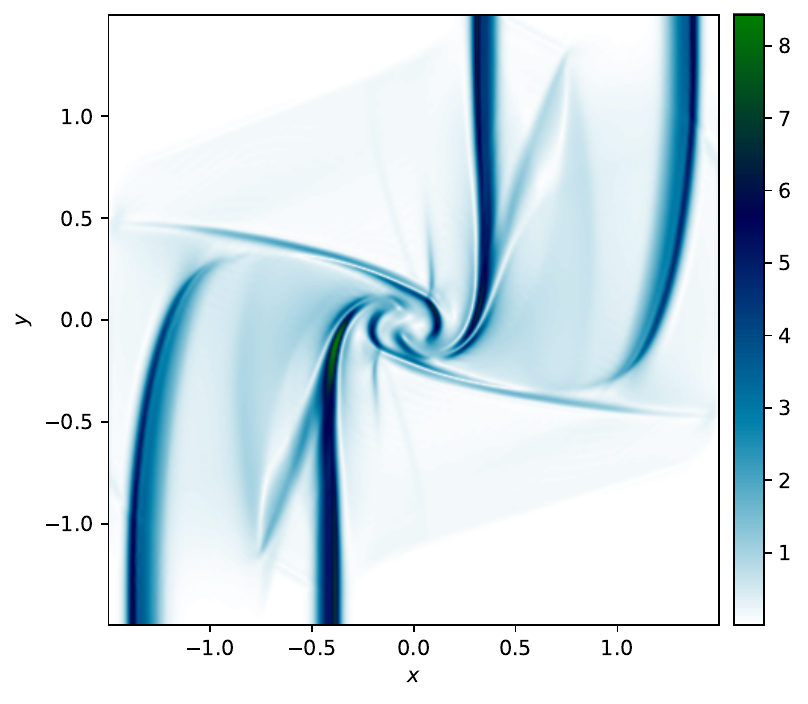}\label{fig:2DRP_by_w_m_o4}}\\
		\subfigure[$|(\nabla\cdot\B)_{i,j}|$ for $\oti$ scheme for isotropic CGL]{\includegraphics[width=0.26\textwidth]{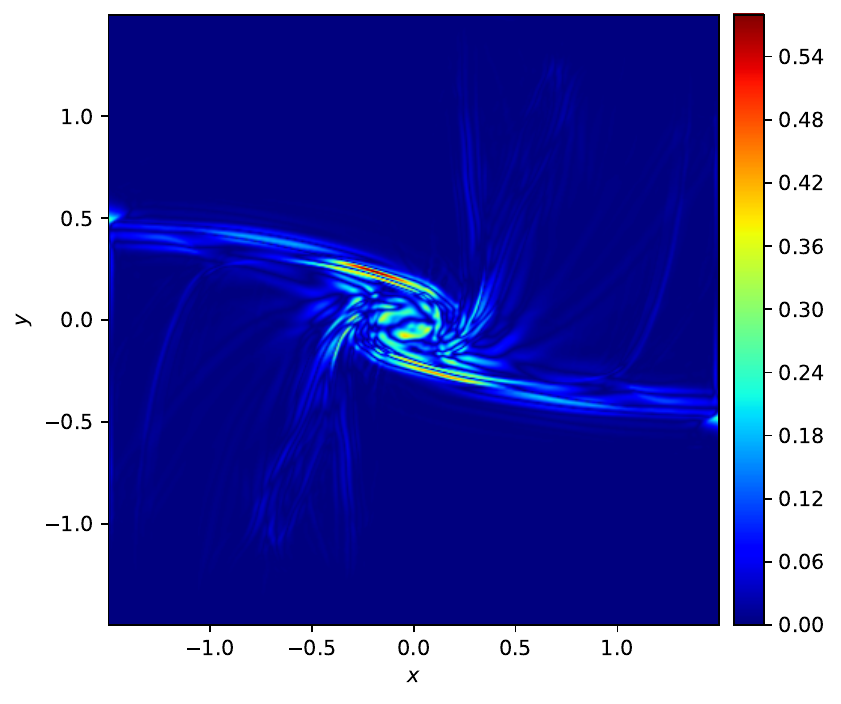}\label{fig:2DRP_db_wo_m_o2}}~
		\subfigure[$|(\nabla\cdot\B)_{i,j}|$ for $\othi$ scheme for isotropic CGL]{\includegraphics[width=0.26\textwidth]{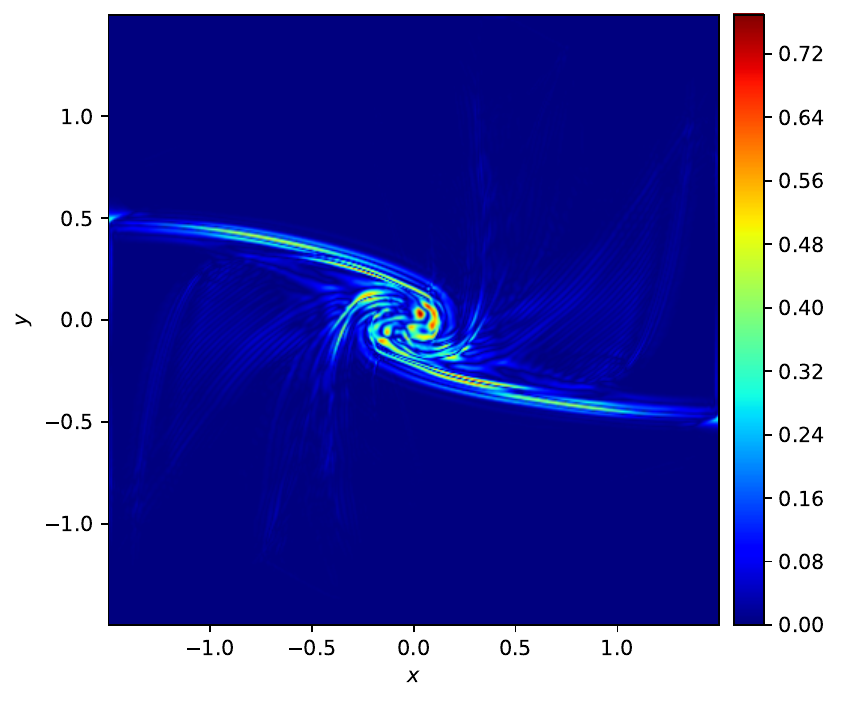}\label{fig:2DRP_db_wo_m_o3}}~
		\subfigure[$|(\nabla\cdot\B)_{i,j}|$ for $\ofi$ for isotropic CGL]{\includegraphics[width=0.26\textwidth]{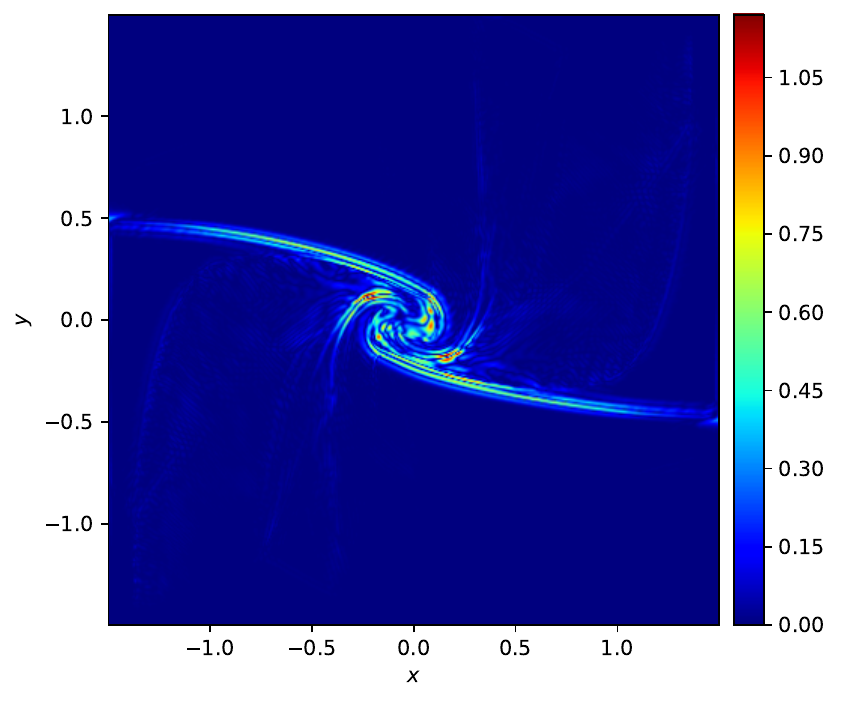}\label{fig:2DRP_db_wo_m_o4}}\\
		\subfigure[$|(\nabla\cdot\B)_{i,j}|$ for $\oti$ scheme for isotropic GLM-CGL]{\includegraphics[width=0.26\textwidth]{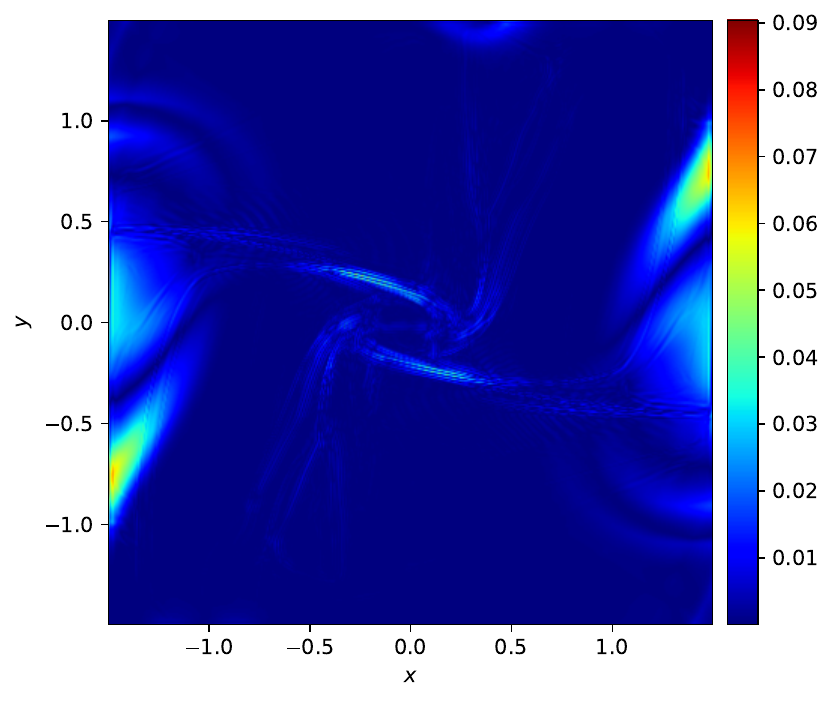}\label{fig:2DRP_db_w_m_o2}}~
		\subfigure[$|(\nabla\cdot\B)_{i,j}|$ for $\othi$ scheme for isotropic GLM-CGL]{\includegraphics[width=0.26\textwidth]{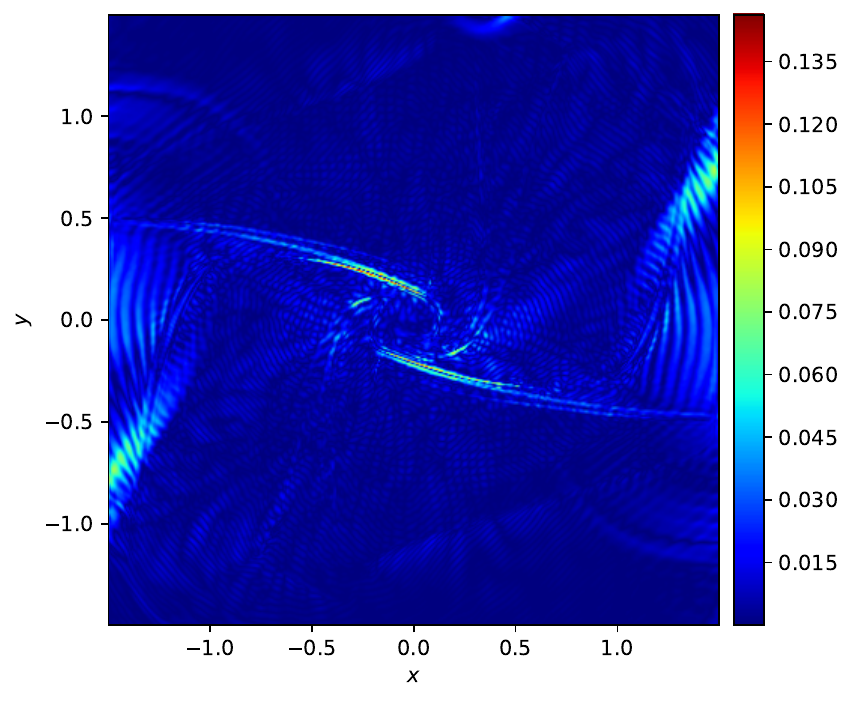}\label{fig:2DRP_db_w_m_o3}}~
		\subfigure[$|(\nabla\cdot\B)_{i,j}|$ for $\ofi$ scheme for isotropic GLM-CGL]{\includegraphics[width=0.26\textwidth]{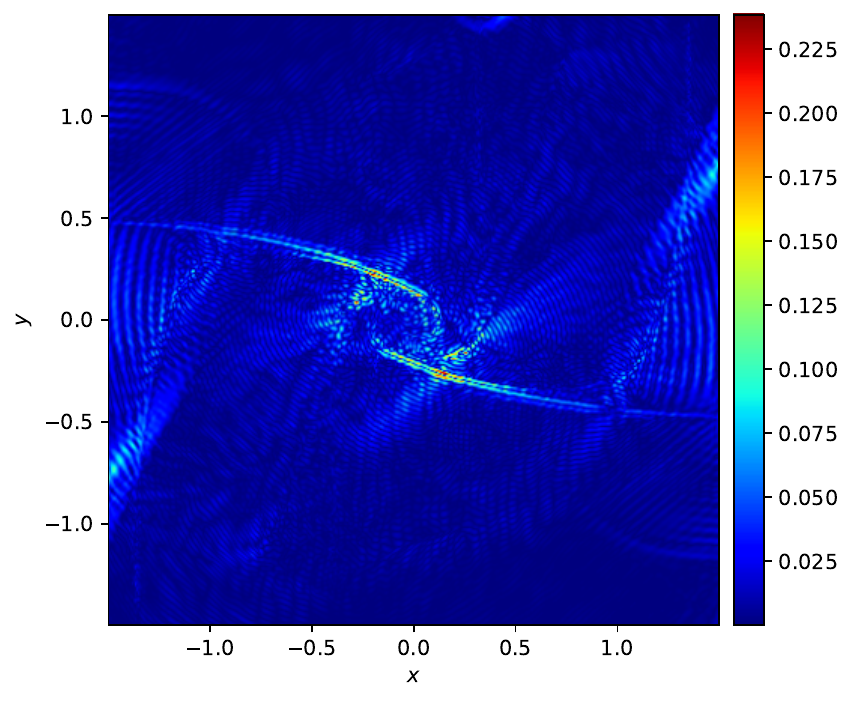}\label{fig:2DRP_db_w_m_o4}}\\
		\caption{\textbf{\nameref{test:2DRP}}: Schlieren image of $B_y$ and plots of $|(\nabla\cdot\B)_{i,j}|$ for $\oti$, $\othi$ and $\ofi$ schemes for isotropic CGL and isotropic GLM-CGL at time $t=1.0$.}
		\label{fig:2DRP_mhd_by_divb}
	\end{center}
\end{figure}
\begin{figure}[!htbp]
	\begin{center}	
		\subfigure[$\|\nabla\cdot\B\|_{1}$ and $\|\nabla\cdot\B\|_{2}$ for $\ote$ scheme for CGL and GLM-CGL ]{\includegraphics[width=0.28\textwidth]{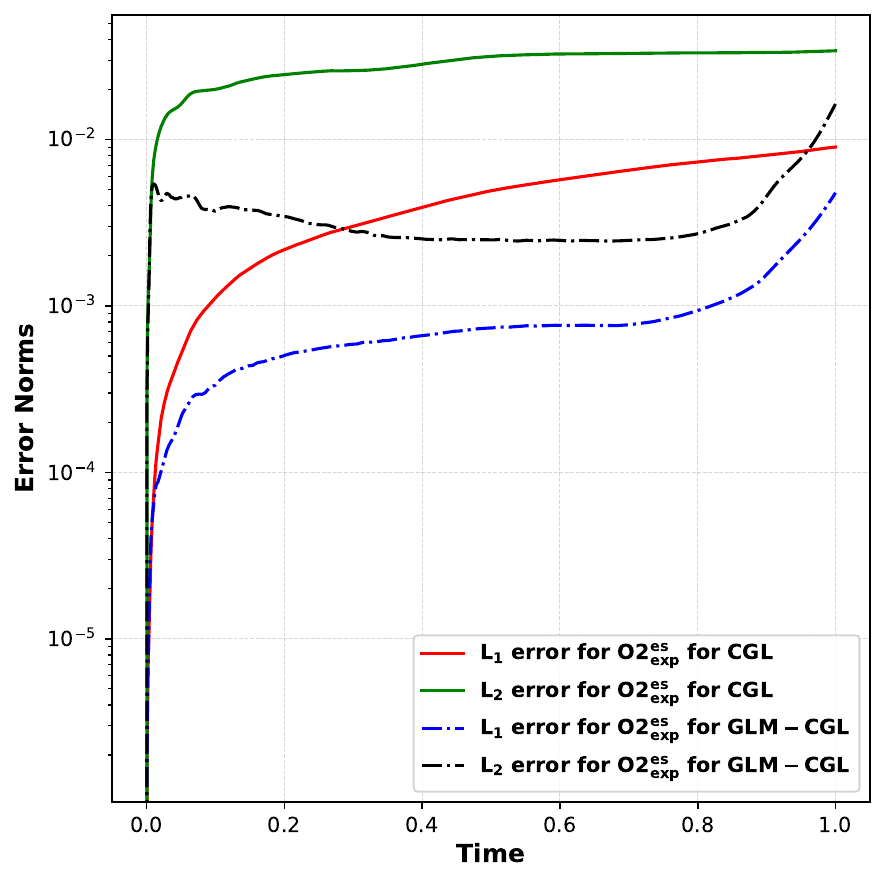}\label{fig:2DRP_error_cgl_o2}}~
		\subfigure[$\|\nabla\cdot\B\|_{1}$ and $\|\nabla\cdot\B\|_{2}$ for $\othe$ scheme for CGL and GLM-CGL ]{\includegraphics[width=0.28\textwidth]{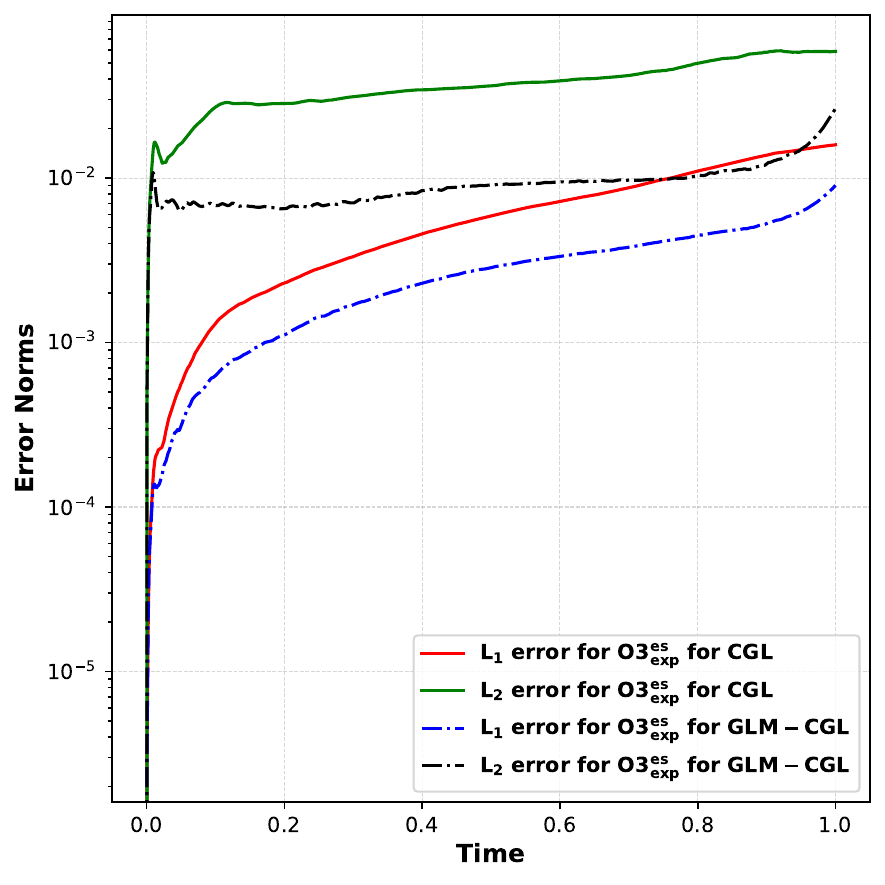}\label{fig:2DRP_error_cgl_o3}}~
		\subfigure[$\|\nabla\cdot\B\|_{1}$ and $\|\nabla\cdot\B\|_{2}$ for $\ofe$ scheme for CGL and GLM-CGL ]{\includegraphics[width=0.28\textwidth]{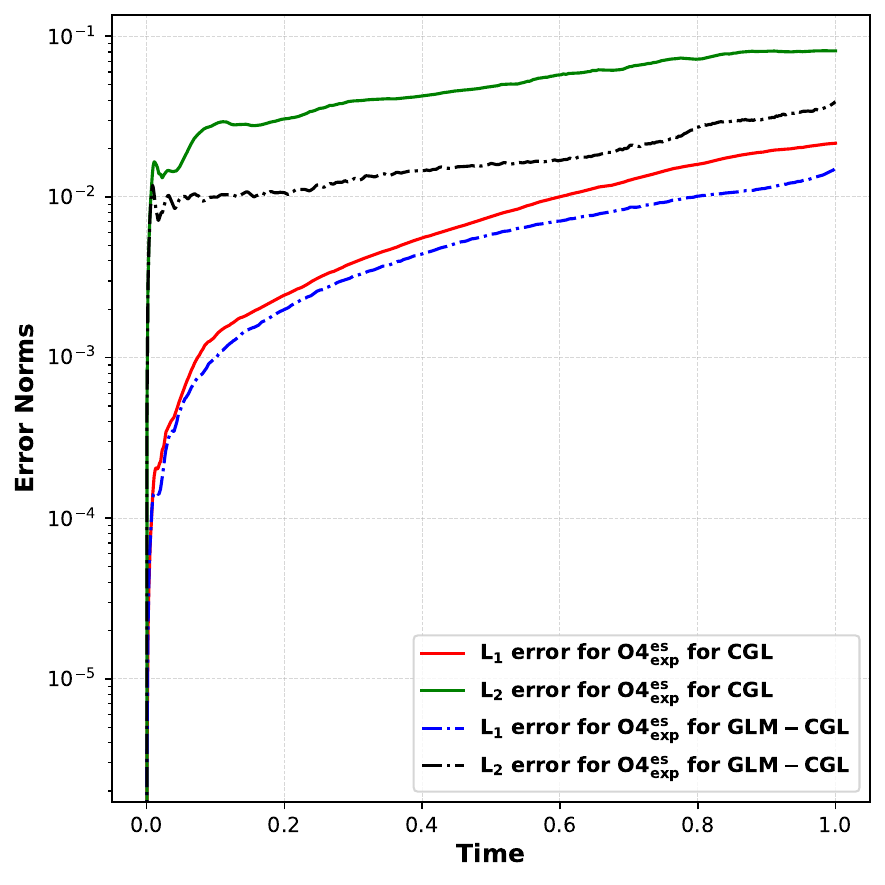}\label{fig:2DRP_error_cgl_o4}}\\
		\subfigure[$\|\nabla\cdot\B\|_{1}$ and $\|\nabla\cdot\B\|_{2}$ for $\oti$ scheme for isotropic CGL and isotropic GLM-CGL ]{\includegraphics[width=0.28\textwidth]{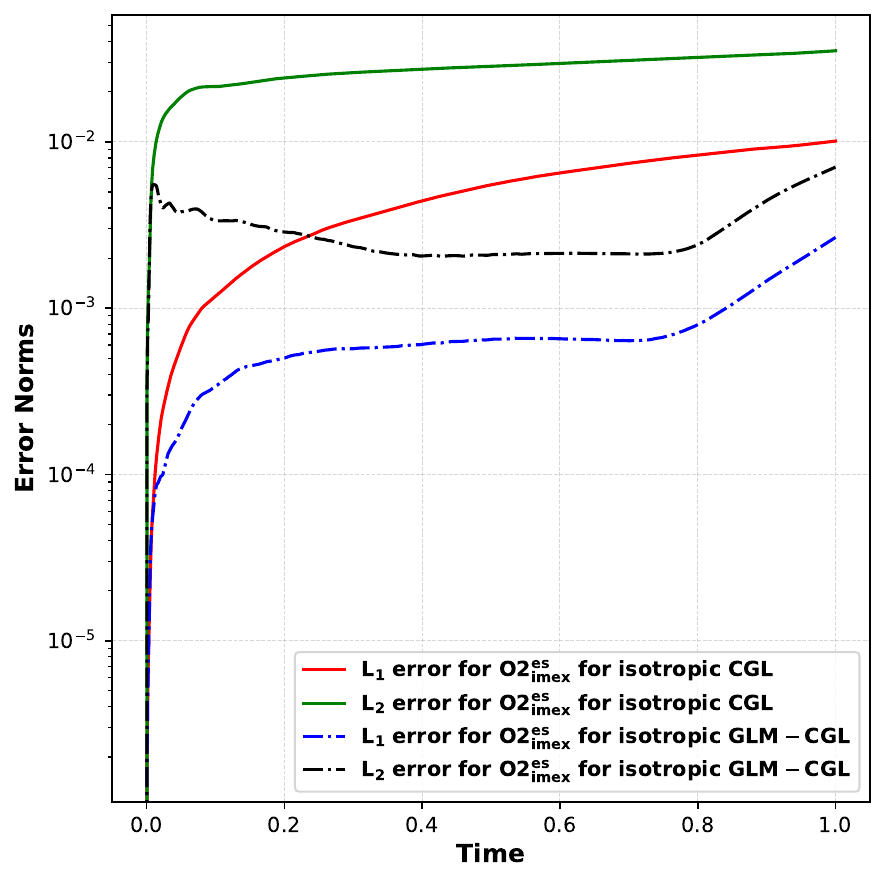}\label{fig:2DRP_error_mhd_o2}}~
		\subfigure[$\|\nabla\cdot\B\|_{1}$ and $\|\nabla\cdot\B\|_{2}$ for $\othi$ scheme for isotropic CGL and isotropic GLM-CGL ]{\includegraphics[width=0.28\textwidth]{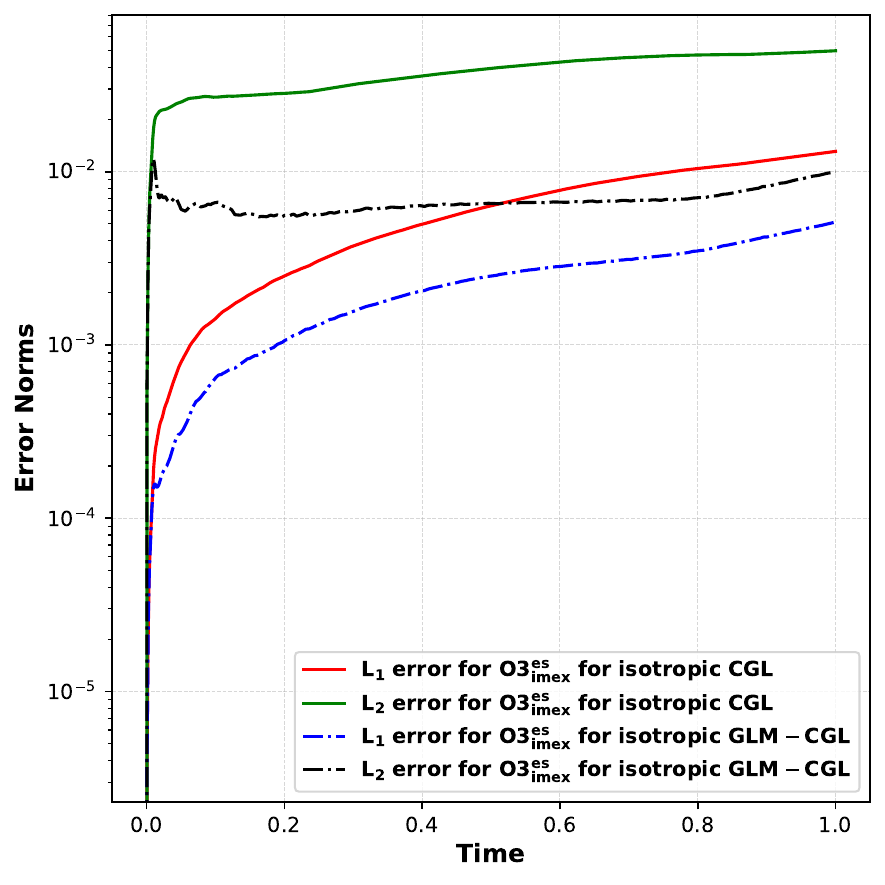}\label{fig:2DRP_error_mhd_o3}}~
		\subfigure[$\|\nabla\cdot\B\|_{1}$ and $\|\nabla\cdot\B\|_{2}$ for $\ofi$ scheme for isotropic CGL and isotropic GLM-CGL ]{\includegraphics[width=0.28\textwidth]{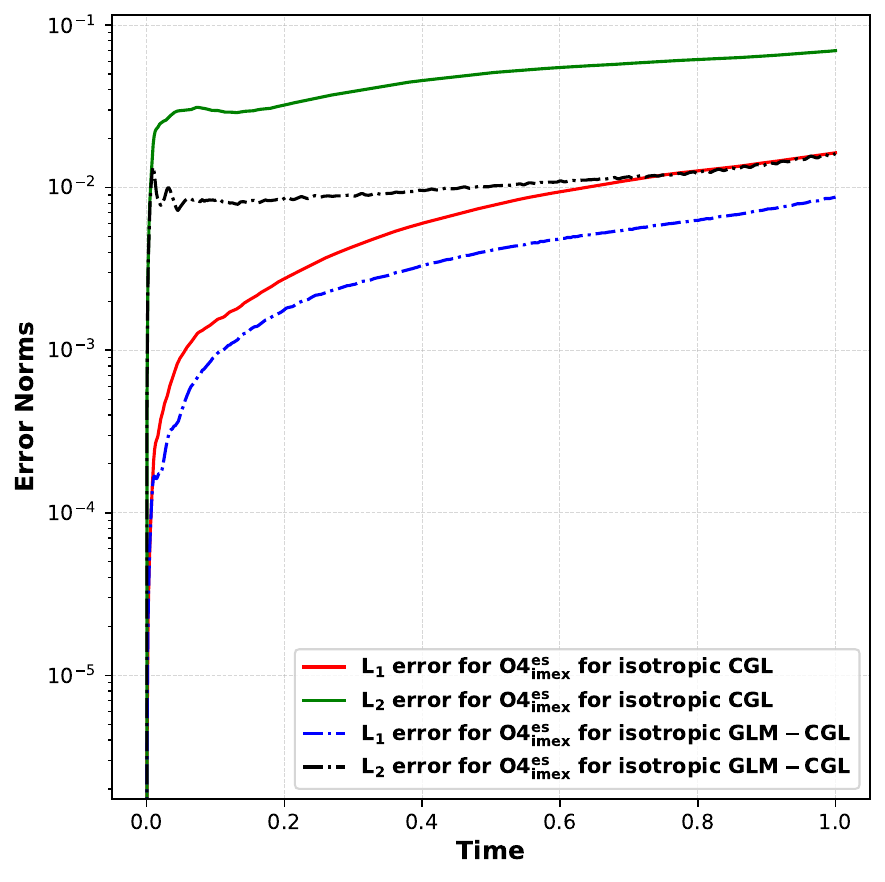}\label{fig:2DRP_error_mhd_o4}}\\
		\caption{\textbf{\nameref{test:2DRP}}: Evolution of the magnetic field divergence constraint errors till time $t=1.0$.}
		\label{fig:2DRP_error}
	\end{center}
\end{figure}

\begin{table}[ht]
\centering
\renewcommand{\arraystretch}{1.4}
\begin{tabular}{|c|c|c|c|c|c|c|}
\hline
\multirow{2}{*}{Scheme} &
\multirow{2}{*}{System} &
$\rho_{\min}$ &
$(p_{\parallel})_{\min}$ &
$(p_{\perp})_{\min}$ &
$\|\nabla\!\cdot\!\mathbf{B}\|_{1}$ &
$\|\nabla\!\cdot\!\mathbf{B}\|_{2}$ \\
&&&&&&\\
\hline

\multirow{2}{*}{$\ote$}
& CGL
& 4.674257E-01 & 1.855371E-01 & 1.636048E-01 &  9.017376E-03 & 3.415954E-02 \\ \cline{2-7}

& GLM--CGL
& 5.035759E-01 & 1.949532E-01 & 1.829844E-01 &  4.781753E-03 & 1.634865E-02  \\  \hline

\multirow{2}{*}{$\othe$}
& CGL
& 3.881367E-01 & 1.302556E-01 & 1.37023E-01 &  1.593041E-02 & 5.892514E-02 \\ \cline{2-7}

& GLM--CGL
& 4.043492E-01 & 1.568324E-01 & 1.458064E-01 & 9.042088E-03 & 2.655514E-02  \\  \hline

\multirow{2}{*}{$\ofe$}
& CGL
& 3.629564E-01 & 1.293211E-01 & 1.290771E-01 &  2.160493E-02 & 8.118646E-02 \\ \cline{2-7}

& GLM--CGL
& 3.717285E-01 & 1.184460E-01 & 1.343949E-01 &  1.500617E-02 & 3.912504E-02  \\  \hline

\multirow{2}{*}{$\oti$}
& Isotropic CGL
& 4.270803E-01 & 1.921454E-01 & 1.922029E-01 &  1.011883E-02 & 3.533485E-02 \\ \cline{2-7}

& Isotropic GLM--CGL
& 4.759654E-01 & 2.168588E-01 & 2.169223E-01 &  2.664113E-03 & 7.053657E-03  \\  \hline

\multirow{2}{*}{$\othi$}
& Isotropic CGL
& 3.410475E-01 & 1.453806E-01 & 1.453785E-01 &  1.309311E-02 & 4.986444E-02  \\ \cline{2-7}

& Isotropic GLM--CGL
& 3.777291E-01 & 1.674162E-01 & 1.674142E-01 &  5.120159E-03 & 9.987299E-03  \\  \hline

\multirow{2}{*}{$\ofi$}
& Isotropic CGL
& 3.113392E-01 & 1.184990E-01 & 1.184986E-01 &  1.637506E-02 & 6.959841E-02  \\ \cline{2-7}

& Isotropic GLM--CGL
& 3.396487E-01  & 1.439546E-01 & 1.439500E-01 &  8.752277E-03 & 1.618990E-02  \\ \hline

\end{tabular}
\caption{\textbf{\nameref{test:2DRP}}: Minimum values of density $\rho$, parallel pressure $p_{\parallel}$, and perpendicular pressure $p_{\perp}$, together with the value of $L_1$ and $L_2$ norm of $\nabla\cdot\B$ at final time.}
\label{tab:div_2DRP}
\end{table}

\revo{\subsection{Magnetized Kelvin-Helmholtz instability}\label{test:KHI}}
\revo{Following~\cite{mignone2010high,rueda2025entropy}, we consider the magnetized Kelvin-Helmholtz instability problem for the CGL model. The computation domain for the test case is $[0,1]\times[-1,1]$ with periodic boundary conditions in the $x$-direction and Neumann boundary conditions at the top and bottom boundaries. The initial condition consists of a single shear layer with velocity in the $x$-direction, together with uniform density, parallel and perpendicular pressures, and a uniform magnetic field lying in the $xz$-plane at an angle $\theta=\frac{\pi}{3}$ with the direction of propagation. A small single-mode perturbation is imposed on the velocity in $y$-direction. The initial conditions are given by 
\[\left(\rho, \pll, \per, \Psi\right) = 
\left(1, \frac{1}{\gamma},\frac{1}{\gamma},0\right)\]
\[v_x=\frac{M}{2}\tanh{\left(\frac{y}{y_0}\right)},~~~v_y=v_{y_0}\sin{\left(2\pi x\right)\exp{\left(-\frac{y^2}{\sigma^2}\right)}},~~~v_z=0\]
and
\[\B=c_a\sqrt{\rho}\left(\cos{\theta},~0,~\sin{\theta}\right)\]
where, $M=1$ is the Mach number, $y_0=\frac{1}{20}$ is the steepness of the shear, $c_a = 0.1$ is the Alfv\'en speed and the perturbation parameters are $v_{y_0}=10^{-2}$, $\sigma=0.1$. }

\revo{The solutions are computed using $256\times512$ cells till the final time of $t=20$. Figure~\eqref{fig:KHI_CGL} shows the density $\rho$ obtained using the $\ofe$ scheme for the CGL and GLM-CGL systems, and the $\ofi$ scheme for the isotropic CGL and isotropic GLM-CGL systems at time $t=7.2$, $12$, and $20$. At $t=7.2$, the initial perturbation grows, and the shear layer starts to roll up. By $t=12$, the mixing layer becomes wider and small-scale structures appear. At the final time $t=20$, the flow exhibits complex vortical structures and fine-scale features in both the density and magnetic field. The CGL and GLM-CGL formulations produce similar flow features, while the GLM-CGL solution is slightly smoother at later times due to the additional GLM divergence-cleaning mechanism.}

\revo{In Figure~\eqref{fig:KHI_MHD}, we present the images of $B_p=\frac{\sqrt{B_x^2+B_y^2}}{B_z}$ for isotropic CGL and isotropic GLM-CGL using $\ofi$ schemes at times $t=5,~8,~12$, and $t=20$. At $t=5$, the initial perturbation grows, and the magnetic field lines roll up into a cat's-eye vortex structure. By $t=8$, secondary structures develop, and the mixing layer becomes wider. As the instability evolves at $t=12$ and $t=20$, the mixing layer continues to broaden, and the magnetic field forms thin filament-like structures. Compared with the isotropic CGL solution, we observe that the isotropic GLM-CGL solution more closely matches the solution presented in~\cite{mignone2010high}.}

\revo{We have also plotted $|(\nabla\cdot\mathbf{B})_{i,j}|$ for the CGL, GLM-CGL, isotropic CGL, and isotropic GLM-CGL systems in Figure.~\eqref{fig:KHI_CGL_div}, and observe that the GLM-CGL formulation produces significantly lower divergence errors. The colorbars in all divergence plots are dynamically scaled using the absolute local minimum and maximum values of $\nabla\cdot\B$ in each plot.}

\revo{To monitor the divergence error over time, we have plotted the evolution of the $L_1$ and $L_2$ errors of the divergence of the magnetic field in Figure~\eqref{fig:KHI_CGL_div}. We note that the errors are stable; however, GLM-CGL errors are significantly less than the CGL errors for both isotropic and anisotropic cases.}

\revo{Table~\eqref{tab:div_KHI} lists the minimum values of density, $\pll$, and $\per$ as well as the value of $L_1$ and $L_2$ norms of $\nabla\cdot\B$ at the final time. All reported values of density and pressure remain positive for all schemes and formulations, indicating that physically admissible solutions are maintained throughout the computation. The table also demonstrates the significant reduction in divergence errors achieved by the GLM formulation. These results indicate that the proposed GLM-CGL formulation remains robust for this challenging Kelvin-Helmholtz instability test.}
\begin{figure}[!htbp]
	\begin{center}
		\subfigure[$\rho$ for CGL at time $t=7.2$]{\includegraphics[width=0.245\textwidth]{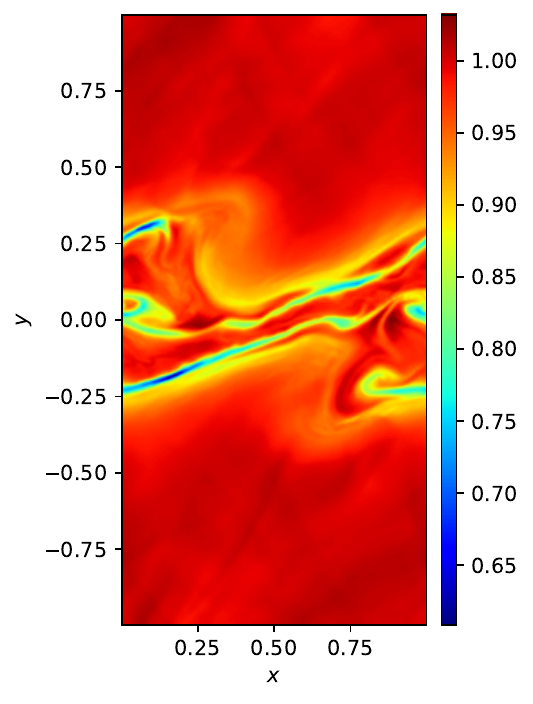}\label{fig:KHI_CGL_rho_wo_7.2}}~
		\subfigure[$\rho$  for GLM-CGL at time $t=7.2$]{\includegraphics[width=0.245\textwidth]{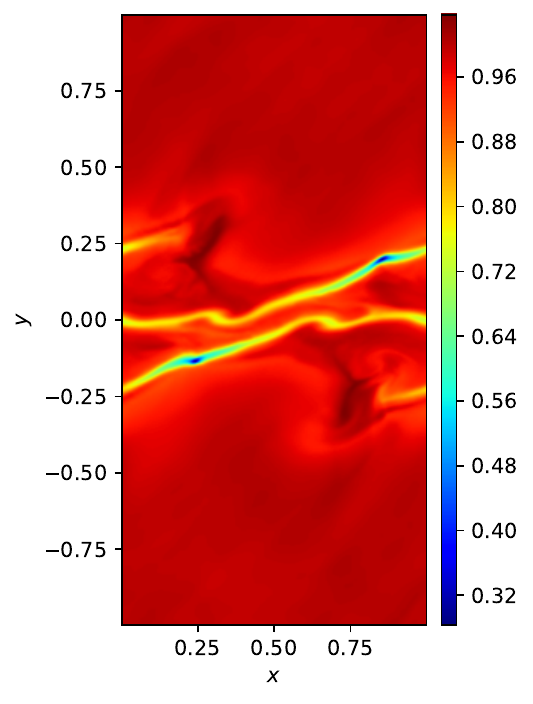}\label{fig:KHI_CGL_rho_w_7.2}}~
		\subfigure[$\rho$ for isotropic CGL at time $t=7.2$]{\includegraphics[width=0.245\textwidth]{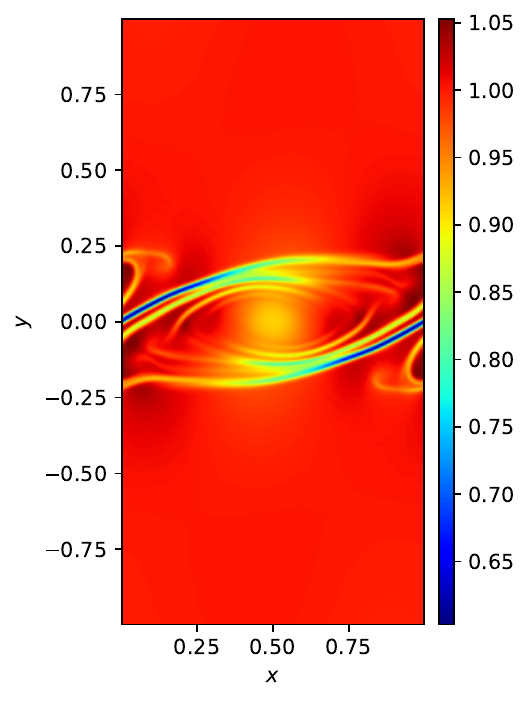}\label{fig:KHI_MHD_rho_wo_7.2}}~
		\subfigure[$\rho$ for isotropic GLM-CGL at time $t=7.2$]{\includegraphics[width=0.245\textwidth]{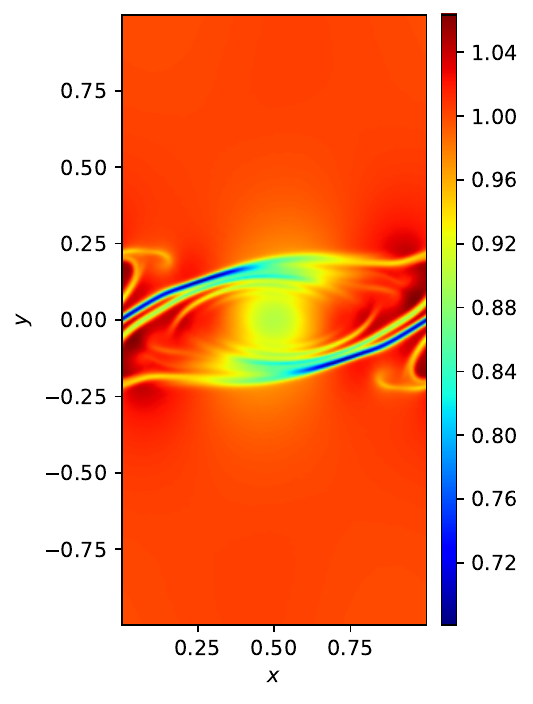}\label{fig:KHI_MHD_rho_w_7.2}}\\
        \subfigure[$\rho$ for CGL at time $t=12$]{\includegraphics[width=0.245\textwidth]{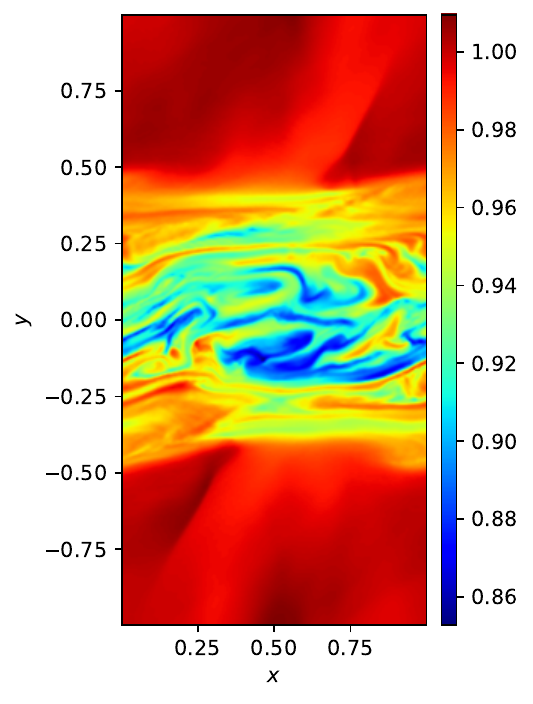}\label{fig:KHI_CGL_rho_wo_12}}~
		\subfigure[$\rho$  for GLM-CGL at time $t=12$]{\includegraphics[width=0.245\textwidth]{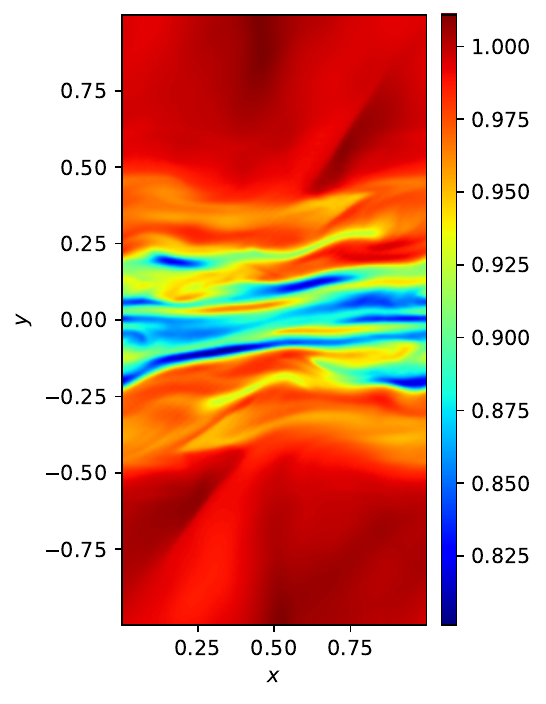}\label{fig:KHI_CGL_rho_w_12}}~
		\subfigure[$\rho$ for isotropic CGL at time $t=12$]{\includegraphics[width=0.245\textwidth]{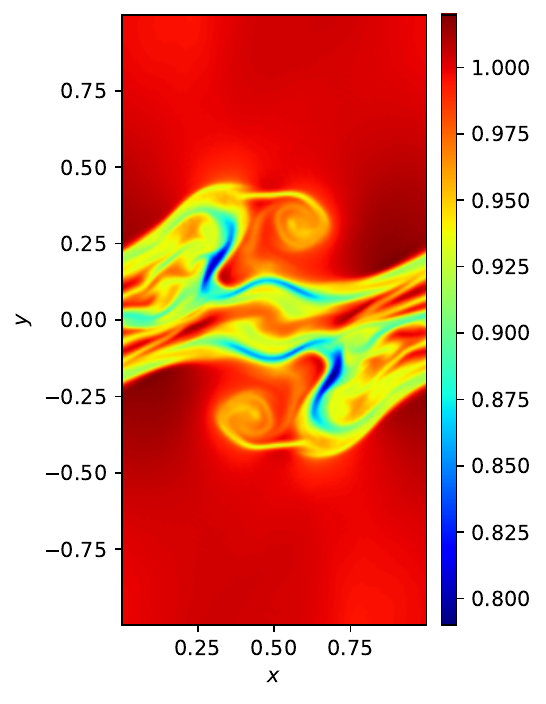}\label{fig:KHI_MHD_rho_wo_12}}~
		\subfigure[$\rho$ for isotropic GLM-CGL at time $t=12$]{\includegraphics[width=0.245\textwidth]{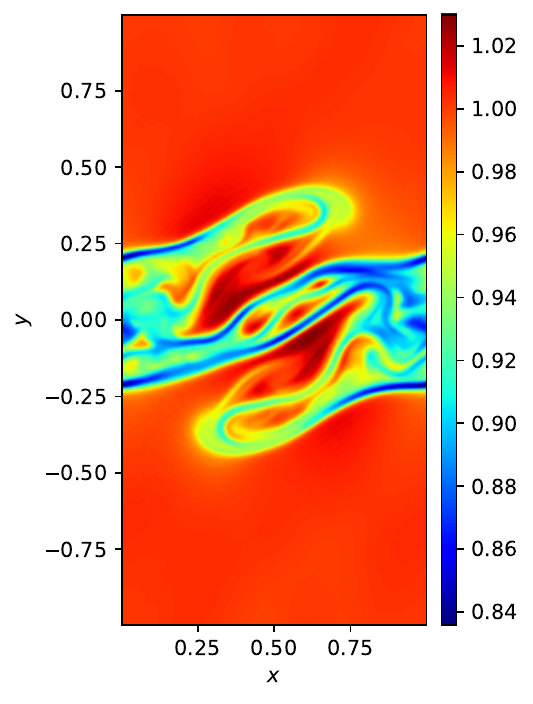}\label{fig:KHI_MHD_rho_w_12}}\\
        \subfigure[$\rho$ for CGL at time $t=20$]{\includegraphics[width=0.245\textwidth]{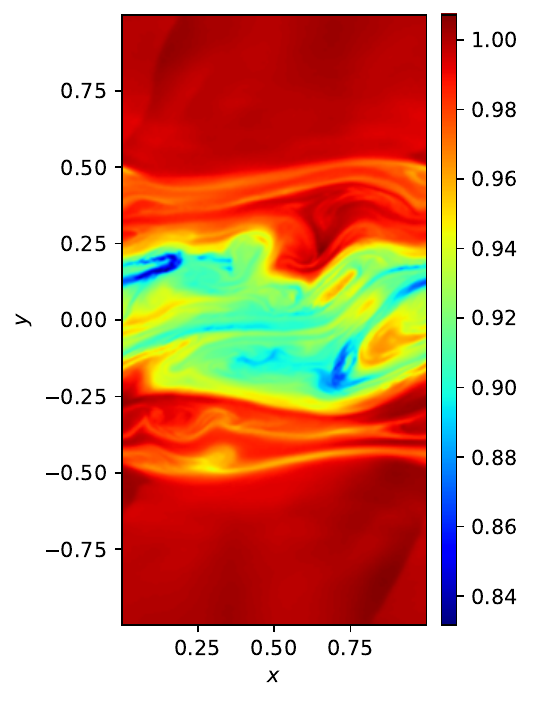}\label{fig:KHI_CGL_rho_wo_20}}~
		\subfigure[$\rho$  for GLM-CGL at time $t=20$]{\includegraphics[width=0.245\textwidth]{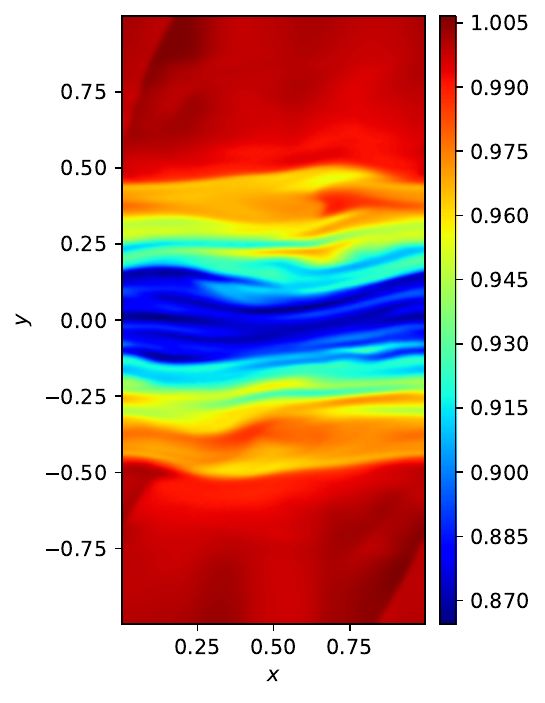}\label{fig:KHI_CGL_rho_w_20}}~
		\subfigure[$\rho$ for isotropic CGL at time $t=20$]{\includegraphics[width=0.245\textwidth]{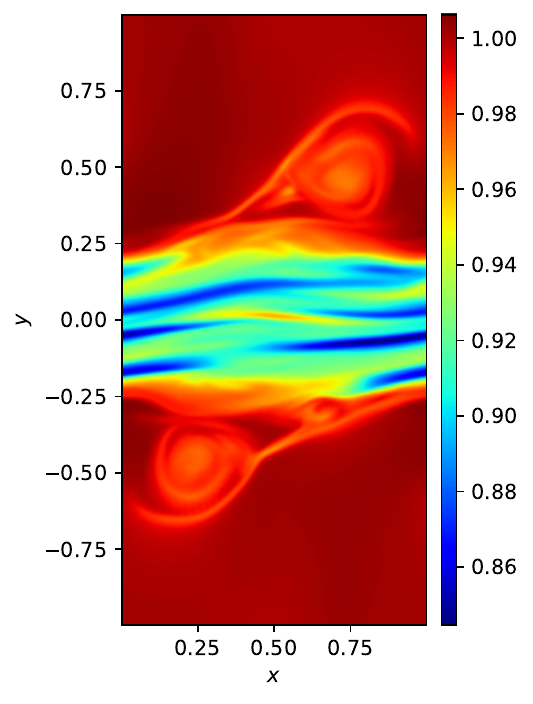}\label{fig:KHI_MHD_rho_wo_20}}~
		\subfigure[$\rho$ for isotropic GLM-CGL at time $t=20$]{\includegraphics[width=0.245\textwidth]{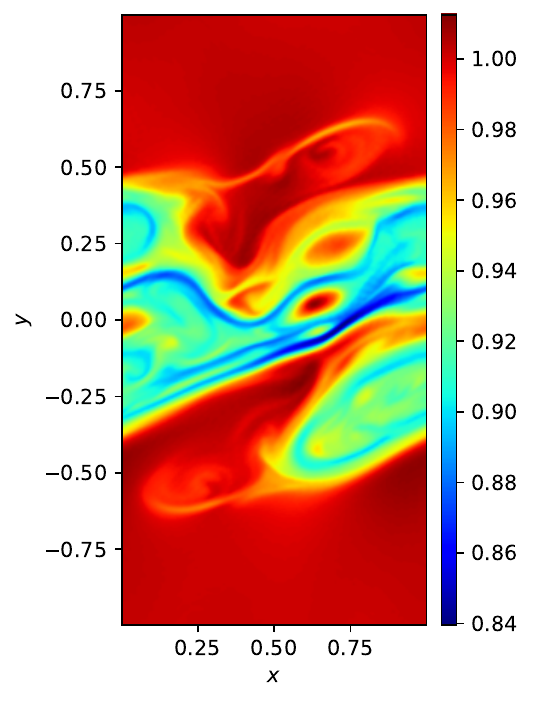}\label{fig:KHI_MHD_rho_w_20}}\\
		\caption{\textbf{\nameref{test:KHI}}: Plots of density  at times $t=7.2,\,12$ and $20$ obtained using the $\ofe$ scheme for the CGL and GLM-CGL, and the $\ofi$ scheme for the isotropic CGL and isotropic GLM-CGL.}
		\label{fig:KHI_CGL}
	\end{center}
\end{figure}

\begin{figure}[!htbp]
	\begin{center}
		\subfigure[$B_p$ for isotropic CGL at time $t=5$]{\includegraphics[width=0.245\textwidth]{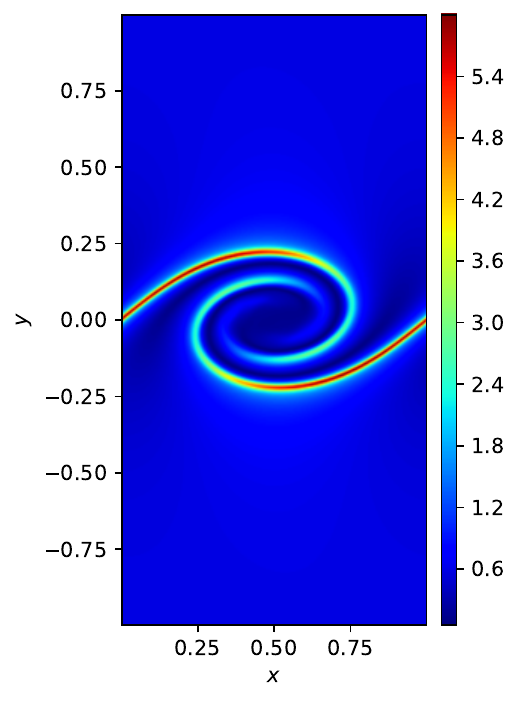}\label{fig:KHI_wo_t_5}}~
		\subfigure[$B_p$  for isotropic CGL at time $t=8$]{\includegraphics[width=0.245\textwidth]{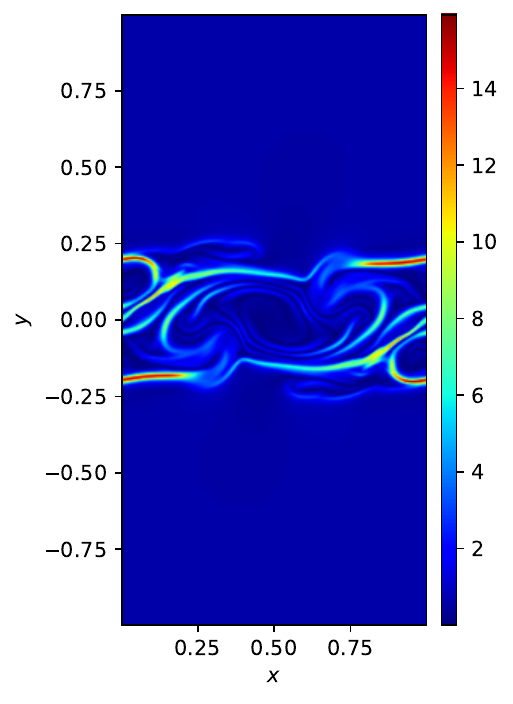}\label{fig:KHI_wo_t_8}}~
		\subfigure[$B_p$ for isotropic CGL at time $t=12$]{\includegraphics[width=0.245\textwidth]{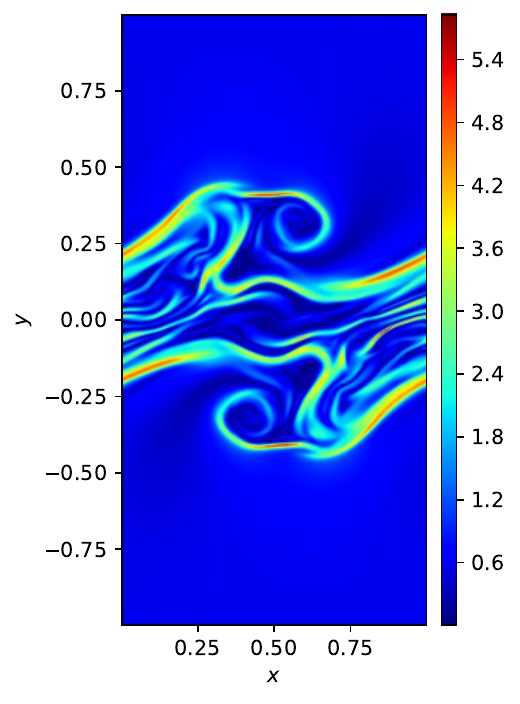}\label{fig:KHI_wo_t_12}}~
		\subfigure[$B_p$ for isotropic CGL at time $t=20$]{\includegraphics[width=0.245\textwidth]{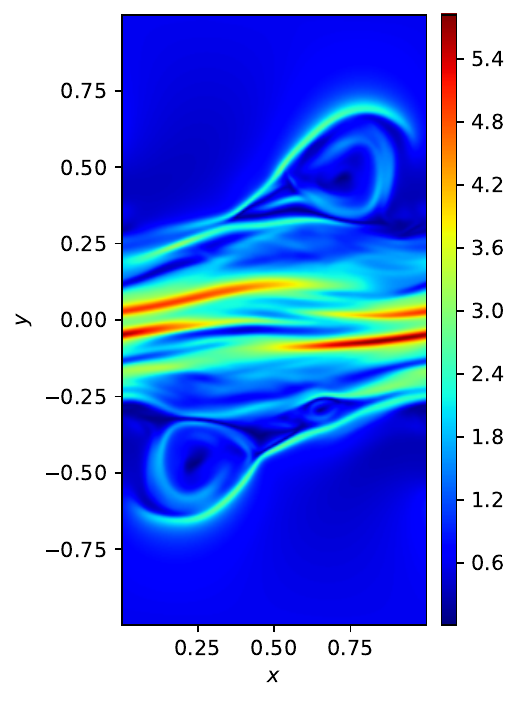}\label{fig:KHI_wo_t_20}}\\
        \subfigure[$B_p$ for isotropic GLM-CGL at time $t=5$]{\includegraphics[width=0.245\textwidth]{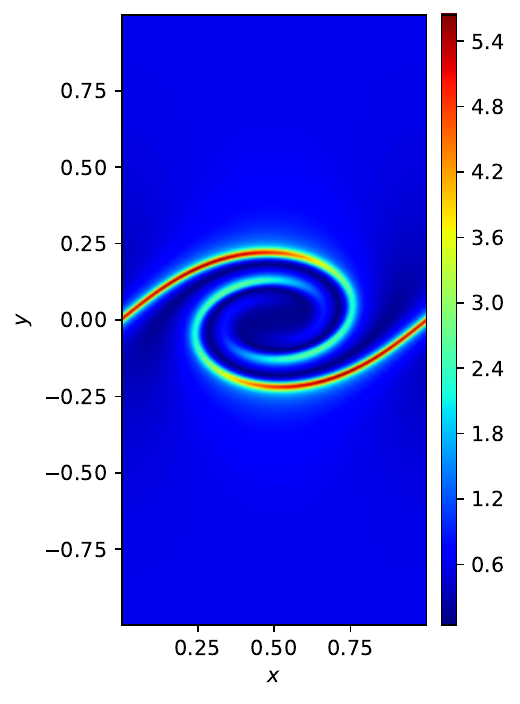}\label{fig:KHI_w_t_5}}~
		\subfigure[$B_p$  for isotropic GLM-CGL at time $t=8$]{\includegraphics[width=0.245\textwidth]{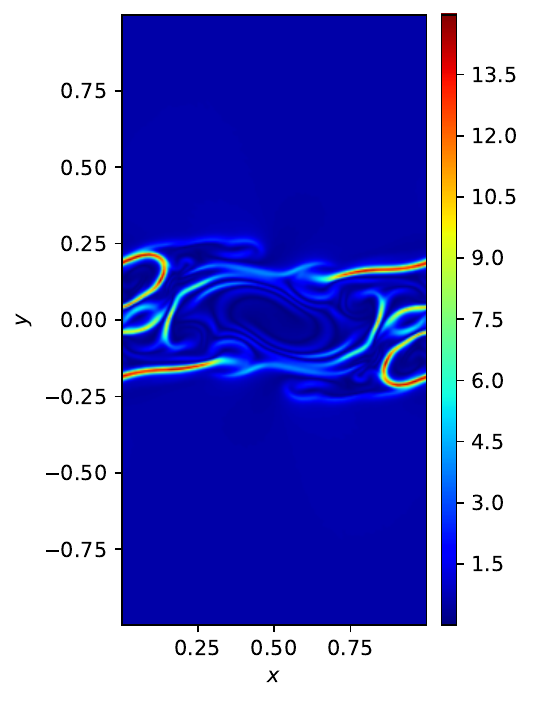}\label{fig:KHI_w_t_8}}~
		\subfigure[$B_p$ for isotropic GLM-CGL at time $t=12$]{\includegraphics[width=0.245\textwidth]{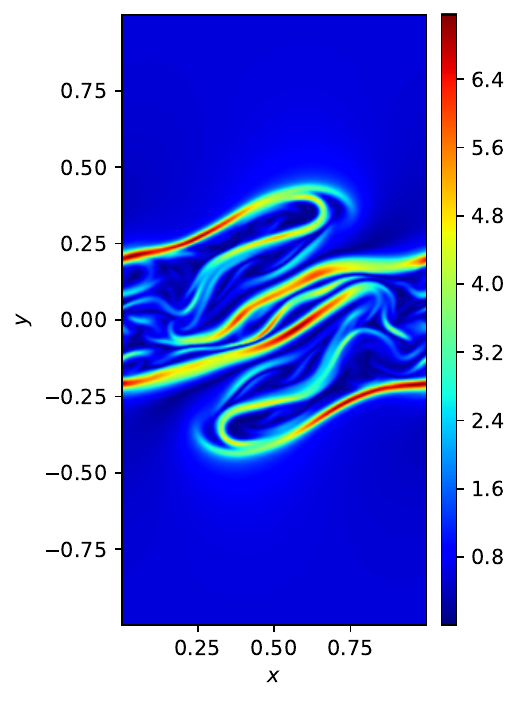}\label{fig:KHI_w_t_12}}~
		\subfigure[$B_p$ for isotropic GLM-CGL at time $t=20$]{\includegraphics[width=0.245\textwidth]{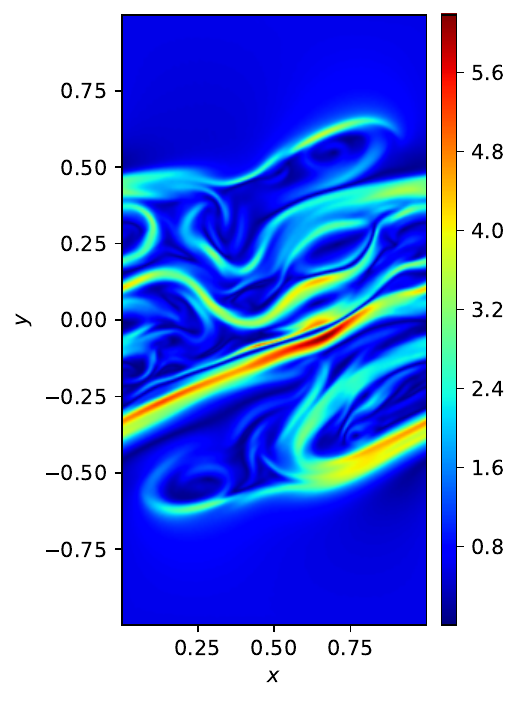}\label{fig:KHI_w_t_20}}\\
		\caption{\textbf{\nameref{test:KHI}}: Plots of $B_p$ at times $t=5,\,8,\,12,$ and $20$ using $\ofi$ scheme for the isotropic CGL and isotropic GLM-CGL.}
		\label{fig:KHI_MHD}
	\end{center}
\end{figure}

\begin{figure}[!htbp]
	\begin{center}
        \subfigure[$|(\nabla\cdot\B)_{i,j}|$ for CGL at time $t=20$]{\includegraphics[width=0.25\textwidth]{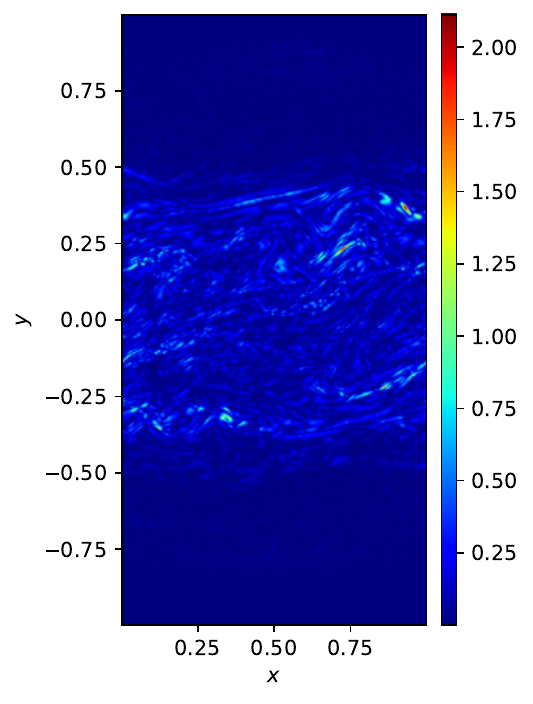}\label{fig:KHI_db_wo_CGL}}~
		\subfigure[$|(\nabla\cdot\B)_{i,j}|$ for GLM-CGL at time $t=20$]{\includegraphics[width=0.25\textwidth]{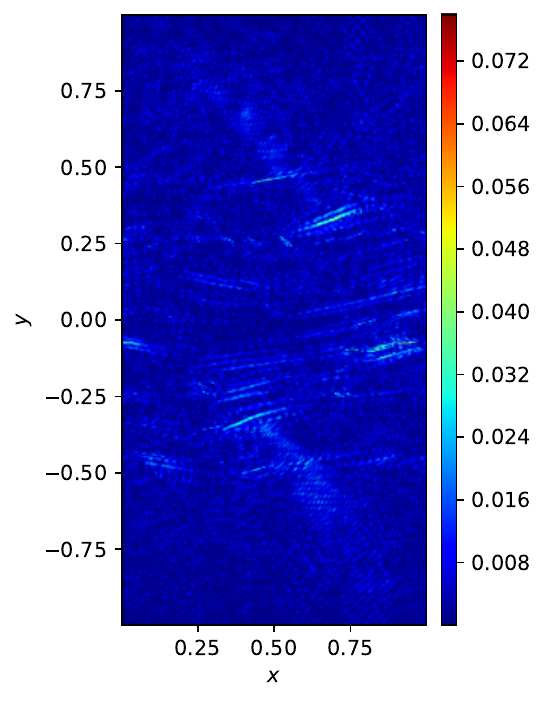}\label{fig:KHI_db_w_CGL}}
        \subfigure[$\|\nabla\cdot\B\|_{1}$ and $\|\nabla\cdot\B\|_{2}$ for CGL and GLM-CGL at time $t=20$]{\includegraphics[width=0.35\textwidth]{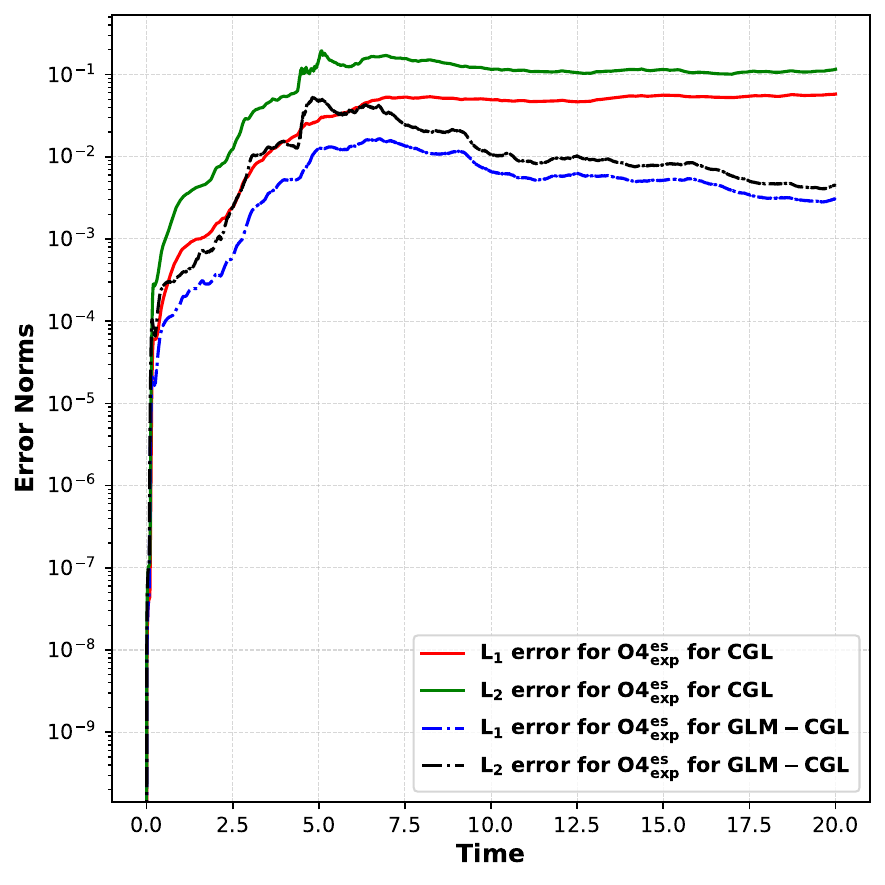}\label{fig:KHI_CGL_error}}\\
         \subfigure[$|(\nabla\cdot\B)_{i,j}|$ for isotropic CGL at time $t=20$]{\includegraphics[width=0.245\textwidth]{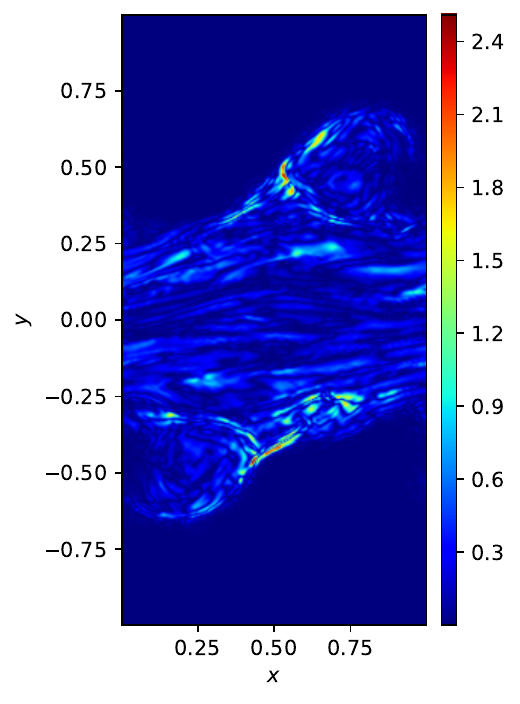}\label{fig:KHI_db_wo}}~
		\subfigure[$|(\nabla\cdot\B)_{i,j}|$ for isotropic GLM-CGL at time $t=20$]{\includegraphics[width=0.245\textwidth]{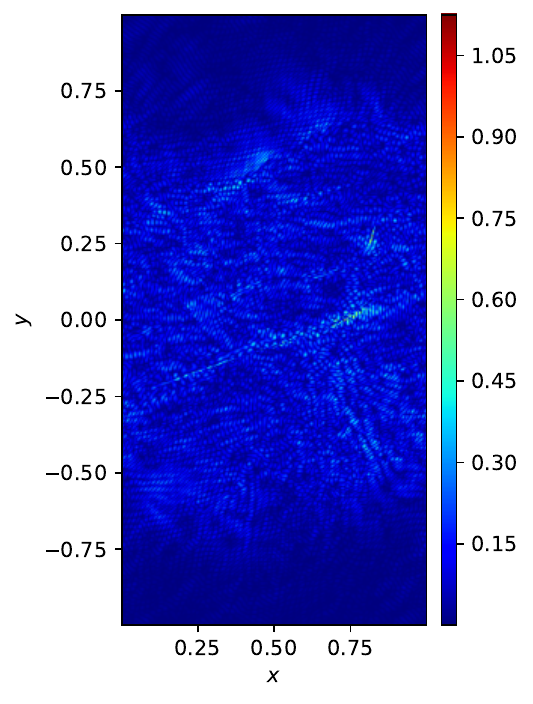}\label{fig:KHI_db_w}}
        \subfigure[$\|\nabla\cdot\B\|_{1}$ and $\|\nabla\cdot\B\|_{2}$ for isotropic CGL and isotropic GLM-CGL at time $t=20$]{\includegraphics[width=0.35\textwidth]{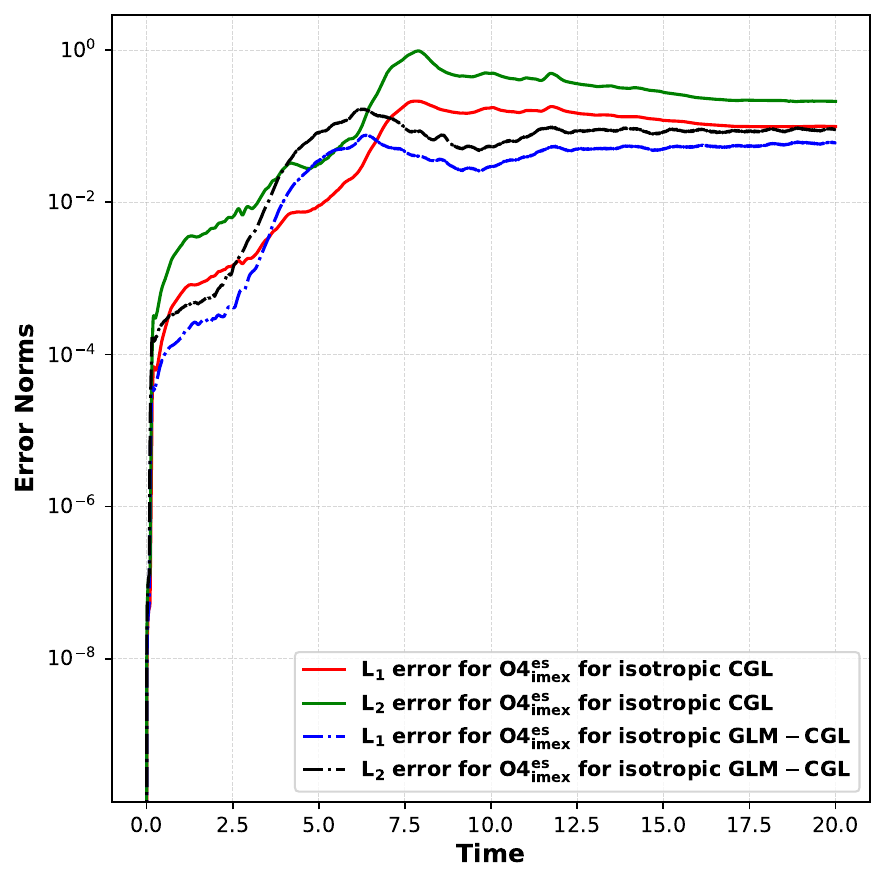}\label{fig:KHI_MHD_error}}\\
		\caption{\textbf{\nameref{test:KHI}}: Plots of $|(\nabla\cdot\B)_{i,j}|$ at times $t=20$, together with the time evolution of the magnetic field divergence constraint errors till time $t=20$, obtained using the $\ofe$ schemes for the CGL and GLM-CGL and the $\ofi$ scheme for the isotropic CGL and isotropic GLM-CGL.}
		\label{fig:KHI_CGL_div}
	\end{center}
\end{figure}

\begin{table}[ht]
\centering
\renewcommand{\arraystretch}{1.4}
\begin{tabular}{|c|c|c|c|c|c|c|}
\hline
\multirow{2}{*}{Scheme} &
\multirow{2}{*}{System} &
$\rho_{\min}$ &
$(p_{\parallel})_{\min}$ &
$(p_{\perp})_{\min}$ &
$\|\nabla\!\cdot\!\mathbf{B}\|_{1}$ &
$\|\nabla\!\cdot\!\mathbf{B}\|_{2}$ \\
&&&&&&\\
\hline
\multirow{2}{*}{$\ofe$}
& CGL
& 8.316871E-01 & 3.821701E-01 & 5.780448E-01 &  5.780362E-02 & 1.158326E-01  \\ \cline{2-7}

& GLM--CGL
& 8.644874E-01 & 4.346404E-01 & 5.901851E-01 & 3.077761E-03 & 4.528463E-03  \\  \hline

\multirow{2}{*}{$\ofi$}
& Isotropic CGL
& 8.446233E-01 & 5.103078E-01 & 5.103083E-01 &   9.878701E-02 & 2.119360E-01   \\ \cline{2-7}

& Isotropic GLM--CGL
&  8.395822E-01 & 5.225705E-01 & 5.225694E-01 &  6.047111E-02 & 9.038694E-02   \\ \hline

\end{tabular}
\caption{\textbf{\nameref{test:KHI}}: Minimum values of density $\rho$, parallel pressure $p_{\parallel}$, and perpendicular pressure $p_{\perp}$, together with the value of $L_1$ and $L_2$ norm of $\nabla\cdot\B$ at final time.}
\label{tab:div_KHI}
\end{table}
\section{Conclusion}
\label{sec:conc}
In this article, we have presented a divergence-diminishing strategy based on the GLM formulation of the CGL equations. This is inspired by a similar strategy for MHD equations. The resulting system is named the GLM-CGL model. We then rewrite the system in such a way that the non-conservative terms do not contribute to entropy production. Furthermore, the conservative part is then symmetrized. The resulting reformulation makes the GLM-CGL system ideal for proposing entropy-stable schemes.

We then design higher-order entropy stable finite difference schemes for the system by designing an entropy conservative flux and a higher-order entropy diffusion operator. The entropy diffusion operator uses the entropy-scaled right eigenvectors. We also consider a source term to compute the isotropic solutions.

For the numerical test cases, we consider several 1D and 2D test cases. We demonstrate that in all the test cases, the proposed schemes have the expected numerical order of accuracy. Also, the isotropic solutions are comparable to the MHD solutions. Furthermore, in each numerical test case, the GLM-CGL formulation is observed to be superior in diminishing the divergence of the magnetic field.


\section*{Acknowledgements}
Dinshaw S. Balsara and Harish Kumar acknowledge support from a Vajra award (VJR/2018/00129).

\printcredits

\section*{Declarations}
\textbf{Conflict of interest} The authors declare that they have no Conflict of interest.

\section*{\revg{Data availability}}
Data will be made available on request.

\bibliographystyle{unsrt}

\bibliography{cas-refs}

\appendix
\section{Eigenvalues and admissible domain of GLM-CGL system}\label{Eiegn_value_admissible domain}
Following~\cite{kato1966propagation,singh2024entropy,singh2024eigen}, the set of eigenvalues of GLM-CGL system~\eqref{eq:glm_cgl_con} in $x$-direction is given below,
\begin{equation}
	\tilde{\bb{\Lambda}}_x=\left\{v_x,~ v_x, ~\frac{1}{2}\left(v_x \pm \sqrt{4 c_h^2 + v_x^2}\right), ~v_x\pm c_a,~v_x\pm c_f,~v_x\pm c_s\right\}
	\label{eq:fulleigen_glm_cgl_x}
\end{equation}
where,
\begin{align*}
	& c_a= \sqrt{{\frac{B_x^2}{\rho}-\frac{(\pll-p_\perp){b}_x^2}{\rho}}},&\\&
	c_f= \frac{1}{\sqrt{2\rho}}\big[|\B|^2+2p_\bot +b_x^2(2\pll-\per) + \{(|\B|^2+2\per +b_x^2(2\pll-\per))^2&\\&~~~~~~+4(\per^2b_x^2(1-b_x^2)-3\pll \per b_x^2(2-b_x^2)+3\pll^2 b_x^4-3B_x^2 \pll)\}^{\frac{1}{2}}\big]^{\frac{1}{2}},&\\&
	c_s=\frac{1}{\sqrt{2\rho}}\big[|\B|^2+2\per +b_x^2(2\pll-\per) - \{(|\B|^2+2\per +b_x^2(2\pll-\per))^2&\\&~~~~~~+4(\per^2b_x^2(1-b_x^2)-3\pll \per b_x^2(2-b_x^2)+3\pll^2 b_x^4-3B_x^2 \pll )\}^{\frac{1}{2}}\big]^{\frac{1}{2}}.
\end{align*}
Here, $v_x$ define the entropy and pressure anisotropy wave and $\frac{1}{2}\left(v_x \pm \sqrt{4 c_h^2 + v_x^2}\right)$ define the right and left going GLM waves. Similarly, $v_x\pm c_a$ defines the right- and left-going Alfv\'en waves. Also, $v_x\pm c_f$ and $v_x\pm c_s$ are the right and left going fast and slow waves. In the GLM-CGL system, we have extra eigenvalues corresponding to the right and left going GLM waves, and all other eigenvalues are equal to the CGL system. In~\cite{kato1966propagation}, the authors show that the CGL system is not always hyperbolic, and additional constraints need to be imposed to ensure hyperbolicity. We define
$$
p_m=\frac{p_{\bot}^2}{6p_{\bot}+3|\B|^2},~~\text{and}~~p_M={|\B|^2}+p_{\bot}.
$$
Then, the admissible domain for the CGL system is given by,
\begin{equation*}
	\Omega_{CGL}=\{\con\in \mathbb{R}^{9} |~\rho>0,~\pll>0,\per>0,~p_{m} \leq \pll \leq p_{M} \}.
\end{equation*}
The solution domain $\Omega$ then must be partitioned because it was observed in \cite{kato1966propagation} that the slow wave is not always slower than the Alfv\'en wave. So, we impose
\begin{enumerate}
	\item $p_{m}\le \pll \le\frac{p_{M}}{4},~~~~~~~~~~\text{if}~~~~~~~ c_s\le c_{a}\le c_f,$
	\item $\frac{p_{M}}{4}\le \pll \le\frac{p_{M}}{4}+\frac{3p_{m}}{4},~~\text{if}~~~~~~ c_s\le c_{a}< c_f,$
	\item $\frac{p_{M}}{4}+\frac{3p_{m}}{4}\le \pll \le p_{M},~~\text{if}~~~~~~ c_a\le c_{s}< c_f.$
\end{enumerate}
With these conditions, the system is hyperbolic. As discussed above, the GLM-CGL system will also be hyperbolic under the same assumptions with the extended admissible domain
\begin{equation}
	\label{eq:cgl_domain_new}
	\Omega_{GLM-CGL}=\{\con\in \mathbb{R}^{10} |~\rho>0,~\pll>0,\per>0,~p_{m} \leq \pll \leq p_{M} \}.
\end{equation}
 Similarly, the eigenvalues in $y$-direction are,
\begin{equation}
	\tilde{\bb{\Lambda}}_y=\left\{v_y,~ v_y, ~\frac{1}{2}\left(v_y \pm \sqrt{4 c_h^2 + v_y^2}\right), ~v_y\pm c_a,~v_y\pm c_f,~v_y\pm c_s\right\}
	\label{eq:fulleigen_glm_cgl_y}
\end{equation}
where,
\begin{align*}
	& c_a= \sqrt{{\frac{B_y^2}{\rho}-\frac{(\pll-p_\perp){b}_y^2}{\rho}}},&\\&
	c_f= \frac{1}{\sqrt{2\rho}}\big[|\B|^2+2p_\bot +b_y^2(2\pll-\per) + \{(|\B|^2+2\per +b_y^2(2\pll-\per))^2&\\&~~~~~~+4(\per^2b_y^2(1-b_y^2)-3\pll \per b_y^2(2-b_y^2)+3\pll^2 b_y^4-3B_y^2 \pll)\}^{\frac{1}{2}}\big]^{\frac{1}{2}},&\\&
	c_s=\frac{1}{\sqrt{2\rho}}\big[|\B|^2+2\per +b_y^2(2\pll-\per) - \{(|\B|^2+2\per +b_y^2(2\pll-\per))^2&\\&~~~~~~+4(\per^2b_y^2(1-b_y^2)-3\pll \per b_y^2(2-b_y^2)+3\pll^2 b_y^4-3B_y^2 \pll )\}^{\frac{1}{2}}\big]^{\frac{1}{2}}.
\end{align*}
Following~\cite{derigs2018ideal}, instead of using the above eigenvalues, we consider eigenvalues of the system~\eqref{eq:glm_cgl_final} for the $x$-direction, which is given below:
\begin{equation}
	\bb{\Lambda_x}=\left\{v_x,~ v_x, v_x \pm c_h, ~v_x\pm c_a,~v_x\pm c_f,~v_x\pm c_s\right\}
	\label{eq:fulleigen_glm_cgl_psi_x}
\end{equation}
where,
\begin{align*}
	& c_a= \sqrt{{\frac{B_x^2}{\rho}-\frac{(\pll-p_\perp){b}_x^2}{\rho}}},&\\&
	c_f= \frac{1}{\sqrt{2\rho}}\big[|\B|^2+2p_\bot +b_x^2(2\pll-\per) + \{(|\B|^2+2\per +b_x^2(2\pll-\per))^2&\\&~~~~~~+4(\per^2b_x^2(1-b_x^2)-3\pll \per b_x^2(2-b_x^2)+3\pll^2 b_x^4-3B_x^2 \pll)\}^{\frac{1}{2}}\big]^{\frac{1}{2}},&\\&
	c_s=\frac{1}{\sqrt{2\rho}}\big[|\B|^2+2\per +b_x^2(2\pll-\per) - \{(|\B|^2+2\per +b_x^2(2\pll-\per))^2&\\&~~~~~~+4(\per^2b_x^2(1-b_x^2)-3\pll \per b_x^2(2-b_x^2)+3\pll^2 b_x^4-3B_x^2 \pll )\}^{\frac{1}{2}}\big]^{\frac{1}{2}}.
\end{align*}
We consider eigenvalues of the system~\eqref{eq:glm_cgl_final} for $y$-direction, which is given below:
\begin{equation}
	\bb{\Lambda_y}=\left\{v_y,~ v_y, ~v_y \pm c_h, ~v_y\pm c_a,~v_y\pm c_f,~v_y\pm c_s\right\}
	\label{eq:fulleigen_glm_cgl_psi_y}
\end{equation}
where,
\begin{align*}
	& c_a= \sqrt{{\frac{B_y^2}{\rho}-\frac{(\pll-p_\perp){b}_y^2}{\rho}}},&\\&
	c_f= \frac{1}{\sqrt{2\rho}}\big[|\B|^2+2p_\bot +b_y^2(2\pll-\per) + \{(|\B|^2+2\per +b_y^2(2\pll-\per))^2&\\&~~~~~~+4(\per^2b_y^2(1-b_y^2)-3\pll \per b_y^2(2-b_y^2)+3\pll^2 b_y^4-3B_y^2 \pll)\}^{\frac{1}{2}}\big]^{\frac{1}{2}},&\\&
	c_s=\frac{1}{\sqrt{2\rho}}\big[|\B|^2+2\per +b_y^2(2\pll-\per) - \{(|\B|^2+2\per +b_y^2(2\pll-\per))^2&\\&~~~~~~+4(\per^2b_y^2(1-b_y^2)-3\pll \per b_y^2(2-b_y^2)+3\pll^2 b_y^4-3B_y^2 \pll )\}^{\frac{1}{2}}\big]^{\frac{1}{2}}.
\end{align*}
\section{Matrices for the Non-Conservative Terms}\label{NC_CGL_Matrices}

The matrix $\textcolor{blue}{\bc_{x}(\con)}$ is given by,
\[\scalebox{0.7}{\textcolor{blue}{$\begin{pmatrix}
			0&0&0&0&0&0&0&0&0&\textcolor{red}{0}\\
			-\frac{b_x^2 |\bu|^2}{2}& b_x^2 v_x& b_x^2 v_y& b_x^2 v_z&\frac{3}{2} b_x^2 & -b_x^2 & b_x^2 B_x+\frac{2\Gamma_x b_x}{|\B|} & b_x^2 B_y - \frac{2\DP b_{yxx}}{|\B|} & b_x^2 B_z-\frac{2\DP b_{zxx}}{|\B|}&\textcolor{red}{b_x^2 \Psi}\\
			-\frac{b_x b_y |\bu|^2}{2} & b_x b_y v_x & b_x b_y v_y & b_x b_y v_z & \frac{3}{2}b_x b_y & -b_x b_y & b_x b_y B_x+\frac{\Gamma_x b_y - \DP b_{yxx}}{|\B|} & b_x b_y B_y+\frac{\Gamma_y b_x - \DP b_{xyy}}{|\B|} & b_x b_y B_z - \frac{2\DP b_{xyz}}{|\B|}&\textcolor{red}{b_x b_y \Psi}\\
			-\frac{b_x b_z |\bu|^2}{2} & b_x b_z v_x & b_x b_z v_y & b_x b_z v_z & \frac{3}{2}b_x b_z & -b_x b_z & b_x b_z B_x+\frac{\Gamma_x b_z - \DP b_{zxx}}{|\B|} & b_x b_z B_y-\frac{2\DP b_{xyz}}{|\B|} & b_x b_z B_z + \frac{\Gamma_z b_x - \DP b_{xzz}}{|\B|}&\textcolor{red}{b_x b_z \Psi}\\
			-\frac{2\pll b_x}{\rho}(\bhat\cdot\bu) & \frac{2\pll b_x}{\rho}b_x & \frac{2\pll b_x}{\rho}b_y & \frac{2\pll b_x}{\rho}b_z & 0 & 0 & 0 & 0 & 0&\textcolor{red}{0}\\
			\Upsilon_1^x & \Upsilon_2^x & \Upsilon_3^x & \Upsilon_4^x & \frac{3}{2} b_x(\bhat\cdot\bu) & -b_x(\bhat\cdot\bu) & b_x(\bhat\cdot\bu)B_x + \Theta^x_1 & b_x(\bhat\cdot\bu)B_y + \Theta^x_2 & b_x(\bhat\cdot\bu)B_z + \Theta^x_3&\textcolor{red}{b_x(\bhat\cdot\bu)\Psi}\\
			0&0&0&0&0&0&0&0&0&\textcolor{red}{0}\\
			0&0&0&0&0&0&0&0&0&\textcolor{red}{0}\\
			0&0&0&0&0&0&0&0&0&\textcolor{red}{0}\\
			0&0&0&0&0&0&0&0&0&\textcolor{red}{0}
		\end{pmatrix}$}}\]
and matrix $\textcolor{blue}{\bc_{y}(\con)}$ is given as,
\[\scalebox{0.7}{\textcolor{blue}{$\begin{pmatrix}
			0&0&0&0&0&0&0&0&0&\textcolor{red}{0}\\
			-\frac{b_x b_y |\bu|^2}{2} & b_x b_y v_x & b_x b_y v_y & b_x b_y v_z & \frac{3}{2}b_x b_y & -b_x b_y & b_x b_y B_x+\frac{\Gamma_x b_y-\DP b_{yxx}}{|\B|} & b_x b_y B_y+\frac{\Gamma_y b_x-\DP b_{xyy}}{|\B|} & b_x b_y B_z -\frac{2\DP b_{xyz}}{|\B|}&\textcolor{red}{b_x b_y \Psi}\\
			-\frac{b_y^2 |\bu|^2}{2} & b_y^2 v_x & b_y^2 v_y & b_y^2 v_z & \frac{3}{2} b_y^2 & -b_y^2 & b_y^2 B_x - \frac{2\DP b_{xyy}}{|\B|} & b_y^2 B_y + \frac{2\Gamma_y b_y}{|\B|} & b_y^2 B_z-\frac{2\DP b_{zyy}}{|\B|}&\textcolor{red}{b_y^2 \Psi} \\
			-\frac{b_y b_z |\bu|^2}{2} & b_y b_z v_x & b_y b_z v_y & b_y b_z v_z & \frac{3}{2}b_y b_z & -b_y b_z & b_y b_z B_x - \frac{2\DP b_{xyz}}{|\B|} & b_y b_z B_y + \frac{\Gamma_y b_z - \DP b_{zyy}}{|\B|} & b_y b_z B_z + \frac{\Gamma_z b_y - \DP b_{yzz}}{|\B|}&\textcolor{red}{b_y b_z \Psi} \\
			-\frac{2\pll b_y}{\rho}(\bhat\cdot\bu) & \frac{2\pll b_y}{\rho}b_x & \frac{2\pll b_y}{\rho}b_y & \frac{2\pll b_y}{\rho}b_z & 0 & 0 & 0 & 0 & 0&\textcolor{red}{0}\\
			\Upsilon_1^y & \Upsilon_2^y & \Upsilon^y_3 & \Upsilon_4^y & \frac{3}{2}b_y(\bhat\cdot\bu) & -b_y(\bhat\cdot\bu) & b_y(\bhat\cdot\bu)B_x + \Theta_1^y & b_y(\bhat\cdot\bu)B_y + \Theta_2^y & b_y(\bhat\cdot\bu)B_z + \Theta_3^y&\textcolor{red}{b_y(\bhat\cdot\bu)\Psi}\\
			0&0&0&0&0&0&0&0&0&\textcolor{red}{0}\\
			0&0&0&0&0&0&0&0&0&\textcolor{red}{0}\\
			0&0&0&0&0&0&0&0&0&\textcolor{red}{0}
		\end{pmatrix}$}}\]

where,\\
\scalebox{0.8}{$\DP = (\pll - \per),~\Gamma_i = \DP (1-b_i^2),~\forall i\in\left\{x,y,z\right\},~b_{lmn}=b_l b_m b_n,~\forall~l,m,n\in\left\{x,y,z\right\}$}\\
\scalebox{0.8}{$\Theta^x_1 = \{(\bhat\cdot\bu)+ b_x v_x\}\left(\frac{\Gamma_x}{|\B|}\right)- \DP\frac{b^2_x b_y v_y}{|\B|} - \DP\frac{b^2_x b_z v_z}{|\B|},~\Theta^x_2 = \Gamma_y \left(\frac{b_x v_y}{|\B|}\right) - \DP\frac{b_x b_y b_z v_z}{|\B|} - \{(\bhat\cdot\bu)+ b_x v_x\}\frac{\DP b_x b_y}{|\B|},$}\\
\scalebox{0.8}{$\Theta^x_3 = \Gamma_z \left(\frac{b_x v_z}{|\B|}\right) - \DP\frac{b_x b_y b_z v_y}{|\B|} - \{(\bhat\cdot\bu)+b_x v_x\}\frac{\DP b_x b_z}{|\B|},~\Theta_1^y = \Gamma_x \left(\frac{b_y v_x}{|\B|}\right) - \DP\frac{b_x b_y b_z v_z}{|\B|} - \{(\bhat\cdot\bu)+ b_y v_y\}\frac{\DP b_x b_y}{|\B|},$}\\
\scalebox{0.8}{$\Theta_2^y = \{(\bhat\cdot\bu) + b_y v_y\}\left(\frac{\Gamma_y}{|\B|}\right)- \DP\frac{b^2_y b_x v_x}{|\B|} - \DP\frac{b^2_y b_z v_z}{|\B|},~\Theta_3^y = \Gamma_z \left(\frac{b_y v_z}{|\B|}\right) - \DP\frac{b_x b_y b_z v_x}{|\B|} - \{(\bhat\cdot\bu)+b_y v_y\}\frac{\DP b_y b_z}{|\B|},$}\\
\scalebox{0.8}{$\Upsilon_1^x = -b_x(\bhat\cdot\bu)\frac{|\bu|^2}{2}-\frac{\DP b_x (\bhat\cdot\bu)}{\rho},~\Upsilon_2^x = b_x(\bhat\cdot\bu)v_x + \frac{\DP b^2_x}{\rho},~\Upsilon_3^x = b_x(\bhat\cdot\bu)v_y +\frac{\DP b_x b_y}{\rho},~\Upsilon_4^x = b_x(\bhat\cdot\bu)v_z+\frac{\DP b_x b_z}{\rho}$}\\
\scalebox{0.8}{$\Upsilon_1^y = -b_y(\bhat\cdot\bu)\frac{|\bu|^2}{2} - \frac{\DP b_y(\bhat\cdot\bu)}{\rho},~\Upsilon_2^y = b_y(\bhat\cdot\bu)v_x+\frac{\DP b_x b_y}{\rho},~\Upsilon_3^y = b_y(\bhat\cdot\bu)v_y+\frac{\DP b_y^2}{\rho}$ and $\Upsilon_4^y = b_y(\bhat\cdot\bu)v_z+\frac{\DP b_y b_z}{\rho}$}.\\

\section{Entropy scaling of right eigenvectors}
\label{scaledrev}
The eigenvectors for~\eqref{eq:glm_cgl_mhd} are detailed in Section~\ref{App:RE}. The entropy-scaled eigenvectors are derived in Section~\ref{App:Barth_process}. We perform all of the analysis using primitive variables $\textbf{w} = \{\rho,v_x,v_y,v_z,\pll,\per,B_x,B_y,B_z,\Psi\}$.

\subsection{Right eigenvectors}\label{App:RE}
The right eigenvectors for the primitive formulation of the system~\eqref{eq:glm_cgl_mhd} with respect to the primitive variables $\textbf{w}$ in $x$-direction for the eigenvalues $K_x$  (define in Section~\ref{sec:sym}) are given as follows:
\begin{gather*}R^{1}_{v_x}=
	\begin{pmatrix}
		1\\
		0\\
		0\\
		0\\
		0\\
		0\\
		0\\
		0\\
		0\\
		0
	\end{pmatrix},~R^{2}_{v_x}=
	\begin{pmatrix}
		0\\
		0\\
		0\\
		0\\
		1\\
		0\\
		0\\
		0\\
		0\\
		0
	\end{pmatrix},~R_{v_x\pm c_h}=
	\begin{pmatrix}
		0\\
		0\\
		0\\
		0\\
		0\\
		0\\
		\pm 1\\
		0\\
		0\\
		1\\
	\end{pmatrix},~R_{v_x\pm v_{ax}}=
	\begin{pmatrix}
		0\\
		0\\
		\pm{\beta_z}\\
		\mp{\beta_y}\\
		0\\
		0\\
		0\\
		-{\beta_z}sgn(B_x)\sqrt{\rho}\\
		{\beta_y}sgn(B_x)\sqrt{\rho}\\
		0\\
	\end{pmatrix},
\end{gather*}
\begin{gather*}R_{v_x\pm c_f}=
	\begin{pmatrix}
		\alpha_f \rho\\
		\pm \alpha_f c_f\\
		\mp \alpha_s c_s \beta_y sgn(B_x)\\
		\mp \alpha_s c_s \beta_z sgn(B_x)\\
		\alpha_f \pll\\
		\alpha_f \rho a^2\\
		0\\
		\alpha_s a\beta_y\sqrt{\rho}\\
		\alpha_s a\beta_z\sqrt{\rho}\\
		0
	\end{pmatrix},~R_{v_x\pm c_s}=
	\begin{pmatrix}
		\alpha_s \rho\\
		\pm \alpha_s c_s\\
		\pm \alpha_f c_f \beta_y sgn(B_x)\\
		\pm \alpha_fc_f \beta_z sgn(B_x)\\
		\alpha_s \pll\\
		\alpha_s \rho a^2\\
		0\\
		-\alpha_f a\beta_y\sqrt{\rho} \\
		-\alpha_f a\beta_z\sqrt{\rho} \\
		0
	\end{pmatrix},
\end{gather*}
where,
\begin{align*}
	& v^2_{ax}= \frac{B_x^2}{\rho},~v^2_a=\frac{|\B|^2}{\rho}, ~a^2=\frac{2P_\perp}{\rho},&\\&
	c^2_{f,s}=\frac{1}{2}\bigg[(v_a^2+a^2)\pm\sqrt{(v_a^2+a^2)^2-4v^2_{ax}a^2}\bigg].&\\&
	\alpha_f^2=\frac{a^2-c_s^2}{c_f^2-c_s^2},~\alpha_s^2=\frac{c_f^2-a^2}{c_f^2-c_s^2},~
	\beta_y=\frac{B_y}{\sqrt{B_y^2+B_z^2}},~\beta_z=\frac{B_z}{\sqrt{B_y^2+B_z^2}}.
\end{align*}
\revg{In addition, to avoid singularities in the definitions of $\beta_y$ and $\beta_z$, we employ the regularization
\begin{equation*}
    \beta_y=\beta_z=\frac{1}{\sqrt{2}},
\end{equation*}
whenever $\sqrt{B_y^2+B_z^2}<10^{-10}$. Otherwise, the standard definitions of $\beta_y$ and $\beta_z$ are used.}

Similarly, for the $y$-direction, eigenvectors corresponding to the eigenvalues $K_y$ (define in Section~\ref{sec:sym}) are given below:
\begin{gather*}R^{1}_{v_y}=
	\begin{pmatrix}
		1\\
		0\\
		0\\
		0\\
		0\\
		0\\
		0\\
		0\\
		0\\
		0
	\end{pmatrix},~R^{2}_{v_y}=
	\begin{pmatrix}
		0\\
		0\\
		0\\
		0\\
		1\\
		0\\
		0\\
		0\\
		0\\
		0
	\end{pmatrix},~R_{v_y\pm c_h}=
	\begin{pmatrix}
		0\\
		0\\
		0\\
		0\\
		0\\
		0\\
		0\\
		\pm 1\\
		0\\
		1\\
	\end{pmatrix},~R_{v_y\pm v_{ay}}=
	\begin{pmatrix}
		0\\
		\pm{\beta_z}\\
		0\\
		\mp{\beta_x}\\
		0\\
		0\\
		-{\beta_z}sgn(B_y)\sqrt{\rho}\\
		0\\
		{\beta_x}sgn(B_y)\sqrt{\rho}\\
		0
	\end{pmatrix},
\end{gather*}
\begin{gather*}R_{v_y\pm c_f}=
	\begin{pmatrix}
		\alpha_f \rho\\
		\mp \alpha_s c_s \beta_x sgn(B_y)\\
		\pm \alpha_f c_f\\
		\mp \alpha_s c_s \beta_z sgn(B_y)\\
		\alpha_f \pll\\
		\alpha_f \rho a^2\\
		\alpha_s a\beta_x\sqrt{\rho}\\
		0\\
		\alpha_s a\beta_z\sqrt{\rho}\\
		0
	\end{pmatrix},~R_{v_y\pm c_s}=
	\begin{pmatrix}
		\alpha_s \rho\\
		\pm \alpha_f c_f \beta_x sgn(B_y)\\
		\pm \alpha_s c_s\\
		\pm \alpha_fc_f \beta_z sgn(B_y)\\
		\alpha_s \pll\\
		\alpha_s \rho a^2\\
		-\alpha_f a\beta_x\sqrt{\rho} \\
		0\\
		-\alpha_f a\beta_z\sqrt{\rho} \\
		0
	\end{pmatrix},
\end{gather*}
where,
\begin{align*}
	& v^2_{ay}= \frac{B_y^2}{\rho},~v^2_a=\frac{|\B|^2}{\rho}, ~a^2=\frac{2P_\perp}{\rho},&\\&
	c^2_{f,s}=\frac{1}{2}\bigg[(v_a^2+a^2)\pm\sqrt{(v_a^2+a^2)^2-4v^2_{ay}a^2}\bigg].&\\&
	\alpha_f^2=\frac{a^2-c_s^2}{c_f^2-c_s^2},~\alpha_s^2=\frac{c_f^2-a^2}{c_f^2-c_s^2},~
	\beta_x=\frac{B_x}{\sqrt{B_x^2+B_z^2}},~\beta_z=\frac{B_z}{\sqrt{B_x^2+B_z^2}}.
\end{align*}
\revg{In addition, to avoid singularities in the definitions of $\beta_x$ and $\beta_z$, we employ the regularization
\begin{equation*}
    \beta_x=\beta_z=\frac{1}{\sqrt{2}},
\end{equation*}
whenever $\sqrt{B_x^2+B_z^2}<10^{-10}$. Otherwise, the standard definitions of $\beta_x$ and $\beta_z$ are used.}

The set of eigenvectors in both directions is linearly independent. 
\subsection{Entropy-scaled right eigenvectors via Barth scaling process}\label{App:Barth_process}
Here we will follow the Barth scaling process~\cite{barth1999numerical} and calculate the entropy-scaled right eigenvectors for $x$-direction. For the system~\eqref{eq:glm_cgl_mhd}, let $R^x$ be the right eigenvector matrix for the  system~\eqref{eq:glm_cgl_mhd} in terms of \revo{working variables} and $R_{\textbf{w}}^x$ be right eigenvectors in terms of the primitive variables described in~\ref{App:RE}. Then we have
\begin{equation*}
	R^x = \frac{\p \con}{\p \textbf{w}} R_{\textbf{w}}^x
\end{equation*}
where,  $\frac{\p \con}{\p \textbf{w}}$ is the Jacobian matrix for the change of variable, given by
\begin{align*}
	\frac{\p \con}{\p \textbf{w}}=
	\begin{pmatrix}
		1 & 0 & 0 & 0 & 0 & 0 & 0 & 0 & 0 & 0\\
		v_x & \rho & 0 & 0 & 0 & 0 & 0 & 0 & 0 & 0\\
		v_y & 0 & \rho & 0 & 0 & 0 & 0 & 0 & 0 & 0\\
		v_z & 0 & 0 & \rho & 0 & 0 & 0 & 0 & 0 & 0\\
		0 & 0 & 0 & 0 & 1 & 0 & 0 & 0 & 0 & 0\\
		\frac{\bu^2}{2} & \rho v_x & \rho v_y & \rho v_z & \frac{1}{2} & 1 & B_x & B_y & B_z & \Psi\\
		0 & 0 & 0 & 0 & 0 & 0 & 1 & 0 & 0 & 0\\
		0 & 0 & 0 & 0 & 0 & 0 & 0 & 1 & 0 & 0\\
		0 & 0 & 0 & 0 & 0 & 0 & 0 & 0 & 1 & 0\\
		0 & 0 & 0 & 0 & 0 & 0 & 0 & 0 & 0 & 1
	\end{pmatrix}.
\end{align*}
and the matrix $R_{\textbf{w}}^x$ is the right eigenvector matrix for the  system~\eqref{eq:glm_cgl_mhd} in terms of primitive variable (see~\ref{App:RE}), which is given below,
\begin{align*}
	R_{\textbf{w}}^x=
	\begin{pmatrix}
		R_{v_x-c_f} & R_{v_x-c_s} & R_{v_x-c_h} & R_{v_x-v_{ax}} & R^{1}_{v_x} & R^{2}_{v_x} & R_{v_x+v_{ax}}  & R_{v_x+c_h} &  R_{v_x+c_s} & R_{v_x+ c_f}\\
	\end{pmatrix}.
\end{align*}
Our aim is to find the scaling matrix $T^x$ such that the scaled right eigenvector matrix $\tilde{R^x}=R^x T^x$ satisfies
\begin{equation}
	\frac{\p \con}{\p \evar} = \tilde{R}^x \tilde{R^x}^{\top}
\end{equation}
where $\evar$ is the entropy variable vector as in~\eqref{eq:envar_glm}. Now we follow~\cite{barth1999numerical} and define the matrix 
\begin{equation*}
	\mathcal{Y}^x = (R^x_{\textbf{w}})^{-1} \frac{\p \textbf{w}}{\p \evar}\left(\frac{\p \con}{\p \textbf{w}_x}\right)^{-\top}(R^x_{\textbf{w}}) ^{-\top}
\end{equation*}
After doing some calculations, we get 
\begin{align*}
	\mathcal{Y}^x = \begin{pmatrix}
		\frac{1}{8\rho} & 0 & 0 & 0 & 0 & 0 & 0 & 0 & 0 & 0\\
		0 & \frac{1}{8\rho} & 0 & 0 & 0 & 0 & 0 & 0 & 0 & 0\\
		0 & 0 & \frac{\per}{4\rho} & 0 & 0 & 0 & 0 & 0 & 0 & 0\\
		0 & 0 & 0 & \frac{\per}{4\rho^2} & 0 & 0 & 0 & 0 & 0 & 0\\
		0 & 0 & 0 & 0 & \frac{\rho}{4} & \frac{\pll}{4} & 0 & 0 & 0 & 0\\
		0 & 0 & 0 & 0 & \frac{\pll}{4} &\frac{5 \pll^2}{4\rho} & 0 & 0 & 0 & 0 \\
		0 & 0 & 0 & 0 & 0 & 0 & \frac{\per}{4\rho^2}& 0 & 0 & 0\\
		0 & 0 & 0 & 0 & 0 & 0 & 0 & \frac{\per}{4\rho} & 0 & 0\\
		0 & 0 & 0 & 0 & 0 & 0 & 0 & 0 & \frac{1}{8\rho} & 0\\ 
		0 & 0 & 0 & 0 & 0 & 0 & 0 & 0 & 0 & \frac{1}{8\rho}
	\end{pmatrix}.
\end{align*}
Then the scaling matrix $T^x$ is the square root of $\mathcal{Y}^x$ is given below
\begin{align*}
	\begin{pmatrix}
		\frac{1}{2\sqrt{2\rho}} & 0 & 0 & 0 & 0 & 0 & 0 & 0 & 0 & 0\\
		0 & \frac{1}{2\sqrt{2\rho}} & 0 & 0 & 0 & 0 & 0 & 0 & 0 & 0\\
		0 & 0 & \frac{\sqrt{\per}}{2\sqrt{\rho}} & 0 & 0 & 0 & 0 & 0 & 0 & 0\\
		0 & 0 & 0 & \frac{\sqrt{\per}}{2\rho} & 0 & 0 & 0 & 0 & 0 & 0\\
		0 & 0 & 0 & 0 & \frac{\sqrt{\rho}(2 \pll + \rho )}{2 \sqrt{5 \pll^{2} + 4 \pll\rho + \rho^2}} &\frac{\pll\sqrt{\rho}}{2 \sqrt{5 \pll^{2} + 4 \pll\rho + \rho^2}} &  0 & 0 & 0 & 0\\
		0 & 0 & 0 & 0 & \frac{\pll\sqrt{\rho}}{2 \sqrt{5 \pll^{2} + 4 \pll\rho + \rho^2}} & \frac{\pll (5 \pll + 2 \rho)}{2 \sqrt{\rho (5 \pll^{2} + 4 \pll\rho + \rho^2)}}& 0 & 0 & 0 & 0\\
		0 & 0 & 0 & 0 & 0 & 0 & \frac{\sqrt{\per}}{2\rho}& 0 & 0 & 0\\
		0 & 0 & 0 & 0 & 0 & 0 & 0 & \frac{\sqrt{\per}}{2\sqrt{\rho}} & 0 & 0\\
		0 & 0 & 0 & 0 & 0 & 0 & 0 & 0 & \frac{1}{2\sqrt{2\rho}} & 0\\ 
		0 & 0 & 0 & 0 & 0 & 0 & 0 & 0 & 0 & \frac{1}{2\sqrt{2\rho}}
	\end{pmatrix}.
\end{align*}


In a similar manner, for the $y$-direction, we obtain the scaling matrix $T^y$.
%
Combining all the discussions, in $x$-direction, the entropy-scaled right eigenvectors matrix in terms of primitive variables is given by,
\begin{equation*}
	\tilde{R^x}_{\textbf{w}} = R^x_{\textbf{w}} T^x\\
\end{equation*}
\begin{align*}
	\tilde{R^x}_{\textbf{w}}=
	\begin{pmatrix}
		\tilde{R}_{v_x-c_f} & \tilde{R}_{v_x-c_s} & \tilde{R}_{v_x-c_h} & \tilde{R}_{v_x-v_{ax}} & \tilde{R}^{1}_{v_x} & \tilde{R}^{2}_{v_x} & \tilde{R}_{v_x+v_{ax}}  & \tilde{R}_{v_x+c_h} &  \tilde{R}_{v_x+c_s} & \tilde{R}_{v_x+ c_f}\\
	\end{pmatrix}.
\end{align*}
where, 
\begin{gather*}\tilde{R}^{1}_{v_x}=
	\begin{pmatrix}
		\frac{\sqrt{\rho}(2 \pll + \rho )}{2 \sqrt{5 \pll^{2} + 4 \pll\rho + \rho^2}}\\
		0\\
		0\\
		0\\
		\frac{\pll\sqrt{\rho}}{2 \sqrt{5 \pll^{2} + 4 \pll\rho + \rho^2}}\\
		0\\
		0\\
		0\\
		0\\
		0
	\end{pmatrix},~\tilde{R}^{2}_{v_x}=
	\begin{pmatrix}
		\frac{\pll\sqrt{\rho}}{2 \sqrt{5 \pll^{2} + 4 \pll\rho + \rho^2}}\\
		0\\
		0\\
		0\\
		\frac{\pll (5 \pll + 2 \rho)}{2 \sqrt{\rho (5 \pll^{2} + 4 \pll\rho + \rho^2)}}\\
		0\\
		0\\
		0\\
		0\\
		0
	\end{pmatrix},~\tilde{R}_{v_x\pm c_h}=
	\frac{1}{2}\begin{pmatrix}
		0\\
		0\\
		0\\
		0\\
		0\\
		0\\
		\pm\sqrt{\frac{\per}{\rho}}\\
		0\\
		0\\
		\sqrt{\frac{\per}{\rho}}
	\end{pmatrix},
\end{gather*}
\begin{gather*}\tilde{R}_{v_x\pm v_{ax}}=
	\frac{1}{2}\begin{pmatrix}
		0\\
		0\\
		\pm\frac{\sqrt{\per}}{\rho}{\beta_z}\\
		\mp\frac{\sqrt{\per}}{\rho}{\beta_y}\\
		0\\
		0\\
		0\\
		-\sqrt{\frac{\per}{\rho}}{\beta_z}\\
		\sqrt{\frac{\per}{\rho}}{\beta_y}\\
		0
	\end{pmatrix},~\tilde{R}_{v_x\pm c_f}=
	\frac{1}{2\sqrt{2}}\begin{pmatrix}
		\alpha_f \sqrt{\rho}\\
		\pm \frac{\alpha_f c_f}{\sqrt{\rho}}\\
		\mp \frac{\alpha_s c_s \beta_y}{\sqrt{\rho}}\\
		\mp \frac{\alpha_s c_s \beta_z}{\sqrt{\rho}}\\
		\alpha_f \frac{\pll}{\sqrt{\rho}}\\
		\alpha_f \sqrt{\rho} a^2\\
		0\\
		\alpha_s a\beta_y\\
		\alpha_s a\beta_z\\
		0
	\end{pmatrix},~\tilde{R}_{v_x\pm c_s}=
	\frac{1}{2\sqrt{2}}\begin{pmatrix}
		\alpha_s \sqrt{\rho}\\
		\pm \frac{\alpha_s c_s}{\sqrt{\rho}}\\
		\pm \frac{\alpha_f c_f \beta_y}{\sqrt{\rho}}\\
		\pm \frac{\alpha_f c_f \beta_z}{\sqrt{\rho}}\\
		\alpha_s \frac{\pll}{\sqrt{\rho}}\\
		\alpha_s \sqrt{\rho} a^2\\
		0\\
		-\alpha_f a\beta_y\\
		-\alpha_f a\beta_z\\
		0
	\end{pmatrix}
\end{gather*}
Similarly, in the $y$-direction, the entropy-scaled right eigenvectors matrix in terms of primitive variables is given by, 
\begin{equation*}
	\tilde{R^y}_{\textbf{w}} = R^y_{\textbf{w}} T^y\\
\end{equation*}
\begin{align*}
	\tilde{R^y}_{\textbf{w}}=
	\begin{pmatrix}
		\tilde{R}_{v_y-c_f} & \tilde{R}_{v_y-c_s} & \tilde{R}_{v_y-c_h} & \tilde{R}_{v_y-v_{ay}} & \tilde{R}^{1}_{v_y} & \tilde{R}^{2}_{v_y} & \tilde{R}_{v_y+v_{ay}}  & \tilde{R}_{v_y+c_h} &  \tilde{R}_{v_y+c_s} & \tilde{R}_{v_y+ c_f}\\
	\end{pmatrix}.
\end{align*}
where,
\begin{gather*}\tilde{R}^{1}_{v_y}=
	\begin{pmatrix}
		\frac{\sqrt{\rho}(2 \pll + \rho )}{2 \sqrt{5 \pll^{2} + 4 \pll\rho + \rho^2}}\\
		0\\
		0\\
		0\\
		\frac{\pll\sqrt{\rho}}{2 \sqrt{5 \pll^{2} + 4 \pll\rho + \rho^2}}\\
		0\\
		0\\
		0\\
		0\\
		0
	\end{pmatrix},~\tilde{R}^{2}_{v_y}=
	\begin{pmatrix}
		\frac{\pll\sqrt{\rho}}{2 \sqrt{5 \pll^{2} + 4 \pll\rho + \rho^2}}\\
		0\\
		0\\
		0\\
		\frac{\pll (5 \pll + 2 \rho)}{2 \sqrt{\rho (5 \pll^{2} + 4 \pll\rho + \rho^2)}}\\
		0\\
		0\\
		0\\
		0\\
		0
	\end{pmatrix},~\tilde{R}_{v_y\pm c_h}=
	\begin{pmatrix}
		0\\
		0\\
		0\\
		0\\
		0\\
		0\\
		0\\
		\pm\sqrt{\frac{\per}{\rho}}\\
		0\\
		\sqrt{\frac{\per}{\rho}}
	\end{pmatrix},
\end{gather*}
\begin{gather*}\tilde{R}_{v_y\pm v_{ay}}=
	\frac{1}{2}\begin{pmatrix}
		0\\
		\pm\frac{\sqrt{\per}}{\rho}{\beta_z}\\
		0\\
		\mp\frac{\sqrt{\per}}{\rho}{\beta_x}\\
		0\\
		0\\
		-\sqrt{\frac{\per}{\rho}}{\beta_z}\\
		0\\
		\sqrt{\frac{\per}{\rho}}{\beta_x}\\
		0
	\end{pmatrix},~\tilde{R}_{v_y\pm c_f}=
	\frac{1}{2\sqrt{2}}\begin{pmatrix}
		\alpha_f \sqrt{\rho}\\
		\mp \frac{\alpha_s c_s \beta_x}{\sqrt{\rho}}\\
		\pm \frac{\alpha_f c_f}{\sqrt{\rho}}\\
		\mp \frac{\alpha_s c_s \beta_z}{\sqrt{\rho}}\\
		\alpha_f \frac{\pll}{\sqrt{\rho}}\\
		\alpha_f \sqrt{\rho} a^2\\
		\alpha_s a\beta_x\\
		0\\
		\alpha_s a\beta_z\\
		0
	\end{pmatrix},~\revg{\tilde{R}_{v_y\pm c_s}}=
	\frac{1}{2\sqrt{2}}\begin{pmatrix}
		\alpha_s \sqrt{\rho}\\
		\pm \frac{\alpha_f c_f \beta_x}{\sqrt{\rho}}\\
		\pm \frac{\alpha_s c_s}{\sqrt{\rho}}\\
		\pm \frac{\alpha_f c_f \beta_z}{\sqrt{\rho}}\\
		\alpha_s \frac{\pll}{\sqrt{\rho}}\\
		\alpha_s \sqrt{\rho} a^2\\
		-\alpha_f a\beta_x\\
		0\\
		-\alpha_f a\beta_z\\
		0
	\end{pmatrix},
\end{gather*}
All the notations are defined in~\ref{App:RE}.
\section{Central Difference formulas for derivatives in non-conservative terms}
\label{appendix:central}
For a grid function $a_{i,j}$, the second-order central difference approximations to approximate the derivatives $\left(\frac{\p a}{\p x}\right)_{i,j}$ and $\left(\frac{\p a}{\p y}\right)_{i,j}$ in $x$ and $y$-directions are given by,
\[\left(\frac{\p a}{\p x}\right)_{i,j}=\frac{a_{i+1,j}-a_{i-1,j}}{2\Delta x}~~\text{and}~~\left(\frac{\p a}{\p y}\right)_{i,j}=\frac{a_{i,j+1}-a_{i,j-1}}{2\Delta y}.\]

Similarly, the fourth-order central difference approximations to approximate the derivatives $\left(\frac{\p a}{\p x}\right)_{i,j}$ and $\left(\frac{\p a}{\p y}\right)_{i,j}$ in $x$ and $y$-directions are given below
\begin{align*}
	\left(\frac{\p a}{\p x}\right)_{i,j}&=\frac{-a_{i+2,j}+8a_{i+1,j}-8a_{i-1,j}+a_{i-2,j}}{12\Delta x},\\
	&~~~~~~~~~~~~~~~~~~~~~~~~\text{and}\\
	\left(\frac{\p a}{\p y}\right)_{i,j}&=\frac{-a_{i,j+2}+8a_{i,j+1}-8a_{i,j-1}+a_{i,j-2}}{12\Delta y}.
\end{align*}

\end{document}